\documentclass[11pt,letterpaper]{article}

\usepackage{amsmath,amssymb,amsthm}
\usepackage{thmtools}
\usepackage{mathtools}
\usepackage{graphicx} 
\usepackage[margin=1in]{geometry}
\usepackage{xcolor}
\definecolor{darkgreen}{rgb}{0,0.5,0}
\usepackage[colorlinks=true, citecolor=darkgreen,backref=page]{hyperref}

\usepackage[textsize=footnotesize]{todonotes}
\usepackage{setspace}
\usepackage{microtype}
\usepackage{enumitem}
\usepackage{subcaption}
\usetikzlibrary{backgrounds,calc}
\usepackage{booktabs}
\usepackage{multirow} 
\usepackage{hhline}
\usepackage{authblk}
\usepackage[ruled,vlined]{algorithm2e}

\usepackage{stmaryrd}

\usepackage[capitalise]{cleveref}

\usetikzlibrary{calc}

\def\CongTest{\textsf{\textup{CongTest}}}
\def\CongIden{\textsf{\textup{CongIden}}}
\def\ApxCong{\textsf{\textup{ApxCong}}}

\def\EQ{\textsf{\textup{EQ}}}
\def\DISJ{\textsf{\textup{DISJ}}}
\def\CongSign{\textsf{\textup{CongSign}}}
\def\CongHash{\textsf{\textup{CongHash}}}

\def\dbR{\mathbb{R}}
\def\dbQ{\mathbb{Q}}
\def\dbN{\mathbb{N}}
\def\dbZ{\mathbb{Z}}
\def\dbC{\mathbb{C}}
\def\dbF{\mathbb{F}}

\def\calB{\mathcal{S}}
\def\BB{\mathcal{B}}
\def\calP{\mathcal{P}}
\def\calF{\mathcal{F}}

\newcommand{\set}[1]{\{#1\}}
\newcommand{\setB}[1]{\left\{#1\right\}}
\newcommand{\abs}[1]{|#1|}
\newcommand{\absB}[1]{\left|#1\right|}
\newcommand{\norm}[1]{\|#1\|}

\def\defeq{\coloneqq}
\def\eqdef{\eqqcolon}

\def\Dcc{\textsf{\textup{D}}^{\textit{cc}}}
\def\Rcc{\textsf{\textup{R}}^{\textit{cc}}}

\theoremstyle{plain}
\newtheorem{theorem}{Theorem}[section]
\newtheorem{corollary}[theorem]{Corollary}
\newtheorem{proposition}[theorem]{Proposition}
\newtheorem{lemma}[theorem]{Lemma}
\newtheorem{fact}[theorem]{Fact}
\newtheorem{claim}[theorem]{Claim}
\newtheorem{observation}[theorem]{Observation}

\theoremstyle{definition}
\newtheorem{definition}[theorem]{Definition}

\def\ii{\mathbf{i}}
\def\bfe{\mathbf{e}}
\newcommand{\boundedQ}[1]{\dbQ_{\langle#1\rangle}}

\SetCommentSty{commfont}
\SetKwComment{Comment}{\textcolor{green!50!black}{$\blacktriangleright$~}}{}

\definecolor{purple}{HTML}{1A1AB3}
\definecolor{darkgreen}{HTML}{00AA00}

\DeclareEmphSequence{\color{purple}\itshape}

\AtBeginEnvironment{algorithm}{%
  \DeclareEmphSequence{\itshape}
}
\AtEndEnvironment{algorithm}{%
  \DeclareEmphSequence{\color{purple}\itshape}
}

\AtBeginEnvironment{thebibliography}{%
  \DeclareEmphSequence{\itshape}
}
\AtEndEnvironment{thebibliography}{%
  \DeclareEmphSequence{\color{purple}\itshape}
}

\newcommand{\Modp}[2][p]{\llbracket#2\rrbracket_{#1}}

\newcommand{\ModFpi}[2][p]{\llbracket#2\rrbracket_{\dbF_{#1}[\ii]}}

\newcommand{\pequiv}[1][p]{\equiv_{#1}}

\DeclareMathOperator{\Var}{Var}
\DeclareMathOperator{\E}{\mathbb{E}}

\def\poly{\textup{\textsf{poly}}}
\def\polylog{\textup{\textsf{polylog}}}

\def\imageat{\includegraphics[scale=0.0234]{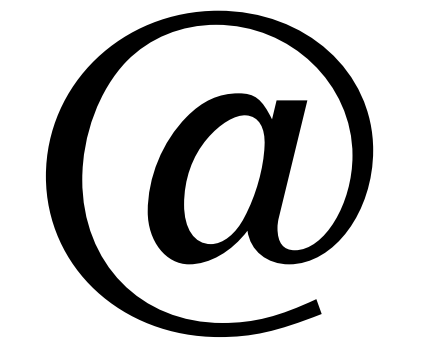}}
\def\imagedot{\includegraphics[scale=0.0234]{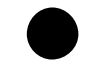}}

\SetKwComment{fulltcp}{$\blacktriangleright$ }{}
\SetKwComment{sectcp}{// }{}
\SetKwComment{tcp}{$\blacktriangleleft$ }{}
\SetKwInput{KwParameter}{Parameter}
\SetKw{KwAbort}{\textcolor{darkgreen}{Abort}}
\SetKw{KwReturn}{Return}

\DeclareMathOperator{\Li}{Li}

\newcommand{\floor}[1]{\lfloor#1\rfloor}

\title{Efficient Streaming Algorithms for Two-Dimensional Congruence Testing and Congruence Hashing}

\date{}

\author[(1)]{Yen-Cheng Chang \thanks{chen111062204\imageat{}gapp\imagedot{}nthu\imagedot{}edu\imagedot{}tw}}

\author[(1)]{Tsun-Ming Cheung \thanks{bencheung\imageat{}cs\imagedot{}nthu\imagedot{}edu\imagedot{}tw}}

\author[(2)]{Meng-Tsung Tsai \thanks{mttsai\imageat{}iis\imagedot{}sinica\imagedot{}edu\imagedot{}tw}}

\author[(2)]{Ting-An Wu \thanks{ericwu\imageat{}iis\imagedot{}sinica\imagedot{}edu\imagedot{}tw}}

\affil[(1)]{Department of Computer Science, National Tsing Hua University}
\affil[(2)]{Institute of Information Science,
Academia Sinica}

\begin{document}
\maketitle
\thispagestyle{empty}

\begin{abstract} {
The \emph{geometric congruence} problem is a fundamental building block in many computer vision and image recognition tasks. This problem asks whether two point multisets are congruent under translation and rotation. 
A related and more general problem, \emph{congruence hashing}, considers the task of compactly encoding multiple point multisets for efficient congruence queries.
Despite its wide applications, both problems have received little prior attention in space-conscious settings. In this work, we study these problems in the \emph{streaming model}, where data arrive as a stream and the primary goal is to minimize the space usage.
To meaningfully analyze space complexity, we account for the underaddressed issue of input precision by working in the \emph{finite-precision rational} setting: the input point coordinates are rational numbers of the form $p/q$ with $|p|, |q| \le U$.

Our main algorithmic results are two randomized polylogarithmic-space algorithms for the stronger variant of \emph{congruence identification} problem: given two $n$-point $U$-rational multisets, determine whether they are congruent and, if so, output a valid transformation. 
We present two $3$-pass algorithms based on two different approaches. With probability at least $1-\delta$,  
\begin{itemize}
    \item Our \emph{product-anchor} algorithm uses $O\left((\log n+\log U+\log \tfrac{1}{\delta})\,\log\log n\cdot\log \tfrac{1}{\delta}\right)$ space in the turnstile model, and
    \item Our \emph{complex-moment} algorithm uses $O(\log n(\log n+\log U+\log \tfrac{1}{\delta}))$ space in the insertion-only model.
\end{itemize}
Using congruence identification as a building block, we obtain a $4$-pass $O(m(\log n+\log U+\log m+\log \tfrac{1}{\delta}))$-space insertion-only algorithm for congruence hashing over $m$ query sets $S_1,\dots,S_m\subseteq\mathbb{Q}^2$ with rational precision $U$ that outputs a congruence signature of length $O(\log \tfrac{1}{\delta}+\log U+\log m)$. 

On the technical side, both algorithms presample primes for modular hashing to overcome precision issues. The product-anchor algorithm combines finite-field embeddings with number-theoretic results to ensure that with high probability, one of the presampled moduli permits efficient rotation recovery. The complex-moment algorithm instead builds on a new structural result: for every nonzero $n$-point rational multiset, at least one of its $O(\log n)$ power-of-two complex moments is nonzero. Tracking only these moments therefore guarantees efficient rotation recovery without any additional number-theoretic condition. 
We believe this non-vanishing condition may be of independent interest, since complex moments have long been used heuristically in object recognition but are often limited by the vanishing moment phenomenon (Flusser and Suk, \text{[}IEEE Transactions on Image Processing\text{]}, 2006), and we overcome this obstacle in our setting by maintaining only a few carefully chosen complex moments with proven theoretical soundness.

Lastly, we complement our algorithmic results with the lower bound result matching the turnstile product-anchor algorithm up to $\log\log n$ factor: any $p$-pass randomized streaming algorithm for two-dimensional congruence identification with error at most $\delta$ requires $\Omega\left(\tfrac{1}{p}\left(\log n+\log U+\log \tfrac{1}{\delta}\right)\right)$ space. Moreover, we show that approximate congruence testing, which asks whether two point multisets can be aligned up to small perturbations, requires $\poly(n)$ space even with $\poly(n)$ passes.
This sharply contrasts with the RAM model, where both exact and approximate congruence testings are solvable in polynomial time.
}
\end{abstract}

\thispagestyle{empty}
\tableofcontents

\thispagestyle{empty}
\clearpage
\setcounter{page}{1}

\section{Introduction} \label{sec:Intro}
\emph{Geometric congruence} is a classical problem with broad use in various areas such as computer vision and image recognition. 
The basic setup consists of two point multisets $A,B\subseteq D^d$ of size $n$ each, where $D\subseteq \dbR$. Each point may occur with multiplicity in both sets.\footnote{Throughout this work, all inputs to geometric congruence problems are interpreted as multisets. We nevertheless occasionally use the more natural terminology of ``point sets''.}
Our main focus is the \emph{congruence identification} problem $\CongIden_{n,D,d}$: determine whether there exist a rotation $\rho:\dbR^d\to \dbR^d$ and a translation $t\in \dbR^d$ such that $a\mapsto \rho(a)+t$ is a multiplicity-preserving bijection from $A$ to $B$, and output a possible pair of $(\rho,t)$ in the congruent case.


Congruence identification is stronger than the more commonly studied decision problem of \emph{congruence testing} $\CongTest_{n,D,d}$.
Fixing the input size is without loss of generality, as the algorithm can track the cardinalities and reject if they differ. Allowing reflections also does not increase the asymptotic complexity: every such congruence has the form $B=\rho(\sigma(A))+t$, where $\sigma\in\set{I,J}$ for a fixed reflection $J$ \cite{AKUTSU1998247}, so two parallel runs of the algorithm can solve the problem.

We study the geometric congruence problem allowing multiplicities. We emphasize that in a streaming setting, a multiset is not equivalent to a pre-aggregated tally representation of points with integral weights unless identical points arrive consecutively.

Generalizing the congruence testing problem to multiple sets, \emph{congruence hashing} $\CongHash^m_{n,D,d}$ approaches from the perspective of data structure, with a focus on efficient query processing over multiple point sets. Formally, the input of the congruence hashing problem consists of $m$ point multisets (also called \emph{query sets}) $A_1,\ldots,A_m\subseteq D^d$ of size $n$ each.
The main objective is to output a \emph{congruence signature function} $H:[m]\to\set{0,1}^s$, such that with high probability, for any $i,j\in[m]$, $H(i)=H(j)$ if and only if $A_i$ and $A_j$ are congruent.
Apart from the (time or space) complexity required to generate the hash function, we also aim to minimize the \emph{hash length} $s$ to support efficient comparison queries.

\subsection{Geometric Congruence Algorithms in RAM Model}
The time complexity of the geometric congruence problem has been extensively studied in the RAM model. In two dimensions, Manacher \cite{MANACHER19766} gives an \(O(n\log n)\)-time algorithm by sorting the points by angle and performing a cyclic string matching. In three dimensions, Atkinson \cite{ATKINSON1987159} achieves the same \(O(n \log n)\) bound using the closest-pair structures. In four dimensions, Kim and Rote \cite{kim2016congruence} give an \(O( n\log n)\)-time algorithm. For $d\geq 4$, Akutsu \cite{AKUTSU1998247} devises an  \(O(n^{(d - 1)/2} \log n) \)-time randomized algorithm for $d$ dimensions based on the birthday paradox. 

Compared to congruence testing, congruence hashing has received less attention from the worst-case algorithmic viewpoint, but it is widely applied in the computer vision literature, see for example \cite{641604}. While closely related to congruence, congruence hashing places additional emphasis on \emph{hashing}: more focus is put on obtaining a short hash value that enables fast congruence checking across many instances.

For both the problems of congruence testing and congruence hashing, little attention has been paid to space complexity, leaving the efficiency of problems in space-sensitive settings underexplored. 
Most time-complexity analyses implicitly assume the \emph{Real-RAM model}, where each input value is stored in a single machine word with full precision. In contrast, in real-world applications, the input must be representable by finite strings.
One intuitive workaround is to consider an \(\epsilon\)-approximate variant of \(\CongIden\) where we allow two point sets to be considered congruent if they can be aligned with a perturbation error of \(\epsilon\). While this problem is solvable in polynomial time in the RAM model \cite{alt1988congruence}\footnote{An earlier version of this manuscript incorrectly claimed the problem to be NP-hard in the RAM model, citing \cite{IWANOWSKI1991227}; we correct that claim here.}, in \cref{thm:apx-cong-lower-bound} we establish a strong pass-space tradeoff for approximate congruence testing, thus ruling out efficient streaming algorithms in the sublinear-space regime.

Another natural modification is to consider finite-precision integers by taking the domain to be $[U]\defeq \set{1,2,\ldots,U}$ for some $U\in \dbN$.
We strengthen this setting by considering a substantially larger class of finite exact inputs, namely the finite-precision rational numbers. Concretely, for $U\in \dbN$, the set of \emph{$U$-rational numbers} is defined as
\(
    \boundedQ{U} \defeq \setB{\pm \frac{p}{q}:p,q\in [U]} \cup\set{0}.
\)
At first glance, finite-precision rational numbers may appear similar to finite-precision integers. However, the precision required for addition differs drastically while the sum of $n$ integers from $[U]$ can be exactly stored as an integer in $[nU]$, the exact sum of $n$ $U$-rational numbers may have to be represented as an $nU^n$-rational number (\cref{obs:sum-prod-precision}). 

\subsection{Streaming Model and Limitations of Prior RAM-Model Algorithms} \label{subsec:RAM-limitation}
In this work, we consider geometric congruence in the \emph{streaming model}. 
In this model, data arrive sequentially in an adversary order. A streaming algorithm scans the stream forward for one or few passes while using limited space for storage and computations. Two common variants are the \emph{insertion-only} model, where elements are only inserted, and the \emph{turnstile} model, where in addition inserted elements may be deleted. Turnstile model is stronger than insertion-only model, so a turnstile streaming algorithm is automatically an insertion-only algorithm, and conversely, an insertion-only streaming lower bound applies to turnstile model.

All the aforementioned RAM-model approaches do not extend efficiently to the streaming model. Any \(p\)-pass sorting algorithm for \(n\) elements requires \(\Omega(n/p)\) space~\cite{MUNRO1980315}, and even the simpler task of median estimation (i.e. returning a datum close to the median) necessitates $\poly(n)$ space for $O(\log n/\log\log n)$-pass algorithms \cite{MUNRO1980315,GM09}. Another RAM-model primitive of finding a closest pair of points requires $\Omega(n/p)$ space for any randomized $p$-pass algorithm \cite{Sharathkumar2007}, thus ruling out any efficient closest-pair based adaptations to the streaming model. Furthermore, Real-RAM model does not deal with the precision issues, the intermediate values used by RAM-model routines may not be representable to sufficient precision, which is the crux of the space complexity aspect of the streaming model. For example, for a $U$-rational point set $\set{z_1,\ldots,z_n}\subseteq \boundedQ{U}^2$, the centroid \(c=\frac{1}{n}\sum_{i=1}^n z_i\) may need \(\Theta(n\log U)\) bits of precision, so the routine centering step could already break the intended space bound if we record it directly. 

Recognizing the fundamental obstacle of RAM-model techniques in designing space-conscious streaming algorithms, we shift our attention to \emph{moment-based approaches}.
Moment-based methods were introduced to image analysis in the classical work of Hu \cite{Hu62}, who derived seven algebraic combinations of geometric moments that remain invariant under translation, rotation, and scaling. Since then, moment invariants have been applied in various recognition and matching tasks due to their computational simplicity and memory efficiency, including aircraft silhouette identification \cite{5009272}, scene matching \cite{WONG197816}, and template matching under rotation \cite{Goshtasby85}; see also the survey \cite{PROKOP1992438} and the references therein.

\subsection{Algorithmic Results}
As mentioned above, basic arithmetic on bounded-precision rationals can already cause substantial precision blowups. Our main result is that, despite these complications, congruence identification for rational points can be performed in space polylogarithmic in the input size $n$ and the precision bound $U$.

Our main algorithmic results for $\CongIden_{n,\boundedQ{U},2}$ are obtained by two independent methods, which we present in \cref{sec:PA-CongIden,sec:CM}.
\begin{theorem}[Product-anchor Algorithm]\label{thm:CongIden-PA}
    For $\delta\in (0,1)$, $\CongIden_{n,\boundedQ{U},2}$ can be solved by a 3-pass $O\left( \left(\log n+\log U+\log \frac{1}{\delta}\right) \log\log n \cdot \log \frac{1}{\delta}\right)$-bit randomized turnstile streaming algorithm with probability at least \(1 - \delta\). 
\end{theorem}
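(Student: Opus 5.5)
We identify each point $(x,y)\in\boundedQ{U}^2$ with the Gaussian rational $z=x+\ii y$. A congruence is then a map $z\mapsto \omega z+\tau$ with $|\omega|=1$, and the plan is to recover $\omega$ and $\tau$ in stages. Throughout, exact rational quantities are replaced by their images modulo random primes, so that every stored value is $O(\log p)$ bits.

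\textbf{Pass 1 (translation).} Since the multisets have equal size $n$, we have $\tau=(\sum_{b\in B} b-\omega\sum_{a\in A} a)/n$. Writing $\sigma_A=\sum_{a\in A}a$ and $\sigma_B=\sum_{b\in B}b$, the centroid-difference relation $\sigma_B=\omega\sigma_A+n\tau$ lets us postpone $\tau$ until $\omega$ is known. In the first pass we therefore maintain $\sigma_A$ and $\sigma_B$, the counts, and a fingerprint of each multiset modulo a presampled prime $p$.
\begin{itemize}
\item The prime $p$ is chosen of size $\poly(n,U,1/\delta)$ with $p\equiv 1\pmod 4$, so that $\ii$ exists in $\dbF_p$ and Gaussian rationals embed into $\dbF_p$ whenever no denominator vanishes.
\item The fingerprints are polynomial-identity-test values, e.g.\ $\prod_{a}(r-a)$ modulo $p$ for a random $r$. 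These are updatable under insertions and deletions, using inverses for deletions, so the algorithm works in the turnstile model.
\end{itemize}

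\textbf{Pass 2 (anchor and rotation).} The centered points are $a-\sigma_A/n$, and rotation acts on them multiplicatively. We pick a rotation-covariant quantity, the ``product anchor'' $P_A=\prod_{a}(na-\sigma_A)$ taken over nonzero centered points, or a low-order complex moment $M_k(A)=\sum_a (na-\sigma_A)^k$. For the congruence we then have $M_k(B)=\omega^k M_k(A)$. With a suitable $k$ and a nonvanishing anchor modulo $p$, we compute $\omega^k$ in $\dbF_p$. To extract $\omega$ itself, we apply the same construction to single points: the anchor relation forces $\omega$ to be one of at most $k$ roots of $\omega^k$. For each candidate $\omega$ we determine $\tau$ and test $B=\omega A+\tau$ with a third-pass fingerprint. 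Because the true $\omega$ is a Gaussian rational of bounded height, since $|\omega|=1$ and it maps $U$-rational points to $U$-rational points, rational reconstruction from its residue modulo a large enough $p$ recovers it exactly as an output.

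\textbf{Main obstacle.} The difficulty is ensuring that the anchor is nonzero modulo $p$ and that the reduction does not create spurious agreement. I would run $O(\log\frac1\delta)$ independent repetitions, each over $O(\log\log n)$ presampled primes. The number-theoretic claim is that a nonzero integer of height $\poly(n,U)^{n}$ has $O(n\log(nU)/\log p)$ prime divisors, so a random prime from a large enough range avoids dividing it with constant probability. The $\log\log n$ factor arises because the candidate order $k$ ranges over a set of size $O(\log n)$ (powers of two), which requires $O(\log\log n)$ parallel trials. Each trial stores $O(\log n+\log U+\log\frac1\delta)$ bits, giving the claimed total space. The hardest step is proving that some presampled modulus yields a nonzero, invertible anchor that permits recovery of $\omega$ in few candidates; I would attack it via the density of primes in arithmetic progressions combined with the divisor-count bound above.
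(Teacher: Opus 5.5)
Your overall pipeline (recentre, multiplicative product anchor, $k$-th root extraction, rational reconstruction, Karp--Rabin test) matches the paper's. However, there is a genuine gap exactly where you flag the hardest step: bounding the number of candidate rotations.

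You say $\omega$ is ``one of at most $k$ roots'' and then give each candidate its own third-pass fingerprint. But $k$ is the number of non-centroid points, which is forced by the input and can be anything in $[n]$. In the finite field, $x^k=\widehat\lambda$ has $\gcd(k,p-1)$ solutions in your $\dbF_p^\times$, or $\gcd(k,p+1)$ in the norm-one subgroup $G_{p+1}$ of $\dbF_{p^2}^\times$. Either count can be $\Theta(n)$, so pass 3 costs $\Theta(n\log p)$ bits.

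The missing idea is a mechanism that forces $O(1)$ candidates. The paper takes $p\equiv 3\pmod 8$. Then $\dbF_p[\ii]\cong\dbF_{p^2}$, norm-one Gaussian rationals land in $G_{p+1}$ (tested by $\widehat\lambda^{p+1}=1$), and $G_{p+1}\cong G_4\times G_{(p+1)/4}$ with $(p+1)/4$ odd. Among $T$ presampled primes it selects one with $\gcd(k,p_j+1)\mid 4$. For that prime every root lies in $\{\pm\widehat w,\pm\ii\widehat w\}$, where $\widehat w=\widehat\lambda^{\ell e_q}$ and $\ell=k^{-1}\bmod (p+1)/4$. Separately, your choice $p\equiv 1\pmod 4$ breaks the reconstruction step as written: the rational-reconstruction lemma recovers a single rational from an $\dbF_p$-residue, not a Gaussian rational.

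Your explanation of the $\log\log n$ factor is also incorrect, and this is where the real number theory sits. The exponent $k$ is not chosen among $O(\log n)$ powers of two. Tracking power-of-two moments is the complex-moment algorithm, which costs a $\log n$ factor and needs the non-vanishing theorem for equal-radius points.

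In the paper, $\log\log n$ is $1/\Pr[\gcd(k,p+1)\mid 4]$:
\begin{itemize}
\item For $p\equiv 3\pmod 8$, the event means $p\not\equiv -1\pmod f$ for every odd prime $f\mid k$.
\item Its density is $\prod_{f}(1-\frac{1}{f-1})\ge \frac12\prod_f(1-\frac1f)$, which is smallest when $k$ is a product of the first few odd primes.
\item Mertens' theorem then gives $\Theta(1/\log\log n)$.
\item Bombieri--Vinogradov is needed to make the inclusion--exclusion over all moduli $8d$, $d\mid\prod_{f\mid k}f$, rigorous.
\end{itemize}
Your divisor-count bound only handles faithfulness (the centroid bad-prime lemma), not this gcd condition.

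Filtering the roots by rational reconstruction would not rescue your version on its own. You would additionally need a bad-prime condition making $\phi_p$ injective on $k$-th powers of bounded-height rotations (cf.\ Property (e) in \cref{lem:bad-primes}), plus an efficient way to enumerate the roots. The proposal supplies neither.
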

\begin{theorem}[Complex-moment Algorithm]\label{thm:main}
    For $\delta\in (0,1)$,
    $\CongIden_{n,\boundedQ{U},2}$ can be solved by a 3-pass $O(\log n(\log n+\log U+\log \frac{1}{\delta}))$-bit randomized insertion-only streaming algorithm with probability at least $1-\delta$.
\end{theorem}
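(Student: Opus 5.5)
Identify each point $(x,y)$ with $z=x+\ii y\in\dbQ[\ii]$. The two multisets are $A=\{a_j\}$ and $B=\{b_j\}$. They are congruent iff there is a unit $\omega$ with $|\omega|=1$ and a translation $t$ such that $B=\omega A+t$. Any such $t$ must equal $c_B-\omega c_A$, where $c_A=\frac1n\sum a_j$ and $c_B=\frac1n\sum b_j$ are the centroids.

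\textbf{Moments and the key lemma.} Let $a'_j=n a_j-\sum_k a_k$ and $b'_j=n b_j-\sum_k b_k$ be the scaled, centered points. Define the complex moments $M_k(A)=\sum_j (a'_j)^k$, and similarly for $B$. Congruence forces $M_k(B)=\omega^k M_k(A)$ for every $k$. The structural result from the abstract will be stated as a lemma: for a nonzero multiset, some moment $M_{2^i}$ with $2^i\le n$ is nonzero. Equivalently, if $M_{2^i}=0$ for all $i\le\log n$, then all points are zero.

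Plan for proving the lemma. Write $S_k=\sum_j w_j^k$ for the power sums of the $w_j$. Group the $w_j$ by their orbit under multiplication by roots of unity. The vanishing of $S_{2^i}$ for all $i$ then forces the multiset of $2^{\lceil\log n\rceil}$-th powers to satisfy an $n$-term vanishing sum. Since $n<2^{\lceil\log n\rceil+1}$, an induction on the $2$-adic structure should give a contradiction. This lemma is where I expect the main difficulty. My first attempt would be the following induction. $S_1=0$ alone does not suffice, so consider the map $w\mapsto w^2$, which halves the possible "cancelling configurations". Then show that a nonzero multiset of size $n$ with $S_1=\dots$ vanishing at all powers of two up to $2^{\lfloor\log n\rfloor}$ cannot exist. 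To do this, use Newton identities over $\dbQ(\ii)$ together with the rational structure.

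\textbf{Algorithm.} It uses three passes, all modulo a random prime $p$ with $p\equiv 1\pmod 4$ of size $\poly(nU/\delta)$, reducing each rational coordinate via $p/q\mapsto pq^{-1}\bmod p$.
\begin{itemize}
\item Pass 1: compute $\sum a_j$, $\sum b_j$, and $n$ modulo $p$.
\item Pass 2: compute $M_{2^i}(A)$ and $M_{2^i}(B)$ modulo $p$ for $i\le\log n$, which is $O(\log n)$ field elements. Let $k=2^i$ be the smallest index with $M_k(A)\not\equiv0$. By the lemma and a union bound over the nonzero integer numerators, this reduction is faithful with probability $1-\delta/3$, since the numerators have $\poly(n)\log U$ bits and a random prime is unlikely to divide one.
\item Recover $\omega$ from $\omega^k=M_k(B)/M_k(A)$. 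A rational rotation has $\omega\in\dbQ(\ii)$ with $|\omega|=1$. Its height is bounded by $\poly(U)$, because $\omega$ equals a ratio of differences of two input points, each $U$-rational, after the distinct-point centering. So rather than taking $k$-th roots, I would identify $\omega$ during this pass as the ratio $(b'_{j}-b'_{j'})/(a'_j-a'_{j'})$ for anchor pairs. I would fall back on rational reconstruction from $\omega^k$ modulo $p$, which has $k$ candidate roots; each candidate is reconstructed and then checked.
\item Pass 3: for each candidate, verify by fingerprinting. Check that $\sum_j r^{\,\Phi(b_j)}=\sum_j r^{\,\Phi(\omega a_j+t)}$ for a random evaluation point $r$, using a polynomial identity test modulo $p$.
\end{itemize}

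\textbf{Space.} Each stored value uses $O(\log n+\log U+\log\frac1\delta)$ bits, and there are $O(\log n)$ moments, giving the stated space bound. The algorithm is insertion-only because the anchor selection is order-dependent.
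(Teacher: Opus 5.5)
Your pipeline (centroids, power-of-two moments, root extraction, rational reconstruction, Karp--Rabin check) follows the paper's, but the proposal has two genuine gaps.

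\textbf{The structural lemma is not proved, and it is not the paper's lemma.} The paper proves a nonvanishing result only for \emph{equal-length} nonzero Gaussian rationals: if $n=2^sm$ with $m$ odd, then some $M_{2^j}\neq 0$ with $j\le s$. It obtains this from Ball's odd-$n$ theorem together with the pair-product sets $T_\times$, using $M_p(T_\times)=\tfrac12(M_p(T)^2-M_{2p}(T))$. This is exactly why its algorithm restricts the moments to the points lying on one circle about the centroid.

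You instead take moments of the whole centered multiset, so you need the statement for \emph{arbitrary} multisets. Your sketch (grouping by root-of-unity orbits, Newton identities, ``should give a contradiction'') does not establish it. The statement is in fact true, and your remark about $n<2^{\lfloor\log n\rfloor+1}$ points at the right mechanism. One way to supply the argument:
\begin{enumerate}
\item Let $\pi=1+\ii$. Clear denominators, then divide out the common power of $\pi$, so that $k\ge 1$ of the points are $\pi$-units.
\item For $J\ge 2$, every $\pi$-unit satisfies $w^{2^J}\equiv 1\pmod{2^{J+1}}$, because $w^2\equiv\pm1\pmod 4$.
\item For $J\ge 3$, every multiple of $\pi$ satisfies $w^{2^J}\equiv 0\pmod{2^{J+1}}$, because $\pi^{2^J}=(2\ii)^{2^{J-1}}$.
\item Hence for $J=\lfloor\log_2 n\rfloor\ge 3$ you get $M_{2^J}\equiv k\pmod{2^{J+1}}$ with $1\le k\le n<2^{J+1}$, so $M_{2^J}\neq 0$.
\item The cases $n<8$ need a short separate check.
\end{enumerate}
As written, however, the core ingredient of the theorem is missing.

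\textbf{Your choice of modulus breaks the candidate count, and with it the space bound.} For $p\equiv 1\pmod 4$ the polynomial $x^2+1$ splits. Coordinatewise reduction therefore lands in $\dbF_p[\ii]\cong\dbF_p\times\dbF_p$, which is not a field. In that ring, $x^{k}=\lambda$ with $k=2^i$ can have $\gcd(k,p-1)^2$ solutions, i.e.\ as many as $k^2$ when $2^i\mid p-1$. You explicitly accept ``$k$ candidate roots, each checked.'' Checking $\Theta(k)$ candidates, up to $\Theta(n)$, with parallel fingerprints in the third pass destroys the $O(\log n(\log n+\log U+\log\frac1\delta))$ bound.

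There are two ways to repair this:
\begin{itemize}
\item Take $p\equiv 3\pmod{16}$, as the paper does. Then $\dbF_p[\ii]\cong\dbF_{p^2}$ and the number of roots is at most $\gcd(2^j,p^2-1)\le 8$.
\item Keep your modulus but add three things: a bad-prime condition like the paper's Property (e) (distinct bounded-height values $\rho^{2^j}$ remain distinct mod $p$), height filtering after rational reconstruction, and Niven's theorem. Together these cut the survivors to at most four.
\end{itemize}

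Finally, the ``anchor pair'' ratio $(b'_j-b'_{j'})/(a'_j-a'_{j'})$ presupposes knowing which points of $B$ correspond to which points of $A$. That correspondence is exactly what is unknown, so this ratio cannot replace root extraction.
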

The insertion-only complex-moment algorithm in \cref{thm:main} can be adapted to the turnstile model with the same pass complexity and a slightly increased polylogarithmic space bound, which we defer the details to \cref{cor:CM-turnstile}.

The comparison between the space bounds of the algorithms in \cref{thm:CongIden-PA,thm:main} depends crucially on the failure probability $\delta$ and the model. In the insertion-only model,
the product-anchor algorithm outperforms the complex-moment algorithm for constant $\delta$, while the situation is reversed when $\delta = 1/n^{\Omega(1)}$. The two algorithms have
asymptotically matching space bounds when
$\delta = n^{-\Theta(1/\log\log n)}$. 
In the turnstile model, the additional space required by the $\ell_0$-samplers in the complex-moment algorithm (\cref{cor:CM-turnstile}) makes the product-anchor algorithm typically more space-efficient. 
The different dependency on $\delta$ in the two space bounds arises primarily from the distinct prime-sampling procedures used by the two algorithms to hash rational points into finite fields. We elaborate on this distinction in \cref{sec:tech-overview}.

We note that the dependence on $U$ is fairly mild: the algorithm still uses polylogarithmic space even when the input precision is quasi-polynomial in $n$. 
The same algorithms also apply to finite-precision \emph{integer} inputs. By \cref{thm:CongIden-PA}, the congruence identification problem on $[\poly(n)]^2$ can be solved by a 3-pass $O(\log n \log\log n)$-space randomized turnstile streaming algorithm with constant success probability, which is nearly optimal since representing a valid transformation already requires $\Theta(\log n)$ space. 

For the special case of point sets, in which each point has multiplicity one in each dataset, our product-anchor algorithm eliminates an additional factor of $O(\log\log n\cdot \log\frac{1}{\delta})$ from the space usage. Together with the hardness results in \cref{sec:hardness}, this shows that the resulting space complexity is indeed tight.
\begin{theorem}\label{thm:PA-point-set}
    Suppose further that the input datasets are simple sets.
    For $\delta\in (0,1)$, $\CongIden_{n,\boundedQ{U},2}$ can be solved by a 3-pass $O\left(\log n+\log U+\log \frac{1}{\delta}\right)$-bit randomized turnstile streaming algorithm with probability at least \(1 - \delta\). 
\end{theorem}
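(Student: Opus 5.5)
We prove \cref{thm:PA-point-set} by specializing the product-anchor framework behind \cref{thm:CongIden-PA}. The extra $O(\log\log n\cdot\log\frac1\delta)$ factor comes from presampling many primes and hoping one of them makes rotation recovery efficient. For simple sets we avoid this by using a single random prime $p$ of bit-length $\Theta(\log n+\log U+\log\frac1\delta)$. We embed coordinates into $\dbF_p$ (or $\dbF_p[\ii]$ when $p\equiv 3 \pmod 4$). To recover the rotation, we use an anchor that is rigid precisely because points are distinct.

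\textbf{Pass 1 (fingerprints and centroid).} We treat each point $z=x+\ii y$ as a complex rational. Reduced modulo $p$, its image $\ModFpi{z}$ is well defined unless $p$ divides one of the $O(n)$ denominators. By a union bound over the $O(n\log U)$ bad primes, this happens with probability at most $\delta/3$ once the prime range is chosen large enough. In the turnstile stream we maintain $|A|$, $|B|$, $\sum_{a\in A}\ModFpi{a}$ and $\sum_{b\in B}\ModFpi{b}$. These give the centroids modulo $p$ as $c_A,c_B$, after dividing by $n$, which is invertible because $p>n$. Translation is then $t=c_B-\rho c_A$, so it remains to find the unit complex number $\rho$ with $B-c_B=\rho(A-c_A)$.

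\textbf{Pass 2 (anchor).} We need one point $a^*\in A$ with $a^*\ne c_A$, together with its correspondent in $B$. For multisets this is the hard part, since multiplicities can cancel. For simple sets, at most one point equals the centroid, so an $\ell_0$-style selection works without repetition. Concretely, we use a pairwise-independent hash $h$ over $\dbF_p[\ii]$ and take the minimum of $h(a)$ over $a\ne c_A$. In the turnstile model this is done via a single sparse-recovery sketch of size $O(\log n+\log U+\log\frac1\delta)$ bits.

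We then need to identify the partner $b^*$. The squared distance $\lVert a^*-c_A\rVert^2$ must be preserved, but several points of $B$ may share it. We therefore match an invariant that rotation does not change: the key is $\Modp{|z-c|^2}$ together with the moment $\sum_{w}(\overline{z-c})(w-c)$ over all $w$. We pick the candidate in $B$ whose key equals that of $a^*$, again via a min-hash, and set $\rho=(b^*-c_B)/(a^*-c_A)$.

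\textbf{Pass 3 (verification).} We verify the candidate $\rho$ by comparing polynomial fingerprints $\sum_{a}r^{\ModFpi{\rho(a-c_A)+c_B}}$ against $\sum_b r^{\ModFpi{b}}$ for a random evaluation point $r$. Equality for noncongruent inputs fails only with probability $O(n/p)\le\delta/3$. Finally, $\rho$ and $t$ are lifted back to rationals by rational reconstruction. Any valid rotation has entries of height $\poly(nU)$: for instance, $\rho$ is determined by two points, so its real and imaginary parts are $U^{O(1)}$-rational, while $t$ involves only sums weighted by $1/n$. Hence $p>(nU)^{\Theta(1)}$ suffices.

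\textbf{Main obstacle.} The delicate step is Pass 2. When congruent inputs are symmetric (for example, with nontrivial rotational stabilizer), several rotations are valid. The anchor correspondence must be chosen consistently, so that one of the at most $n$ candidate $b^*$ actually yields a valid $\rho$. If matching a single key is insufficient, I would fall back on testing $O(1)$ random candidates in Pass 3 and argue that a constant fraction of equal-key candidates are valid. Distinctness of the points is essential here, since it ensures that the correspondence maps each anchor to a unique point. Amplifying the success probability by repeating $O(1)$ times, rather than $\log\frac1\delta$ times, keeps the space at $O(\log n+\log U+\log\frac1\delta)$.
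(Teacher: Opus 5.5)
\textbf{There is a genuine gap in Pass 2.} The step that matches $a^*$ to its partner $b^*$ fails, and the whole argument depends on it.

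The second component of your key is identically zero: $\sum_w\overline{(z-c)}(w-c)=\overline{(z-c)}\sum_w(w-c)=0$ because $c$ is the centroid. So the key collapses to $|z-c|^2$, which cannot identify the partner of $a^*$. For a concrete failure, take $A=B=\{\pm u_1,\dots,\pm u_m\}$ with distinct, pairwise non-antipodal rational unit vectors $u_j$. Then $n=2m$, the centroid is $0$, and every point of $B$ has the same key as $a^*$. Any valid $\rho$ satisfies $\rho A=A$, so it permutes a finite set of nonzero points and is a root of unity. It is also rational by \cref{obs:rational_rotation}, hence $\rho\in\{\pm1,\pm\ii\}$ by \cref{lem:niven's}. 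Thus at most $4$ of the $n$ equal-key candidates $b^*$ yield a valid rotation. Your fallback (``a constant fraction of equal-key candidates are valid'') is therefore false: $O(1)$ random trials succeed with probability $O(1/n)$, and trying all candidates costs $\Omega(n)$ fingerprints in Pass 3. Separately, extracting a surviving point from a turnstile stream is an $\ell_0$-sampling/support-finding task. Standard sketches for it use $\Theta(\log^2)$ bits, and the paper's own turnstile complex-moment variant pays an extra $\log^2U\cdot\log\frac1\delta$ exactly there. So the claimed $O(\log n+\log U+\log\frac1\delta)$-bit ``single sparse-recovery sketch'' is not justified.

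\textbf{The missing idea is that no point correspondence is needed.} The product anchor gives the global relation $\widehat{\Pi}_p(B)\widehat{\Pi}_p(A)^{-1}=\phi_p(\rho)^k$. The $\log\log n\cdot\log\frac1\delta$ overhead in \cref{thm:CongIden-PA} arises only because $k$ is unknown before the stream, so a prime with $\gcd(k,p+1)\mid 4$ cannot be fixed in advance. For simple sets, at most one point can equal the centroid, so $k\in\{n-1,n\}$ is known up to two values. Since $\gcd(n,n-1)=1$, the paper rejection-samples, before the stream, a single prime $p\equiv 3\pmod 8$ in $[\Delta^*,16\Delta^*]$ with $\Delta^*=2^{46}n^5U^{70}\delta^{-1}$, satisfying both $\gcd(n,p+1)\mid 4$ and $\gcd(n-1,p+1)\mid 4$ (the event $E_{n(n-1)}$). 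Each trial succeeds with probability $\Omega(1/\log\log n)$ by the same Bombieri--Vinogradov computation; $\Delta^*$ is enlarged so that moduli up to $8n^2$ are admissible. It then runs the product-anchor algorithm with $T=1$. This uses $O(\log\Delta^*)=O(\log n+\log U+\log\frac1\delta)$ bits, and every maintained statistic is turnstile-friendly.
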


For the congruence hashing on $m$ query sets over $\boundedQ{U}^2$, we obtain a 4-pass streaming algorithm that uses $\poly(m,\log n,\log U, \log \frac{1}{\delta})$ space and outputs a hash function of logarithmic length. Moreover, our congruence hashing algorithm ensures that for any pair of congruent sets, a valid transformation can be recovered from their signatures.
\begin{restatable}{theorem}{geohash}\label{thm:geomhash-database}
For $\delta\in (0,1)$, there exists a 4-pass randomized insertion-only streaming algorithm for $\CongHash_{n,\boundedQ{U},2}^m$ using $O(m (\log n + \log U + \log m + \log \frac{1}{\delta}))$ space, which outputs a congruence signature of length $O(\log U+\log m+\log\frac{1}{\delta})$ with probability at least $1-\delta$. 
\end{restatable}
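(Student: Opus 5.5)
The plan is to build the congruence signature on top of the complex-moment congruence identification routine of \cref{thm:main}. We run it in parallel on all $m$ query sets, with shared randomness, and use the resulting canonical data as hashes.

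\textbf{Step 1: shared modulus.} Before the stream starts, we presample one prime $p$ from a range of size $\poly(n,U,m,1/\delta)$, shared by all query sets. Each $U$-rational point $(x,y)$ is mapped to $z=\ModFpi{x+\ii y}\in\dbF_p[\ii]$, or into a suitable extension if $-1$ is a square mod $p$. Under this map, congruence over $\dbQ^2$ should correspond, with high probability, to the existence of $\rho\in\dbF_p[\ii]$ with norm $1$ and $t$ satisfying $B=\rho A+t$. A union bound over the $\binom{m}{2}$ pairs, combined with the prime-sampling argument behind \cref{thm:main}, should make this correspondence simultaneously correct for all pairs. It costs a factor $\log m$ in the prime size, hence $O(\log n+\log U+\log m+\log\frac1\delta)$ bits per stored value.

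\textbf{Steps 2--3: canonical normalization, passes 1--3.} For each $A_i$ we proceed as follows.
\begin{itemize}
\item Pass 1 computes the centroid modulo $p$, namely $c_i=\frac1n\sum z$ in $\dbF_p[\ii]$, which is well defined once $p>n$.
\item Pass 2 computes the centered power-of-two complex moments $M_{i,k}=\sum (z-c_i)^{2^k}$ for $k=0,\dots,O(\log n)$. It then selects the smallest $k_i$ with $M_{i,k_i}\neq 0$; by the non-vanishing structural result, such a $k_i$ exists with high probability unless the centered set is zero. A rotation $\rho$ multiplies $M_{i,k}$ by $\rho^{2^k}$. We fix a canonical normalizing rotation $r_i$ with $r_i^{2^{k_i}}M_{i,k_i}$ equal to a canonical representative, for instance by choosing a canonical $2^{k_i}$-th root in the norm-one subgroup. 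The $2^{k_i}$-fold ambiguity is resolved by a canonical choice among the candidates, breaking ties with a hash comparison in pass 3.
\item Pass 3 computes a polynomial fingerprint $F_i=\sum_{z\in A_i}\alpha^{\langle r_i(z-c_i)\rangle}$, with a random evaluation point $\alpha$ (or equivalently a Karp--Rabin style multiset hash) applied to the normalized multiset.
\end{itemize}
This yields $O(\log n)$ words per set during computation. Reducing storage to $O(1)$ words per set would require the insertion-only trick used in \cref{thm:main}, namely tracking only a running candidate rather than all moments. Otherwise the space bound carries an extra $\log n$ factor.

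\textbf{Pass 4 and output.} The fourth pass is reserved for finalizing the disambiguation among the $2^{k_i}$ rotation candidates; one could take the minimum fingerprint over candidates if this can be done cheaply. The signature $H(i)$ consists of $F_i$ together with $O(1)$ field elements, giving length $O(\log U+\log m+\log\frac1\delta)$ once $n$-dependence is absorbed by reducing the fingerprint to a smaller random prime. Correctness is as follows. If $A_i\cong A_j$, the normalized multisets coincide, so $H(i)=H(j)$, and the stored $c_i,r_i$ yield the transformation. If $A_i\not\cong A_j$, the normalized multisets differ, and Schwartz--Zippel together with a union bound over the $m^2$ pairs gives distinct fingerprints.

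\textbf{Main obstacle.} I expect the hardest step to be making the rotation normalization truly canonical in a bounded number of passes and within $O(1)$ words per set. Two issues arise. First, the $2^{k_i}$-th root ambiguity must be removed without enumerating all candidates. Second, the finite-field congruences must be shown, by the number-theoretic lemmas used earlier, to lift correctly to genuine rational rotations uniformly for all pairs.
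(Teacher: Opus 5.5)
There is a genuine gap, and you partly flag it yourself: the construction does not meet the stated space bound. Your per-set state is the $\Theta(\log n)$ power-of-two moments, each a $\Theta(\log p)$-bit element of $\dbF_{p^2}$. The total is therefore $O(m\log n(\log n+\log U+\log m+\log\frac1\delta))$.

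The ``insertion-only trick used in \cref{thm:main}'' that you appeal to does not exist. The only insertion-only ingredient there is taking the radius of the first non-centroid datum. The algorithm stores all $2(s+1)$ moments, and that is exactly where its extra $\log n$ factor comes from. A running candidate cannot work either, since whether a sum $M_{2^j}$ vanishes is only decided at the end of the stream.

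Several other steps are also unsupported:
\begin{itemize}
\item \cref{thm:nonzero-moment} requires all points to have equal magnitude, but you sum $(z-c_i)^{2^k}$ over all of $A_i$. You would need to restrict to one circle, chosen canonically \emph{per set} without reference to the other sets.
\item The normalizer $r_i$ is never pinned down. A target depending only on the norm of $M_{i,k_i}$ gives a ratio that need not be a $2^{k_i}$-th power in $G_{p+1}$, where you propose to take the root. The $2^{k_i}$-th powers form a subgroup of index $\gcd(2^{k_i},p+1)$, so the root may not exist. Taking the minimum fingerprint over all candidate roots is the right tie-breaker, but the candidates must exist. This is repairable by normalizing only the odd-order $G_q$-component and minimizing over the four $G_4$-rotations, but the space problem remains.
\item Including $c_i,r_i\in\dbF_{p^2}$ in $H(i)$ adds $\Omega(\log p)\ge\Omega(\log n)$ bits, contrary to the claimed signature length.
\end{itemize}

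The idea you are missing is that no rotation needs to be normalized at all. The paper keeps a single field element per set that transforms \emph{covariantly}: the product anchor $\Pi'(S)=\prod_{z\neq c_S}(z-c_S)$. It is never zero and satisfies $\Pi'(\rho S+t)=\rho^k\Pi'(S)$, so $|(z-c_S)^k-\Pi'(S)|^2$ is a congruence invariant. Since no $k$-th root is ever extracted, the $\gcd(k,p+1)$ obstruction of \cref{sec:PA-CongIden} never arises and one prime suffices.

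The four passes then run as follows.
\begin{itemize}
\item Pass 1 computes the centroid $\phi_p(c_S)$ modulo the shared prime.
\item Pass 2 computes $\Pi'(S)$, $k$, and the two smallest values of $\phi_p(|z-c_S|^2)$ in the integer order on $\dbF_p$. This is a plain insertion-only minimum, and it is canonical because a good shared prime maps the rational distance set injectively.
\item Pass 3 keeps the points on these two circles that attain the two smallest values of $\phi_p(|(z-c_S)^k-\Pi'(S)|^2)$. There are at most $32$ of them, by a two-circle intersection and Niven's theorem.
\item Pass 4 fingerprints, for each reference pair $\{a,b\}$, the multiset $\{(|z-a|^2,|z-b|^2,\sigma_{a,b}(z))\}_{z\in S}$ of $U^{O(1)}$-rational invariants. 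It outputs the lexicographically smallest $(|a-b|^2,F_{a,b}(S))$ as the signature, with $(a,b)$ returned separately.
\end{itemize}
This gives $O(1)$ words per set and a fingerprint domain independent of $n$. It also gives a soundness argument that never has to lift a finite-field rotation back to $\dbQ[\ii]$; that lifting is the step your Step~1 only asserts.
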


Furthermore, inspired by \cite{AKUTSU1998247}, we develop a constant-pass streaming algorithm that solves $\CongIden_{n, \boundedQ{U}, 3}$ with sub-linear space usage. We give an overview of the limitations of extending two-dimensional techniques in $\boundedQ{U}^3$ in \cref{subsubsec:TO-3D} and discuss our three-dimensional algorithm in \cref{sec:3D}.

\subsection{Hardness results}\label{subsec:hardness-intro}
For the lower bound results, we show that the lower bound can be proved via a reduction from the communication problem of equality \cite{Rou16}, which is tight up to $\log\log$ factor.

\begin{corollary}\label{cor:2D-congiden-LB}
    Any $p$-pass randomized streaming algorithm for $\CongIden_{n,\boundedQ{U},2}$ with error at most $\delta\in (0,1/2)$ requires a space usage of $\Omega\left(\frac{1}{p}\left(\log n+\log U+\log \frac{1}{\delta}\right)\right)$.
\end{corollary}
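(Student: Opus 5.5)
We prove the bound as the maximum of three separate lower bounds, one for each of the terms $\log n$, $\log U$ and $\log\frac{1}{\delta}$, each obtained by a standard two-party reduction. We use the usual simulation argument: a $p$-pass streaming algorithm with space $s$ gives a two-party protocol of cost at most $(2p-1)s$. Alice runs the algorithm on her half of the stream, sends the memory state to Bob, Bob continues on his half, and so on across the passes. It therefore suffices to embed communication problems of cost $\Omega(\log n)$, $\Omega(\log U)$ and $\Omega(\log\frac1\delta)$ into $\CongIden_{n,\boundedQ{U},2}$. For the $\log\frac1\delta$ and $\log U$ terms we reduce from equality. For the $\log n$ term we use the fact that the output itself must be long.

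\textbf{Equality reduction ($\log U$ and $\log\frac1\delta$).} We take $\EQ$ on $k$-bit strings with $k=\Theta(\log U)$. Alice encodes $x$ as a single point $a_x=(\phi(x),0)$, where $\phi$ injects $\{0,1\}^k$ into $[U]$ (possible when $k\le \log U$). She places a rigid anchor configuration together with $a_x$ into $A$. Bob does the same with $b_y$ into $B$. We choose the anchor so that its only self-congruence is the identity and so that the anchor forces any congruence to fix the anchor. For example, use the points $(0,0)$ with multiplicity $n-2$ and $(0,-1)$; the multiplicities break symmetry. Then $A\cong B$ if and only if $x=y$. The public-coin randomized complexity of $\EQ$ with error $\delta$ is $\Omega(\log\frac1\delta)$, while its private-coin complexity is $\Omega(\log k)$. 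The streaming algorithm's randomness is private, so it yields a private-coin protocol, but the private-coin bound only gives $\log\log U$. For $\Omega(\log U)$ we instead argue via the output. In the congruent case we let Bob's set be a rotated copy determined by his input, so that Alice's data must combine with the transmitted state to determine the transformation. The cleaner route is the following output-counting argument.

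\textbf{Output-size arguments ($\log n$ and $\log U$).} Fix $A$ to be a rigid asymmetric configuration with integer coordinates in $[U]$, for instance points $(0,0)$ with multiplicity $n-1$ and $(1,0)$. Put $B$ at the end of the stream as $A+t$ for a translation $t\in[U]^2$. A correct algorithm must output $t$ exactly, since the rigid configuration has a unique congruence, and it must do so with probability $\ge 1-\delta>1/2$. Now consider the one-way protocol in which Bob, holding $t$, sends nothing; instead Alice holds $t$ and streams $B$. The final memory state must determine $t$ with probability $>1/2$, so the memory must take at least $\Omega(U^2)$ distinct values, which gives space $\Omega(\log U)$. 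The single-pass case is immediate. With $p$ passes, the transcript of the $p$ states at the end of each pass still determines the output, giving $ps=\Omega(\log U)$. For $\log n$, we similarly vary $A$ by using a rotation by one of $\Omega(n)$ distinct rational rotations (Pythagorean triples) with small height, or a translation indexed by $[n]$. This forces $\Omega(\log n)$ bits in the output, hence $ps=\Omega(\log n)$.

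\textbf{Main obstacle.} We expect the main obstacle to be the $\log\frac1\delta$ term under private randomness with multiple passes. We will cite the known result that randomized $\EQ$ with error $\delta$ requires $\Omega(\log\frac1\delta)$ communication even with public coins. Using that bound, the simulation gives $(2p-1)s\ge\Omega(\log\frac1\delta)$. Care is needed to make the anchor gadget rigid with multiplicities while staying in $\boundedQ{U}$. Combining the three bounds via $\max\ge\frac13\text{sum}$ yields $\Omega\bigl(\frac1p(\log n+\log U+\log\frac1\delta)\bigr)$.
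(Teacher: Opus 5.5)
Your $\log U$ term is fine. Counting outputs over the $\Theta(U^2)$ translations even gives $s=\Omega(\log U)$ without the $1/p$; this is the output-size observation the paper makes just before \cref{prop:congiden-LB}, and that proposition proves a workspace version via a one-point instance.

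The other two terms do not go through.
\begin{itemize}
\item Output counting cannot give $\log n$. By \cref{obs:rational_rotation}, every valid output $(\rho,t)$ is $U^{O(1)}$-rational, so only $U^{O(1)}$ outputs are possible. Hence $\Omega(n)$ distinct admissible rotations or translations exist only if $U\ge n^{\Omega(1)}$.
\item Your equality instance has only $k=\Theta(\log U)$ bits. So $\Rcc_\delta(\EQ_k)=\Omega(\min\{k,\log k+\log\frac1\delta\})$ can never exceed $O(\log U)$. It misses the $\log\frac1\delta$ term whenever $\delta=U^{-\omega(1)}$, e.g.\ $U=n$ and $\delta=2^{-\sqrt n}$.
\end{itemize}

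The missing idea is the one in the paper's \cref{prop:congtest-LB}: encode a whole $(n-4)$-point multiset $S(x)\subseteq\boundedQ{U-2}^2$ rather than a single point. Add the anchor $F=\{p_0,p_0,p_1,p_2\}$, whose distances $\sqrt{j-i}/(U(U-1))$ no other pair realizes, so that $A\cong B$ iff $S(x)=S(y)$. With $W=|\boundedQ{U-2}^2|=\Theta(U^4)$, the $\EQ$ instance has $N=\log\binom{W+n-5}{n-4}\approx(n-4)\log\frac{U^4}{n}$ bits (for $U^4\gg n$). The private-coin bound $\Omega(\log N+\log\frac1\delta)$ then gives the $\log n$ and $\log\frac1\delta$ terms at once. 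This is exactly the log-of-input-length bound you set aside as giving only $\log\log U$; it is weak in your version only because you encode one point. Your own gadget also works if you encode $\Theta(n)$ points of $[2,U]^2$ and give $(0,0)$ a larger multiplicity than their number.

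In fairness, the $n$-dependent part of what \cref{prop:congtest-LB} gives is $\log(n-2)+\log\log(1+\frac{U^4}{n-2})$. This is $\Omega(\log n)$ only when $U\ge n^{\Omega(1)}$, which is precisely the regime where $\log n=O(\log U)$ is already absorbed by your $\log U$ bound. For $U=O(1)$ the $\log n$ term is in fact false. There are only $O(1)$ candidate transformations, and fingerprinting the constant-length multiplicity vectors for each of them modulo a random prime of size $O(\frac{\log n}{\delta})$ takes $O(\log\log n+\log\frac1\delta)$ space.

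So your $\log n$ step can be repaired by that one-line remark. The $\log\frac1\delta$ shortfall, however, is a genuine gap and needs the larger encoding.
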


This justifies that both precision bound and input size are relevant parameters linked to the complexity of geometric congruence problems.

One-pass hardness results for geometric problems in streaming are scarce. One of the most relevant results is the randomized linear sketch lower bound for homomorphic fingerprint \cite{10.1145/2488608.2488726}. The lower bound only holds for the class of shift-homomorphic linear sketches and thus implies no general lower bound on cyclic shift-related problems. Importantly, the string-rotation lower bound result does not apply to our setting: a cyclic shift of $n$ symbols corresponds to an order-$n$ rotation, but Niven's theorem (\cref{lem:niven's}) restricts rational planar rotations to orders $1,2$ or $4$, so no such embedding exists in the rational two-dimensional setting.

As mentioned before, approximate congruence ($\ApxCong$) is solvable in polynomial time in the RAM model \cite{alt1988congruence}. In the streaming setting, we show that as long as $U=\Omega(n^2)$, 
any \(p\)-pass randomized streaming algorithm for the \(\epsilon\)-approximate requires \(\Omega(n/p)\) bits of space for essentially every meaningful regime of the error tolerance $\epsilon$.

\begin{restatable}{theorem}{apxHardness}\label{thm:apx-cong-lower-bound}
    Let $n,d\in \dbN$. For any $\epsilon\in (1/U, U/10)$ and $\delta\in (0,1/2)$, any $p$-pass randomized streaming algorithm for $\ApxCong_{n,\boundedQ{U},d}^\epsilon$ with error $1/2-\delta$ requires $\Omega\left(\frac{\delta^2}{p}\min\setB{n, \frac{U}{\max\set{\epsilon,1}}}\right)$. 
\end{restatable}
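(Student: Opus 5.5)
We reduce from a communication problem with a randomized lower bound of $\Omega(\delta^2 N)$ under error $1/2-\delta$, namely Gap-Hamming-Distance or inner product over $\{0,1\}^N$. The number of coordinates is $N=\min\{n,\lfloor U/(c\max\{\epsilon,1\})\rfloor\}$ for a suitable constant $c$. The standard streaming simulation applies: Alice streams the points determined by $x$, Bob those determined by $y$, and a $p$-pass, $s$-space algorithm yields a $(2p-1)$-round protocol with $O(ps)$ communication. So it suffices to build, in $\boundedQ{U}^d$, two multisets $A(x),B(y)$ of size $n$ whose $\epsilon$-approximate congruence is determined by a gap predicate on $(x,y)$. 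It is enough to work in the first two coordinates and pad the rest with zeros, so $d$ plays no role. If $N<n$, we pad both sets with the same far-away cluster of identical points to reach size $n$.

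For the encoding, let $g=\lceil 3\max\{\epsilon,1\}\rceil$ be a grid spacing. Place slot $i\in[N]$ at integer position $ig$ on a line. Since $Ng\le U$, the coordinates lie in $[U]$, hence in $\boundedQ{U}$. For each bit, Alice puts her point for slot $i$ at $(ig,0)$ when $x_i=0$ and at $(ig,g)$ when $x_i=1$; Bob does the same with $y$. With the identity map, every matched pair is within distance $0$ where $x_i=y_i$ and distance $g>\epsilon$ where $x_i\ne y_i$.

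The crux, and the main obstacle, is ruling out cleverer rigid motions. We must show that whenever $x\ne y$ (or the Hamming distance is in the far regime), no rotation plus translation brings $A$ within $\epsilon$ of $B$ under any bijection. Since $\epsilon$ can be as large as $U/10$, we cannot rely on tiny perturbations. To handle this, we add rigid anchors: a few heavy-multiplicity points or an asymmetric frame at scale about $U$, shared by both parties. Any $\epsilon$-alignment must map each anchor to its copy within $\epsilon$, and this pins the motion to within $O(\epsilon/U)$ in angle and $O(\epsilon)$ in translation. Points at distance at most $U$ then move by at most about $2\epsilon$. The slot spacing $g\ge 3\epsilon$ is chosen so that, under such a near-identity motion, each point can only be matched to the point in its own slot, and a mismatched bit forces a displacement close to $g$. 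Choosing the constants, for instance $g=10\max\{\epsilon,1\}$ and the anchor scale $U/10$ (which is where $\epsilon<U/10$ is used), should make these inequalities close.

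Approximate congruence then holds if and only if $x=y$, which is plain equality and only gives $\Omega(\log)$. So we must encode a harder predicate. The fix is to use slot-wise OR gadgets, i.e., disjointness. For slot $i$, Alice inserts a point at height $g$ only if $x_i=1$, and otherwise at height $0$. Bob inserts a matching point when $y_i=1$, and he also inserts the complementary points, so the two multisets have equal size and align exactly unless $x_i=y_i=1$ for some $i$. Equivalently, under turnstile/insertion semantics, this exact equality decides DISJ. Because DISJ's randomized lower bound of $\Omega(N)$ holds at constant error, amplifying to error $1/2-\delta$ yields the $\delta^2$ factor via the standard distributional bound for GHD/IP-style advantage. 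Combining $O(ps)\ge\Omega(\delta^2N)$ gives the claimed $s=\Omega\bigl(\frac{\delta^2}{p}\min\{n,U/\max\{\epsilon,1\}\}\bigr)$. I expect the geometric rigidity step, showing that large $\epsilon$ still admits no spurious alignment, to be the most delicate part.
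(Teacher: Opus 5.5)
There is a genuine gap at the encoding step, where you switch to disjointness; as described, that gadget cannot work. In your geometry (slot spacing and height gap $g>\epsilon$, motion pinned to near-identity), any admissible matching pairs points at the same position in the same slot. So the instance is approximately congruent iff $A=B$ as multisets. Whatever each party inserts (or deletes) in $A$ and $B$, this condition has the form $F(x)=G(y)$: here $F(x)$ is Alice's net contribution to $A$ minus hers to $B$, and $G(y)$ is Bob's net contribution to $B$ minus his to $A$. Slotwise it reads $F(x_i)=G(y_i)$. Making this equal to $\neg(x_i\wedge y_i)$ would require $F(1)=G(0)=F(0)=G(1)$ and also $F(1)\neq G(1)$, which is impossible. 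Globally, a predicate $[F(x)=G(y)]$ is just equality, with $O(1)$ public-coin and $O(\log N)$ private-coin complexity, so no gadget of this type gives a linear bound. Gap-Hamming and inner product do not fit either, because $\epsilon$-approximate congruence is a maximum-over-points condition, not a sum. The missing idea is to make the tolerance itself the AND gate. Take slot centres $c_j=5\epsilon j\mathbf{e}_1$; Alice places $c_j+\epsilon x_j\mathbf{e}_1$ in $A$ and Bob places $c_j-\epsilon y_j\mathbf{e}_1$ in $B$. Under the identity the slot-$j$ displacement is then $\epsilon(x_j+y_j)$, which exceeds $\epsilon$ iff $x_j=y_j=1$.

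Once the gadget works at scale $\epsilon$, your rigidity step is also too weak. Identical anchors in both sets pin the motion only up to a translation of size about $\epsilon$. Translating $A$ by $-\epsilon\mathbf{e}_1$ keeps those anchors within tolerance and brings every slot within $\epsilon$, since the displacements become $\epsilon\abs{x_j+y_j-1}\le\epsilon$; every instance would be accepted. The motion must be pinned exactly. The paper does this with deliberately offset anchors: $\pm f_A\mathbf{e}_1$ in $A$ and $\pm(f_A-\epsilon)\mathbf{e}_1$ in $B$. With $y=f_B\mathbf{e}_1-\rho f_A\mathbf{e}_1$, the two anchor constraints and the parallelogram law give $\norm{y}^2+\norm{t}^2\le\epsilon^2$, while the reverse triangle inequality gives $\norm{y}\ge\epsilon$. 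This forces $t=0$ and, by the equality case, $\rho\mathbf{e}_1=\mathbf{e}_1$. Every point on the axis is then fixed, and the instance is $\epsilon$-congruent iff $\DISJ_k(x,y)=1$, with $k=\min\{n-2,\Theta(U/\max\{\epsilon,1\})\}$. Your final step, $\Rcc_{1/2-\delta}(\DISJ_N)=\Omega(\delta^2N)$ via $O(1/\delta^2)$-fold repetition, is fine.
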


Our main motivation for studying the exact finite-precision rational variant is to introduce a space constraint that renders the problem meaningful in the streaming model.
There are two natural ways to introduce space constraints to geometric congruence problems, namely restricting the input precision for exact congruence, and rounding the inputs and allowing a perturbation tolerance. \cref{thm:apx-cong-lower-bound} essentially rules out the latter approach, suggesting that the finite-precision assumption for exact congruence is not merely an artifact for streaming adaptation.

\paragraph{Paper Organization.}
The remainder of the paper is organized as follows. In \cref{sec:tech-overview}, we give a technical overview of our algorithms. \Cref{sec:prel} defines notation and presents the preliminaries. \Cref{sec:PA-CongIden,sec:CM} present the product-anchor algorithm and complex-moment algorithm for two-dimensional congruence identification respectively. \Cref{sec:geometric-hashing} presents the algorithm for congruence hashing. In \Cref{sec:3D}, we discuss our algorithm for three-dimensional congruence identification using birthday paradox. \Cref{sec:hardness} establishes the streaming lower bound results.

\section{Technical Overview}\label{sec:tech-overview}
We begin with a structural overview of our congruence-identification algorithms. Both the \emph{product-anchor} and \emph{complex-moment} algorithms follow a common three-stage framework. In the first pass, the algorithm recovers the centroids of the two point multisets, which determines the correct translation and reduces the remaining task to solely rotation identification. In the second pass, the algorithm maintains a small collection of streaming-friendly statistics from which a constant number of candidate rotations can be extracted offline. In the third pass, given these candidate rotation-translation pairs $(\rho,t)$, the algorithm tests multiset equality between $B$ and $\rho A+t$ via a constant number of parallel standard Karp-Rabin fingerprints \cite{5390135}.

Often in two-dimensional geometry, $(x,y)\in \dbR^2$ is identified with the complex number $z=x+y\ii\in \dbC$, so a counterclockwise rotation by $\theta$ of $(x,y)$ about the origin is succinctly represented by the complex number multiplication $e^{\ii\theta}z$. 
As we will see, the algebraic perspective not only enables efficient computation but also motivates our algebraic approaches to rotation extraction.

\subsection{Precision Issues}
Centroids are natural choices for pinning down the translation and thus simplifying the subsequent task to only finding a rotation. As we will show in \cref{app:recentring}, the centroids of two congruent sets are related exactly by the congruence action. As noted in \cref{subsec:RAM-limitation} however, the precision for a full rational representation is prohibitively high. 
We handle the precision issue by embedding all rational points into a finite field. More specifically, we use the \emph{modular embedding} $\phi_p:\boundedQ{U}[\ii]\to \dbF_{p}[\ii]$ for a prime $p\equiv 3\pmod 4$, defined by $\phi_p(\frac{a}{b}+\frac{c}{d}\ii) = ab^{-1}+cd^{-1}\cdot \ii$. 
Here, $\ii$ is a formal element satisfying $\ii^2=-1$. Since $x^2+1$ is irreducible over $\dbF_p$ when $p\equiv 3\pmod 4$, adjoining $\ii$ gives the essential field structure $\dbF_p[\ii]\cong \dbF_{p^2}$. For the streaming use case, modular embedding is more suitable than rounding for handling precision issues. As noted in \cref{subsec:hardness-intro}, the strong lower bound for approximate congruence testing essentially rules out any polylogarithmic space streaming algorithm based on rounding. Moreover, the prime sampling procedure is efficient in terms both space and time: for sampling a prime $p\equiv 3\bmod 8$ in the range of $[\Delta,c\Delta]$ for some constant $c$, rejection sampling requires $O(\log \Delta)$ primality tests in expectation and $O(\log \Delta)$ space, and each primality test can be done in $O(\log \Delta)$ space and $\polylog \Delta$ time, leading to a logarithmic space and time streaming-friendly subroutine.

Our use of modular embeddings follows the standard rational reconstruction paradigm \cite{Wang81,WGD82}: bounded-rational points can be faithfully embedded into finite fields by choosing primes large enough to ensure injectivity.  
For high-precision point rational reconstruction, analogous faithfulness issues have been extensively studied in computer algebra and computational algebraic geometry. A common remedy is to combine modular images over several moduli via the Chinese remainder theorem with error tolerance \cite{BDFP15,Abb17}. 
This allows reconstruction to succeed with high probability even in the presence of erroneous, or ``bad'', modular images.

A crucial distinction in our setting is that a single prime modulus must be used throughout in order to preserve the necessary multiplicative algebraic structure. To handle possible failures for a bounded number of intermediate high-precision points, we adopt a simpler probabilistic ``bad prime'' approach: sample the modulus above a suitable threshold to ensure that with high probability all relevant quantities can be reconstructed faithfully.

With this faithfulness guarantee in place, the first and third passes become straightforward. The finite-field calculations can be faithfully lifted back to $\dbQ[\ii]$, which justifies the correctness of the recentring step in the first pass. The same faithfulness also allows the standard correctness reasoning for Karp-Rabin fingerprints in the third pass to carry over.

\subsection{Product Anchors and Complex Moment}
The main challenge is to derive informative, space-efficient statistics that can be computed in as few streaming passes as possible. The algebraic representation of two-dimensional rotations leads naturally to the \emph{product-anchor} approach. 
\begin{definition}[Product Anchor] \label{def:prod-anchor}
For a multiset $S\subseteq \dbC$, its product anchor is
\(
    \Pi(S)\defeq \prod_{\substack{z\in S , z\neq 0}} z.
\)
\end{definition}
A rotation  \(\theta\) acts on each point as multiplication by \(e^{\ii\theta}\), and a product of such multiplications accumulates the rotation multiplicatively.
Another related approach is \emph{complex moment}, which was first introduced by Abu-Mostafa and Psaltis \cite{Mostafa84}. 
\begin{definition}[Complex Moments] \label{def:cpx-moment}
For a multiset $S\subseteq \dbC$ and $p,q\in\dbZ_{\geq 0}$, the \emph{order-$(p,q)$ complex moment} is defined as
\(
    M_{p,q}(S)\defeq \sum_{z\in S} z^p \overline{z}^q.
\)
We also use the shorthand $M_p(S)\defeq M_{p,0}(S)$.
\end{definition}

If $B$ is obtained from $A$ by a rotation $\rho$, then each of the $k$ nonzero points contributes one $\rho$ factor, so the ratio of the two product anchors is exactly $\rho^k$. Similarly, if $B=\rho A$, then their $k$-th moments satisfy $M_k(B) = \rho^k M_k(A)$. 
It is not a coincidence that the two methods are related. Indeed, for a set $S\subseteq \dbC$, consider its characteristic polynomial $f_S(x) = \prod_{s\in S} (x-s)$. The constant term of $f_S$ is up to sign the product anchor $\Pi(S)$, while the coefficients of $f_S$ are elementary symmetric polynomials in the elements of $S$. By Newton's identities, these can in turn be expressed in the power sums of $S$, i.e. complex moments.

While complex moments are the more established tool in the computer vision literature, the product anchor appears to be the more straightforward approach of the two. One clear advantage of product anchor is that it completely avoids the vanishing phenomenon.
When the $k$-th moments of two sets are both zero, the information on $\rho$ is annihilated. In contrast, by simply omitting the zero elements from the product, the product anchor is guaranteed to be nonzero.
On the other hand, the precision issues for the product anchor over $\boundedQ{U}^2$ (see \cref{obs:sum-prod-precision}) perhaps explain why this method has received relatively little attention.

For the complex moment method, it has long been observed in the computer vision community that it becomes ineffective when the underlying images have many moments that are identically zero or numerically close to zero. In particular, when the candidate shapes exhibit strong symmetries, a large class of commonly used moments simply vanish (\cite[Lemma 1]{FS06}).
In settings where the images are promised to be one of a few symmetric shapes, Flusser and Suk \cite{FS06} demonstrate that one can choose an appropriately tailored subset of moments to solve the geometric congruence problem. This preprocessing strategy becomes impractical for pattern classification when the shape catalogue is large, and is arguably overkill for mere congruence testing.

Our algorithms overcome the respective limitations of the product-anchor and complex-moment methods, and the underlying ideas may extend to broader applications.
For the product-anchor method, the precision issue is readily mitigated by the modular embedding technique. However, finite-field hashing introduces a new risk: recovering the rotation by solving $\widehat{x}^k=\widehat{\rho}^k$ over a finite field may produce excessive solutions, causing a prohibitive space blowup in the equality tests performed during the third pass. 
Equipped with more advanced number-theoretic tools, we show in \cref{subsec:rot-extract-PA} that the number of solutions in $\dbF_{p^2}$ can be controlled with high probability by presampling $O(\log\log n\cdot \log\frac{1}{\delta})$ moduli.

As for the complex moment approach, at the crux of our algorithm is a new structural fact of complex moments, which may be of independent mathematical and algorithmic interest. We prove that for a collection of $n$ two-dimensional equal-length nonzero rational vectors, one of its power-of-two complex moments does not vanish.
\begin{restatable}{theorem}{nonzeromoment}\label{thm:nonzero-moment}
    Let $S=\set{z_1,\ldots,z_n}\subseteq \dbQ[\ii]\setminus\set{0}$ such that $\abs{z_j}=\abs{z_k}$ for all $j,k\in [n]$.  Write $n=2^s m$ with $s\in \dbZ_{\geq 0}$ and $m$ odd, then there exists $j\in \set{0}\cup [s]$ such that $M_{2^j}(S) \neq 0$.
\end{restatable}

The special case of odd $n$ (hence $s=0$) is proven by Ball \cite{Bal73} (see \cref{thm:odd_equal_length_vectors}).
The key consequence of \cref{thm:nonzero-moment} is that a nonzero moment of a point set over $\dbQ[\ii]\setminus\set{0}$ can always be identified by checking only a logarithmic number of moments, which entails our algorithm’s efficiency with proven completeness.

\subsection{Congruence Testing in Three Dimensions} 
\label{subsubsec:TO-3D}
In addition to considering congruence testing for point sets in $\boundedQ{U}^2$, we study the same problem in $\boundedQ{U}^3$. Perhaps surprisingly, several techniques that we use for congruence testing in $\boundedQ{U}^2$ do not generalize even to a plane in $\boundedQ{U}^3$. For example, Ball~\cite{Bal73} states that no odd number of rational vectors lying on a plane in $\boundedQ{U}^2$ can sum to zero; in contrast, this statement does not hold for rational vectors lying on a plane in $\boundedQ{U}^3$, as detailed in~\cref{sec:3D}.

Rather than following a similar route to solve congruence testing for point sets in $\boundedQ{U}^3$, our techniques for the three-dimensional case rely on an interesting geometric observation. Specifically, for any two congruent $n$-point sets $A$ and $B$, given a random pair of points $p_A \in A$ and $p_B \in B$ such that $p_A$ is matched to $p_B$, we can use $\tilde{O}(n^{1/3})$ space to find, with high probability, another pair of points $q_A \in A$ and $q_B \in B$ such that $q_A$ is matched to $q_B$. The idea is to break symmetry by identifying a plane perpendicular to the line from the centroid of $A$ to $p_A$ and the corresponding plane perpendicular to the line from the centroid of $B$ to $p_B$, where each of these two planes contains $O(n^{2/3})$ points, as shown in \cref{app:3d-algo}.
The basic intuition for the three-dimensional algorithm is to obtain a pair of matched non-collinear anchor 3-tuple points $(\alpha, \beta) \in A^3 \times B^3$, and then use this pair of anchors to align the point sets for equality testing. In addition to the fingerprint and finite field techniques used in the two-dimensional case, we employ a \emph{birthday paradox} argument to sample these anchor tuples. To avoid an explosion in the number of sampled tuples, we slice the point sets into several parallel planes based on each candidate point, and prove that almost all candidate points admit a sparse plane, which successfully reduces the sample space and keeps the total number of sampled tuples sub-linear.

\section{Preliminaries} \label{sec:prel}

\subsection{Notation}
We denote $[k] \defeq \set{1,\ldots,k}$, and $\ii = \sqrt{-1}$. For $d\in \dbN$, $r>0$ and $c\in\dbC^d$, define
\[
    \calB_r(c) \defeq \set{z\in \dbC^d:\norm{z-c}=r}
    \quad\text{and}\quad
    \BB_r(c) \defeq \set{z\in \dbC^d:\norm{z-c}\leq r},
\]
and we also write $\calB_r\defeq \calB_r(0)$ and $\BB_r\defeq \BB_r(0)$. We write $\log(\cdot)$ for base-2 logarithm and $\ln(\cdot)$ for natural logarithm.

For a multiset $S$ and a function $f$, we use the shorthand $f(S)\defeq \set{f(s):s\in S}$ for the corresponding multiset. In particular, for an element $v$, we denote $S+v\defeq \set{s+v:s\in S}$ and $vS\defeq \set{vs:s\in S}$. 

As alluded to in the introduction, we customarily identify a point $(x,y)\in \dbR^2$ with a complex number $x+y\ii\in \dbC$. For the rational domains $\dbQ^2$ and $\boundedQ{U}^2$, the complex counterparts are denoted by $\dbQ[\ii]$ and $\boundedQ{U}[\ii]$ respectively. The real part and imaginary part of a complex number $z$ are denoted by $\Re(z)$ and $\Im(z)$ respectively.

\subsection{Finite-precision Rational Numbers}\label{subsec:QU}
We recall that the set of $U$-rational numbers is defined as $\boundedQ{U} \defeq \setB{\pm \frac{p}{q}:p,q\in [U]} \cup\set{0}$. We also call a complex number $z$ in $\boundedQ{U}[\ii]$ \emph{$U$-rational complex}.
As noted before, $\boundedQ{U}$ is not closed under addition and multiplication for the same precision parameter $U$. Noting that
\begin{equation*}
    \prod_{i=1}^k \frac{p_i}{q_i} = \frac{\prod_{i=1}^k p_i}{\prod_{i=1}^k q_i} \qquad \text{ and } \qquad \sum_{i=1}^k \frac{p_i}{q_i} = \frac{\sum_{i=1}^k p_i\prod_{j\neq i} q_j}{\prod_{i=1}^k q_i},
\end{equation*}
one can observe that: 
\begin{observation}\label{obs:sum-prod-precision}
    Let $r_1,\ldots,r_k\in \dbQ$ such that $r_i$ is $U_i$-rational for each $i\in [k]$.
    Then the product $\prod_{i=1}^k r_i$ is $\left(\prod^k_{i=1} U_i\right)$-rational, and the sum $\sum_{i=1}^k r_i$ is $\left(k\prod^k_{i=1} U_i\right)$-rational. 
    In particular, the product of $k$ $U$-rational numbers is $U^k$-rational, and the sum of $k$ $U$-rational numbers is $kU^k$-rational. 
    On the other hand, there are rational numbers $r_1,\ldots,r_k\in \dbQ$ where $r_i$ is $U_i$-rational for each $i\in [k]$, and the product and sum require the precisions stated in the above bounds.
\end{observation}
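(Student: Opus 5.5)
The upper bounds follow by direct computation with the two displayed identities, and the tightness claim needs explicit examples.

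\textbf{Product bound.} Write $r_i = \pm p_i/q_i$ with $p_i,q_i\in[U_i]$; the case where some $r_i=0$ is trivial, since then the product is $0$. Then $\prod_i r_i = \pm (\prod_i p_i)/(\prod_i q_i)$. Both numerator and denominator are positive integers at most $\prod_i U_i$, so the product is $(\prod_i U_i)$-rational.

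\textbf{Sum bound.} Use the displayed identity $\sum_i r_i = \bigl(\sum_i \pm p_i\prod_{j\ne i}q_j\bigr)/\prod_i q_i$, dropping zero summands first.
\begin{itemize}
\item The denominator is at most $\prod_i U_i \le k\prod_i U_i$.
\item Each summand of the numerator has absolute value at most $p_i\prod_{j\ne i}q_j\le \prod_j U_j$, so the numerator has absolute value at most $k\prod_i U_i$.
\item If the numerator is $0$, the sum is $0$, which lies in $\boundedQ{U}$ by definition. Otherwise we have written the sum as $\pm P/Q$ with $P,Q\in[k\prod_i U_i]$.
\end{itemize}
Specializing to $U_i=U$ gives the ``in particular'' statements.

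\textbf{Tightness of the product.} Take $r_i = U_i/1$. The product $\prod_i U_i$ is an integer, so any representation $P/Q$ of it has $P = Q\prod_i U_i \ge \prod_i U_i$. Hence the bound $\prod_i U_i$ is attained.

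\textbf{Tightness of the sum.} This is the main obstacle, because the sum bound carries the factor $k$. Reduced denominators cannot force that factor: the reduced denominator of a sum divides $\prod_i q_i\le\prod_i U_i$. So the $k$ must come from the numerator. I would take all $r_i = U_i$ as integers, or take distinct primes $q_i\le U_i$ to make the denominator $\prod_i q_i$ exact and coprime to the numerator. Then the sum has absolute value close to $k\prod_i U_i$ only in degenerate cases such as $U_j=1$ for $j\ne i$.

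For general $U_i$, I expect the exact bound $k\prod_i U_i$ is not attained, and the honest matching statement is tightness up to the factor $k$, i.e.\ in the exponent. The example I would use is $r_i = 1/q_i$ with $q_i$ distinct primes near $U_i$. Its sum has reduced denominator exactly $\prod_i q_i$, which shows that precision $\Omega(\prod_i U_i)$, up to prime-density constants, is necessary.

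I would therefore state the converse as holding up to constant or $k$ factors, and interpret ``require the precisions stated'' in that asymptotic sense. That is the only sense in which the bounds are used later, namely $\log$ of the precision being $\Theta(k\log U)$.
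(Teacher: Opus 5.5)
Your upper-bound arguments are the paper's argument. The paper only displays the identities $\prod_i p_i/q_i=\prod_i p_i/\prod_i q_i$ and $\sum_i p_i/q_i=\bigl(\sum_i p_i\prod_{j\ne i}q_j\bigr)/\prod_i q_i$ and reads off the bounds. It gives no argument for the final tightness sentence, and your example $r_i=U_i$ settles that sentence for products. For sums you are right that the claim is false as literally stated, and you can prove this rather than ``expect'' it. For $k\ge 2$ the reduced denominator divides $\prod_i q_i\le\prod_i U_i<k\prod_i U_i$. So attaining $k\prod_i U_i$ forces the unreduced numerator $\sum_i \pm p_i\prod_{j\ne i}q_j$ to have absolute value exactly $k\prod_i U_i$. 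That forces $p_i=q_i=U_i$ for all $i$ with equal signs, but then the sum is $\pm k$, which is merely $k$-rational. Hence the exact bound is attained only when $k=1$ or all $U_i=1$.

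The gap is in your proposed replacement, which is wrong in both directions.

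\textbf{It gives away too much.} The factor $k$ is not intrinsically lost. What must be large is the reduced numerator $|s|\cdot b$, not $|s|$, and your choice of numerators equal to $1$ is what throws the $k$ away. Take $r_i=U_i/q_i$ with distinct primes $q_i\in\bigl((1-\epsilon)U_i,U_i\bigr)$ and $\epsilon\le 1/2$, so that $q_i\nmid U_i$. Then $N=\sum_i U_i\prod_{j\ne i}q_j\equiv U_l\prod_{j\ne l}q_j\not\equiv 0\pmod{q_l}$ for every $l$. So $N/\prod_j q_j$ is already reduced, and $N\ge k(1-\epsilon)^{k-1}\prod_i U_i$. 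With $\epsilon=1/k$ this is within a factor $e$ of the bound. For example, with $k=2$ and $U=14$ you get $14/13+14/11=336/143$, against the bound $392$, while $U^k$ is only $196$.

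\textbf{It claims too much.} Any such example, yours included, needs $k$ distinct primes just below the $U_i$, and this is a real restriction, not a constant. Suppose all $U_i=U$. Every summand's denominator divides $L=\mathrm{lcm}(1,\dots,U)=e^{(1+o(1))U}$, and the sum has absolute value at most $kU$, so every such sum is $kUL$-rational. For $k\gg U/\log U$ this is far below $U^k$ even on the logarithmic scale; for $U=2$, every sum of $k$ terms is $4k$-rational, versus the bound $k2^k$. So neither your $\Omega(\prod_i U_i)$ converse nor your closing claim that the log-precision is $\Theta(k\log U)$ holds in general.

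The correct converse is near-tightness, up to a constant factor, when the $U_i$ are large enough relative to $k$ that enough distinct primes exist. It genuinely fails otherwise.
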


The above observation leads to the following corollaries for bounded precisions, which we defer to proofs to \cref{app:omitted-QU}.
\begin{corollary}\label{obs:lin-sys-precision}
    If $(x,y)$ is a solution to a $2\times 2$ linear system with $U$-rational coefficients, then $x$ and $y$ are $4U^8$-rational.
\end{corollary}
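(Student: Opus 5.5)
Solve the system by Cramer's rule and track precision with \cref{obs:sum-prod-precision}. Write the system as
\[
a_{11}x + a_{12}y = b_1, \qquad a_{21}x + a_{22}y = b_2,
\]
with all $a_{ij},b_i\in\boundedQ{U}$. When the determinant $D = a_{11}a_{22}-a_{12}a_{21}$ is nonzero, the unique solution is $x = D_x/D$ and $y = D_y/D$, where $D_x = b_1a_{22}-a_{12}b_2$ and $D_y = a_{11}b_2-b_1a_{21}$. If $D=0$ and a solution still exists, the statement only needs \emph{some} solution to be bounded. (Read literally, one cannot bound every solution of a degenerate system, since there are infinitely many.) I would treat that case separately, or interpret the corollary as concerning the nondegenerate case, which is what the algorithm uses.

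Rather than applying the observation blindly, I would clear denominators directly. Each of $D$, $D_x$, $D_y$ has the form $\frac{p_1p_2}{q_1q_2}-\frac{p_3p_4}{q_3q_4}$ with $p_i,q_i\in[U]\cup\{0\}$ (signs absorbed). Putting this over the common denominator $q_1q_2q_3q_4$ gives a numerator of absolute value at most $2U^4$ and a denominator at most $U^4$. The quotient $x = D_x/D$ is then
\[
x=\frac{N_x\cdot Q}{Q_x\cdot N},
\]
where $D_x=N_x/Q_x$ and $D=N/Q$. Both the numerator and the denominator have absolute value at most $2U^4\cdot U^4 = 2U^8 \le 4U^8$. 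The same bound holds for $y$. Before reducing the fraction, I would check that the numerator and denominator are positive integers up to sign, and that the case $x=0$ is covered by the $\{0\}$ in the definition of $\boundedQ{U}$.

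The main obstacle is bookkeeping rather than ideas. The point is to avoid the looser bound that the observation gives directly: treating $D$ as a sum of two $U^2$-rationals yields $2U^4$, and then the quotient of two such numbers would give roughly $4U^8$ as well. So either route lands at $4U^8$, and I would present the observation-based version for uniformity with the paper. Concretely, I would note that the reciprocal of a $V$-rational is $V$-rational. By \cref{obs:sum-prod-precision}, $D_x$ and $D$ are then each $2U^4$-rational, and their quotient is $(2U^4)^2=4U^8$-rational.
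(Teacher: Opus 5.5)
Your proposal is correct and follows the paper's proof: Cramer's rule gives explicit formulas for $x$ and $y$, and \cref{obs:sum-prod-precision} makes numerator and determinant each $2U^4$-rational, so the quotient is $(2U^4)^2=4U^8$-rational. Your note that the nondegenerate case $D\neq 0$ is implicitly assumed is a fair clarification the paper leaves unstated; it holds in the application to \cref{obs:rational_rotation}, where $D=-(d_1^2+d_2^2)\neq 0$.
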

\begin{corollary}\label{obs:rational_rotation}
    Suppose $a_1,a_2,b_1,b_2\in \boundedQ{U}[\ii]$ such that $a_1\neq a_2$ and $b_1\neq b_2$. If there exists $\rho,t\in \dbC$ such that $\abs{\rho}=1$ and $b_j = \rho a_j +t$ for $j\in[2]$, then $\rho$ is $2^{10}U^{16}$-rational complex and $t$ is $2^{22}U^{35}$-rational complex. 
\end{corollary}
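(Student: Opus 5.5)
The approach is to solve for $\rho$ and $t$ explicitly from the two equations and track precision with \cref{obs:sum-prod-precision} and \cref{obs:lin-sys-precision}. Subtracting the two equations eliminates $t$ and gives $b_1-b_2=\rho(a_1-a_2)$. Since $a_1\neq a_2$, this yields
\[
    \rho=\frac{b_1-b_2}{a_1-a_2}.
\]
Writing $\rho=x+y\ii$ and separating real and imaginary parts, $(x,y)$ solves the $2\times2$ real linear system
\[
    \Re(a_1-a_2)\,x-\Im(a_1-a_2)\,y=\Re(b_1-b_2),
    \qquad
    \Im(a_1-a_2)\,x+\Re(a_1-a_2)\,y=\Im(b_1-b_2).
\]
Its determinant is $\abs{a_1-a_2}^2\neq 0$, so the system is nonsingular and its solution is exactly $(\Re\rho,\Im\rho)$.

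Next we bound the precision of the coefficients. Each real or imaginary part of $a_1-a_2$ or $b_1-b_2$ is a difference of two $U$-rational numbers. By \cref{obs:sum-prod-precision} such a difference is $2U^2$-rational. Applying \cref{obs:lin-sys-precision} with precision parameter $2U^2$ shows that $x$ and $y$ are $4(2U^2)^8=2^{10}U^{16}$-rational, which is exactly the claimed bound for $\rho$. The hypothesis $\abs{\rho}=1$ is not needed for this part; it only fixes the geometric meaning of $\rho$.

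For the translation, we use $t=b_1-\rho a_1$. Write $a_1=\alpha+\beta\ii$ and $\rho=x+y\ii$. Then $\Re(\rho a_1)=x\alpha-y\beta$, where each product is $(2^{10}U^{16}\cdot U)=2^{10}U^{17}$-rational. The difference of these two products is therefore $2\cdot(2^{10}U^{17})^2=2^{21}U^{34}$-rational. Finally, $\Re t=\Re b_1-\Re(\rho a_1)$ is a sum of a $U$-rational and a $2^{21}U^{34}$-rational number, so it is $2\cdot U\cdot 2^{21}U^{34}=2^{22}U^{35}$-rational. The imaginary part is handled in the same way, and this matches the stated bound.

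The only potential obstacle is bookkeeping: we must apply \cref{obs:sum-prod-precision} with the correct number of terms $k$ in each sum so that the constants come out exactly as $2^{10}U^{16}$ and $2^{22}U^{35}$, rather than something looser.
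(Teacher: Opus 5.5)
Your proof is correct and follows the paper's argument. You subtract the two equations to eliminate $t$, solve the resulting $2\times 2$ real system for $(\Re\rho,\Im\rho)$, whose coefficients are $2U^2$-rational, via \cref{obs:lin-sys-precision}, and then bound $t=b_1-\rho a_1$ using \cref{obs:sum-prod-precision}. Your explicit chain of bounds for $t$ ($2^{10}U^{17}$, then $2^{21}U^{34}$, then $2^{22}U^{35}$) spells out the step that the paper leaves as ``we can verify.''
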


We also need the following classical characterization of all rotational symmetries in $\dbQ[\ii]$.
\begin{lemma}[\cite{MR80123}] 
\label{lem:niven's}
    If $z\in \dbQ[\ii]$ is a root of unity, then $z\in\set{\pm 1,\pm \ii}$.
\end{lemma}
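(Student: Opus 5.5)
The statement is Niven's theorem for Gaussian rationals: a root of unity lying in $\dbQ[\ii]$ must be one of $\pm 1,\pm\ii$. I would argue through the degree of cyclotomic polynomials.

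\textbf{Step 1 (degree bound).} Suppose $z\in\dbQ[\ii]$ is a primitive $k$-th root of unity. Its minimal polynomial over $\dbQ$ is the cyclotomic polynomial $\Phi_k$, which is irreducible of degree $\varphi(k)$. Since $z$ lies in the degree-$2$ extension $\dbQ(\ii)$, we have $[\dbQ(z):\dbQ]\le 2$. Hence $\varphi(k)\le 2$.

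\textbf{Step 2 (solve $\varphi(k)\le 2$).} Use the multiplicativity of $\varphi$ together with $\varphi(p^a)=p^{a-1}(p-1)$.
\begin{itemize}
\item Any prime $p\ge 5$ dividing $k$ forces $\varphi(k)\ge 4$.
\item $9\mid k$ forces $\varphi(k)\ge 6$.
\item $8\mid k$ forces $\varphi(k)\ge 4$.
\end{itemize}
So $k$ divides $12$. Checking the divisors of $12$, only $k\in\{1,2,3,4,6\}$ satisfy $\varphi(k)\le 2$.

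\textbf{Step 3 (exclude $k=3,6$).} The primitive cube and sixth roots of unity are $\frac{\pm1\pm\sqrt{-3}}{2}$. These generate $\dbQ(\sqrt{-3})$, and $\dbQ(\sqrt{-3})\neq\dbQ(\ii)$. One way to see this: if $\sqrt{-3}=a+b\ii$ with $a,b\in\dbQ$, then expanding gives $a^2-b^2=-3$ and $2ab=0$.
\begin{itemize}
\item If $b=0$, then $a^2=-3$, which is impossible over $\dbQ$.
\item If $a=0$, then $b^2=3$, which is irrational.
\end{itemize}
So $k\in\{1,2,4\}$, and $z\in\{\pm1,\pm\ii\}$.

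\textbf{Main obstacle.} The one nontrivial ingredient is the irreducibility of $\Phi_k$ over $\dbQ$, which I would simply cite. If we want a self-contained argument instead, an elementary route avoids cyclotomic theory. Write $z=a+b\ii$, so $|z|=1$. Then $2a=z+\bar z$ is an algebraic integer, since it is a sum of roots of unity, and it is rational, so it is an integer. The same reasoning shows $z^2+\bar z^2=4a^2-2$ is an integer.

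With $|a|\le 1$ and $2a\in\dbZ$, we get $a\in\{0,\pm\tfrac12,\pm1\}$. Since $a^2+b^2=1$:
\begin{itemize}
\item $a=0$ gives $b=\pm1$, so $z=\pm\ii$.
\item $a=\pm1$ gives $b=0$, so $z=\pm1$.
\item $a=\pm\tfrac12$ gives $b^2=\tfrac34$, so $b=\pm\tfrac{\sqrt3}{2}$, which is irrational and contradicts $z\in\dbQ[\ii]$.
\end{itemize}
This version is cleaner and is the one I would write.
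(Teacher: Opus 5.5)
Both of your arguments are correct. Your preferred elementary one is essentially the argument behind the cited result: the paper gives no proof of its own and simply invokes Niven's theorem, whose standard proof shows that $2\cos\theta = z+\bar z$ is a rational algebraic integer and hence an integer, which forces $\Re z\in\{0,\pm\tfrac12,\pm1\}$ and then settles $\Im z$ via $|z|=1$. Your cyclotomic route is also valid; it relies on the heavier irreducibility of $\Phi_k$ but yields the more general fact that a root of unity of order $k$ in a number field $K$ satisfies $\varphi(k)\le[K:\mathbb{Q}]$. Your side remark that $z^2+\bar z^2$ is an integer is never used and can be dropped.
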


\subsection{Karp-Rabin fingerprint} \label{app:KR-hash}
The Karp-Rabin fingerprint technique \cite{5390135} for bounded-multiplicity multisets equality serves as a fundamental algorithmic building block for this work. For convenience, we present the detailed quantitative analysis over the finite domain $[K]$ for an arbitrary $K\in \dbN$, and this extends to an arbitrary finite domain by canonical indexing.

Let $(A,B)$ be the multiset equality instance over the domain $[K]$, and $m$ be the maximum multiplicity of a point in either set. 
For a tolerance parameter $\delta\in (0,1)$, define $M=\frac{2}{\delta}\max\set{K,m}$. The fingerprint routine begins with sampling a uniform random prime $q\in [M,2M]$, and then sampling a base $b\in \dbF_q$ uniformly at random. The fingerprint for each $S\in\set{A,B}$ is given by
\[
    h_S(b) = \sum_{z\in S} b^z \in \dbF_q.
\]
It is clear that the fingerprints $h_A(b)$ and $h_B(b)$ can be maintained in the turnstile model (and hence insertion-only model). At the end of the pass, the routine declares that the multisets are equal iff $h_A(b)\pequiv[q] h_B(b)$.

It is immediate that $h_A(b)\pequiv[q] h_B(b)$ holds true regardless of the values of $b$ and $q$ whenever $A$ and $B$ are identical multisets. For the soundness, denote $f_z^S$ the multiplicity of $z\in [K]$ in $S$. Since $\abs{f_z^A - f_z^B}\leq m < K$, 
\[
    h_A(b) - h_B(b) = \sum_{z\in A\cup B} (f_z^A - f_z^B) b^z
\]
defines a non-zero $\dbF_q$-polynomial in $b$ with degree at most $K$. By Schwartz-Zippel lemma,
\[
    \Pr(h_A(b) \pequiv[q] h_B(b)) = \Pr_b(b\text{ is a root of }h_A-h_B) \leq \delta.
\]
This shows that the fingerprint routine succeeds with probability at least $1-\delta$. Finally, the space usage of the routine is $O(\log M) = O(\log K+\log m+\log \frac{1}{\delta})$.

\subsection{Number Theory}\label{subsec:finite-field}
We sometimes use the notation $\calP$ to denote the set of all primes.
For a prime $p$, we denote $\dbF_p$ the field with $p$ elements, and $\dbF_p[x]$ the polynomial ring with coefficients in $\dbF_p$ and indeterminate $x$.
For a prime $p$ with $p \equiv 3 \bmod{4}$, the quotient ring $\dbF_p[x]/\langle x^2+1\rangle$ is isomorphic to the field $\dbF_{p^2}$, while under the canonical representation we have $\dbF_p[x]/\langle x^2+1\rangle \cong \dbF_p[\ii]$. 
We use $\dbF_p[\ii]$ to emphasize the natural correspondence from $\dbZ[\ii]$ to $\dbF_p[\ii]$, to which the operators $\Re(\cdot)$ and $\Im(\cdot)$ extend naturally.

We interchangeably write $x\pequiv \lambda$ to denote $x \equiv \lambda \bmod{p}$ when $x \in \dbZ$. This notation is extended to complex elements as $x + y \ii \pequiv \lambda + \eta \ii$, meaning $x \equiv \lambda \bmod{p}$ and $y \equiv \eta \bmod{p}$ for integers $x, y$. 
We sometimes write $\Modp{\lambda}$ to highlight that $\lambda$ is viewed as an element of $\dbF_p$, and we use the analogous notation $\ModFpi{\mu}$ for elements $\mu$ of $\dbF_p[\ii]$. 

For a prime $p$, define the canonical ring homomorphism $\phi_p:\dbZ\to \dbF_p$ by $\phi_p(x) = \Modp{x}$, which extends naturally to a homomorphism from $\dbZ[\ii]$ to $\dbF_p[\ii]$. The condition $p \equiv 3 \bmod{4}$ guarantees that $\dbF_p[\ii]$ is a field. Consequently, the image of $a+b\ii\in \dbZ[\ii]$ is invertible under $\phi_p$ whenever $p\nmid a,b$, since $(a+b\ii)^{-1} = \frac{1}{a^2+b^2}(a-b\ii)$ and $a^2+b^2 \not\pequiv0$ unless $a,b\pequiv 0$.

A natural extension of this map to rational numbers (and hence $\dbQ[\ii]$) would be 
\(
    \phi_p(a/b) = \Modp{ab^{-1}}
\)
for $a \in \dbZ$ and $b \in \dbZ \setminus \set{0}$, provided that $\gcd(a,b)=1$ and $p \nmid b$. 
Whenever the inputs are well-defined, the extended map $\phi_p:\dbQ[\ii]\to \dbF_p$ preserves the homomorphism properties, namely
\begin{equation}\label{eq:homomorphism}
    \phi_p(z_1 + z_2) = \phi_p(z_1) + \phi_p(z_2) 
    \quad\text{and}\quad
    \phi_p(z_1z_2) = \phi_p(z_1) \phi_p(z_2) 
\end{equation}
for all $z_1,z_2\in \dbQ[\ii]$ well-defined under $\phi_p$. 

The modular embedding $\phi_p$ is used to compress information during streaming. To recover the $\dbQ[\ii]$ quantities from their finite-field representations, we use \emph{rational reconstruction}:
\begin{lemma}[\emph{Rational Reconstruction}, \cite{Wang81,WGD82}]\label{lem:rational_reconstruction}
    Suppose $p> 2N^2$. Then for every $x\in \dbF_p$, there is at most one pair of $(a,b)$ such that $\abs{a}\leq N$, $b\in [N]$, $\gcd(a,b)=1$ and $x\pequiv ab^{-1}$.
\end{lemma}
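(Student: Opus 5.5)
The plan is a direct uniqueness argument: assume two pairs represent the same residue, clear denominators to get a congruence modulo $p$, and then use the size bound $p>2N^2$ to upgrade that congruence to an equality of integers. Lowest-terms representation then forces the two pairs to coincide.

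\emph{Step 1 (the denominators are invertible).} Since $N\ge 1$ we have $p>2N^2\ge N$, so every $b\in[N]$ satisfies $1\le b<p$. Hence $p\nmid b$, $b$ is invertible in $\dbF_p$, and the expression $ab^{-1}$ is well defined.

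\emph{Step 2 (reduce to a congruence).} Suppose $(a,b)$ and $(a',b')$ both satisfy the hypotheses and
\[
x\pequiv ab^{-1}\pequiv a'b'^{-1}.
\]
Multiplying through by $bb'$, which is a unit by Step 1, gives
\[
ab'\pequiv a'b.
\]

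\emph{Step 3 (lift the congruence to an equality).} This is the only step that uses the threshold on $p$. From $|a|,|a'|\le N$ and $b,b'\in[N]$ we get
\[
|ab'-a'b|\le |a|\,b'+|a'|\,b\le 2N^2<p.
\]
The integer $ab'-a'b$ is divisible by $p$ yet has absolute value less than $p$, so it equals $0$. Hence $ab'=a'b$ holds in $\dbZ$.

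\emph{Step 4 (conclude from lowest terms).} Since $b,b'>0$, Step 3 says $a/b=a'/b'$ as rational numbers. Every rational number has a unique representation with coprime numerator and positive denominator. Because $\gcd(a,b)=\gcd(a',b')=1$, it follows that $(a,b)=(a',b')$.

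Concretely, in Step 4 one can argue as follows. From $ab'=a'b$ we get $b\mid ab'$, and $\gcd(a,b)=1$ gives $b\mid b'$. Symmetrically $b'\mid b$. Both are positive, so $b=b'$, and then $a=a'$.

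There is no real obstacle here. The only points needing care are checking that $b$ is a unit modulo $p$ (Step 1) and obtaining the factor $2$ in the bound $2N^2$ from the triangle inequality (Step 3).
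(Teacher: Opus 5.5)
Your proof is correct and complete. Clearing denominators gives $ab'\equiv a'b \pmod p$, the bound $|ab'-a'b|\le 2N^2<p$ turns this into an integer identity, and coprimality with positive denominators then forces $(a,b)=(a',b')$.

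The paper does not prove this lemma; it only cites \cite{Wang81,WGD82}. Your argument is the standard uniqueness argument behind that result. Those references also give the extended-Euclid procedure that actually finds the pair, which the paper uses in its offline reconstruction stages, but the lemma as stated only asserts uniqueness.
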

\begin{corollary}\label{cor:rat-recon_U-rational}
    For a given precision parameter $U$, if $p$ is a prime at least $4U^2$, then $\phi_p(z)$ is well-defined and distinct for every $z\in \boundedQ{U}[\ii]$. 
\end{corollary}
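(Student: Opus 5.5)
We prove Corollary~\ref{cor:rat-recon_U-rational} in two parts: $\phi_p$ is well defined on $\boundedQ{U}[\ii]$, and it is injective there. Both reduce to Lemma~\ref{lem:rational_reconstruction} applied coordinatewise with $N=U$.

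\emph{Well-definedness.} Write $z=x+y\ii$ with $x,y\in\boundedQ{U}$. Since $\phi_p$ acts on the real and imaginary parts separately, it suffices to treat a single $x$. If $x=0$, then $\phi_p(x)=\Modp{0}$. Otherwise $x=\pm a/b$ with $a,b\in[U]$, and we may reduce to lowest terms; the reduced numerator and denominator still lie in $[U]$. The map requires $p\nmid b$. This holds because $1\le b\le U<4U^2\le p$, so $b$ is a nonzero residue mod $p$ and $b^{-1}$ exists in $\dbF_p$. The image of $z$ is then $\phi_p(x)+\phi_p(y)\ii\in\dbF_p[\ii]$. Here we read $\dbF_p[\ii]$ as pairs of $\dbF_p$ elements, so we do not need $p\equiv 3\pmod 4$ for this statement.

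\emph{Injectivity.} Suppose $\phi_p(z)=\phi_p(z')$ for $z=x+y\ii$ and $z'=x'+y'\ii$ in $\boundedQ{U}[\ii]$. Comparing the two coordinates gives $\phi_p(x)=\phi_p(x')$ and $\phi_p(y)=\phi_p(y')$ in $\dbF_p$. So it is enough to show that $\phi_p$ is injective on $\boundedQ{U}$.

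Every element of $\boundedQ{U}$ has a reduced representation $a/b$ with $\abs{a}\le U$, $b\in[U]$ and $\gcd(a,b)=1$. For $0$, take $a=0$, $b=1$. For a nonzero element, reduce $\pm a/b$ and put the sign in the numerator. Lemma~\ref{lem:rational_reconstruction} with $N=U$ needs $p>2U^2$, which holds because $p\ge 4U^2>2U^2$. The lemma then says each residue in $\dbF_p$ comes from at most one such reduced pair $(a,b)$. Hence $\phi_p(x)=\phi_p(x')$ forces $x$ and $x'$ to have the same reduced pair, so $x=x'$. The same argument gives $y=y'$, and therefore $z=z'$.

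There is no real obstacle here. The only points needing care are that reducing a fraction does not increase its numerator or denominator (so the reduced pair still satisfies the bounds), that $0$ is covered by the pair $(0,1)$, and that the constant $4U^2$ is large enough both to exceed $2U^2$ and to guarantee $p\nmid b$.
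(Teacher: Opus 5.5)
Your proof is correct and takes the same approach the paper relies on, since the paper gives no separate proof and treats the corollary as an immediate consequence of \cref{lem:rational_reconstruction}. Denominators in $[U]$ are units modulo $p\ge 4U^2$, which gives well-definedness, and applying the lemma coordinatewise with $N=U$ (needing only $p>2U^2$) gives injectivity on $\boundedQ{U}[\ii]$.
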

We often colloquially refer to such a $\phi_p$ map as a \emph{faithful modular embedding}.

A crucial procedure in our congruence-identification algorithms is solving equations of the form $x^k=c$ over $\dbF_{p^2}$, where $p\equiv 3 \bmod 4$ is prime, $k$ is an integer, and $c\in \dbF_{p^2}^\times\defeq \dbF_{p^2}\setminus\set{0}$. Since the multiplicative group $\dbF_{p^2}^\times$ is cyclic of order $p^2-1$, the number of solutions is controlled by the following elementary fact in algebra. 
\begin{observation}\label{obs:num-of-roots}
    Let $G$ be a cyclic multiplicative group of order $M$. For any $c\in G$, the equation $x^k=c$ has at most $\gcd(k,M)$ solutions in $G$.
\end{observation}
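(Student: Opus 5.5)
The statement is a standard fact about cyclic groups, so I will work directly with a generator. Fix a generator $g$ of $G$, so that every element has the form $g^a$ with $a\in\dbZ/M\dbZ$.

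\textbf{Reduction to a linear congruence.} Write $c=g^{r}$ and look for solutions $x=g^a$. Then $x^k=c$ holds exactly when
\[
ka\equiv r \pmod M .
\]
The map $a\mapsto g^a$ is a bijection from $\dbZ/M\dbZ$ to $G$. Hence the number of solutions $x\in G$ equals the number of residues $a\in\dbZ/M\dbZ$ satisfying this congruence.

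\textbf{Counting residues.} Let $d=\gcd(k,M)$. The map $a\mapsto ka$ is a homomorphism of $\dbZ/M\dbZ$, so its fibres are either empty or cosets of its kernel. It therefore suffices to show that the kernel has exactly $d$ elements. Suppose $ka\equiv 0\pmod M$. Dividing by $d$ gives $(k/d)a\equiv 0\pmod{M/d}$. Since $\gcd(k/d,M/d)=1$, this forces $a\equiv 0\pmod{M/d}$. The multiples of $M/d$ in $\dbZ/M\dbZ$ are exactly $d$ residues. Consequently each fibre, including the set of solutions to $ka\equiv r$, has size $0$ or $d$, and in particular at most $d$.

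\textbf{Remarks.} An alternative argument: any two solutions $x,y$ satisfy $(xy^{-1})^k=1$, so the solution set lies in a single coset of $\{u:u^k=1\}$. That subgroup also has $d$ elements, by the same kernel computation. If one wanted to avoid the group structure entirely, a degree-$k$ polynomial bound over a field would only give the weaker estimate $k$, not $\gcd(k,M)$.

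There is no real obstacle here. The only points needing care are that a generator exists, which is the cyclic hypothesis (and $\dbF_{p^2}^\times$ is cyclic), and that $k$ may be negative or exceed $M$. The latter is harmless, because only $k \bmod M$ matters and $\gcd$ is unchanged by this reduction.
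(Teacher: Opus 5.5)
Your argument is correct: passing to exponents via a generator and noting that the solution set of $ka\equiv r \pmod M$ is either empty or a coset of a kernel with at most $\gcd(k,M)$ elements is the standard justification. The paper gives no proof and simply invokes this as an elementary fact about cyclic groups, so your write-up supplies the reasoning it takes for granted.
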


The \emph{prime number theorem for arithmetic progressions} states that for every modulus $c\geq 2$, the primes are asymptotically equidistributed among the coprime residue classes modulo $c$. Formally, 
\begin{theorem}[See e.g.~{\cite[Ch.~20]{Dav00}}]\label{thm:PNT-AP-asym}
    For every $c\geq 2$ and $d\in [1,c-1]$ with $\gcd(c,d)=1$,
    \[
        \pi(x;c,d) \defeq \abs{\set{p\in \calP\cap[1,x]:~p\equiv d \bmod{c}}} \sim \frac{x}{\varphi(c)\ln x}
    \]
    as $x\to \infty$.
    Here, $\varphi(\cdot)$ denotes Euler's totient function. 
\end{theorem}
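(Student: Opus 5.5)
This is the classical prime number theorem for arithmetic progressions, which the paper quotes from \cite[Ch.~20]{Dav00}; the plan below follows the standard analytic route. I work with the Chebyshev-type function
\[
\psi(x;c,d)\defeq\sum_{\substack{n\le x\\ n\equiv d \bmod c}}\Lambda(n),
\]
where $\Lambda$ is the von Mangoldt function. The first reduction is that $\pi(x;c,d)\sim x/(\varphi(c)\ln x)$ follows from $\psi(x;c,d)\sim x/\varphi(c)$. Prime powers $p^k$ with $k\ge2$ contribute only $O(\sqrt{x}\log x)$, and partial summation then converts the $\psi$-asymptotic into the $\pi$-asymptotic.

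Next I isolate the residue class using orthogonality of the Dirichlet characters $\chi$ modulo $c$, which is where the hypothesis $\gcd(c,d)=1$ is used:
\[
\psi(x;c,d)=\frac{1}{\varphi(c)}\sum_{\chi}\overline{\chi}(d)\,\psi(x,\chi),
\qquad \psi(x,\chi)=\sum_{n\le x}\chi(n)\Lambda(n).
\]
It then suffices to show two things: $\psi(x,\chi_0)\sim x$ for the principal character, and $\psi(x,\chi)=o(x)$ for every non-principal $\chi$. The principal case is the ordinary prime number theorem, since $\psi(x,\chi_0)$ differs from $\psi(x)$ by $O(\log^2 x)$.

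For each $\chi$, consider $-L'(s,\chi)/L(s,\chi)=\sum_n \chi(n)\Lambda(n)n^{-s}$ for $\Re s>1$.
\begin{itemize}
\item For $\chi=\chi_0$, $L(s,\chi_0)$ extends meromorphically with a simple pole at $s=1$.
\item For $\chi\neq\chi_0$, $L(s,\chi)$ is entire.
\end{itemize}
The key analytic input is that no $L(s,\chi)$ vanishes on the line $\Re s=1$. Given this, a Tauberian theorem applies: Wiener--Ikehara, or Newman's contour argument applied to $\sum\chi(n)\Lambda(n)n^{-s}-\delta_{\chi=\chi_0}/(s-1)$. It yields $\psi(x,\chi)=\delta_{\chi=\chi_0}\,x+o(x)$, which completes the proof. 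An alternative is Davenport's route: an explicit formula, combined with a zero-free region of the form $\sigma>1-C/\log(c(|t|+2))$, which gives an error term rather than only $o(x)$. Since only the asymptotic is needed here, the Tauberian route is lighter.

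The main obstacle is the non-vanishing on $\Re s=1$, and specifically $L(1,\chi)\neq0$.
\begin{itemize}
\item For $t\neq0$, the standard $3+4\cos\theta+\cos2\theta\ge0$ argument works. One applies it to $\Re\log\bigl(L(\sigma,\chi_0)^3L(\sigma+\ii t,\chi)^4L(\sigma+2\ii t,\chi^2)\bigr)\ge0$ and lets $\sigma\to1^+$.
\item At $s=1$ with $\chi$ complex, the product $\prod_\chi L(s,\chi)$, whose coefficients are non-negative, would have a double zero if $L(1,\chi)=0$, since then also $L(1,\overline\chi)=0$. This contradicts the divergence of $\prod_\chi L(\sigma,\chi)\ge1$ as $\sigma\to1^+$.
\item At $s=1$ with $\chi$ real (quadratic), the hard case, I would first try the standard elementary argument. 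Take $r(n)=\sum_{m\mid n}\chi(m)\ge0$, which satisfies $r(k^2)\ge1$. Then $\sum_n r(n)n^{-1/2}$ diverges, whereas $L(1,\chi)=0$ would force it to converge by a hyperbola-method estimate.
\end{itemize}
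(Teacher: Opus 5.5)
Your outline is the standard Tauberian proof. The reduction from $\pi$ to $\psi$, the character decomposition, and the three non-vanishing arguments are all fine. One wording fix: in the complex-character case the contradiction is with $\prod_\chi L(\sigma,\chi)\ge 1$, since the product would tend to $0$, not with a divergence.

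One step does not work as written: the Tauberian step for a non-principal $\chi$. Wiener--Ikehara requires non-negative coefficients. Newman's analytic theorem gives only convergence of $\int_1^\infty (\psi(x,\chi)-\delta_{\chi=\chi_0}x)\,x^{-2}\,dx$, and passing from that to $\psi(x,\chi)=\delta_{\chi=\chi_0}x+o(x)$ uses that the summatory function is non-decreasing. For $\chi\neq\chi_0$ the coefficients $\chi(n)\Lambda(n)$ are complex or of both signs, and $\psi(x,\chi)$ is not monotone. So neither theorem delivers $\psi(x,\chi)=o(x)$ as you claim.

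The repair needs no new input. Apply the Tauberian theorem directly to $F_d(s)\defeq\sum_{n\equiv d \bmod c}\Lambda(n)n^{-s}=\frac{1}{\varphi(c)}\sum_\chi\overline{\chi}(d)\bigl(-L'(s,\chi)/L(s,\chi)\bigr)$. Its coefficients are non-negative, and its summatory function $\psi(x;c,d)$ is non-decreasing and $O(x)$ by Chebyshev. Your non-vanishing results make $F_d(s)-\frac{1}{\varphi(c)(s-1)}$ holomorphic on a neighbourhood of $\Re s\ge 1$, which gives $\psi(x;c,d)\sim x/\varphi(c)$ at once. If you want the per-character statement, apply Wiener--Ikehara to the non-negative sequences $\Lambda(n)(1+\Re\chi(n))$ and $\Lambda(n)(1+\Im\chi(n))$ and subtract $\psi(x)\sim x$.

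The paper gives no proof; it cites Davenport. His Chapter~20 obtains the result from the explicit formula for $\psi(x,\chi)$ and the zero-free region for $L(s,\chi)$, with error term $O_c(x\exp(-C\sqrt{\ln x}))$. Your route is genuinely different and lighter: it needs only non-vanishing on $\Re s=1$, not a zero-free region, an explicit formula or zero counting. The price is that it yields only $o(x)$, with no rate and no uniformity in $c$.

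That suffices for the statement as written, which the paper uses only in its semiformal sketch. But it is the explicit-formula route that extends to the quantitative, modulus-uniform estimates the paper actually relies on in its rigorous analysis, namely \cref{prop:PNT-AP} and \cref{thm:BV-vaughan}.
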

We use the following simplified inequality form for small moduli derived from \cite{BMOR18}.
\begin{proposition}[See {\cite[Corollary 1.6]{BMOR18}}] \label{prop:PNT-AP}
    Let $c\in [1200]$ and $d$ be coprime with $c$. For $x\geq 50c^2$, we have
    \(
        \frac{x}{\varphi(c)\ln x}\leq \pi(x;c,d) \leq \frac{2x}{\varphi(c)\ln x}.
    \)
\end{proposition}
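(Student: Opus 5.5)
The plan is to derive both inequalities from the explicit error term for primes in arithmetic progressions due to Bennett, Martin, O'Bryant and Rechnitzer \cite{BMOR18}, and to cover the leftover finite range by direct verification. I expect to use their result in roughly this shape. For every modulus $c\le 10^5$ (in particular $c\in[1200]$) and every $d$ coprime to $c$, there is an explicit constant $\kappa_c<1$ such that for all $x$ above an explicit threshold $x_0(c)$,
\[
\frac{x}{\varphi(c)\ln x}\;<\;\pi(x;c,d)\;<\;\frac{x}{\varphi(c)\ln x}\Bigl(1+\frac{5}{2\ln x}\Bigr).
\]
Equivalently, $\bigl|\pi(x;c,d)-\Li(x)/\varphi(c)\bigr|\le \kappa_c\, x/(\varphi(c)\ln^2 x)$, together with $\Li(x)=\frac{x}{\ln x}+\frac{x}{\ln^2 x}+O\bigl(\frac{x}{\ln^3 x}\bigr)$ and an explicit bound on the $O$-term.

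The lower bound is then immediate for $x\ge x_0(c)$. For the upper bound I only need $1+\frac{5}{2\ln x}\le 2$, i.e.\ $\ln x\ge 5/2$. This holds once $x\ge 13$, and that is implied by $x\ge 50c^2\ge 50$.

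The trivial moduli are handled separately, because BMOR18 assumes $c\ge 3$.
\begin{itemize}
\item For $c=1$ (and $c=2$, $d=1$, where $\pi(x;2,1)=\pi(x)-1$), the classical Rosser--Schoenfeld bounds give $\frac{x}{\ln x}<\pi(x)<1.26\frac{x}{\ln x}$ for $x\ge 17$.
\item For $c=2$ the hypothesis forces $x\ge 200$. At $x=200$ we have $\pi(200)-1=45$, which exceeds $200/\ln 200\approx 37.7$, and $x/\ln x$ grows slowly enough that $\pi(x)-1\ge x/\ln x$ persists from there on. This follows from Dusart's sharper bound $\pi(x)\ge \frac{x}{\ln x}\bigl(1+\frac{1}{\ln x}\bigr)$ for $x\ge 599$, plus a short check on $[200,599]$.
\end{itemize}

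The main obstacle is the intermediate window $50c^2\le x<x_0(c)$, where the asymptotic-type explicit bound is not yet in force. I would handle it in two ways.
\begin{itemize}
\item First, BMOR18 state their corollary with thresholds obtained from exhaustive computation of $\pi(x;c,d)$ for all $c\le 1200$ up to the relevant range. So I would check whether their Corollary~1.6 already covers $x\ge 50c^2$ directly; I expect that the constant $50$ was chosen exactly to match their verified range.
\item If it does not, I would close the gap by a finite computation. For each $c\le 1200$ and each admissible $d$, compare $\pi(x;c,d)$ against both bounds at every prime $x$ and just below every prime in $[50c^2,x_0(c)]$. Between consecutive primes the count is constant while $x/\ln x$ increases, so it suffices to check the lower bound at the left end of each gap and the upper bound just after each prime.
\end{itemize}
The factor $2$ of slack in the upper bound makes that side easy. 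The lower bound near $x\approx 50c^2$ is the genuinely delicate part, since there are only about $50c/\ln(50c^2)$ primes per class there, so this is where the verification effort concentrates.
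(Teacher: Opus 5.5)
Your proposal is correct and matches the paper, which gives no argument beyond citing \cite[Corollary 1.6]{BMOR18} as a ``simplified'' form. That is what you do: invoke the explicit BMOR bound and relax the factor $1+\frac{5}{2\ln x}$ to $2$ once $\ln x\ge 5/2$, and your separate treatment of $c\in\{1,2\}$ via Rosser--Schoenfeld and Dusart is a harmless, careful addition. One small slip in your fallback: since $\pi(x;c,d)$ is constant between consecutive primes of the class while $x/\ln x$ increases, the lower bound must be checked at the right end of each gap (just below the next prime), not the left end. The upper bound is correctly checked at the left end, i.e.\ at each prime.
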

\section{Product-Anchor Method for Congruence Identification}\label{sec:PA-CongIden}

In this section, we present our product-anchor-based algorithm for congruence identification, thereby proving \cref{thm:CongIden-PA}. We first recall the three-stage paradigm, which is also common to the complex-moment method. In the first pass, the algorithm identifies the centroids of the two point sets for recentring. In the second pass, it computes space-efficient statistics of the recentred sets from which a small set of candidate rotations can be extracted offline. In the third pass, for each candidate rotation and its corresponding translation, the algorithm tests multiset equality under the hypothesized transformation. 

As alluded to in \cref{sec:tech-overview}, the rotation extraction utilizes the fact that if $A',B'\subseteq\dbQ[\ii]\setminus\set{0}$ are $k$-point multisets satisfying $B'=\rho A'$, then the ratio of the product of points of $B'$ to that of $A'$ is simply $\rho^k$. This relation is preserved under faithful modular embedding, so the key step is to solve the \emph{product-anchor ratio equation} $w^k = (\prod_{b\in B'} b) / (\prod_{a\in A'} a)$ in $\dbF_{p^2}$.

There are two main issues to resolve. First, the algorithm must solve the equation efficiently. If time complexity were not a concern, then the equation could be solved by brute-force search over $\dbF_{p^2}$ using $O(\log p)$ space per solution. However, the resulting $O(p^2)$ time is not remotely feasible for our prime choice regime $p=\poly(U,n,1/\delta)$.

The second issue is more acute, and it directly affects the space complexity: the number of solutions to the product-anchor ratio equation must be small. This concern cannot be avoided even if the solution set admits a compact representation, for instance by generators. Such a representation may avoid excessive space usage in the second pass, but in our paradigm the third pass requires a separate Karp-Rabin hash for each candidate rotation. Thus, the overall space complexity genuinely scales with the solution-set size.

This issue is intrinsic to the finite-field setting. Over $\dbQ[\ii]$, the torsion elements are exactly the four elements $\set{\pm 1,\pm \ii}$, whereas every element of $\dbF_{p^2}^\times$ certainly has finite order, thus the modular embedding inevitably introduces extraneous solutions for the equation $\phi_p(\rho)^k = \widehat{\lambda}$ for some $\widehat{\lambda}\in \dbF_{p^2}$. 
By \cref{lem:norm1-modular-emb,obs:num-of-roots}, the number of solutions is at most $\gcd(k,p+1)$, so it suffices to bound this quantity.
Notably, although the value of $k$ is not known until the end of the second pass, the algorithm ultimately needs to solve only one $k$-th-power equation. 
If the modulus $p$ were fixed in advance, solving
$\widehat{x}^k=\widehat{\gamma}$ in a subgroup of $\dbZ_p$ could be reduced to a
discrete-logarithm computation. In our setting however, the modulus $p$ is used for hashing purposes, so the algorithm can presample several moduli and select a suitable one a posteriori. 
Moreover, $k$ is not arbitrary but lies in $[n]$, which allows us to prove that a sampled prime $p$ satisfies $\gcd(k,p+1)\mid 4$ with probability $\Omega(1/\log\log n)$ via number-theoretic tools.

\subsection{Product-Anchor $\CongIden_{n,\boundedQ{U},2}$ Algorithm}\label{subsec:CongIden-PA}
We now present the algorithm based on product anchors. We remind readers that for a prime $p\equiv 3 \bmod{8}$, $\phi_p$ denotes the modular embedding from $\boundedQ{U}[\ii]$ to $\dbF_{p^2}$. As shown in \cref{lem:bad-primes_PA}, a suitably sampled $p$ guarantees that $\phi_p$ is a faithful modular embedding for all the relevant rational points. To facilitate reading, hat notations usually denote quantities in a finite field (e.g. $\widehat{z}$), and the corresponding no-hat notations (e.g. $z$) refer to the relevant $\dbQ[\ii]$-quantities.

\begin{enumerate}
    \setcounter{enumi}{-1}
    \item \hypertarget{PA-init}{\emph{Initialization: Prime Sampling}}
    
    Let $T = O(\log\log n\cdot \log \frac{1}{\delta})$ and $\Delta = 2^{46}n^3U^{70}\delta^{-1}$. Uniformly sample $T$ independent primes $p_1,\ldots,p_T \in [\Delta, 16\Delta]$ such that $p_j\equiv 3\bmod{8}$ for all $j\in [T]$.

    \item \emph{1st pass: Centroid Computation}

    For each $S\in \set{A,B}$ and $j\in [T]$, maintain $\phi_{p_j}(c_S) \defeq \phi_{p_j}(n)^{-1} \sum_{z\in S} \phi_{p_j}(z)$. $\phi_{p_j}(c_S)$ is the $\dbF_{p_j^2}$-embedding of the centroid of set $S$, $c_S = \frac{1}{n}\sum_{z\in S} z$.

    \item \emph{2nd pass: Product Anchors}
    \begin{itemize}
    
        \item \textcolor{darkgreen}{Streaming stage}

         
        For each $S\in \set{A,B}$ and $j\in [T]$, define the auxiliary notation $\Gamma_j(S) \defeq \set{z\in S: \phi_{p_j}(z)\neq \phi_{p_j}(c_S)}$, which corresponds to the set of non-centroid points in $S$. The algorithm maintains $k_j(S) \defeq \abs{\Gamma_j(S)}$ and
        \[
            \widehat{\Pi}_{p_j}(S) \defeq \prod_{z\in \Gamma_j(S)} (\phi_{p_j}(z) - \phi_{p_j}(c_S)).
        \]        
        $\widehat{\Pi}_{p_j}(S)$ is the $\dbF_{p_j}$-embedding of \(\Pi'(S)\defeq \Pi(S - c_S)\) (see \cref{def:prod-anchor}), that is the product of all non-centroid points in $S$ after recentring.

        \item \hypertarget{PA-2-off1}{\textcolor{darkgreen}{Offline stage (I): Rotation Recovery}}

        Choose any $j\in [T]$ such that $k_j(A) = k_j(B)$ and $\gcd(k_j(A), p_j+1)$ divides $4$. The algorithm aborts if none of the $j$'s satisfies the requirement. 
        
        Write $k \defeq k_j(A)$, $p \defeq p_j$ and $q \defeq (p+1)/4$.
        Compute $\widehat{\lambda} = \widehat{\Pi}_p(B)\cdot \widehat{\Pi}_p(A)^{-1} \in \dbF_{p^2}$. 
        Output ``not congruent'' if $\widehat{\lambda}^{p+1}\neq 1$. 
        Next, compute $e_4,e_q\in \dbZ$ such that 
        \[
            e_4+e_q=1,\quad
            \begin{cases}
                e_4 \equiv 1 \bmod 4\\ 
                e_4 \equiv 0 \bmod q
            \end{cases},\quad 
            \begin{cases}
                e_q \equiv 0 \bmod 4\\ 
                e_q \equiv 1 \bmod q
            \end{cases}
        \]
        due to Chinese Remainder Theorem.
        Compute $\ell = k^{-1} \bmod{q}$ and $\widehat{w} = \widehat{\lambda}^{\ell e_q} \in \dbF_{p^2}$. 

        \item \textcolor{darkgreen}{Offline stage (II): Rational Reconstruction}

        For each $\widehat{\rho}\in \set{\pm \widehat{w},\pm \ii \widehat{w}}$, create a candidate transformation pair $(\widehat{\rho}, \widehat{t_\rho})\defeq (\widehat{\rho}, \phi_p(c_B) - \widehat{\rho} \phi_p(c_A))\in \dbF_{p^2}$, and perform \emph{rational reconstruction} (\cref{lem:rational_reconstruction}) to recover $({\rho}, {t_\rho})\in \dbQ[\ii]$. By \cref{obs:rational_rotation}, a valid rotation $\rho$ is $2^{10}U^{16}$-rational complex and $t$ is $2^{22}U^{35}$-rational complex, so any pair violating the precision requirements can be discarded. Denote $F$ as the collection of retained candidate transformation pairs.
        
    \end{itemize}

    \item \emph{3rd pass: Equality Testing}


    Perform at most 4 parallel equality tests using Karp-Rabin hash \cite{5390135} for each of the candidate transformations $(\rho,t)$. Note that $\rho z+t$ is $2^{33}U^{52}$-rational complex for any $(\rho,t)\in F$ and $z\in \boundedQ{U}[\ii]$ (by \cref{obs:sum-prod-precision,obs:rational_rotation}). For each $(\rho,t) \in F$, we build the Karp-Rabin fingerprints for the multiset equality of $\rho A+t$ and $B$ over the domain $\boundedQ{2^{33}U^{52}}[\ii]$. The domain is enlarged so that in the turnstile model, intermediate points with precision beyond $U$ can also be inserted and later deleted.\footnote{In the insertion-only model, the fingerprints can be built over the domain $\boundedQ{U}[\ii]$. The algorithm performs the precision check $\rho a+t \in \boundedQ{U}[\ii]$ for all $a\in A$ on-the-fly, and aborts the $(\rho,t)$-subroutine if any incoming point violates the precision bound. This modification does not change the asymptotic analysis.} The algorithm outputs $(\rho,t)$ if the $(\rho,t)$-subroutine passes the fingerprint test, and outputs ``not congruent'' otherwise.   
    
    As shown in \cref{app:KR-hash}, this stage can be implemented using $O(\log U+\log n+\log \frac{1}{\delta})$ space, such that all the fingerprint routines produce the correct output with probability at least $1-\delta$. 
\end{enumerate}


\subsection{Efficient Rotation Extraction} \label{subsec:rot-extract-PA}
The \hyperlink{PA-2-off1}{offline stage (I) of the second pass} amounts to solving the equation $\widehat{x}^k=\widehat{\lambda}$ over $\dbF_{p^2}$, where $\widehat{\lambda}=\widehat{\Pi}_{p_j}(B)\cdot \widehat{\Pi}_{p_j}(A)^{-1}$, and $p_j$ is the chosen prime. If $A$ and $B$ are congruent via the rotation $\rho$ and translation $t$, then $B-c_B=\rho(A-c_A)$. Under the faithful modular embedding assumption, for each $S\in\set{A,B}$, $\widehat{\Pi}_{p_j}(S)$ is exactly the product of the nonzero points in the recentred set $S$ after applying the embedding $\phi_{p_j}$. Therefore, $\widehat{\lambda}=\phi_{p_j}(\rho)^k$, and solving $\widehat{x}^k=\widehat{\lambda}$ recovers the $\phi_{p_j}$-embedding of the rotation $\rho$. 

One preemptive check that simplifies the root extraction is to test whether $\widehat{\lambda}^{p+1}=1$. This is the ``finite-field lifting'' of the check $\abs{\rho} = 1$ in $\dbQ[\ii]$. For $s\in \dbN$, define $G_s = \set{\widehat{z}\in \dbF_{p^2}^\times: z^s = 1}$ to be the group of order-$s$ elements in $\dbF_{p^2}^\times$. We prove that norm-1 elements in $\dbQ[\ii]$ are faithfully mapped to $G_{p+1}$ in \cref{app:norm1-modular-emb}.
\begin{claim}\label{lem:norm1-modular-emb}
    Suppose $z\in \dbQ[\ii]$ is faithfully embedded in $\dbF_{p^2}$. If $\abs{z}=1$, then $\phi_p(z) \in G_{p+1}$. 
\end{claim}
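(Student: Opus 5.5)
The plan is to use the Frobenius automorphism of $\dbF_{p^2}$ over $\dbF_p$ to realize complex conjugation. Write $z = x + y\ii$ with $x,y\in\dbQ$, and let $\widehat{x} = \phi_p(x)$, $\widehat{y} = \phi_p(y)\in\dbF_p$. Faithfulness of the embedding makes these well-defined. Then $\phi_p(z) = \widehat{x}+\widehat{y}\ii$, and by the homomorphism property \eqref{eq:homomorphism}, $\phi_p(\overline{z}) = \widehat{x}-\widehat{y}\ii$.

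The key step is computing $\phi_p(z)^p$. Since $\dbF_{p^2}$ has characteristic $p$, the Frobenius map $u\mapsto u^p$ is a ring homomorphism, so
\[
(\widehat{x}+\widehat{y}\ii)^p = \widehat{x}^p + \widehat{y}^p\ii^p .
\]
By Fermat's little theorem, $\widehat{x}^p=\widehat{x}$ and $\widehat{y}^p=\widehat{y}$. Since $p\equiv 3\bmod 4$ and $\ii^4=1$, we get $\ii^p=\ii^3=-\ii$. Hence $\phi_p(z)^p = \phi_p(\overline{z})$.

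Next, multiply by $\phi_p(z)$:
\[
\phi_p(z)^{p+1} = \phi_p(z)\phi_p(\overline{z}) = \phi_p(z\overline{z}) = \phi_p(\abs{z}^2) = \phi_p(1) = 1 .
\]
The second equality uses the homomorphism property again, and the fourth uses $\abs{z}=1$. In particular $\phi_p(z)\neq 0$, so $\phi_p(z)\in\dbF_{p^2}^\times$ and $\phi_p(z)\in G_{p+1}$.

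The only delicate point is well-definedness. The products $z\overline{z}$ and the conjugate $\overline{z}$ must lie in the domain where $\phi_p$ respects the ring operations. Since $\overline{z}$ has the same denominators as $z$, and $z\overline{z}=1$, faithfulness of $z$ suffices. I expect no real obstacle beyond stating this carefully.
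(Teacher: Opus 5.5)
Your proof is correct and follows the same route as the paper, which shows that complex conjugation corresponds to the Frobenius map, $\phi_p(\overline{z})=\phi_p(z)^p$, using the binomial theorem in characteristic $p$, Fermat's little theorem, and $\ii^p=-\ii$. You also write out the closing step $\phi_p(z)^{p+1}=\phi_p(z\overline{z})=\phi_p(1)=1$, which the paper leaves implicit, and you correctly note that $p\equiv 3 \bmod 4$ already gives $\ii^p=-\ii$, whereas the paper states it for $p\equiv 3\bmod 8$.
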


The test for whether $\widehat{\lambda}\in G_{p+1}$ allows us to restrict to solving the $k$-th power equation within $G_{p+1}$, which induces additional algebraic structure that further streamlines the computation. 

The easier part is to show that, conditioned on $\gcd(k,p+1)\in \set{1,2,4}$, at most four possible values $\set{\pm \widehat{w}, \pm \ii \widehat{w}}$ can satisfy the equation.
\begin{proposition}\label{prop:eq-solve}
    Let $p$ and $k$ be as stated in the \hyperlink{PA-2-off1}{offline stage (I) of the second pass}.
    Conditioned on faithful modular mapping of $\phi_p$, all solution of the equation $\widehat{x}^k=\widehat{\lambda}$ in $G_{p+1}$ belong to the set $\widehat{R}\defeq\set{\pm \widehat{w},\pm \ii\widehat{w}}$. In particular, if $A$ and $B$ are congruent via $\rho$ and $t$, then $\phi_p(\rho)\in \widehat{R}$.
\end{proposition}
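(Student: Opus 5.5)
We work inside the cyclic group $G_{p+1}\le \dbF_{p^2}^\times$, which has order $p+1=4q$. Since $p\equiv 3\bmod 8$, we have $p+1\equiv 4\bmod 8$, so $q=(p+1)/4$ is odd and $\gcd(4,q)=1$. By the Chinese Remainder Theorem, $G_{p+1}\cong G_4\times G_q$, and the projections are $x\mapsto x^{e_4}$ and $x\mapsto x^{e_q}$. Indeed, $x=x^{e_4}x^{e_q}$, the element $x^{e_4}$ has order dividing $4$ because $q\mid e_4$, and $x^{e_q}$ has order dividing $q$ because $4\mid e_q$. Also $G_4=\set{\pm1,\pm\ii}$, since $\ii\in\dbF_{p^2}$ has order $4$ and a cyclic group has a unique subgroup of order $4$.

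\textbf{Step 1: the solution set is a coset of $G_4$.} Let $x\in G_{p+1}$ satisfy $x^k=\widehat{\lambda}$. The chosen index guarantees $\gcd(k,p+1)\mid 4$, and since $q$ is odd this gives $\gcd(k,q)=1$. Hence $\ell=k^{-1}\bmod q$ is well defined, and $k\ell\equiv 1\bmod q$.

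Project onto the $G_q$ component: $(x^{e_q})^k=\widehat{\lambda}^{e_q}$. Raise both sides to the power $\ell$. Because $x^{e_q}$ has order dividing $q$, we get $x^{e_q}=(x^{e_q})^{k\ell}=\widehat{\lambda}^{\ell e_q}=\widehat{w}$.

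For the other component, $x^{e_4}\in G_4=\set{\pm1,\pm\ii}$. Therefore $x=x^{e_4}\cdot\widehat{w}\in\set{\pm\widehat{w},\pm\ii\widehat{w}}=\widehat{R}$.

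The membership $\widehat{\lambda}\in G_{p+1}$ used above holds because the algorithm already rejects when $\widehat{\lambda}^{p+1}\ne1$.

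\textbf{Step 2: the congruent case.} Suppose $B=\rho A+t$ with $\abs{\rho}=1$. From the centroid relation (\cref{app:recentring}), $B-c_B=\rho(A-c_A)$ as multisets.

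By faithfulness, $\phi_p(z)=\phi_p(c_S)$ exactly when $z=c_S$. So $\Gamma_p(S)$ is the set of non-centroid points, and $k_j(A)=k_j(B)=k$. Multiplying the recentred nonzero points and using the homomorphism property \eqref{eq:homomorphism} gives $\widehat{\Pi}_p(B)=\phi_p(\rho)^k\,\widehat{\Pi}_p(A)$. Here $\widehat{\Pi}_p(A)\ne0$, since each factor is a nonzero element of a field. Hence $\widehat{\lambda}=\phi_p(\rho)^k$.

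By \cref{lem:norm1-modular-emb}, $\phi_p(\rho)\in G_{p+1}$, where faithfulness of $\rho$ follows from \cref{obs:rational_rotation} and the choice of $\Delta$. Step 1 then yields $\phi_p(\rho)\in\widehat{R}$.

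\textbf{Main obstacle.} The only delicate point is verifying the algebraic bookkeeping:
\begin{itemize}
\item $q$ is odd, which needs $p\equiv3\bmod8$ and not merely $p\equiv 3\bmod 4$;
\item $\gcd(k,q)=1$ follows from the selection rule;
\item the idempotents $e_4,e_q$ really project onto $G_4$ and $G_q$.
\end{itemize}
Each of these is a short check.
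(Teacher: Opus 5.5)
Your proof is correct and follows essentially the same route as the paper. Both use the CRT splitting $G_{p+1}\cong G_4\times G_q$ via the exponents $e_4,e_q$, obtain a unique $G_q$-component $\widehat{w}=\widehat{\lambda}^{\ell e_q}$ from $\gcd(k,q)=1$ with the $G_4$-component in $\set{\pm1,\pm\ii}$, and handle the congruent case via $\widehat{\lambda}=\phi_p(\rho)^k$ and \cref{lem:norm1-modular-emb}. Your direct projection (raising $\widehat{x}^k=\widehat{\lambda}$ to the power $e_q$) is a slightly cleaner way to reach the $G_q$-equation than the paper's detour through $G_4\cap G_q=\set{1}$, and your check that $q\mid e_4$ and $4\mid e_q$ send $G_{p+1}$ into the correct subgroups is exactly what is needed.
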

This reasoning is essentially a combination of \cref{obs:num-of-roots} and Chinese Remainder Theorem. Since $p\equiv 3\bmod 8$, $q= (p+1)/4$ is an odd number, therefore $G_{p+1}\cong G_4\times G_q$ by Chinese Remainder Theorem. 
This justifies our choice of $p\equiv 3\bmod 8$: although the weaker condition $p'\equiv 3\bmod 4$ already guarantees $\dbF_{p'}[\ii]\cong \dbF_{(p')^2}$, the stronger congruence ensures that the even-order part is generated by $G_4$, which we know $G_4=\set{\pm 1,\pm \ii}$ without requiring any additional computations. 
In $G_{q}$, since $\gcd(k,q)=1$ by the assumption $\gcd(k,p+1)\in \set{1,2,4}$, the $G_q$-part of the equation can be recovered by a suitable exponentiation. We elaborate on the details of solving such an equation in \cref{app:root-recovery}.

The more difficult part is to guarantee that $\gcd(k_j,p_j+1)$ divides $4$ for some $j\in [T]$ with high probability. Recall that $k$ denotes the number of non-centroid points in $A$ and $B$, and hence lies in the range $[1,n]$.\footnote{The case $k=0$ means that all points lie at the centroid, and can be handled trivially.} By the faithful modular embedding assumption, we can assume $k=k_j(S)$ for all $j\in [T]$ and $S\in\set{A,B}$. 
Even though $k$ is known only after the second pass, by presampling sufficiently many moduli, one of these moduli satisfies the GCD condition with high probability. The key number-theoretic ingredient is Mertens' theorem~\cite{Mer74}.
\begin{theorem}[Mertens' theorem~\cite{Mer74}]
    $\prod_{p\in \calP\cap [1,N]} (1-1/p) = \Theta(1/\log N)$.
\end{theorem}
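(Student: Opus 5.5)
The statement is a classical estimate, and the plan is to derive it from Chebyshev-type bounds through the standard chain of Mertens' three theorems. The base of the logarithm is irrelevant, since $\log N$ and $\ln N$ differ by a constant factor. Taking logarithms and using $-\ln(1-x) = x + O(x^2)$ for $x \le 1/2$ gives
\[
    -\ln \prod_{p\in\calP\cap[1,N]}\left(1-\frac1p\right) \;=\; \sum_{p\le N}\frac1p + \sum_{p\le N} O\!\left(\frac{1}{p^2}\right).
\]
The second sum is bounded by $\sum_{m\ge 2} m^{-2} = O(1)$. It therefore suffices to show $\sum_{p\le N} 1/p = \ln\ln N + O(1)$. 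Exponentiating then gives $\prod_{p\le N}(1-1/p) = e^{-\ln\ln N + O(1)} = \Theta(1/\ln N)$.

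\textbf{Step 1 (Chebyshev upper bound).} I would first show $\theta(N) \defeq \sum_{p\le N}\ln p = O(N)$. Every prime in $(m,2m]$ divides $\binom{2m}{m} \le 4^m$, so $\theta(2m)-\theta(m) \le m\ln 4$. Summing over dyadic scales gives $\theta(N) \le (2\ln 4) N$.

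\textbf{Step 2 (Mertens' first theorem).} Next I would show $\sum_{p\le N} \frac{\ln p}{p} = \ln N + O(1)$. Legendre's formula gives
\[
    \ln N! = \sum_{p\le N}\ln p \sum_{j\ge1}\left\lfloor \frac{N}{p^j}\right\rfloor,
\]
and Stirling (or just $\ln N! = N\ln N - N + O(\ln N)$) gives the left-hand side. The terms with $j\ge 2$ contribute at most $N\sum_p \frac{\ln p}{p(p-1)} = O(N)$. Replacing $\lfloor N/p\rfloor$ by $N/p$ costs at most $\theta(N) = O(N)$ by Step 1. Dividing by $N$ yields the claim.

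\textbf{Step 3 (partial summation).} Set $a_p = \frac{\ln p}{p}$ and $A(x) = \sum_{p\le x} a_p = \ln x + R(x)$ with $R(x) = O(1)$. Abel summation against $f(x) = 1/\ln x$ gives
\[
    \sum_{p\le N}\frac1p \;=\; \frac{A(N)}{\ln N} + \int_2^N \frac{A(x)}{x\ln^2 x}\,dx \;=\; 1 + O\!\left(\frac{1}{\ln N}\right) + \int_2^N \frac{dx}{x\ln x} + \int_2^N \frac{R(x)}{x\ln^2 x}\,dx .
\]
Here $\int_2^N \frac{dx}{x\ln x} = \ln\ln N - \ln\ln 2$, and the last integral converges absolutely because $R$ is bounded. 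This gives $\sum_{p\le N}1/p = \ln\ln N + O(1)$, completing the argument.

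The main obstacle is Step 2, specifically controlling the error terms uniformly. This is where the Chebyshev bound of Step 1 is essential. Everything else is routine bookkeeping with absolute constants, and only a $\Theta(\cdot)$ statement is needed, so no identification of the Mertens constant $e^{-\gamma}$ is required. Alternatively, one may simply cite \cite{Mer74}, as the paper does, since the result is used only as a black box.
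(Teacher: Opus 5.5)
Your proof is correct and complete. It is the standard elementary chain: Chebyshev's bound $\theta(N)=O(N)$ from $\binom{2m}{m}\le 4^m$, then Mertens' first theorem $\sum_{p\le N}\ln p/p=\ln N+O(1)$ from Legendre's formula, then partial summation to get $\sum_{p\le N}1/p=\ln\ln N+O(1)$, then exponentiation. Every error term is controlled correctly.

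The paper does not prove the statement at all. It invokes it as a classical black box, citing Mertens and deferring to standard analytic number theory texts, inside the analysis of the prime-sampling probability. So the difference is self-containment rather than strategy. The citation is shorter and carries the sharp asymptotic $e^{-\gamma}/\ln N$. Your derivation shows that Chebyshev-level input suffices for the $\Theta$ statement, with no need for the prime number theorem.

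Your route also shows which half does the work in the paper's application. The upper bound $\prod_{p\le N}(1-1/p)\le 1/\ln N$ follows in one line from the Euler-product inequality $\prod_{p\le N}(1-1/p)^{-1}\ge\sum_{n\le N}1/n\ge\ln N$. The paper, however, needs the opposite direction: a lower bound on $C(\alpha(n))=\prod_{f\in\calP\cap[3,\beta(n)]}(1-1/f)$ to conclude $\Pr(E_k)=\Omega(1/\log\log n)$. In your chain that direction comes from $\sum_{p\le N}1/p\le\ln\ln N+O(1)$, i.e.\ from the upper half of Step~2. That is precisely where the Chebyshev bound of Step~1 is indispensable.
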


Combining this product estimate with the prime number theorem for arithmetic progressions \cref{thm:PNT-AP-asym}, we prove that for a single $p$, the condition $\gcd(k,p+1) \mid 4$ is satisfied with probability at least $\Omega(1/\log\log n)$. Consequently, with $T=O(\log\log n\cdot \log\frac{1}{\delta})$ independently presampled primes, at least one useful modulus is sampled with probability at least $1-\delta$.
\begin{proposition}\label{prop:PA-prime-sampling}
    For any $k\in [n]$, for a uniformly sampled prime $p\in [\Delta,16\Delta]$ such that $p\equiv 3\bmod 8$,
    \(
        \Pr(\gcd(k,p+1) \mid 4) = \Omega(1/\log\log n).
    \)
\end{proposition}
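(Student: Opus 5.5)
Since $p\equiv 3\bmod 8$, we have $p+1\equiv 4\bmod 8$, so the $2$-part of $p+1$ is exactly $4$. Hence $\gcd(k,p+1)\mid 4$ holds iff no odd prime $r\mid k$ divides $p+1$, that is, iff $p\not\equiv -1\bmod r$ for every odd prime $r\mid k$. The plan is to lower-bound the proportion of primes $p\equiv 3\bmod 8$ in $[\Delta,16\Delta]$ that avoid the residue $-1$ modulo each odd prime factor $r$ of $k$. We split these factors at a threshold $L$ (tentatively $L=\log n$) into \emph{small} primes $r\le L$ and \emph{large} primes $r>L$.

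\textbf{Small primes.} Let $Q$ be the product of the odd primes $r\mid k$ with $r\le L$. The admissible residues modulo $8Q$ are those with $p\equiv 3\bmod 8$ and $p\not\equiv -1, 0 \bmod r$ for each $r\mid Q$. By CRT there are $\prod_{r\mid Q}(r-2)$ such classes out of $\varphi(8Q)=4\prod_{r\mid Q}(r-1)$ coprime classes. Applying the prime number theorem for arithmetic progressions (\cref{thm:PNT-AP-asym}, or \cref{prop:PNT-AP} when the modulus is small) to each admissible class, and comparing with the count of all primes $\equiv 3\bmod 8$ in $[\Delta,16\Delta]$, the conditional probability of the small-prime event is about
\[
\prod_{r\mid Q}\Bigl(1-\tfrac{1}{r-1}\Bigr)\;\ge\;\tfrac12\prod_{3\le r\le L}\Bigl(1-\tfrac1r\Bigr)^{2}\;=\;\Omega\bigl(1/(\log L)^{2}\bigr)\quad\text{or better.}
\]
More carefully, $1-\tfrac{1}{r-1}\ge (1-\tfrac1r)^{c}$ with $c$ close to $1$ for $r\ge 5$, so Mertens' theorem gives $\Omega(1/\log L)=\Omega(1/\log\log n)$.

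\textbf{Large primes.} For each odd prime $r\mid k$ with $r>L$, I would bound by a union bound the probability that $r\mid p+1$ jointly with the small-prime event. This requires a Brun--Titchmarsh-type upper bound $\pi(x;q,a)\le \frac{2x}{\varphi(q)\ln(x/q)}$ for the class modulo $8Qr$. Here the choice $\Delta\ge n^{3}$ matters: it makes $r\le n\le\Delta^{1/3}$, so $\ln(\Delta/(8Qr))=\Theta(\ln\Delta)$ and each term costs only $O(1/r)$ times the main density. Since $k\le n$ has at most $\log n/\log L$ prime factors exceeding $L$, the total loss is at most $O\bigl(\tfrac{\log n}{L\log L}\bigr)$ times the main term. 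With $L$ a suitable power of $\log n$, this is a small fraction of the main term, which gives the claimed $\Omega(1/\log\log n)$. Raising $L$ to a power of $\log n$ changes $\log L$ only by a constant factor.

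\textbf{Main obstacle.} I expect the hard step to be uniformity of the prime counts in the modulus $8Q$ (respectively $8Qr$). $Q$ can be as large as $e^{O(L)}$, whereas \cref{prop:PNT-AP} only covers moduli up to $1200$, and \cref{thm:PNT-AP-asym} is merely asymptotic. My first attempt would be to keep $L$ small enough that $8Q\le(\log\Delta)^{O(1)}$, so that a Siegel--Walfisz-type uniform version applies; the large-prime tail is then handled by Brun--Titchmarsh, which is uniform for $q<x$. If the constants clash, the fallback is a lower-bound sieve (e.g.\ a simple Selberg/Brun sieve) on primes $\equiv 3\bmod 8$ in $[\Delta,16\Delta]$. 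That sieve should directly give density $\gg\prod_{r\mid k,\,r\text{ odd}}(1-\tfrac{1}{r-1})$, which Mertens' theorem bounds below by $\Omega(1/\log\log n)$, since $\prod_{r\mid k}(1-1/r)\ge \varphi(k)/k=\Omega(1/\log\log k)$.
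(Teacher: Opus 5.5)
Your reduction (since $v_2(p+1)=2$, the event $E_k$ that $\gcd(k,p+1)\mid 4$ is the same as $p\not\equiv -1 \pmod r$ for every odd prime $r\mid k$) is correct. So is your final Mertens-type bound on $\prod_{r}(1-\frac{1}{r-1})$, and both match the paper. The gap is in the analytic step. It is not a clash of constants but of orders of magnitude.

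Siegel--Walfisz needs $8Q\le(\log\Delta)^{A}$. Since $k\le n$ can be divisible by every odd prime up to about $\frac12\ln n$, $Q$ can be as large as $e^{(1+o(1))L}$. This forces $L\le(A+o(1))\ln\ln\Delta$, i.e.\ $L=O(\log\log n)$ when $\Delta=n^{O(1)}$ (e.g.\ $U,\delta$ constant). With such $L$ the union bound over the prime factors of $k$ above $L$ collapses. For $k$ divisible by all primes in $(L,\frac12\ln n]$, the relative loss is about $c\sum_{L<r\le\frac12\ln n}\frac1r\approx c\ln\frac{\ln\ln n}{\ln L}\to\infty$, and your own bound $\frac{\log n}{L\log L}$ is likewise unbounded. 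The failure persists even with exact densities in place of Brun--Titchmarsh, because a depth-one Bonferroni bound is vacuous once $\sum 1/(r-1)>1$. The union bound needs $L\ge(\log n)^{c}$ for some constant $c>0$, which is incompatible with the Siegel--Walfisz constraint.

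Your sieve fallback does not fill this in. A sieve on primes still needs a level of distribution for primes in progressions, and you do not say which theorem supplies it or over which sifting range. It can be salvaged: use a fundamental-lemma sieve over the prime factors of $k$ below $z\approx\log n(\log\log n)^2$ at Siegel--Walfisz level, plus an absolute Brun--Titchmarsh union bound above $z$. None of that is in the proposal, though.

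The missing ingredient, which the paper uses, is an error bound that is uniform in the modulus on average: Bombieri--Vinogradov. With it, no small/large split is needed.
\begin{itemize}
\item Let $M_k$ be the product of the odd primes dividing $k$. Write $\Pr(E_k)$ as an exact M\"obius inclusion--exclusion over $d\mid M_k$, with a single residue class $a_d$ modulo $8d$ for each $d$.
\item Since $M_k\le n$ and $\Delta\ge n^3$, every modulus is at most $8n\le\sqrt{\Delta}/\ln^{8}\Delta$ (for $n$ large). This is what the hypothesis $\Delta\ge n^3$ is for.
\item Bombieri--Vinogradov then bounds the accumulated error over these distinct moduli by $O(\Delta/\ln^{2}\Delta)$. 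This is negligible against the main term $\Theta(\Delta/\ln\Delta)\prod_{f}(1-\frac{1}{f-1})$.
\end{itemize}
The M\"obius form matters here. Your CRT count places $\prod_{r\mid Q}(r-2)$ classes on the single modulus $8Q$, and no averaged theorem controls the summed errors for one specific modulus. Inclusion--exclusion uses one class per modulus, which is exactly the shape that Bombieri--Vinogradov handles.
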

We first present a proof sketch using simpler asymptotic forms of number-theoretic theorems, which illustrates the intuition for the probability bound. 
The rigorous analysis, relying on stronger estimates with explicit error bounds, will be presented in the next subsection.
\begin{proof}[Semiformal Proof of \cref{prop:PA-prime-sampling}]
    Let $E_k$ be the event of $\gcd(k,p+1) \mid 4$, and $\calF(k) \defeq \set{f\in \calP\cap [3,k]:~f\mid k}$. Given that $p\equiv 3\bmod 8$, $E_k$ is equivalent to $p\not\pequiv[f] -1$ for every $f\in \calF$. Define $M_k\defeq \prod_{f\in \calF(k)} f$. By Chinese Remainder Theorem, there are $\prod_{f\in \calF(k)} (f-1)$ mod-$8M_k$ classes where $p$ can be sampled from, of which  $\prod_{f\in \calF(k)} (f-2)$ classes satisfies the event $E_k$. 
    By the prime number theorem for arithmetic progressions, primes sampled from a sufficiently large range are nearly equidistributed among $p'-1$ mod-$p'$ classes for every prime modulus $p'$. Therefore 
    \[
        \Pr(E_k)\gtrsim \frac{\prod_{f\in \calF(k)} (f-2)}{\prod_{f\in \calF(k)} (f-1)} = \prod_{f\in \calF(k)} \left(1-\frac{1}{f-1}\right) \gtrsim C(k), \quad \text{where } C(k) \defeq \prod_{f\in \calF(k)} \left(1-\frac{1}{f}\right).
    \]
    Notice that $1-1/f$ is monotonically increasing with $f$, and $\abs{\calF(k)}\leq \log k\leq \log n$. For $n\in \dbN$, define $\beta(n)$ to be the largest prime $b$ such that $\prod_{f\in \calP\cap [3,b]} f\leq n$. Then for every $k\in [3,n]$, since $\beta(n)=\Theta(\log n)$, Mertens' theorem yields the bound 
    \[
        C(k)\geq \prod_{f\in \calP\cap [3,\beta(n)]} \left(1-\frac{1}{f}\right) =  \Theta\left(\frac{1}{\log\beta(n)}\right) =\Theta\left(\frac{1}{\log\log n}\right). \qedhere
    \]    
\end{proof}

\subsubsection{Complete proof of \cref{prop:PA-prime-sampling}}\label{app:PA-prime-sampling-rigorous}
For a rigorous lower bound analysis, we need advanced number-theoretic tools beyond the fixed-modulus asymptotic form of the prime number theorem for arithmetic progressions. 
The Chinese remainder theorem gives the correct count of admissible residue classes modulo the combined modulus $M_k$, but this residue-class count should not be interpreted as probabilistic independence among the congruence conditions modulo the different prime divisors of $k$. Consequently, bounding the prime-distribution error separately for each prime modulus does not suffice, and we need a uniform or averaged estimate that controls the accumulated error. 

For explicit error term control in estimating $\pi(x;c,a)$, we use the following concrete form of Bombieri-Vinogradov theorem \cite{Bom65,Vin65}, with simplifications from \cite{Vau05,CM06}.
\begin{theorem}[Bombieri-Vinogradov theorem]
\label{thm:BV-vaughan}
    For $Q\leq \sqrt{x}/\ln^8 x$, 
    \[
        \sum_{c\in [Q]} \max_{y\in [x]} \max_{a:\gcd(a,c)=1} \absB{\pi(y;c,a) - \frac{\Li(y)}{\varphi(c)}} = O\left(\frac{x}{\ln^2 x}\right).
    \]
    Here $\Li(x)\defeq \int_2^x \frac{dt}{\ln t}$ is the logarithmic integral.
\end{theorem}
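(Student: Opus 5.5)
The statement is the classical Bombieri--Vinogradov theorem, quoted from \cite{Bom65,Vin65} in the explicit shape of \cite{Vau05,CM06}. I would not re-derive it inside the paper. Still, here is the route I would take to obtain exactly this form, with $Q\le \sqrt{x}/\ln^8 x$ and error $O(x/\ln^2 x)$.

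\emph{Reduction to $\psi$.} First I would replace $\pi(y;c,a)$ by the weighted count $\psi(y;c,a)=\sum_{n\le y,\,n\equiv a \bmod c}\Lambda(n)$, and $\Li(y)$ by $y$.
\begin{itemize}
\item Prime powers contribute $O(\sqrt{y}\ln y)$ per modulus, hence $O(Q\sqrt{x}\ln x)=O(x/\ln^7 x)$ in total.
\item Partial summation, $\pi(y;c,a)=\frac{\theta(y;c,a)}{\ln y}+\int_2^y\frac{\theta(t;c,a)}{t\ln^2 t}\,dt$, converts a bound $\max_{y\le x}|\psi(y;c,a)-y/\varphi(c)|\le E_c$ into a bound of order $E_c$ for $|\pi(y;c,a)-\Li(y)/\varphi(c)|$.
\end{itemize}
So it suffices to prove
\[
\sum_{c\le Q}\max_{y\le x}\max_{\gcd(a,c)=1}\absB{\psi(y;c,a)-\frac{y}{\varphi(c)}} = O\left(\frac{x}{\ln^2 x}\right).
\]
Next I would expand in Dirichlet characters and reduce each $\chi$ to the primitive character $\chi^*$ inducing it, at a cost of $O(\ln^2(cx))$. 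This bounds the left side by
\[
\sum_{c\le Q}\frac{1}{\varphi(c)}\sum_{\chi\ne\chi_0}\max_{y\le x}|\psi(y,\chi)| .
\]
Grouping by conductor $r$, this is at most $\ln x\sum_{r\le Q}\frac{1}{\varphi(r)}\sum_{\chi \bmod r}^{*}\max_y|\psi(y,\chi)|$.

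\emph{Small and large conductors.}
\begin{itemize}
\item For small conductors $r\le \ln^{B}x$, I would invoke the Siegel--Walfisz theorem, $\psi(y,\chi)\ll x\exp(-c\sqrt{\ln x})$. This makes the contribution negligible, at the price of ineffective constants.
\item For the remaining conductors I would apply Vaughan's identity, splitting $\Lambda$ into a Type I part (smooth, summed directly via P\'olya--Vinogradov-type bounds) and a Type II bilinear part $\sum_{m,n}\alpha_m\beta_n\chi(mn)$.
\item I would bound the Type II part with the multiplicative large sieve inequality $\sum_{r\le R}\frac{r}{\varphi(r)}\sum^{*}_{\chi}\bigl|\sum_{n\le N}a_n\chi(n)\bigr|^2\le (N+R^2)\sum|a_n|^2$, applied after Cauchy--Schwarz. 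A dyadic decomposition $r\sim R$ lets me exploit the $1/\varphi(r)$ weight.
\end{itemize}
Together these give the standard estimate
\[
\sum_{r\le Q}\frac{1}{\varphi(r)}\sum^*_\chi\max_y|\psi(y,\chi)|\ll \left(x+x^{5/6}Q+\sqrt{x}\,Q\right)\text{-type terms times } \ln^{4}x
\]
on each dyadic block with $R\ge \ln^B x$. The maximum over $y$ is handled by Perron's formula or Gallagher's lemma, which only costs logarithms.

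\emph{Bookkeeping, the main obstacle.} The hard step is tracking the powers of $\ln x$ so that the hypothesis $Q\le\sqrt{x}/\ln^8 x$ is enough. The $\sqrt{x}Q\ln^{c}x$ term must come out as $O(x/\ln^2x)$, which needs the total log loss $c\le 6$. The $x/R$ terms summed over dyadic $R\ge \ln^{B}x$ give $x\ln^{4-B}x$, so I would choose $B$ large. I would follow Vaughan's treatment \cite{Vau05}, which yields $\ll(x\ln^{-A}x+\sqrt{x}Q)\ln^{4+\epsilon}x$ for $\psi$. Then I would carefully add the one extra logarithm from partial summation and the one from the conductor reduction, and check that the total stays within the $\ln^8$ margin.
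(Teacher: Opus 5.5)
Your route is the standard proof in the sources the paper cites: reduce to $\psi$, pass to primitive characters grouped by conductor, use Siegel--Walfisz up to conductor $\ln^B x$, and beyond that use Vaughan's identity with the multiplicative large sieve. The paper itself gives no argument beyond the citation. The trouble is the logarithm bookkeeping, which you rightly call the crux. As written it does not close, in two places.

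First, your displayed estimate is misstated. The Vaughan/large-sieve bound is
\[
\sum_{r\le R}\frac{r}{\varphi(r)}\sum^{*}_{\chi \bmod r}\max_{y\le x}\abs{\psi(y,\chi)}\ll\left(x+x^{5/6}R+x^{1/2}R^{2}\right)\ln^{4}(Rx).
\]
So on a block $R<r\le 2R$, the $1/\varphi(r)$-weighted sum is $\ll(x/R+x^{5/6}+x^{1/2}R)\ln^{4}x$. Every term must be divided by $R$, including the middle one. A literal $x^{5/6}Q\ln^{4}x$ equals $x^{4/3}\ln^{-4}x$ at $Q=\sqrt{x}/\ln^{8}x$, which would sink the argument. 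The correct $x^{5/6}\ln^{4}x$ per block totals only $x^{5/6}\ln^{6}x$ after summing over blocks and grouping by conductor.

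Second, you start from $(x\ln^{-A}x+\sqrt{x}\,Q)\ln^{4+\epsilon}x$ for the full $\psi$-sum. You then add one logarithm for partial summation and one for the conductor reduction. That gives $\sqrt{x}\,Q\ln^{6+\epsilon}x=x\ln^{-2+\epsilon}x$, which breaks your own budget $c\le 6$. Neither extra logarithm is actually paid. The grouping factor is already inside the $\psi$-bound. Partial summation loses only a constant, as your first bullet in effect shows: for $y\ge 2$,
\[
\absB{\pi(y;c,a)-\frac{\Li(y)}{\varphi(c)}}\le\frac{2}{\ln 2}\max_{2\le t\le y}\absB{\theta(t;c,a)-\frac{t}{\varphi(c)}}+\frac{2}{\varphi(c)\ln 2}.
\]
The correct tally is $\ln^{4}x$ from the large sieve plus at most one $\ln x$ from grouping by conductor. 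This yields
\[
\sum_{c\le Q}\max_{y}\max_{a}\abs{\psi(y;c,a)-y/\varphi(c)}\ll x\ln^{5-B}x+x^{5/6}\ln^{6}x+\sqrt{x}\,Q\ln^{5}x,
\]
matching \cite[Ch.~28]{Dav00}. For $Q\le\sqrt{x}/\ln^{8}x$ this is $O(x/\ln^{3}x)$, comfortably within the claim.

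Finally, the range of $y$ should be $2\le y\le x$, both in your $\max_{y\le x}$ and in the statement's $\max_{y\in[x]}$. Since $\Li(1)=\int_2^1 dt/\ln t=-\infty$, the $y=1$ term is infinite. Your partial-summation identity only holds for $y\ge 2$ anyway.
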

For the remaining number-theoretic results used in this section, we omit the full details and refer the readers to standard analytic number theory texts like \cite{Dav00,IK04}. 

Recall that $M_k\defeq \prod_{f\in \calF(k)} f$, in particular $M_k\leq n$. By the inclusion-exclusion principle,
\begin{equation}\label{eq:IE}
    \Pr(E_k) = \frac{1}{\pi(16\Delta;8,3) - \pi(\Delta;8,3)} \sum_{d:d\mid M_k} \mu(d) (\pi(16\Delta;8d,a_d) -  \pi(\Delta;8d,a_d)).
\end{equation}
Here for each $d$, $a_d$ denotes the unique mod-$8d$ class which $x\equiv 3\bmod 8$ and $ x\equiv -1 \bmod d$, and $\mu(\cdot)$ denotes the M\"obius function. 

Note that $\Delta\geq n^3$, so \cref{thm:BV-vaughan} is applicable, and we obtain
\[
    \absB{\pi(\Delta;8,3) - \frac{\Li(\Delta)}{4}} \leq \max_{y\in [\Delta]} \max_{a:\gcd(a,8)=1} \absB{\pi(y;8,a) - \frac{\Li(y)}{4}} = O\left(\frac{\Delta}{\ln^2 \Delta}\right),
\]
combining with a similar calculation for $x=16\Delta$ yields
\[
    \pi(16\Delta;8,3) - \pi(\Delta;8,3) = \frac{\Li(16\Delta) - \Li(\Delta)}{4} + O\left(\frac{\Delta}{\ln^2 \Delta}\right).
\]

For $d\mid M_k$, $\gcd(8,d)=1$ and hence $\varphi(8d) = 4\varphi(d)$. By \cref{thm:BV-vaughan},
\begin{align*}
    &\sum_{d:d\mid M_k} \mu(d) (\pi(16\Delta;8d,a_d) - \pi(\Delta;8d,a_d)) \\
    &\geq \frac{\Li(16\Delta) - \Li(\Delta)}{4}\sum_{d:d\mid M_k} \frac{\mu(d)}{\varphi(d)} - \sum_{d:d\mid M_k} \absB{\pi(\Delta;8d,a_d) - \frac{\Li(\Delta)}{4\varphi(d)}}-\sum_{d:d\mid M_k} \absB{\pi(16\Delta;8d,a_d) - \frac{\Li(16\Delta)}{4\varphi(d)}}\\
    &\geq \frac{\Li(16\Delta) - \Li(\Delta)}{4} \prod_{f\in \calF(k)}\left(1+\frac{\mu(f)}{\varphi(f)}\right) 
    -2 \sum_{c\in [8n]} \max_{y\in [16\Delta]} \max_{a:\gcd(a,c)=1} \absB{\pi(y;c,a) - \frac{\Li(y)}{\varphi(c)}}\\
    &\geq \frac{\Li(16\Delta) - \Li(\Delta)}{4} \prod_{f\in \calF(k)} \left(1-\frac{1}{f-1}\right) - O\left(\frac{\Delta}{\ln^2 \Delta}\right).
\end{align*}
From the estimation $\Li(16\Delta) - \Li(\Delta) = \Theta(\Delta / \ln \Delta)$, the $O(\Delta/\ln^2 \Delta)$ terms are of lower order, therefore \cref{eq:IE} can be bounded by
\[
    \Pr(E_k) \geq \prod_{f\in \calF(k)} \left(1-\frac{1}{f-1}\right) - O\left(\frac{1}{\ln \Delta}\right).
\]
Recall that $C(k)\defeq \prod_{f\in \calF(k)} \left(1-\frac{1}{f}\right)$. Notice that
\[
    \frac{\prod_{f\in \calF(k)} \left(1-\frac{1}{f-1}\right)}{C(k)} = \prod_{f\in \calF(k)} \frac{f(f-2)}{(f-1)^2}\geq \prod_{f=3}^k \frac{f}{f-1} \cdot \prod_{f=3}^k \frac{f-2}{f-1} = \frac{k}{2}\cdot \frac{1}{k-1} \geq \frac{1}{2}.
\]
This shows that $\Pr(E_k) \geq \frac{1}{2}C(k) (1-o(1))$. As stated informally in \cref{subsec:rot-extract-PA}, the next step is to minimize $C(k)$ over $k\in [3,n]$.
\begin{claim}\label{lem:primorial-minimize} 
    Define $\beta(n)$ to be the largest prime $b$ such that $\prod_{f\in \calP\cap [3,b]} f\leq n$, and $\alpha(n)$ to be the prime product. In other words, $\alpha(n)$ is the largest primorial not exceeding $n$. Then 
    \[
        \min_{k\in \calP\in [3,n]} C(k) = C(\alpha(n)).
    \]
\end{claim}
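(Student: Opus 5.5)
The plan is to show that for every $k\in[3,n]$ we have $C(k)\ge C(\alpha(n))$; the minimum is then attained at $k=\alpha(n)\le n$. (The index set in the statement should read $k\in[3,n]$.) The argument is a rearrangement/exchange argument. It uses two facts: $C(k)$ depends only on the set $\calF(k)$ of odd prime divisors of $k$, and each factor $1-1/f$ lies in $(0,1)$ and increases with $f$.

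\textbf{Step 1 (reduce to squarefree odd $k$).} Replacing $k$ by $M_k=\prod_{f\in\calF(k)}f\le k\le n$ leaves $\calF$, and hence $C$, unchanged. So it suffices to consider squarefree odd $k$, i.e.\ finite sets $F$ of odd primes with $\prod_{f\in F}f\le n$. We minimize $\prod_{f\in F}(1-1/f)$ over all such sets.

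\textbf{Step 2 (exchange to an initial segment).} Let $F=\{f_1<\dots<f_r\}$ and let $q_1<q_2<\cdots$ enumerate the odd primes, so $q_1=3$. Since $f_i\ge q_i$ for every $i$ by induction, we have $\prod_i q_i\le\prod_i f_i\le n$. Replacing $F$ by $\{q_1,\dots,q_r\}$ therefore stays feasible. Since $1-1/q_i\le 1-1/f_i$ termwise and all factors are positive, this replacement can only decrease the product. So the minimizer can be taken to be an initial segment $\{q_1,\dots,q_r\}$ with $\prod_{i\le r}q_i\le n$.

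\textbf{Step 3 (take the longest segment).} Among feasible initial segments, adding one more prime multiplies the product by a factor less than $1$. So the product is minimized by the largest feasible $r$, namely the segment ending at $\beta(n)$, whose product is $\alpha(n)$. Hence $C(k)\ge C(\alpha(n))$ for all $k\in[3,n]$, with equality at $k=\alpha(n)$.

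The only point requiring care is the termwise domination $f_i\ge q_i$ in Step 2, which is immediate because the $f_i$ are distinct odd primes listed in increasing order. One small edge case: if $k$ is a power of $2$, then $\calF(k)=\emptyset$ and $C(k)=1$, which is already at least $C(\alpha(n))$. So I expect no step to be a real obstacle.
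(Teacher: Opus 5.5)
Your proof is correct and follows essentially the same route as the paper: bound $C(k)$ below termwise by $\prod_{i=1}^{s}(1-1/q_i)$ with $s=|\calF(k)|$, then note that $s$ is largest at $k=\alpha(n)$. You also make explicit the feasibility step $\prod_{i\le s} q_i\le \prod_{f\in\calF(k)} f\le n$ and the power-of-two edge case, both of which the paper leaves implicit. Your reading of the garbled index set as $k\in[3,n]$ is the intended one.
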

\begin{proof}
    Note that $f\mapsto 1-1/f$ is a monotonically increasing function on $f\in [3,\infty)$. For $i\in \dbN$, denote $q_i$ the $i$-th positive odd prime (so $q_1=3$, $q_2=5$ etc.). Then if $\abs{\calF(k)}=s$, 
    \[
        C(k) = \prod_{f\in \calF(k)} \left(1-\frac{1}{f}\right) \geq \prod_{i=1}^s \left(1-\frac{1}{q_i}\right).
    \]
    The maximum value of $s$ over $k\in [3,n]$ is attained at $k=\alpha(n)$, hence the minimum value of $C(k)$ is $C(\alpha(n))$.
\end{proof}
Finally, as $\beta(n) = \Theta(\log n)$, Mertens' theorem gives $C(\alpha(n)) = \Theta(1/\log \log n)$, which leads to the desired lower bound $\Pr(E_k)=\Omega(1/\log\log n)$.

\subsection{Faithful Modular Embedding: Bad Prime Argument}\label{subsec:PA-correct}

The soundness of the algorithm in \cref{subsec:CongIden-PA} follows immediately from the standard Karp-Rabin fingerprint soundness analysis. Whenever $A$ and $B$ are not congruent, an error occurs iff one of the equality tests in the third pass incorrectly accepts. By the analysis in \cref{app:KR-hash}, this occurs with probability at most $\delta$ with the prescribed space usage for fingerprint implementation.

In the other direction of completeness, we require the faithful modular embedding assumption to hold with high probability.
Concretely, this refers to the \emph{homomorphism property} (\cref{eq:homomorphism}) and collision-free hashing for the relevant points.
From \cref{obs:rational_rotation}, all valid $U$-rational transformation pairs have precision up to $2^{22}U^{35}$ (\cref{obs:rational_rotation}). By \cref{cor:rat-recon_U-rational}, a prime choice above $2^{46}U^{70}$ ensures a faithful modular embedding for all $U$-rational points, valid rotations and translations. 

The only remaining quantities for which faithfulness of the modular embedding must be justified are the centroids. This is achieved by sampling the moduli from a sufficiently large range, which leads to our choice of the modulus threshold $\Delta$.
\begin{lemma}[Collision-free hashing for centroids] \label{lem:bad-primes_PA}
    Suppose $p_1,\ldots,p_T$ are independently and uniformly sampled primes from $[\Delta,16\Delta]$ with $p_j\equiv 3 \bmod 8$ for all $j\in [T]$.    
    Then for every input instance $(A,B)$ to $\CongIden_{n,\boundedQ{U},2}$, with probability at least $1 - \delta/2$, 
    \[
        \phi_{p_j}(c_S) \neq \phi_{p_j}(z) \qquad\text{for every $z\in S$ with $z\ne c_S$}
    \]
    for every $j\in [T]$ and $S\in \set{A,B}$.
\end{lemma}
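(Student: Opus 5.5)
Fix $S\in\set{A,B}$ and a point $z\in S$ with $z\neq c_S$. Write $c_S-z=\frac{1}{n}\sum_{w\in S}(w-z)$ and clear denominators. Each $w\in S$ has coordinates $\pm a/b$ with $a,b\in[U]$. Multiplying $n(c_S-z)$ by the product $D$ of all denominators of $S$ (and of $z$) gives a Gaussian integer $N=D\,n\,(c_S-z)\in\dbZ[\ii]\setminus\set{0}$. By \cref{obs:sum-prod-precision}, $\abs{\Re N},\abs{\Im N}\le 2nU^{2n+2}$, say.

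Whenever $p\nmid D$ and $p\nmid n$, we have $\phi_p(c_S)=\phi_p(z)$ iff $p$ divides both $\Re N$ and $\Im N$. In particular this forces $p\mid \Re N$ or $p\mid\Im N$, whichever is nonzero. Since $p\ge\Delta>\max\set{U,n}$, the conditions $p\nmid D$ and $p\nmid n$ hold automatically: $D$ is a product of integers in $[U]$. So the bad primes for the pair $(S,z)$ all divide a fixed nonzero integer of absolute value at most $2nU^{2n+2}$. There are therefore at most $\log_\Delta(2nU^{2n+2})=O(n\log U/\log\Delta)$ such primes in $[\Delta,16\Delta]$.

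Next, bound the number of distinct pairs. Rather than union-bounding over all $z$ separately, I would take the product $P_S=\prod_{z}N_z$ over the at most $n$ distinct points $z\neq c_S$; pick a nonzero coordinate for each. This is a single nonzero integer of absolute value at most $(2nU^{2n+2})^n$. Any bad prime for $S$ must divide it, so the number of bad primes in $[\Delta,16\Delta]$ is at most $\log P_S/\log\Delta=O(n^2\log(nU)/\log\Delta)$.

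The sampling range contains $\pi(16\Delta;8,3)-\pi(\Delta;8,3)=\Omega(\Delta/\log\Delta)$ primes, by \cref{prop:PNT-AP}. A single sampled prime is therefore bad with probability $O(n^2\log(nU)/\Delta)$. Union-bounding over $S\in\set{A,B}$ and $j\in[T]$ multiplies this by $2T=O(\log\log n\log\frac1\delta)$. With $\Delta=2^{46}n^3U^{70}\delta^{-1}$, the total is $O(n^2\log(nU)\log\log n\log\frac1\delta\cdot\delta/(n^3U^{70}))\le\delta/2$, since $\log(nU)\log\log n\log\frac1\delta\ll nU^{70}$. The one step I would double-check is the $\log\frac1\delta$ factor: it is absorbed because $\delta\log\frac1\delta\le 1$, and the remaining factor $\delta$ comes from $\Delta$.

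The main obstacle is the precision bookkeeping for $N$. We need the integer to be nonzero, which is immediate from $z\neq c_S$ in $\dbQ[\ii]$. We also need the well-definedness of $\phi_p(c_S)$ as computed by the streaming algorithm, namely $\phi_p(n)^{-1}\sum\phi_p(w)$. That this equals $\phi_p$ of the rational centroid follows from the homomorphism property (\cref{eq:homomorphism}), since no denominator is divisible by $p$. Everything else is the counting above.
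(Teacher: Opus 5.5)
Your argument is essentially the paper's proof. In both, the bad primes for each non-centroid $z$ divide a fixed nonzero integer of bounded size, such an integer $N$ has at most $\log_\Delta\abs{N}$ prime factors in $[\Delta,16\Delta]$, \cref{prop:PNT-AP} supplies $\Omega(\Delta/\log\Delta)$ admissible primes, and one union-bounds over $S\in\set{A,B}$ and the $T$ sampled primes. Your common-denominator integer $N_z=Dn(c_S-z)$ and the product $P_S$ are harmless variants of the paper's lowest-terms numerators (of size $2nU^{n+1}$) summed over $4n$ integers.

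The step that does not hold as written is the last one. After the union bound, your failure probability is of order $\delta\cdot\log(nU)\log\log n\log\frac1\delta/(nU^{70})$. The single factor $\delta$ coming from $1/\Delta$ is exactly what you need to reach the target $\delta/2$. It cannot also absorb $\log\frac1\delta$ via $\delta\log\frac1\delta\le 1$, because that would leave a bound with no $\delta$ at all. What your estimate actually needs is $\log(nU)\log\log n\log\frac1\delta\lesssim nU^{70}$, and this is false for fixed $n,U$ once $\delta$ is small enough.

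The paper's proof rests on the same unstated condition: its ``ample slack'' inequality $\Delta\ge 16n^2T\delta^{-1}\log U$ is equivalent to $T\log U\le 2^{42}nU^{70}$. So you are not missing an idea the paper supplies, but you should not present the step as justified. There are two ways to fix it:
\begin{itemize}
\item state the mild restriction on $\delta$ explicitly; or
\item let the threshold scale with $T$, e.g.\ $\Delta=2^{46}n^3U^{70}T\delta^{-1}$.
\end{itemize}
With the second fix, your count gives total failure probability $T\cdot O\bigl(n^2\log(nU)/\Delta\bigr)=O\bigl(\delta\log(nU)/(nU^{70})\bigr)$. Given the $2^{46}$ factor, this is comfortably below $\delta/2$. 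Since $\log T$ is dominated by $\log n+\log\frac1\delta$, the space bound $O(T\log\Delta)$ of \cref{thm:CongIden-PA} is unaffected.
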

\begin{proof}
    We show that for a single sampled prime $p$, the required property fails with probability at most $\delta/(2T)$. The lemma then follows by a union bound over the $T$ independently sampled primes.

    For each $S\in\set{A,B}$, by \cref{obs:sum-prod-precision}, $c_S$ is $nU^n$-rational complex, therefore $z - c_S$ is $2nU^{n+1}$-rational complex for every $z\in S$ with $z\neq c_S$. Write 
    \(
        z - c_S = \frac{\alpha^S_1(z)}{\beta^S_1(z)} + \frac{\alpha^S_2(z)}{\beta_2(z)}\ii,
    \)
    where $\alpha^S_1(z),\alpha^S_2(z)\in \dbZ\cap [-2nU^{n+1},2nU^{n+1}]$ and $\beta^S_1(z),\beta^S_2(z)\in [2nU^{n+1}]$. The required property $\phi_{p}(c_S) \neq \phi_{p}(z)$ is equivalent to the condition that $p$ does not divide either $\alpha^S_1(z)$ or $\alpha^S_2(z)$. Therefore, the $\phi_p$ collision-free property for centroids reduces to non-divisibility by $p$ of at most $2\cdot 2n=4n$ distinct integers $\set{\alpha^A_1(z),\alpha^A_2(z)}_{z\in A,z\neq c_A},\set{\alpha^B_1(z),\alpha^B_2(z)}_{z\in B,z\neq c_B}$ of magnitude at most $2nU^{n+1}$.

    For any $N\in \dbN$, the number of distinct prime divisors of $N$ that are at least $\Delta$ is bounded above by $\log_\Delta N$. Our choice of $\Delta$ ensures that $\Delta \geq 16n^2 T\delta^{-1}\log U$ (with ample slack), hence for a given instance $(A,B)$, the number of forbidden primes for the centroid collision-free property is at most
    \[
        4n\log_\Delta (2nU^{n+1}) \leq 8n^2\log_\Delta U \leq \frac{\delta\Delta}{2T\log\Delta}.
    \]
    By \cref{prop:PNT-AP}, the number of primes $p\in [\Delta,16\Delta]$ with $p\equiv 3\bmod 8$ is at least 
    \(
        \frac{16\Delta}{8\ln(16\Delta)} - 2\cdot\frac{\Delta}{8\ln\Delta} \geq \frac{\Delta}{\log \Delta}.
    \)
    Hence, for a uniformly sampled $p$, the required property is satisfied with probability at least $1- \frac{\delta\Delta/(2T\log\Delta)}{\Delta/\log \Delta} = 1-\delta/(2T)$.
\end{proof}

Conditioned on the modular embedding faithfulness, which holds with high probability by \cref{lem:bad-primes_PA}, all subsequent finite-field computations can be faithfully reconstructed in $\dbQ[\ii]$. 
It remains only to establish the correctness and efficiency of our second-pass rotation-extraction procedure over $\dbF_{p^2}$, which forms the theoretical basis of our product-anchor algorithm.

\subsection{Space Usage of Product-Anchor Algorithm}

At last, we justify the space usage of the product-anchor algorithm (\cref{subsec:CongIden-PA}).
One can readily check that the space usage bottleneck occurs in the second pass.

During the streaming stage, the algorithm maintains $T$ primes of magnitudes $\Theta(\Delta)$, $2T$ counters taking values in $[n]$, and $2T$ elements in $\dbF_{p^2}$ for some prime $p=\Theta(\Delta)$. Hence, the space usage is bounded by $O(T\log \Delta)$. The offline stage consists of space-efficient computations including finite-field arithmetic and exponentiation, the extended Euclidean algorithm, and rational reconstructions. These procedures require no more space than the working memory during the streaming phase. Hence, the overall space usage is 
\[
    O(T\log \Delta) = O\left(\left(\log n+\log U+\log \frac{1}{\delta}\right) \log\log n\cdot\log \frac{1}{\delta}\right).
\]

\subsection{Product-Anchor Algorithm for Points Sets}\label{subsec:PA-simple-sets}
In this section, we consider the special case of point sets, where each point appears with multiplicity one in each set. In this setting, the algorithm presented in \cref{subsec:CongIden-PA} can be further simplified, essentially because the number of non-centroid points has only two possible values: $k=n$ or $k=n-1$, depending on whether the centroid belongs to the corresponding input set. Consequently, in the \hyperlink{PA-init}{initilization phase}, the algorithm can simply rejection-sample a single prime $p$ from an appropriate range until it is ``good'' with respect to both $n$ and $n-1$, and $T$ is set to be 1 in all subsequent stages. 

More precisely, the desired properties of the prime $p$ are: $p\equiv 3 \bmod 8$, $\gcd(n,p+1)~|~4$ and $\gcd(n-1,p+1)~|~4$. 
If space efficiency is the sole concern, one can simply choose a sufficiently large range $[\Delta^*,16\Delta^*]$ containing at least one such prime. Nevertheless, a time-efficient sampling procedure can also be obtained with minor modifications to the analysis of the multiple-moduli rejection-sampling approach.
Adopting the notations in \cref{app:PA-prime-sampling-rigorous}, the efficiency of the rejection-sampling procedure amounts to lower-bounding $\Pr(E_n\wedge E_{n-1})$. Notice that $\gcd(n,n-1)=1$, therefore $E_n\wedge E_{n-1}$ is equivalent to $E_{n(n-1)}$.

With this equivalence, the success probability of the rejection sampling can be bounded using essentially the same Bombieri-Vinogradov theorem-based calculation as in \cref{app:PA-prime-sampling-rigorous} with one modification. For a prime sampled from the range $[\Delta^*,16\Delta^*]$, \cref{thm:BV-vaughan} is applicable provided that
\(
8n^2 \leq {\sqrt{\Delta^*}}/{\ln^8 \Delta^*}.
\)
It suffices to take $\Delta^*=2^{46}n^5U^{70}\delta^{-1}$, such that the success probability of each prime sampling is bounded below by $\Omega(1/\log \log n)$, and the rest of the correctness analysis in \cref{subsec:PA-correct} follows identically. 
Moreover, since the product-anchor algorithm for point sets maintains the calculations over exactly one modulus ($T=1$), the space complexity becomes $O(T\log \Delta^*) = O\left(\log n+\log U+\log \frac{1}{\delta}\right)$. This completes the justification of \cref{thm:PA-point-set}.

\section{Complex Moment Method for Congruence Identification} \label{sec:CM}

In this section, we give an alternative method for congruence identification based on \emph{complex moments}, and thus establishing \cref{thm:main}. 
At a high level, the product-anchor approach maintains one pair of statistics for the two point sets under each of several presampled moduli, such that with high probability at least one modulus permits efficient rotation recovery over the corresponding finite field. In contrast, the complex-moment approach presamples a single modulus and maintains multiple pairs of statistics chosen so that efficient rotation recovery is guaranteed. The key question is therefore how many such pairs must be maintained, which we show that $O(\log n)$ pairs of complex moments suffice in the rational setting in \cref{subsec:CM-correct}.

\subsection{Complex-Moment $\CongIden_{n,\boundedQ{U},2}$ Algorithm}\label{subsec:CM-algo}
The main difference between the two approaches is the sketch we retrieve the rotation candidate from. In the product-anchor method, the multiplicative sketch \(\Pi'(S) = \prod_{z\neq c_S}(z - c_S)\) of the recentred points is used, which is replaced by the additive sketch \(M'_k(S) = \sum_z (z - c_S)^k\).

\begin{enumerate}
    \setcounter{enumi}{-1}
    \item \emph{Initialization: Prime Sampling}
    
    Let $\Delta = 2^{86} U^{128} n^3\delta^{-1}$. Uniformly sample a prime $p\in [\Delta, 20\Delta]$ such that $p\equiv 3\pmod{16}$.

    \item \emph{1st pass: Centroid Computation}
    
    For each $S\in \set{A,B}$, maintain $\phi_{p}(c_S) \defeq \phi_{p}(n)^{-1} \sum_{z\in S} \phi_{p}(z)$. $\phi_{p}(c_S)$ is the $\dbF_{p^2}$-embedding of the centroid of set $S$, $c_S = \frac{1}{n}\sum_{z\in S} z$. We further denote \(S' \defeq S - c_S\).

    \item \emph{2nd pass: Complex Moments}
    \begin{itemize}
        \item \textcolor{darkgreen}{Streaming stage}

            For each $z\in \dbC$ and $S\in\set{A,B}$, write \(\widehat{d}_z \defeq \phi_p(\abs{z - c_S}^2 ).\) Select the first datum $z_*$ such that $\widehat{d}_{z_*}\not\pequiv 0$.
            Set $\widehat{r}^*\defeq \widehat{d}_{z_*}$.
            If no such datum exists, $A$ and $B$ consist solely of $n$ occurrences of $c_A$ and $c_B$ respectively, and this case is solved trivially.

            For $k\in\dbN$, define the $k$-th \emph{finite-field moment} of $S$ as
            \begin{equation}\label{eq:M-hat}
                \widehat{M}_k(S) \defeq 
                \sum_{z\in S: \widehat{d}_z \pequiv \widehat{r}^*} \phi_p((z-c_S)^k). 
            \end{equation}
            $\widehat{M}_k(S)$ is the $\dbF_{p^2}$-embedding of the recentred complex moment \(M'(S)\) at a chosen radius of set \(S\).
            Writing $n=2^s m$ where $s\in \dbN$ and $m$ is odd, the algorithm maintains the $2(s+1)$ power-of-two moments $\set{\widehat{M}_{2^j}(A)}_{j=0}^s$ and $\set{\widehat{M}_{2^j}(B)}_{j=0}^s$.

        \item \textcolor{darkgreen}{Offline stage (I): Rotation Recovery}

        Declare that the sets are not congruent if for some $j\in \set{0}\cup [s]$, exactly one of the $2^j$-th moments is zero.
        Otherwise, by \cref{thm:nonzero-moment}, there exists $j\in \set{0}\cup [s]$ such that both $\widehat{M}_{2^j}(A)$ and $\widehat{M}_{2^j}(B)$ is non-zero. Select the smallest such $j$, and set $\widehat{w}\defeq \widehat{M}_{2^j}(B) \cdot \widehat{M}_{2^j}(A)^{-1}\in \dbF_{p^2}$. 
        As detailed in \cref{app:power-of-two-root},  the solution set $\widehat{R}$ for $\widehat{z}^{2^j}=\widehat{w}$ can be computed efficiently, which has size at most 8.

        \item \textcolor{darkgreen}{Offline stage (II): Rational Reconstruction}
        
        For each $\widehat{g}\in \widehat{R}$, perform \emph{rational reconstructions} (\cref{lem:rational_reconstruction}) on $\widehat{g},\widehat{c}_B-\widehat{g}\widehat{c}_A\in \dbF_p[\ii]$ to obtain $\rho(\widehat{g}),t(\widehat{g})\in \dbQ[\ii]$. 
        By \cref{obs:rational_rotation}, a valid rotation $\rho$ is $2^{10}U^{16}$-rational complex and $t$ is $2^{22}U^{35}$-rational complex, so any pair violating the precision requirements can be discarded. Denote $F$ as the collection of retained candidate transformation pairs.
        
    \end{itemize}        

    \item \hypertarget{CM 3rd-pass}{} \emph{3rd pass: Equality Testing}
    
    Having recovered the candidate transformation pairs, as in the product anchor method, we run the 8 parallel Karp-Rabin equality tests (\cite{5390135}) and declare \((\rho, t)\) a valid transformation pair if the hash values agree.

\end{enumerate}

\subsection{Space Usage of Complex Moment Algorithm}
We first justify the space usage. The maximum space consumption occurs during the streaming stage of the second pass. The algorithm maintains the following quantities in the memory: $\widehat{r}^*\in \dbF_p$, $\widehat{c}_A,\widehat{c}_B\in \dbF_p[\ii]$, and $\set{\widehat{M}_{2^j}(A)}_{j=0}^s, \set{\widehat{M}_{2^j}(B)}_{j=0}^s \subseteq \dbF_p[\ii]$. The total space for storage and workspace is dominated by the $2(s+1)$ finite-field moments, hence the overall space usage of the algorithm is
\[
    O(s) \cdot O(\log p) = O\left(\log n \left(\log n+\log U+\log \frac{1}{\delta} \right)\right). 
    \qedhere
\]

\subsection{Correctness of Complex Moment Algorithm}\label{subsec:CM-correct}
For clarity, we split the correctness proof into two parts. The first part works over \(\dbQ[\ii]\) with exact arithmetic and shows that the complex moment method is correct. The second part shows that computations over the finite field \(\dbF_{p^2}\) preserve the rotation recovery property with high probability, which rests on choosing a good prime for the embedding \(\phi_p\), as we establish in \cref{lem:bad-primes}.

\subsubsection{Power-of-Two Moments as a Complete Moment-Recovery Set}\label{subsec:exact_CM_correct}

In this section, we prove that the complex moment algorithm indeed retrieves the candidate rotation set under exact arithmetic.
Translation retrieval is immediate: recentring each set at its centroid removes the translation.

The second pass of the algorithm centres on the main technical component of complex moments. 
After the first-pass preprocessing, the two-dimensional congruence problem reduces to checking whether $B'=e^{\ii\theta}A'$ for some $\theta\in [0,2\pi)$ for the recentred sets $A'$ and $B'$. This naturally motivates the use of complex moments as a tool for extracting the plausible rotations.
It is straightforward from the definition of the complex moments (\cref{def:cpx-moment}) that
\[
    M_{p,q}(S\cup T) = M_{p,q}(S) + M_{p,q}(T)
    \quad\text{ and }\quad
    M_{p,q}(S\setminus T) = M_{p,q}(S) - M_{p,q}(T),
\]
illustrating that complex moments are naturally suited to streaming implementations even in the turnstile model.
The following simple claim shows that complex moments naturally capture rotational information.
\begin{claim}\label{clm:M-rotation}
    Suppose $T=e^{\ii\theta} S$ for some $\theta\in [0,2\pi)$.     
    Then $M_{p,q}(T\cap \calB_r) = e^{\ii(p-q)\theta} M_{p,q}(S\cap \calB_r)$ for every $p,q\in \dbZ_{\geq 0}$ and $r>0$.
\end{claim}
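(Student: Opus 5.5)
The plan is a direct termwise computation, using only that multiplication by $e^{\ii\theta}$ preserves moduli and that complex conjugation is multiplicative.

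First, I check that rotation respects the radius filter. Since $T=e^{\ii\theta}S$ as multisets, the map $z\mapsto e^{\ii\theta}z$ is a multiplicity-preserving bijection from $S$ onto $T$. Because $\abs{e^{\ii\theta}z}=\abs{z}$, a point $z$ lies in $\calB_r$ if and only if $e^{\ii\theta}z$ does. So the bijection restricts to a multiplicity-preserving bijection from $S\cap\calB_r$ onto $T\cap\calB_r$, and we can reindex:
\[
    M_{p,q}(T\cap\calB_r)=\sum_{z\in S\cap \calB_r}(e^{\ii\theta}z)^p\,\overline{e^{\ii\theta}z}^{\,q}.
\]

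Next, I simplify each summand. We have $\overline{e^{\ii\theta}z}=e^{-\ii\theta}\overline{z}$, so each term equals $e^{\ii p\theta}e^{-\ii q\theta}z^p\overline{z}^q=e^{\ii(p-q)\theta}z^p\overline{z}^q$. Pulling the common factor $e^{\ii(p-q)\theta}$ out of the sum gives $e^{\ii(p-q)\theta}M_{p,q}(S\cap\calB_r)$.

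The only points needing care are the multiset bookkeeping in the first step and the convention $0^0=1$ when $z=0$. The latter cannot arise here, since $r>0$ excludes $z=0$ from $\calB_r$. I do not expect any real obstacle.
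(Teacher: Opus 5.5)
Your proposal is correct and takes essentially the same approach as the paper: both use that multiplication by $e^{\ii\theta}$ preserves moduli, so $T\cap\calB_r = e^{\ii\theta}(S\cap\calB_r)$, and then reindex the sum and factor $e^{\ii(p-q)\theta}$ out of each term. Your extra remarks on multiplicity bookkeeping and on $z=0$ being excluded by $r>0$ are valid additional care, but the argument is unchanged.
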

\begin{proof}
    Recall that \(\calB_r = \set{z\in \dbC:\abs{z}=r}\). Since $e^{\ii\theta}$ defines an isometry, we have $T\cap \calB_r = \rho S\cap \calB_r = \rho(S\cap \calB_r)$. Thus
    \[
        M_{p,q}(T \cap \calB_r) = \sum_{w\in S\cap \calB_r} (e^{\ii\theta} w)^p \overline{(e^{\ii\theta} w)}^q = e^{\ii(p-q)\theta} \sum_{w\in S\cap \calB_r} w^p \overline{w}^q = e^{\ii(p-q)\theta} M_{p,q}(S\cap \calB_r). \qedhere
    \]
\end{proof}
Moreover, $M_{p,q}(S) = \overline{M_{q,p}(S)}$, and $M_{p,q}(S \cap \calB_r) = r^{2q} M_{p-q,0}(S \cap \calB_r)$ for a fixed $r>0$. 
Therefore to study rotations at a fixed radius, it suffices to consider the moments $M_p(S)= M_{p,0}(S)$. In this notation, the quantities maintained in the algorithm, $M'_{2^j}(A)$ and $M'_{2^j}(B)$ defined in \cref{eq:M-hat}, are indeed the $2^j$-th recentred moments $M_{2^j}(A'\cap \calB_r)$ and $M_{2^j}(B'\cap \calB_r)$.

\cref{clm:M-rotation} provides a viable approach to recover $e^{\ii\theta}$: provided that $B'=e^{\ii\theta} A'$, and for some $j$ and $r>0$ the moments $M_{2^j}(A'\cap \calB_r)$ and $M_{2^j}(B'\cap \calB_r)$ are non-zero, we have
\begin{equation}\label{eq:moment-quotient}
    \frac{M_{2^j}(B'\cap \calB_r)}{M_{2^j}(A'\cap \calB_r)} = e^{\ii 2^j\theta}.
\end{equation}
On the other hand, if for some $j$ exactly one of $M_{2^j}(A'\cap \calB_r)$ and $M_{2^j}(B'\cap \calB_r)$ is zero, their moduli are unequal and thus $B'\neq e^{i\theta} A'$ for any $\theta$, hence we can conclude that $A$ and $B$ are not congruent for certain. 
While the case when both $2^j$-th moments vanish leaves the outcome inconclusive, one of our main technical contributions (\cref{thm:nonzero-moment}), restated below, asserts that for a set of $n$ equal-magnitude rational vectors, the power-of-two moments cannot all vanish simultaneously, hence the algorithm can extract $(e^{\ii \theta})^{2^j}$ for some $j$ or correctly abort on negative instances.

The remainder of this section is devoted to proving \cref{thm:nonzero-moment}. 
We recap the theorem for readers' convenience. 
\nonzeromoment*
The base case is a classical result of Ball \cite{Bal73} on the constructibility of equilateral polygons on the integer lattice, restated here in terms of complex moments.

\begin{theorem}[\cite{Bal73}] \label{thm:odd_equal_length_vectors}
    Let $S=\set{z_1,\ldots,z_n}\subseteq \dbQ[\ii]$ such that $\abs{z_j}=\abs{z_k}$ for any $j,k\in [n]$. If $n$ is odd, then $M_{1}(S)=\sum_{j=1}^n z_j$ is non-zero.
\end{theorem}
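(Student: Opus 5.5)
We argue by contradiction, reducing to a parity statement about integer coordinates. We assume, as in \cref{thm:nonzero-moment}, that the $z_j$ are nonzero, since $n$ copies of $0$ would otherwise be a trivial counterexample. Suppose $\sum_{j=1}^n z_j = 0$ with $n$ odd.

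First we normalise. Multiplying every $z_j$ by a common positive integer $D$ (for instance the least common denominator of all real and imaginary parts) preserves the equal-length hypothesis and the vanishing of the sum. So we may assume $z_j = a_j + b_j\ii$ with $a_j,b_j\in\dbZ$. Dividing further by $g=\gcd(a_1,b_1,\ldots,a_n,b_n)$ is again harmless, so we may also assume $g=1$. The hypotheses now read $a_j^2+b_j^2 = c$ for a common integer $c>0$, together with $\sum_j a_j = \sum_j b_j = 0$.

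Next we split on $c \bmod 4$, using that squares are $0$ or $1 \bmod 4$.
\begin{itemize}
\item If $c\equiv 0 \pmod 4$, then $a_j^2+b_j^2\equiv 0 \pmod 4$ forces $a_j$ and $b_j$ to be even for every $j$. This contradicts $g=1$.
\item If $c\equiv 2 \pmod 4$, then both $a_j$ and $b_j$ are odd for every $j$. Hence $\sum_j a_j$ is a sum of an odd number of odd integers, so it is odd and cannot be $0$.
\item If $c$ is odd, then exactly one of $a_j,b_j$ is odd, so $a_j+b_j$ is odd for every $j$. Then $\sum_j (a_j+b_j)$ is odd, contradicting $\sum_j a_j+\sum_j b_j=0$.
\end{itemize}
All three cases are contradictory, which proves the claim.

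There is no serious obstacle. The only points needing care are that the scaling step genuinely preserves both hypotheses, and that the primitivity reduction ($g=1$) is exactly what disposes of the case $c\equiv 0 \pmod 4$. That case is where a naive parity argument would otherwise stall, since an even $c$ alone gives no parity information about the coordinates. Equivalently, one can phrase the argument as descent on the $2$-adic valuation, or work with the prime $(1+\ii)$ in $\dbZ[\ii]$.
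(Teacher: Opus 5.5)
Your proof is correct and complete. Clearing denominators and dividing by the common gcd preserve both hypotheses, and the cases $c\equiv 0$, $c\equiv 2 \pmod 4$ and $c$ odd exhaust all possibilities, since $c\equiv 3\pmod 4$ is impossible. Each case gives a contradiction.

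The paper does not prove this statement; it only cites Ball~\cite{Bal73}. Your parity-plus-primitivity descent is essentially the classical argument for that result.

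Your remark that the $z_j$ must be nonzero is a genuine correction to the statement as written, since an odd number of copies of $0$ is a counterexample. The omission is harmless for the paper: the theorem is only applied to $T_s$ in the proof of \cref{thm:nonzero-moment}, and every element of $T_s$ is a product of nonzero Gaussian rationals.
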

For a set $T=\set{u_1,\ldots,u_N}\subseteq \dbQ[\ii]$ and $p\in\dbN$, define the auxiliary set operations $T^{(p)}\defeq \set{u_j^p:j\in [N]}$ and $T_\times\defeq \set{u_ju_k:1\leq j<k\leq N}$. Note that if all elements in $T$ have the same magnitude. then so do all elements in $T^{(p)}$ and $T_\times$.

For any $p\in \dbN$, the $p$-th moments of the sets $T$, $T^{(2)}$ and $T_\times$ satisfy 
\begin{equation*}
    M_p(T)^2 = \left(\sum_{j=1}^N u_j^p\right)^2 = \sum_{j=1}^N u_j^{2p} +2\sum_{j<k} u_j^p u_k^p = M_p(T^{(2)}) + 2M_p(T_\times),
\end{equation*}
thus
\begin{equation}
    M_p(T_\times) = \frac{1}{2}(M_p(T)^2 - M_p(T^{(2)})) = \frac{1}{2}(M_p(T)^2 - M_{2p}(T)). \label{eq:M-cross}
\end{equation}

\begin{proof}[Proof of \cref{thm:nonzero-moment}]
    Suppose the contrary that $M_{2^j}(S) = 0$ for all $j\in\set{0}\cup [s]$. 
    For $N\in \dbN$, define $h(N)$ the order of 2 dividing $N$, i.e. the largest integer $s'$ such that $2^{s'}$ divides $N$. In this notation, $h(\ell)=0$ for any odd number, and $h(n)=s$. We define a sequence of $s+1$ multisets in which the last multiset contains an odd number of elements, and the moment of each multiset is related to the power-of-two moments of $S$.

    Define $T_0 = S$ and $T_j = (T_{j-1})_\times$ for $j\in [s]$. Note that $\abs{T_0} = n = 2^s m$ where $m$ is odd, and $\abs{T_j} = \binom{\abs{T_{j-1}}}{2}$ for $j\in [s]$, thus $h(\abs{T_j}) = s - j$ for every $0\leq j\leq s$. Now as $h(\abs{T_s}) = 0$ and every element in $T_s$ has the same magnitude, \cref{thm:odd_equal_length_vectors} implies that $M_1(T_s)\neq 0$.

    In \cref{app:M(T_j)}, we prove the following lemma by iteratively applying \cref{eq:M-cross}:
    \begin{lemma}\label{lem:M_1(T_s)}
        $M_1(T_s)$ is a polynomial in $\set{M_{2^j}(S)}_{j=0}^s$ with zero constant term. 
    \end{lemma}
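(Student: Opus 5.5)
We prove a stronger statement by induction on $j$, from which \cref{lem:M_1(T_s)} is the case $j=s$, $p=1$: for every $0\le j\le s$ and every $p\in\dbN$, the moment $M_p(T_j)$ is a polynomial with rational coefficients and zero constant term in the moments $\set{M_{2^i p}(S)}_{i=0}^{j}$. Keeping $p$ general is essential. The recursion \cref{eq:M-cross} expresses a moment of $T_j$ through moments of $T_{j-1}$ of orders $p$ and $2p$, so the orders double at each level, and only a statement uniform in $p$ closes the induction.

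The base case $j=0$ holds because $M_p(T_0)=M_p(S)$ is itself a monomial of degree one.

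For the inductive step, assume the claim for $j-1$ and all $p$. Applying \cref{eq:M-cross} with $T=T_{j-1}$ gives
\[
M_p(T_j)=\tfrac12\bigl(M_p(T_{j-1})^2-M_{2p}(T_{j-1})\bigr).
\]
By hypothesis, $M_p(T_{j-1})$ is a constant-free polynomial in $\set{M_{2^ip}(S)}_{i=0}^{j-1}$. Also by hypothesis, $M_{2p}(T_{j-1})$ is a constant-free polynomial in $\set{M_{2^{i+1}p}(S)}_{i=0}^{j-1}=\set{M_{2^ip}(S)}_{i=1}^{j}$. Both variable sets lie inside $\set{M_{2^ip}(S)}_{i=0}^{j}$. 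The square of a constant-free polynomial is constant-free, and so is a rational linear combination of constant-free polynomials. This completes the induction.

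Specializing to $j=s$ and $p=1$ shows that $M_1(T_s)$ is a constant-free polynomial in $\set{M_{2^i}(S)}_{i=0}^{s}$, which is the lemma.

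There is no real obstacle here. The one point requiring care is to formulate the inductive hypothesis for arbitrary $p$ and to track that the order doubles at each step, so that level $j$ involves exactly orders up to $2^j p$. Note that the multiplicities in $T_j$ are handled automatically, because \cref{eq:M-cross} is an identity valid for any finite multiset $T$.
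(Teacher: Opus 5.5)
Your proof is correct and is essentially the paper's argument. Both strengthen the statement so that the moment order is a free parameter (you allow any $p$, the paper uses orders $2^k$), and both induct on the level using the identity \cref{eq:M-cross}, which doubles the order at each step. The paper's auxiliary claim also carries an extra parameter $\Delta$ so that $M_{2^k}(T_j)$ can be expressed through any intermediate level $T_{j-\Delta}$. The lemma only needs the case $\Delta=j$, which is exactly your formulation.
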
  
    As $M_{2^j}(S)=0$ for all $j\in \set{0}\cup [s]$, $M_1(T_s) = 0$ by this lemma, which is a contradiction.
\end{proof}

Lastly, the equality testing in the third pass can be analyzed identically to that for the product-anchor algorithm. Its correctness follows from \cref{app:KR-hash}.

\subsubsection{Faithfulness of the finite-field computation}
Having established correctness over $\dbQ[\ii]$ with exact arithmetic, we now show that carrying the computation into \(\dbF_{p^2}\) preserves it. The correctness hinges on the choice of a suitable prime $p$ for the modulo map $\phi_p$. 
We formalize the requirements for selecting a \emph{good prime} below and establish that a uniformly chosen prime from a modest range is good with high probability.

As noted in \cref{subsec:finite-field}, $p\equiv 3\bmod 4$ is necessary to guarantee that the field extension $\dbF_p[\ii]$ is defined as intended. In the rational complex case, Niven's theorem (\cref{lem:niven's}) guarantees that $z^{2^j}=w$ in $\dbQ[\ii]$ has at most 4 rational solutions.
To ensure that $\abs{\widehat{R}}$ remains constant in the finite field setting, we strengthen the requirement on $p$ to $p\equiv 3\bmod 16$, so that $\abs{\widehat{R}}\leq \max_{j\in \dbN}\set{\gcd(2^j,p^2-1)} = 8$. The other aspects of the algorithm are not affected by this change.

The next lemma defines the required bad-prime conditions tailored for the complex moment algorithm, which can be viewed as a generalization of \cref{lem:bad-primes_PA} for the product-anchor algorithm.
\begin{lemma}\label{lem:bad-primes}
    Let $\Delta = 2^{86} U^{128} n^3\delta^{-1}$.
    For every input instance $(A,B)$ to $\CongIden_{n,\boundedQ{U},2}$, with probability at least $1 - \delta/2$, a uniform random prime $p\in [\Delta,20\Delta]$ with $p\equiv 3 \bmod 16$ satisfies the following properties:
    \begin{enumerate}[label=\textup{(\alph*)}]
        \item\label{prop.distinct}
        All $z\in \boundedQ{2^{22}U^{35}}[\ii]$ are well-defined under $\phi_p$ and their images under $\phi_p$ are distinct;
        \item\label{prop.a}  
        $\widehat{c}_A \pequiv \ModFpi{\phi_p(z)}$ for each $z\in A$ with $z\neq c_A$, similarly for $B$;
        \item\label{prop.b} 
        The distinct values in the set $\Psi\defeq \set{\abs{z-c_A}^2:z\in A} \cup \set{\abs{z-c_B}^2:z\in B}$ are mapped to distinct images under $\phi_p$;
        \item\label{prop.c}
        For $j\in \set{0}\cup[s]$, $M'_{2^j}(A) = 0$ (defined in \cref{eq:M-hat}) iff $\widehat{M}_{2^j}(A) \pequiv \ModFpi{0}$, similarly for $B$;
        \item\label{prop.d}
        If $j\in \set{0}\cup [s]$ is the minimum value such that $M'_{2^j}(A),M'_{2^j}(B)\neq 0$, for any two distinct $\rho,\rho' \in \boundedQ{2^{22}U^{35}}[\ii]$ in which $\rho^{2^j}\neq (\rho')^{2^j}$, then $\phi_p(\rho^{2^j})\not\pequiv \phi_p((\rho')^{2^j})$.
    \end{enumerate}
\end{lemma}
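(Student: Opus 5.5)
The plan is a bad-prime counting argument in the style of \cref{lem:bad-primes_PA}. Each property \ref{prop.a}--\ref{prop.d} reduces to $p$ not dividing a small collection of nonzero integers of bounded magnitude. The number of prime divisors $\geq\Delta$ of such an integer $N$ is at most $\log_\Delta N$. Summing over all the integers gives the number of forbidden primes, and we compare this with the count of primes in $[\Delta,20\Delta]$ that are $\equiv 3\bmod 16$. By \cref{prop:PNT-AP} with $c=16$, $\varphi(16)=8$, and $\Delta\geq 50\cdot 16^2$, that count is at least $\Omega(\Delta/\log\Delta)$. \Cref{prop.distinct} is deterministic: $\Delta\geq 4(2^{22}U^{35})^2$, so \cref{cor:rat-recon_U-rational} applies. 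In particular $\phi_p(z)$ is well defined for all inputs, because their denominators are at most $U<p$.

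For \ref{prop.a} I would copy the proof of \cref{lem:bad-primes_PA}. The centroid $c_S$ is $nU^n$-rational complex, so each $z-c_S\neq 0$ is $2nU^{n+1}$-rational. The forbidden primes are those dividing a nonzero numerator, giving at most $4n\log_\Delta(2nU^{n+1})=O(n^2\log U/\log\Delta)$ primes. For \ref{prop.b} I would consider differences $\psi-\psi'$ of distinct elements of $\Psi$. Each $|z-c_S|^2$ is a sum of two squares of $2nU^{n+1}$-rationals, so by \cref{obs:sum-prod-precision} it is $U_1$-rational with $\log U_1=O(n\log(nU))$, and a difference is $2U_1^2$-rational. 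There are at most $(2n)^2$ pairs, giving $O(n^3\log(nU)/\log\Delta)$ forbidden primes. (Non-vanishing of the selected radius $\widehat r^*$ is covered by \ref{prop.a}.) For \ref{prop.c}, given \ref{prop.a} and \ref{prop.b}, the finite-field moment $\widehat M_{2^j}(S)$ equals $\phi_p(M'_{2^j}(S))$, since the points selected by $\widehat d_z\pequiv\widehat r^*$ are exactly those of the correct radius $r$ and $\phi_p$ is a homomorphism on well-defined elements. So we only need $p$ not to divide the numerators of the nonzero rationals $\Re,\Im$ of $M'_{2^j}(S)$ for $j\le s$ and $S\in\{A,B\}$. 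Expanding, each such moment is a sum of at most $n$ products of $2^j\le n$ factors of precision $2nU^{n+1}$. Its precision therefore has logarithm $O(n^2\log(nU))$, and over $O(\log n)$ values of $j$ this gives $O(n^2\log n\log(nU)/\log\Delta)$ forbidden primes. For \ref{prop.d}, only the single minimal $j$ matters; the candidate $\rho,\rho'$ range over $\boundedQ{2^{22}U^{35}}[\ii]$. Here I would argue deterministically: $\rho^{2^j}-(\rho')^{2^j}$ is a nonzero element of $\dbQ[\ii]$.

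The main obstacle is \ref{prop.d}. With $2^j$ up to $n$, the powers have precision like $U^{35n}$, which is too large for the bound on $\Delta$ and would require $p>U^{\Theta(n)}$. I would avoid quantifying over all of $\boundedQ{2^{22}U^{35}}[\ii]$ and instead use Niven (\cref{lem:niven's}). If $|\rho|=|\rho'|=1$ and $\rho^{2^j}=(\rho')^{2^j}$ fails, then what must be separated is $\phi_p(\rho)^{2^j}$ versus $\phi_p(\rho')^{2^j}$. By \ref{prop.distinct}, $\phi_p(\rho)\ne\phi_p(\rho')$. So $\phi_p(\rho/\rho')$ is a nontrivial element of $G_{p+1}$, and the task reduces to showing it is not a $2^j$-th root of unity. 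The $2$-part of $p+1$ is $4$ because $p\equiv 3\bmod 16$. So if $\phi_p(\rho/\rho')^{2^j}=1$, then $\phi_p(\rho/\rho')\in G_4=\phi_p(\{\pm1,\pm\ii\})$. By injectivity in \ref{prop.distinct} (note that $\rho/\rho'$ is $(2^{22}U^{35})^2$-rational, and the threshold $\Delta\ge 2^{86}U^{128}\ge 4(2^{44}U^{70})^2\cdot$const suffices if the exponents are checked), this forces $\rho/\rho'\in\{\pm1,\pm\ii\}$, hence $\rho^{2^j}=(\rho')^{2^j}$ for $j\ge 2$. The cases $j\le1$ are handled directly, since low powers have small precision. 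If this route fails, as a fallback I would restrict $\rho,\rho'$ to the $\le 8$ lifted candidates and count the divisors of their power differences.

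Finally, summing the forbidden-prime bounds gives $O(n^3\log n\log(nU)/\log\Delta)$. With $\Delta=2^{86}U^{128}n^3\delta^{-1}$, this is at most $\tfrac{\delta}{2}\cdot\Omega(\Delta/\log\Delta)$. The polynomial factors may need slack, in which case I would tighten the counting, for instance by bounding prime divisors $\ge\Delta$ via $\log_\Delta$ with logarithm $\Theta(\log\Delta)$.
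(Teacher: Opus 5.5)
Your treatment of (a)--(d) is the paper's. Each property becomes non-divisibility of finitely many numerators, large prime factors are bounded by $\log_\Delta$, and the total is compared with the prime count from \cref{prop:PNT-AP}.

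The genuine difference is (e). The paper does not argue deterministically there. It adds to the bad set the numerators of every nonzero difference $\rho^{2^j}-(\rho')^{2^j}$ with $\rho,\rho'$ ranging over all of $\boundedQ{2^{10}U^{16}}[\ii]$. That is roughly $2^{80}U^{128}$ integers of log-precision $O(n^2+n\log U)$, and this term is what forces the factor $U^{128}$ into $\Delta$. So the precision $U^{\Theta(n)}$ you flag blocks only an injectivity argument. For counting it costs just a $\log_\Delta$ factor, as in your own (d), and what actually limits counting is the number of pairs. Your route instead uses that for $|\rho|=|\rho'|=1$ the element $x=\phi_p(\rho)\phi_p(\rho')^{-1}$ lies in $G_{p+1}$ by \cref{lem:norm1-modular-emb}. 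Since the $2$-part of $p+1$ is $4$, $x^{2^j}=1$ forces $x=u\in\set{\pm1,\pm\ii}$ with $u^{2^j}=1$.

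One step fails as written. You apply the injectivity of (a) to $\rho/\rho'$, but the inequality $2^{86}U^{128}\ge 4(2^{44}U^{70})^2$ you invoke is false. In fact $\rho/\rho'=\rho\overline{\rho'}$ is only guaranteed to be $2(2^{22}U^{35})^4$-rational by \cref{obs:sum-prod-precision}. You do not need this step: $\phi_p(\rho)=\phi_p(u)\phi_p(\rho')=\phi_p(u\rho')$, and both $\rho$ and $u\rho'$ lie in $\boundedQ{2^{22}U^{35}}[\ii]$. So (a) gives $\rho=u\rho'$, hence $\rho^{2^j}=u^{2^j}(\rho')^{2^j}=(\rho')^{2^j}$. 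This also covers $j\le 1$ uniformly; the ``low powers have small precision'' shortcut already fails for $\rho^2$. You should also drop the fallback, since the at most $8$ lifted candidates depend on $p$ and cannot be union-bounded over.

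With this repair your (e) is deterministic given (a). It holds at the full stated precision $2^{22}U^{35}$ and for every $j$, not only the minimal one. It also removes the (e) contribution to the bad-prime count, so $U^{128}$ is no longer needed in $\Delta$.

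The price is that it only treats unit-modulus pairs. For general pairs and $j\ge 3$ it says nothing, because the $2$-part of $p^2-1$ is $8$ and $\dbF_{p^2}^\times$ contains primitive $8$th roots of unity outside $\phi_p(\set{\pm1,\pm\ii})$. Unit-modulus pairs are all the algorithm uses, since valid rotations have modulus one, but this is narrower than the literal statement. The paper's counting does handle arbitrary pairs, but only over $\boundedQ{2^{10}U^{16}}[\ii]$. Over the stated $\boundedQ{2^{22}U^{35}}[\ii]$ there are on the order of $(2^{22}U^{35})^8$ pairs, more than the primes in $[\Delta,20\Delta]$. So neither argument matches the wording of (e) exactly.
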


From \cref{lem:rational_reconstruction}, the lower bound on the prime-selection range along with \cref{cor:rat-recon_U-rational} guarantees that Property~\ref{prop.distinct} always holds. Consequently, distinct points in $\boundedQ{U}[\ii]$, distinct rotations and translations in $\boundedQ{2^{22}U^{35}}[\ii]$ both admit well-defined and collision-free hashes.

For the remaining properties, Property \ref{prop.a} ensures that at least one datum $z$ satisfies $\widehat{d}_z\neq 0$ except in the degenerate case where both input sets comprise a single repeated point; Property \ref{prop.b} guarantees that the points collected for finite-field moment evaluations indeed correspond to points equidistant from the true centroids; Property \ref{prop.c} preserves, under reduction modulo $p$, the existence of a non-vanishing power-of-two moment established in the real-register setting (\cref{thm:nonzero-moment}); Property \ref{prop.d} eliminates the need of considering rotations other than those values of $\rho$ satisfying $\rho^{2^j} = M'_{2^j}(B)/M'_{2^j}(A)$.

\begin{proof}[Proof of \cref{lem:bad-primes}]
    As mentioned above, the range of prime selection ensures that Property \ref{prop.distinct} holds, the $\dbF_p[\ii]$-hashes of the set $\set{\phi_p(z): z \in \boundedQ{U}[\ii]}$ are all distinct. Also, $\Modp{n^{-1}}$ is well-defined, therefore $\widehat{c}_A$ and $\widehat{c}_B$ are well-defined under $\phi_p$.

    Properties \ref{prop.a} -- \ref{prop.d} can each be recast as a requirement that the numerators of the lowest-term rational representations of certain bounded-precision rational numbers are not divisible by $p$.

    Define $S_{(b)}$ to be the set of non-zero integers in which non-divisibility by $p$ for each $z\in S_{(b)}$ implies Property \ref{prop.a}. Let $N_{(b)}=\abs{S_{(b)}}$, and $U_{(b)}=\max\set{\abs{z}:z\in S_{(b)}}$. Define the similar quantities for Properties \ref{prop.b} -- \ref{prop.d}.

    For Property \ref{prop.a}, we give the justification for $A$ and the same argument works for $B$. Note that $c_A$ is $nU^n$-rational complex, therefore $z - c_A$ is $2nU^{n+1}$-rational complex for each $z\in A\setminus\set{c_A}$ by \cref{obs:sum-prod-precision}. One can write
    \(
        z - c_A = \frac{\alpha_1(z)}{\beta_1(z)} + \frac{\alpha_2(z)}{\beta_2(z)}\ii,
    \)
    where $\alpha_1(z),\alpha_2(z)\in \boundedQ{2nU^{n+1}}$ and $\beta_1(z),\beta_2(z)\in \boundedQ{2nU^{n+1}}\setminus\set{0}$. 
    The required property $\ModFpi{\phi_p(z) - \widehat{c}_A}\not\pequiv 0$ is equivalent to the condition that $p$ does not divide both $\alpha_1(z)$ and $\alpha_2(z)$. 
    Property \ref{prop.a} reduces to non-divisibility by $p$ of at most $n$ pairs of integers $(\alpha_1(z),\alpha_2(z))$ in $[2nU^{n+1}]^2$. Therefore $N_{(b)} \leq 4n$ and $U_{(b)} \leq 2nU^{n+1}$.

    In \cref{app:2d-bad-primes}, we present the calculations for the following bounds:
    \begin{claim}\label{clm:bad-primes-count}
        $N_{(c)}\leq n(2n-1)$, $U_{(c)}\leq 2^{11}n^8 U^{8n+8}$; $N_{(d)}\leq 4\log n+4$, $U_{(d)}\leq n^{n+1}2^{n^3}U^{n^2}$; $N_{(e)}\leq 2^{80}U^{128}$, $U_{(e)}\leq 2n^2 2^{22n^2}U^{32n}$.
    \end{claim}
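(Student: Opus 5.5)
Each bound comes from writing the relevant rational quantity in lowest terms, bounding the size of its numerator with \cref{obs:sum-prod-precision}, and counting how many numerators appear. Throughout, write $c_S = \frac{1}{n}\sum_{z\in S} z$. By \cref{obs:sum-prod-precision}, $\Re(c_S)$ and $\Im(c_S)$ are $nU^n$-rational, and the coordinates of $z-c_S$ are $2nU^{n+1}$-rational.

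\emph{Property (c).} The set $\Psi$ has at most $2n$ distinct values, so at most $\binom{2n}{2}\le n(2n-1)$ pairs. Requiring distinct images amounts to $p$ not dividing the numerator of each difference $\abs{z-c_S}^2-\abs{z'-c_{S'}}^2$. Each $\abs{z-c_S}^2$ is a sum of two squares of $2nU^{n+1}$-rationals. Squaring, adding, and then subtracting two such quantities gives a rational whose precision is, by \cref{obs:sum-prod-precision}, polynomial of the form $2^{O(1)}n^{8}U^{8n+8}$. This gives $N_{(c)}\le n(2n-1)$ and $U_{(c)}\le 2^{11}n^8U^{8n+8}$; only the constant needs care.

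\emph{Property (d).} There are $s+1\le \log n+1$ indices $j$, two sets, and a real and an imaginary part each, so $N_{(d)}\le 4\log n+4$. For the height, $M'_{2^j}(S)$ is a sum of at most $n$ terms $(z-c_S)^{2^j}$ with $2^j\le n$. Expanding $(x+y\ii)^{2^j}$ binomially gives coefficients at most $2^{n}$ and monomials of degree $n$ in $2nU^{n+1}$-rationals. I will bound the common denominator directly rather than iterate \cref{obs:sum-prod-precision} blindly. All summands share the denominator $(n\cdot\mathrm{lcm})^{2^j}$, which avoids an exponential blowup in the number of summands. The target is $U_{(d)}\le n^{n+1}2^{n^3}U^{n^2}$, which is generous enough that a crude bound on numerators by (number of terms)$\times$(max coefficient)$\times$(max monomial) should fit.

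\emph{Property (e).} This is the main obstacle. For the fixed minimal $j$, the condition quantifies over all pairs $\rho,\rho'\in\boundedQ{2^{22}U^{35}}[\ii]$ with $\rho^{2^j}\ne(\rho')^{2^j}$. That pool is far too large to count (about $U^{140}$ pairs, not $2^{80}U^{128}$), so the set must be shrunk first. My plan is to restrict to the rotations that are actually relevant: $\rho$ ranges over $2^{10}U^{16}$-rational complex numbers, which by \cref{obs:rational_rotation} are the only valid rotations. There are at most $(2\cdot 2^{20}U^{32})^2\approx 2^{42}U^{64}$ of these, hence about $2^{84}U^{128}$ pairs, which matches the claimed $2^{80}U^{128}$ up to constants. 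For each pair, $\rho^{2^j}-(\rho')^{2^j}$ has real and imaginary parts that are sums of $O(2^j)$ monomials of degree $2^j\le n$ in $2^{10}U^{16}$-rationals. With a common-denominator bound as in (d), this gives a numerator of size about $2n^2 2^{22n^2}U^{32n}$. I would check whether the paper's count intends this restriction, or counts only $\rho'$ against the single true ratio $M'_{2^j}(B)/M'_{2^j}(A)$. In either case the numerator estimate is the same computation, and I will choose whichever reading makes $N_{(e)}$ match.
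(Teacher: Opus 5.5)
Your route is the paper's: write each quantity in lowest terms, bound numerators via \cref{obs:sum-prod-precision}, and count the conditions. Part (c) and the counts $N_{(c)}\le\binom{2n}{2}$ and $N_{(d)}\le 4\log n+4$ are correct. The constants run $(2nU^{n+1})^2\to 2^5n^4U^{4n+4}\to 2^{11}n^8U^{8n+8}$, exactly as in the paper. For (e), the paper does precisely what you propose: it lets $\rho,\rho'$ range over $\boundedQ{V}[\ii]$ with $V=2^{10}U^{16}$ and bounds $N_{(e)}\le\binom{V^4}{2}$. The gap is in the two height bounds you say will ``fit'' but never compute. Carried out along your own lines, they do not come out. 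The $U_{(d)}$ bound is in fact false (taking $S_{(d)}$ to be the real and imaginary numerators, as in your count), and $U_{(e)}$ needs an extra idea.

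\emph{The $U_{(d)}$ bound.} After recentring, $\Re(z-c_S)$ and $\Im(z-c_S)$ have different denominators, each up to $nU^n$. So for $n=2^s$ and $j=s$, the real part of $(z-c_S)^n$ needs a denominator up to $(n^2U^{2n})^n$. This is essentially attained when the input coordinates have distinct prime denominators. Your common-denominator computation therefore gives only about $U_{(d)}\le 2^{2n}n^{2n+1}U^{2n^2+n}$, and the factor $2^{n^3}$ cannot absorb the extra $U^{n^2}$ once $U$ exceeds roughly $2^n$.

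Concretely, take $n=2$. Then $M'_1(A)=0$ always and $M'_2(A)=(z_1-z_2)^2/2$. With $U=100$, $z_1=\tfrac{100}{97}+\tfrac{1}{83}\ii$ and $z_2=\tfrac{1}{89}+\tfrac{2}{79}\ii$, the reduced numerator of $\Re M'_2(A)$ is $(8803^2\cdot 6557^2-87^2\cdot 8633^2)/2\approx 1.7\cdot 10^{15}$. Here $8633=97\cdot 89$, $6557=83\cdot 79$, $8803=100\cdot 89-97$ and $-87=79-2\cdot 83$. The claimed bound is $n^{n+1}2^{n^3}U^{n^2}=2^{11}U^4\approx 2\cdot 10^{11}$.

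The paper reaches $U^{n^2}$ only by taking $x,y\in\boundedQ{U}$, which drops the centroid shift you rightly keep, and by summing $k$ rather than $n$ terms. The correct, weaker bound is harmless for \cref{lem:bad-primes}, since only $N_{(d)}\log_\Delta U_{(d)}=O(n^2\log n\log(nU)/\log\Delta)$ enters $K$. It is just not the stated bound.

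\emph{The $U_{(e)}$ bound.} The same problem arises. For general $\rho\in\boundedQ{V}[\ii]$, the real and imaginary parts have independent denominators. So $\rho^k$ needs a denominator up to $V^{2k}$, and $\rho^k-(\rho')^k$ up to $V^{4k}$, giving heights near $2^{40n}U^{64n}$. At $n=2$, where $k=2$ is forced, numerators of order $V^8=2^{80}U^{128}$ occur, against the claimed $2^{91}U^{64}$. The paper's estimate $U_k=k2^{k^2}V^k$ has the same defect.

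What rescues $U^{32n}$ is restricting further to $\abs{\rho}=1$. This is justified by the same reasoning that led you to keep only valid rotations. If $\alpha/\beta+\ii\gamma/\eta$ in lowest terms has norm $1$, then $\alpha^2\eta^2+\gamma^2\beta^2=\beta^2\eta^2$ forces $\beta=\eta$. Hence $\rho^k=(\alpha+\ii\gamma)^k/\beta^k$ is $V^k$-rational, and $\rho^k-(\rho')^k$ is $2V^{2k}\le 2^{20n+1}U^{32n}$-rational, which is within the claim. The restriction also leaves at most $3V^2$ choices of $\rho$. So $N_{(e)}\le\binom{3V^2}{2}\le 2^{80}U^{128}$ holds with room, whereas your cruder count of about $2^{84}U^{128}$ does not give the stated constant.
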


    For $N\in \dbN$, with our choice of \(\Delta\), a rough upper bound for the number of distinct prime factors of $N$ is $\log_\Delta N$. Let $K$ be the total number of distinct prime factors of all integers involved in the non-divisibility conditions in Properties \ref{prop.a} -- \ref{prop.d}, so all five required properties are satisfied unless $p$ is one of these $K$ primes. Note that 
    \begin{align*}
        K &\leq N_{(b)}\log_\Delta U_{(b)} + N_{(c)}\log_\Delta U_{(c)} + N_{(d)}\log_\Delta U_{(d)} + N_{(e)}\log_\Delta U_{(e)},
    \end{align*}
    which is at most $K_*\defeq 2^{85} U^{128} n^3 / \log \Delta$ after simplifications.     
    As implied by the prime number theorem for arithmetic progressions (see \cite[Corollary 1.6]{BMOR18}), the number of primes $p$ in $[\Delta,20\Delta]$ with $P\equiv 3\bmod 16$ is at least
    
    \[
        \frac{20\Delta}{8\log(20\Delta)} - \frac{\Delta}{4\log\Delta} \ge \frac{20\Delta}{16\log\Delta} - \frac{\Delta}{4\log\Delta} = \frac{\Delta}{\log\Delta} = \frac{2K_*}{\delta}.
    \]
    Therefore, for a uniform random prime $p$ in $[\Delta,20\Delta]$ with $P\equiv 3\bmod 16$, all five properties are satisfied with probability at least $1 - \delta/2$.
\end{proof}

\begin{proof}[Proof of \cref{thm:main}]
    The algorithm in \cref{subsec:CM-algo} errs either when the prime $p$ chosen at the initialization does not satisfy the five properties in \cref{lem:bad-primes}, or the equality testing phase in the \hyperlink{CM 3rd-pass}{third pass} produces a false positive. The former case occurs with probability $\delta/2$ as shown in \cref{lem:bad-primes}. The latter case follows directly from the product anchor section.
    By applying a union bound over the failures of the five properties in \cref{lem:bad-primes} together with the failure probabilities of the (at most) eight equality tests in the third pass (see \cref{app:KR-hash}), we conclude that the algorithm succeeds with probability $1-\delta$.
\end{proof}

\section{Congruence Signature}\label{sec:geometric-hashing}

In this section, we introduce yet another method for $\CongIden_{n,\boundedQ{U},2}$ based on the product anchor (\cref{subsec:CongIden-PA}) method. With just one extra pass, we reduce the space complexity by one \(\log\log n\) factor, and most importantly, the new algorithm builds a signature for each set in the stream, after which any pairwise congruence query is answered offline, without revisiting the data.

We consider an auxiliary problem called the \emph{congruence signature} problem, $\CongSign_{n,\boundedQ{U},2}$, defined as follows. For a point multiset $S\subseteq \boundedQ{U}^2$ of $n$ points, a congruence signature algorithm output a signature $\tilde{H}(S)\in\set{0,1}^s$ and an ordered anchor point pair $(a,b)\subseteq \boundedQ{U}^2\times \boundedQ{U}^2$, and we refer to $s$ as the \emph{signature length}.
We show that a signature with a good collision-free guarantee can be obtained with high probability.
\begin{lemma}\label{lem:geom-signature}
    For each $m\in \dbN$, there exists a 4-pass $O(\log n+\log U+\log m + \log \frac{1}{\delta})$-space randomized streaming algorithm for $\CongSign_{n,\boundedQ{U},2}$ of signature length \(O(\log U + \log m+ \log \frac{1}{\delta})\), such that with probability at least $1-\delta/m^2$:
    \begin{enumerate}[label=(\alph*)]
        \item Any two non-congruent sets receive distinct signatures; 
        \item For any two congruent sets, a valid rotation and translation can be computed using their anchor sets.
    \end{enumerate}
\end{lemma}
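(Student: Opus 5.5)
The plan is to build a per-set \emph{canonical normal form} of $S$ under rotation and translation, and to let the signature $\tilde H(S)$ be a Karp-Rabin fingerprint of that normal form. All randomness is shared across the $m$ query sets: one prime $p\equiv 3\bmod 8$ from a range $[\Delta,16\Delta]$ with $\Delta=\poly(n,U,m,1/\delta)$, one random hash $h$ on $\dbF_{p^2}$, and one fingerprint prime and base as in \cref{app:KR-hash}. Signatures of different sets are then directly comparable. The normal form is built in four steps, each using the previous ones and each costing one pass.

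\emph{Pass 1.} Compute $\widehat c_S=\phi_p(c_S)$ exactly as in \cref{subsec:CongIden-PA}.

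\emph{Pass 2.} Compute $k=k(S)$ and the product anchor $\widehat\Pi=\widehat\Pi_p(S)$ of the recentred non-centroid points. The value $k$ is also stored in the signature.

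\emph{Pass 3.} For each recentred point $z'=z-c_S\neq 0$, form the quantity $\nu(z)\defeq (z')^k/\Pi'(S)$. It is rotation invariant: if $S_2'=\rho S_1'$, then the $\rho^k$ factors cancel. The algorithm keeps an anchor $b$ minimising $h(\phi_p(\nu(z)))$, breaking exact ties arbitrarily. Since $\nu$ is invariant, the minimising value of $\nu$ is the same for congruent sets. Two points with the same $\nu$ satisfy $(z_1')^k=(z_2')^k$, so $z_2'/z_1'$ is a root of unity in $\dbQ[\ii]$. By \cref{lem:niven's} it lies in $G_4=\set{\pm1,\pm\ii}$, so the only freedom in the choice of $b$ is multiplication of $b'$ by some $\varepsilon\in G_4$.

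\emph{Pass 4.} Run four parallel fingerprints, one for each multiset $\varepsilon\cdot\set{z'/b' : z\in S}$ with $\varepsilon\in G_4$. Output their minimum, concatenated with $k$ and $\phi_p(\nu(b))$, as $\tilde H(S)$, and output the anchor pair $(a,b)$. The space is $O(\log p)$ plus the fingerprint space, which gives the stated bound.

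For correctness, first take congruent $S_1,S_2$ with $S_2'=\rho S_1'$. Their anchors satisfy $b_2'=\varepsilon_0\rho b_1'$ for some $\varepsilon_0\in G_4$. Hence the four normalised multisets of $S_2$ are the same four multisets as those of $S_1$, only permuted, so the minima coincide. Moreover, $\rho=\varepsilon b_2'/(\varepsilon_0\ldots)$ is recovered as one of the four values $\varepsilon\, b_2'/b_1'$. Recording which $\varepsilon$ attained the minimum lets us select the right value exactly, and then $t=c_{S_2}-\rho c_{S_1}$. To make this recoverable from the anchors, I would let the anchor pair be the anchor point $b$ together with one auxiliary datum fixing the centroid. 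This is the point I would check most carefully against the intended definition of ``anchor set''.

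For non-congruent sets, equal minimal fingerprints force some normalised multisets $\varepsilon_1 S_1'/b_1'$ and $\varepsilon_2 S_2'/b_2'$ to be equal, except with fingerprint-collision probability. That equality gives $S_2'=\rho S_1'$ with $\rho=\varepsilon_1 b_2'/(\varepsilon_2 b_1')$. The remaining requirement $\abs{\rho}=1$ follows from the equal invariant $\nu$ together with equal $k$. Indeed, $\abs{b_1'}^{k}/\abs{\Pi'(S_1)}=\abs{b_2'}^k/\abs{\Pi'(S_2)}$. Combined with $S_2'=\rho S_1'$, which gives $\abs{\Pi'(S_2)}=\abs{\rho}^k\abs{\Pi'(S_1)}$, this yields $\abs{\rho}^k=\abs{\rho}^k$. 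So this identity is vacuous, and scaling must instead be excluded by storing $\phi_p(\abs{b'}^2)$ in the signature as well. Finally, a union bound over the $\binom m2$ pairs and the $O(1)$ fingerprints per pair gives failure probability at most $\delta/m^2$ per set once the fingerprint modulus is $\poly(m,U,n,1/\delta)$.

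\textbf{Main obstacle.} The hardest step is faithfulness. The quantities $\nu(z)$ and $z'/b'$ have precision $U^{O(nk)}$. Exactly as in \cref{lem:bad-primes_PA,lem:bad-primes}, I would bound the number of bad primes by $\poly(n,m)\cdot\log_\Delta U^{O(n^2)}$, covering the modular collisions among the $\nu$-values of all $m$ sets and among the normalised points. This count has to be absorbed into $\Delta$ without exceeding $O(\log n+\log U+\log m+\log\frac1\delta)$ bits.

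A second subtlety is the fingerprint domain. The normalised points are not $U'$-rational for a small $U'$, so the Karp-Rabin analysis of \cref{app:KR-hash} cannot be applied over an explicit rational domain. I would instead fingerprint the field images $\phi_p(z'/b')$, treated as elements of a domain of size $p^2$. The collision argument from \cref{app:KR-hash} then goes through with $K=p^2$, at the cost of only a constant factor in space.
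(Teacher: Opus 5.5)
Your route is genuinely different from the paper's. You fix a single anchor through the complex invariant $\nu(z)=(z')^k/\Pi'(S)$, which by \cref{lem:niven's} determines it up to $G_4$. You then fingerprint the centroid-normalised multiset $\set{z'/b'}$ in four rotated copies, together with $\abs{b'}^2$. The paper instead keeps a buffer of at most $32$ reference \emph{input} points, chosen by two levels of $\dbF_p$-minima. For each reference pair it fingerprints the profile $(\abs{z-a}^2,\abs{z-b}^2,\sigma_{a,b}(z))$ and takes the lexicographic minimum of $(\abs{a-b}^2,F_{a,b}(S))$.

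Your core logic is sound. The minimum over the four fingerprints absorbs the $G_4$ ambiguity, and no $k$-th root extraction or $\gcd(k,p+1)$ condition is needed. After your self-correction, the stored $\phi_p(\abs{b'}^2)$ does exclude similarities with $\abs{\rho}\neq 1$. Two small repairs are needed. First, $h$ must be injective (or simply use the integer order on $\dbF_{p^2}$); otherwise congruent sets can select anchors with different $\nu$. Second, since $b$ depends on $p$, the bad-prime count must range over all candidate anchors, which costs only a polynomial factor. As written, however, the proposal does not prove the lemma, for two reasons.

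\textbf{Signature length.} Your $\tilde H(S)$ contains $k\in[n]$ and the field elements $\phi_p(\nu(b))$ and $\phi_p(\abs{b'}^2)$ with $p\geq\poly(n)$. It also contains a Karp--Rabin value modulo a prime $q=\Omega(p^2/\delta)$, because you fingerprint $\dbF_{p^2}$-images. Its length is therefore $\Theta(\log n+\log U+\log m+\log\frac1\delta)$, not the claimed $O(\log U+\log m+\log\frac1\delta)$. This is built into normalising by the centroid: $z'/b'$ is only $U^{O(n)}$-rational, so your fingerprint domain, and hence $q$, must grow with $n$. The paper avoids this because its references are input points. Every fingerprinted quantity is then centroid-free and $2^{11}U^{16}$-rational, so $q=\poly(U,m,1/\delta)$ suffices. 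The cheapest repair in your framework is to compress the final tuple with a shared pairwise-independent hash into a range of size $2m^2/\delta$. Only $\binom m2$ pairs must be separated, so this costs $\delta/(2m^2)$ per pair and gives length $O(\log m+\log\frac1\delta)$.

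\textbf{Part (b), which you yourself flag.} In $\CongSign$ the anchor is an ordered pair of input points. Your recovery $\rho=(\varepsilon^*_1/\varepsilon^*_2)\,b_2'/b_1'$, however, needs $b_i'=b_i-c_{S_i}$ and the recorded $\varepsilon_i^*$. The centroid is $nU^n$-rational, so no stored datum ``fixes'' it. Moreover, $b_1\mapsto b_2$ alone need not be consistent with any congruence, since $b_2'=\varepsilon_0\rho b_1'$. There are two ways to close this.
\begin{enumerate}
\item Output $(b,\varepsilon^*,\widehat c_S)$ as anchor data. Then lift $\phi_p(\rho)$ and $\phi_p(t)=\widehat c_{S_2}-\phi_p(\rho)\widehat c_{S_1}$ by \cref{lem:rational_reconstruction}. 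This is valid because every valid pair is $2^{10}U^{16}$- and $2^{22}U^{35}$-rational by \cref{obs:rational_rotation}.
\item To meet the definition literally, also record in pass $4$, for each $\varepsilon$, the input points attaining the two smallest distinct values of $\phi_p(\varepsilon z'/b')$. Once $\varepsilon^*$ is fixed, the two congruent sets have the same normalised multiset (absent fingerprint collisions). So these points $a_i,a_i^\dagger$ are matched by one valid congruence, giving $\rho=(a_2^\dagger-a_2)/(a_1^\dagger-a_1)$ and $t=a_2-\rho a_1$.
\end{enumerate}
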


With a congruence signature algorithm $\tilde{H}$ in place, we construct a congruence hash function from the individual signatures: for query sets $A_1,\ldots,A_m\subseteq \boundedQ{U}^2$, define the congruence hash function $H:[m]\to\set{0,1}^s$ by $H(i) = \tilde{H}(A_i)$ for each $i\in [m]$.
By a union bound on the $\binom{m}{2}$ pairs of query sets, \cref{lem:geom-signature} implies that the congruence hash function is valid with probability $1-\delta$. 
Since the congruence signatures of the individual query sets can be computed in parallel, we obtain a $4$-pass $\CongHash^m_{n,\boundedQ{Q},2}$ streaming algorithm whose space usage is $m$ times that of a single instance of $\CongSign_{n,\boundedQ{Q},2}$, yielding the desired algorithm as claimed in \cref{thm:geomhash-database}.

We focus on the streaming complexity of the congruence signature problem for the remainder of this section.
Our $\CongSign_{n,\boundedQ{U},2}^m$ algorithm follows the same principle as the product-anchor algorithm (\cref{subsec:CongIden-PA}) or complex-moment algorithm (\cref{subsec:CM-algo}) to transform each query set into a canonical form.
Since congruence queries are answered offline, each set must be assigned its canonical frame without knowing the other query sets. We utilize ideas from the product-anchor method for choosing anchors, and the complex-moment method for choosing canonical radii, which are space-efficient statistics that can be determined for each set separately. 
Accordingly, every set signature generated by our algorithm depends only on information from the set itself, together with a few random primes for hashing purposes.
In this way, we circumvent the difficulties in adapting standard RAM-model algorithmic techniques such as computing the lexicographically smallest rotation position. The approach of finite-field embedding of streaming-friendly statistics allows us to obtain a succinct representation alternatively.

Unlike the previous two methods, the congruence signature algorithm does not directly extend to the turnstile model since this algorithm relies on the exact recovery of the minimum element in the stream. As we will show in \cref{app:min_element}, this task requires polynomial space up to logarithmic passes when deletion is present. It remains an interesting open question of whether the backbone technique of modular embedding can be combined with other methods to yield an efficient turnstile congruence hashing algorithm.

\subsection{Congruence Signature Algorithm}\label{subsec:GeomSign}
The congruence signature algorithm is as follows:
\begin{enumerate}
    \setcounter{enumi}{-1}
    \item \emph{Initialization: Random Selection of Hashing Function}
    
    Let \(\Delta' = 2^{87}U^{128}n^6m^3\delta^{-1}\). Uniformly sample a prime \(p\in[\Delta',20\Delta']\) with \(p\equiv 3 \bmod{4}\) . Adopting a similar analysis, in \cref{lem:bad-primes-geomhash} we show that, with probability at least \(1-\delta/(2m^2)\), all calculations involving \(\phi_p\) on \(\boundedQ{U}[\ii]\) remain well-defined. All subsequent finite-field operations can therefore be treated as faithful.

    \item \emph{1st pass: Centroid Computation}
    
    For the centroid $c_S$ of the set $S$, compute $\widehat{c}_S = \phi_p(c_S) \in \dbF_p[\ii]$.
    
    \item \hypertarget{GeomSign 2nd pass}{\emph{2nd pass: Canonical Radius Computations}}
    
    Following the same approach in the second pass of \Cref{subsec:CongIden-PA}, compute the product anchor \(\widehat{\Pi}_p(S)\) along with the number of non-centroid points \(k\).
    
    Simultaneously, we find a canonical radius. A natural choice is the distance from the centroid to the closest point in $S$, which is not feasible due to potential precision issues. 
    Instead, we take the ``$\dbF_p$-minimum''. Concretely, consider the mod-$p$ distance set
    \[
        \widehat{\Psi}(S) \defeq \set{\phi_p\left( \abs{z-c_S}^2\right):z\in S} \subseteq \dbF_p.
    \]
    Viewing $\dbF_p\setminus\set{0}$ as $\set{1,2,\ldots,p-1}\subseteq \dbZ$, we define $\widehat{r}$ and $\widehat{r}'$ as the minimum value and the second smallest value in $\widehat{\Psi}(S)\setminus\set{0}$ respectively. We remind readers that $\widehat{r}$ and $\widehat{r}'$ are in $\dbF_p$.

    

    \item \hypertarget{GeomSign 3rd pass}{\emph{3rd pass: Canonical $k$-th Radius and Reference Points Collection}}
    
    

        Define the mod-$p$ product anchor distance set by
        \[
            \widehat{\Psi}_k(S) \defeq \setB{\phi_p\left(\abs{(z-c_S)^k - \widehat{\Pi}'_p(S)}^2\right):z\in S} \subseteq \dbF_p.
        \]
        In this pass, for each point \(z\in S\) satisfying
        \[
            \phi_p(\abs{z-c_S}^2)\in\{\widehat{r},\widehat{r}'\},
        \]
        we evaluate its corresponding value in \(\widehat{\Psi}_k(S)\). Among these values, we maintain the two smallest distinct values
        \(\widehat{r}_k<\widehat{r}'_k\), together with a buffer \(A\) containing all points attaining either value.
        The resulting set \(A\) satisfies both
        \begin{equation} \label{eq:anchor-condition}
            \phi_p(\abs{z-c_S}^2)\in\{\widehat{r},\widehat{r}'\} \qquad
            \text{and}
            \qquad \phi_p(\abs{(z-c_S)^k-\widehat{\Pi}'_p(S)}^2)\in\{\widehat{r}_k,\widehat{r}'_k\}.
        \end{equation}
        We refer to the points remaining in \(A\) as \emph{reference points}. In \cref{clm:anchor-size}, we prove that \(A\) is of constant size.


        
    
        
    \item \hypertarget{GeomSign 4th pass} {\emph{4th pass: Signature Generation}}
    
        If \(\lvert A\rvert=1\) and assuming that all computations are faithful, \(S\) consists of \(n\) occurrences of the same point, and therefore can be solved trivially.

    In the other case, there are at least two reference points in $A$.
    For each reference pair \(\{a,b\}, a \neq b\), we map each point \(z \in S\) to a tuple consisting of its distances to \(a\) and \(b\), together with an indicator for the relative position of $z$ from the line joining \(a\) and \(b\). 
    Concretely, for each $z\in S$, let \(\sigma_{a,b}(z) \in \set{-1,0,+1}\) be the sign of \(\det(b-a,\,z-a)\)\footnote{Here for the determinant computation, we treat $a,b,z$ as two-dimensional vectors.}, and
    \[
        D_{a,b}(z) \defeq \set{\abs{z-a}^2, \abs{z-b}^2} \subseteq \boundedQ{2^{11}U^{16}}^2.
    \]
    Here, the precision bound follows from \cref{obs:sum-prod-precision}. Define the size-$n$ multiset
    \[
        \Gamma_{a,b}(S) \defeq \set{(D_{a,b}(z), \sigma_{a,b}(z)):z\in S} \subseteq \left(\boundedQ{2^{11}U^{16}}^2\times \set{-1,0,+1} \right)^n.
    \]

    At the beginning of this pass, uniformly sample a random prime $[2^{47}U^{64}m^2\delta^{-1},2^{48}U^{64}m^2\delta^{-1}]$ for all parallel Karp-Rabin fingerprint runs in this pass. For each reference pair, $\set{a,b}$, maintain the Karp-Rabin fingerprint $F_{a,b}(S)\in \dbF_q$ of $\Gamma_{a,b}(S)$.
    

    In the offline phase, select $\tilde{H}(S)$ to be the lexicographically smallest tuple
    \[
        (\abs{a-b}^2, F_{a,b}(S) ) \in \boundedQ{2^{11}U^{16}}^2\times \dbF_q
    \]
    among all reference pairs $\set{a,b}$.
    We also return this ordered reference pair \((a,b)\) as the anchor set. When two sets share a signature, we use their anchors \((a,b)\) and \((a',b')\) to solve for the rotation \(r\) and translation \(t\).
\end{enumerate}

\subsection{Correctness and Space usage of the \(\CongHash\) Algorithm}
We slightly modify the bad-prime argument of Lemma~\ref{lem:bad-primes} for our \(\CongHash\) algorithm.
\begin{lemma}\label{lem:bad-primes-geomhash}
    Let \(\Delta' = 2^{87}U^{128}n^6m^3\delta^{-1}\). Let $\mathcal{F} = (S_1,\ldots,S_m)$ be an input  to $\CongHash^m_{n,\boundedQ{U},2}$. With probability at least \(1 - \delta/(2m^2)\), for every $S\in \mathcal{F}$, a uniformly random prime \(p\in[\Delta',20\Delta']\) with \(p\equiv 3 \bmod 4\) simultaneously satisfies the following properties:
    \begin{enumerate}[label=\textup{(\alph*')}]
        \item\label{geomhash.prop.a}
        All $z\in \boundedQ{2^{22}U^{35}}[\ii]$ are well-defined under $\phi_p$ and their images under $\phi_p$ are distinct;

        \item\label{geomhash.prop.b}  
        For every point \(z\in S\setminus\set{c_S}\), we have
        \(
        \widehat{c}_S \neq {\phi_p(z)} .
        \)

        \item\label{geomhash.prop.c} 
        Let \(\Psi(S) \defeq \set{\abs{z-c_S}^2 : z\in S}\subseteq \dbQ\). The distinct values in \(\Psi(S)\) are mapped to distinct images under \(\phi_p\).

        \item \label{geomhash.prop.d}
        Let \(k\) be the number of non-centroid point of \(S\), for every \(z\in S\), if \((z - c_S)^k \neq \Pi'(S)\) then
        \(
        \phi_p\left((z - c_S)^k\right) \neq \phi_p\left(\widehat{\Pi}_p(S)\right) .
        \)

        \item\label{geomhash.prop.e} 
        Let \(k\) be the number of non-centroid point of \(S\), and
        \(
        \Psi_k(S) \defeq \setB{\lvert(z -c_S)^k - \widehat{\Pi}_p(S)\rvert^2 : z\in S}\subseteq \dbQ.
        \)
        The distinct values in \(\Psi_k(S)\) are mapped to distinct images under \(\phi_p\).

        \item\label{geomhash.prop.f}
        Let \(k\) be the number of non-centroid point of \(S\), for any two distinct $\rho,\rho' \in \boundedQ{2^{22}U^{35}}[\ii]$ in which $\rho^k\neq (\rho')^k$, then $\phi_p(\rho^k)\not\pequiv \phi_p((\rho')^k)$.
        
    \end{enumerate}
\end{lemma}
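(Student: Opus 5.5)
The plan follows the bad-prime template of \cref{lem:bad-primes_PA,lem:bad-primes}. Property~\ref{geomhash.prop.a} holds deterministically: $\Delta'\geq 4(2^{22}U^{35})^2$, so \cref{cor:rat-recon_U-rational} gives a faithful embedding on $\boundedQ{2^{22}U^{35}}[\ii]$. For each remaining property~\ref{geomhash.prop.b}--\ref{geomhash.prop.f} and each $S\in\mathcal F$, the first step is to rewrite the property as a non-divisibility condition: $p$ must not divide some explicit nonzero integers. These are the lowest-term numerators of rational quantities that are nonzero over $\dbQ[\ii]$ and must stay nonzero mod $p$. For each property we record the number $N$ of such integers and a bound $U_\ast$ on their magnitude. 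The number of primes $\ge\Delta'$ dividing an integer of size $U_\ast$ is at most $\log_{\Delta'}U_\ast$. So the set of forbidden primes has size at most $\sum N\log_{\Delta'}U_\ast$, summed over the properties and over all $m$ sets.

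The individual properties are bounded as follows.
\begin{itemize}
\item \textbf{Property~\ref{geomhash.prop.b}.} The integers are the numerators of the real and imaginary parts of $z-c_S$. This gives $N\le 4n$ and $U_\ast\le 2nU^{n+1}$, as in \cref{lem:bad-primes_PA}.
\item \textbf{Property~\ref{geomhash.prop.c}.} The integers are the numerators of pairwise differences $\abs{z-c_S}^2-\abs{z'-c_S}^2$. Reuse the bound of \cref{clm:bad-primes-count} for property~(c) there: $N\le n(2n-1)$ and $U_\ast\le 2^{11}n^8U^{8n+8}$.
\item \textbf{Property~\ref{geomhash.prop.d}.} Interpret $\widehat\Pi_p(S)$ as $\phi_p(\Pi'(S))$, which is legitimate once property~\ref{geomhash.prop.b} holds. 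The relevant nonzero differences are $(z-c_S)^k-\Pi'(S)$. By \cref{obs:sum-prod-precision}, $(z-c_S)^k$ and $\Pi'(S)$ are both $(2nU^{n+1})^{O(n)}$-rational complex. Hence the numerators have $O(n^2\log(nU))$ bits, and there are $O(n)$ of them.
\item \textbf{Property~\ref{geomhash.prop.e}.} Take pairwise differences of the squared moduli $\abs{(z-c_S)^k-\Pi'(S)}^2$. This gives $O(n^2)$ integers, each of $\poly(n)\log U$ bits.
\item \textbf{Property~\ref{geomhash.prop.f}.} Mimic property~(e) of \cref{clm:bad-primes-count}, but with exponent $k\le n$ in place of $2^j$. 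For rotations $\rho,\rho'$ there are at most $\poly(U)$ candidates; since they are norm one, the denominators are constrained. The numerators of $\rho^k-(\rho')^k$ have $O(n\log U)$ bits. We sum over pairs, but only pairs relevant to one fixed target matter: a given $\rho^k$ versus all others.
\end{itemize}

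\textbf{Step that needs care: property~\ref{geomhash.prop.f}.} This is the delicate step. Summing naively over all pairs $(\rho,\rho')$ gives $\poly(U)^2$ integers, and $\Delta'\propto U^{128}$ must absorb this. I would follow \cref{lem:bad-primes} and bound the count by $2^{80}U^{128}$, noting that $\log_{\Delta'}U_\ast=O(n^2)$ should be countered by the $n^6$ factor in $\Delta'$. If that fails, restrict to $\rho'$ ranging over the at most 4 solutions, via \cref{lem:niven's}, tied to the true rotation, as in \cref{prop:eq-solve}.

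\textbf{Counting and conclusion.} Summing all the bounds and multiplying by $m$ gives at most $K_\ast\approx 2^{86}U^{128}n^6m/\log\Delta'$ forbidden primes. By \cref{prop:PNT-AP}, the interval $[\Delta',20\Delta']$ contains at least $\Delta'/\log\Delta'$ primes with $p\equiv 3\bmod 4$. A uniformly random such prime is therefore bad with probability at most $K_\ast\log\Delta'/\Delta'\le \delta/(2m^2)$, using the $m^3$ factor in $\Delta'$. This proves the claim simultaneously for all $S\in\mathcal F$. The main obstacle is the bookkeeping in properties~\ref{geomhash.prop.d}--\ref{geomhash.prop.f}, where the exponent $k$ inflates precision to $U^{O(n^2)}$; I expect the polynomial slack in $\Delta'$ to absorb it, because only the logarithm of the precision enters the count.
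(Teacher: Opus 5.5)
Your proposal is correct and is essentially the paper's proof. You reduce each of (b')--(f') to non-divisibility of explicit nonzero numerators, reuse the counts from \cref{lem:bad-primes} for (b'), (c'), (f'), and bound (d') and (e') separately. You then multiply by $m$ to get about $2^{86}U^{128}n^6m/\log\Delta'$ forbidden primes, and compare this with the at least $\Delta'/\log\Delta'$ primes $p\equiv 3\bmod 4$ in $[\Delta',20\Delta']$. Your precision bounds for (d') and (f') are somewhat tighter than the paper's. As in the paper, however, your $2^{80}U^{128}$ pair count for (f') only covers rotation-precision elements ($2^{10}U^{16}$-rational), not all of $\boundedQ{2^{22}U^{35}}[\ii]$ as the statement literally says.
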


Properties \ref{geomhash.prop.a}, \ref{geomhash.prop.b} and \ref{geomhash.prop.c} serve for the same reason as in the previous section. Properties~\ref{geomhash.prop.d} and \ref{geomhash.prop.e} are stronger variants tailored to \(\CongHash\): Property~\ref{geomhash.prop.d} guarantees that there exists at least one datum \(z\in S\) with non-zero \(\phi_p((z-c_S)^k)\) except for the degenerate case, and Property~\ref{geomhash.prop.e} guarantees that different product anchor distances of $S$ remain distinct under \(\phi_p\), so that our algorithm collects exactly the points at the intended distance. Property \ref{geomhash.prop.f} is for bounding the number of solutions.

\begin{proof}[Proof of \cref{lem:bad-primes-geomhash}]
Property \ref{geomhash.prop.a} holds by the same reasoning for Property \ref{prop.distinct} in \cref{lem:bad-primes}. The argument follows the proof of \cref{lem:bad-primes}: each property can be reduced to requiring that \(p\) avoids dividing the numerators of certain lowest-term rationals of bounded precision.

Fix an arbitrary \(S\in \mathcal{F}\). We define $S_{(f')}$ as the set of non-zero integers such that if $p$ divides none of the integers in $S_{(f')}$, then Property \ref{geomhash.prop.f} holds. Let \(N_{(f')}\defeq |S_{(f')}|\) and \(U_{(f')} = \max\{|t|:t\in S_{(f')}\}\). Analogous quantities are defined for Properties \ref{geomhash.prop.b}--\ref{geomhash.prop.e}.

Properties \ref{geomhash.prop.b}, \ref{geomhash.prop.c}, and \ref{geomhash.prop.f} are handled identically to Properties \ref{prop.a}, \ref{prop.b}, and \ref{prop.d} in \cref{lem:bad-primes,clm:bad-primes-count}. We will only present the bounds for the quantities relevant to Properties \ref{geomhash.prop.d} and \ref{geomhash.prop.e} in \cref{app:geomhash-bad-primes}:
\begin{claim}\label{clm:geomhash-bad-primes-count}
    \(N_{(d')}\le 2n\),
    \(
        U_{(d')}\le 2^{3n^2}n^{4n^2}U^{4n^3}
    \);
    \(N_{(e')}\le n^2\),
    \(
        U_{(e')}\le 2^{27n^2}n^{32n^2}U^{32n^3}.
    \)
\end{claim}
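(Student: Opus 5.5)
To prove \cref{clm:geomhash-bad-primes-count}, I would make explicit the integers whose non-divisibility by $p$ gives Properties \ref{geomhash.prop.d} and \ref{geomhash.prop.e}. Then I would bound their number and magnitude using \cref{obs:sum-prod-precision} repeatedly, in the same way as the bounds of \cref{clm:bad-primes-count}. The key inputs are the precision of the recentred points $z-c_S$ and of the product anchor $\Pi'(S)$. Since $c_S$ is $nU^n$-rational complex, each $z-c_S$ is $2nU^{n+1}$-rational complex. The real and imaginary parts of a $k$-th power of such a number ($k\le n$) expand by the binomial theorem into at most $2^k$ terms, each a product of $k$ such rationals. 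Likewise $\Pi'(S)$ is a product of $k\le n$ such complex numbers. So I would first record one common precision bound for both $(z-c_S)^k$ and $\Pi'(S)$, of the shape $2^{O(n)}\cdot(2nU^{n+1})^{O(n)}$ after clearing denominators. This is at most $2^{n^2}n^{n^2}U^{2n^2+\cdots}$, comfortably within the claimed $U_{(d')}$.

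For Property \ref{geomhash.prop.d}: for each $z\in S$ with $(z-c_S)^k\neq\Pi'(S)$, the difference is a nonzero complex rational, so at least one of its real or imaginary parts is nonzero. If $p$ divides neither numerator in lowest terms, the images differ. (Faithfulness of the denominators, meaning $p$ not dividing them, follows from $p>\Delta'$ exceeding the prime factors of the relevant bounded-size denominators, or can be added to the list.) This gives at most $2$ integers per point, so $N_{(d')}\le 2n$. Each is the numerator of a difference of two rationals of the precision above. By \cref{obs:sum-prod-precision} the numerator is at most twice the product of the two precisions, which gives $U_{(d')}\le 2^{3n^2}n^{4n^2}U^{4n^3}$ once I over-estimate the exponent of $n$ generously. (Using $n^{n+1}\cdot$ powers of $k\le n$ factors produces the $U^{4n^3}$ term.)

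For Property \ref{geomhash.prop.e}: the distinct values of $\Psi_k(S)$ number at most $n$, so there are at most $\binom{n}{2}\le n^2$ pairs. For each pair $z,z'$ with distinct values, I need $p\nmid$ the numerator of $\abs{(z-c_S)^k-\Pi'}^2-\abs{(z'-c_S)^k-\Pi'}^2$, giving $N_{(e')}\le n^2$. Squaring the modulus of a number of precision $V$ gives precision about $2V^4$ (a sum of two squares). Taking a difference of two such values squares the precision again, up to a factor of $2$. So $U_{(e')}\lesssim 2\cdot(2V_{d'}^4)^2\approx V_{d'}^8$. This matches the stated $2^{27n^2}n^{32n^2}U^{32n^3}$ as roughly the eighth power of $U_{(d')}$.

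The main obstacle is purely bookkeeping: tracking the exponents through the binomial expansion of $(z-c_S)^k$ and the $k$-fold product anchor, so that the final constants land under the stated bounds. I would do this loosely, bounding $k\le n$ and absorbing all $2^{O(n)}$ and $n^{O(n)}$ factors into the generous exponents $3n^2$, $4n^2$ (respectively $27n^2$, $32n^2$). I would not aim for tight intermediate constants. Only $\log_{\Delta'}$ of these bounds matters later, where they contribute $O(n^3\log U)/\log\Delta'$ forbidden primes, absorbed by the choice of $\Delta'$.
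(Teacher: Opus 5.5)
Your proposal is correct and follows the paper's own argument essentially step for step. For Property~\ref{geomhash.prop.d} the paper likewise takes the at most $2n$ nonzero numerators of the real and imaginary parts of $(z-c_S)^k-\Pi'(S)$; for Property~\ref{geomhash.prop.e} it takes the at most $\binom{n}{2}\le n^2$ pairwise differences of the squared moduli; and it bounds all sizes through \cref{obs:sum-prod-precision}, concluding with $U_{(e')}\le 8U_{(d')}^8$. The only difference is in the bookkeeping: your common-denominator estimate $2^{O(n)}n^{O(n)}U^{O(n^2)}$ for both $(z-c_S)^k$ and $\Pi'(S)$ is tighter and cleaner than the paper's chain through $2(U')^{n+1}$, and it fits comfortably inside the stated bounds.
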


For a fixed set $S$, let \(K_S\) be the number of distinct primes that divide at least one integer appearing in the non-divisibility conditions for Properties \ref{geomhash.prop.b}--\ref{geomhash.prop.f} for the set $S$. As before,
\begin{align*}
        K_S &\leq N_{(b')}\log_{\Delta'} U_{(b')} + N_{(c')}\log_{\Delta'} U_{(c')} + N_{(d')}\log_{\Delta'} U_{(d')} + N_{(e')}\log_{\Delta'} U_{(e')} + N_{(f')}\log_{\Delta'} U_{(f')}.
    \end{align*}
Using the same bounds as in \cref{clm:bad-primes-count} for Properties \ref{geomhash.prop.b}, \ref{geomhash.prop.c} and \ref{geomhash.prop.f}, their total contribution is at most \(2^{85}U^{128}n^3/\log \Delta'\). For the two other terms, \cref{clm:geomhash-bad-primes-count} yields the (very) crude upper bound 
\[
N_{(d')}\log_{\Delta'} U_{(d')} + N_{(e')}\log_{\Delta'} U_{(e')} \le \frac{2^7 n^6 U^2}{\log \Delta'}.
\]

Therefore $K'\defeq \sum_{S\in \mathcal{F}} K_S\leq \sum_{S\in \mathcal{F}} 2^{86} n^6 U^{128} / \log \Delta' = 2^{86} n^6 U^{128} m / \log \Delta'$.

Similar to the proof of \cref{lem:bad-primes}, by the same prime-counting argument, the interval \([\Delta',20\Delta']\) contains at least \(2K'm^2\delta^{-1}\) primes where \(p\equiv 3 \bmod 4\). Therefore, for a uniformly random such prime \(p\), all six properties hold with probability at least \(1-K'/(2K'm^2\delta^{-1}) = 1-\delta/(2m^2)\).
\end{proof}

With the bad-prime argument in place, we are in a position to prove the correctness of the algorithm. The part before product anchor computation is identical to that in the previous section, so we omit it here and focus on the additional hashing steps. In particular, we prove two properties: 
\begin{itemize}
    \item[(i)] \emph{Completeness}: For any two congruent point sets \(A,B\in \mathcal{F}\), the \(\CongSign\) algorithm (see \cref{subsec:GeomSign}) outputs the same hash value on \(A\) and \(B\), and we can identify the rotation \(r\) and translation \(t\) that map \(A\) onto \(B\).
    \item[(ii)] \emph{Soundness}: For any two non-congruent point sets \(A,B\in \mathcal{F}\), the \(\CongSign\) algorithm outputs colliding hash values on \(A\) and \(B\) with small probability.
\end{itemize}

\begin{lemma}[Completeness] \label{lem:GeomHash.Complete}
Suppose $p$ is a good prime (see \cref{lem:bad-primes-geomhash}). If \(A\) and \(B\) are congruent, then the \(\CongSign\) algorithm always outputs the same signature value on \(A\) and \(B\).
\end{lemma}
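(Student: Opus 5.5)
The plan is to show that every statistic computed by the $\CongSign$ algorithm is either invariant or equivariant under the congruence map. Write $g(z)\defeq\rho z+t$ with $\abs{\rho}=1$ and $B=g(A)$. Both runs share the same prime $p$ and the same Karp-Rabin randomness (the prime $q$ and the base). So it suffices to prove that the rational quantities fed into each stage for $B$ are related to those for $A$ exactly through $g$. Faithfulness (\cref{lem:bad-primes-geomhash}) then makes the finite-field images equal wherever the rational quantities are equal. In particular, equal rational inputs give identical $\dbF_p$-minima, so the choices made via the ordering on $\{1,\dots,p-1\}$ coincide for $A$ and $B$.

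First pass and second pass. Since $g$ is affine, $c_B=g(c_A)$ and $z-c_B=\rho(z'-c_A)$ whenever $z=g(z')$. Hence the multisets $\Psi(A)$ and $\Psi(B)$ of squared distances from the centroid coincide. By Property~\ref{geomhash.prop.c}, their images are equal and distinct values stay distinct, so $\widehat{r},\widehat{r}'$ agree for $A$ and $B$. The number $k$ of non-centroid points also agrees, using Property~\ref{geomhash.prop.b}. Finally, $\Pi'(B)=\rho^k\Pi'(A)$, and faithfully $\widehat{\Pi}_p(B)=\phi_p(\rho)^k\widehat{\Pi}_p(A)$.

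Third pass. For $z=g(z')$ we have $(z-c_B)^k-\Pi'(B)=\rho^k\bigl((z'-c_A)^k-\Pi'(A)\bigr)$. Because $\abs{\rho^k}=1$, the values of $\Psi_k$ are preserved. Properties~\ref{geomhash.prop.d} and \ref{geomhash.prop.e} ensure the finite-field selection matches the rational one, so $\widehat{r}_k,\widehat{r}_k'$ agree and the reference buffers satisfy $A_B=g(A_A)$ as multisets. This step is where I expect the main obstacle. The buffer condition \eqref{eq:anchor-condition} is phrased with the hashed anchor $\widehat{\Pi}'_p$ rather than the rational $\Pi'$. I will need to argue carefully that the hashed quantity $\phi_p(\abs{(z-c)^k-\widehat{\Pi}}^2)$ is the faithful image of the rational $\abs{(z-c)^k-\Pi'}^2$. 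This uses the homomorphism property~\eqref{eq:homomorphism}, applied to the conjugate of $\Pi'$ computed componentwise, together with Property~\ref{geomhash.prop.e}.

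Fourth pass. The map $(a,b)\mapsto(g(a),g(b))$ is a bijection on ordered reference pairs. The distances satisfy $\abs{g(z)-g(a)}=\abs{z-a}$, and $\det(g(b)-g(a),g(z)-g(a))=\det(\rho(b-a),\rho(z-a))=\det(b-a,z-a)$ because $\rho$ is a proper rotation with determinant $1$, so $\sigma$ is preserved. Hence $\Gamma_{g(a),g(b)}(B)=\Gamma_{a,b}(A)$ as multisets, their Karp-Rabin fingerprints coincide, and $\abs{g(a)-g(b)}^2=\abs{a-b}^2$. The collections of tuples over reference pairs are therefore identical, so their lexicographic minima, and hence the signatures, are equal. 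Since $\sigma$ depends on the order, I will make sure each pair is considered in both orders, or with a fixed canonical orientation, so that the bijection respects it. If the minimum is attained by several pairs, any attaining pair $(a,b)$ for $A$ corresponds to $(g(a),g(b))$ attaining it for $B$. This allows the stated recovery of $\rho,t$ from the anchors via \cref{obs:rational_rotation}, although that recovery is not needed for equality of the signatures.
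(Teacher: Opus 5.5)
Your proposal is correct and follows the same route as the paper's proof: every stage of $\CongSign$ is equivariant under $g(z)=\rho z+t$, so the centroid radii $\widehat{r},\widehat{r}'$, the product-anchor radii $\widehat{r}_k,\widehat{r}'_k$, the reference buffers and the per-pair fingerprints all correspond, and therefore the lexicographic minima coincide. You are more careful than the paper in making the shared $p$, $q$ and Karp--Rabin base explicit, in using $\det\rho=1$ for $\sigma$, and in handling the order-dependence of $\sigma_{a,b}$; the step you flag as the main obstacle is also milder than you expect, because completeness only needs equal finite-field images, which follow from $\phi_p$ being a well-defined homomorphism (injectivity is not needed).
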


\begin{proof}
Completeness is almost immediate from the fact that every choice made in the \(\CongSign\) algorithm is canonical. In particular, if \(B=\rho A+t\) for a rotation \(\rho\) and translation \(t\), then \(\Pi'(B)=\rho^k\Pi'(A)\), and consequently the quantities \(\widehat{r},\widehat{r}',\widehat{r}_k,\widehat{r}'_k\) coincide for the two sets.

Since congruence is distance-preserving, the conditions in the \hyperlink{GeomSign 3rd pass}{third pass} of the $\CongSign$ algorithm are fulfilled by a point $z\in A$ iff the corresponding conditions are satisfied for $\rho z+t\in B$. This shows that the reference pair computed for $A$ and $B$ are indeed congruent under $\rho$ and $t$.

Finally, the signature computation in the \hyperlink{GeomSign 4th pass}{fourth pass} of the $\CongSign$ algorithm depends only on the distances and the relative positions related to the reference pair. Therefore, if $(w_1,w_2)$ is a reference pair in $A$, then its Karp-Rabin hash $F_{w_1,w_2}(A)$ is identical to that of the corresponding pair in $B$:
\[
    F_{w_1,w_2}(A) = F_{\rho w_1+t,\rho w_2+t}(B).
\]
This concludes that the final hash values $\tilde{H}(A)$ and $\tilde{H}(B)$ are identical.

For the isometry retrieval, the ordered reference points correspondence \(a\mapsto a'\), \(b\mapsto b'\) determines a unique rigid motion \((r,t)\), recovered offline by a constant-size solver. Since the matching fingerprints already certify that the remaining points align, \((r,t)\) carries \(A\) onto \(B\).
\end{proof}

\begin{lemma}[Soundness] \label{lem:GeomHash.Soundness}
Suppose $p$ is a good prime (see \cref{lem:bad-primes-geomhash}). If \(A\) and \(B\) are not congruent, then the \(\CongSign\) algorithm outputs the same hash value on \(A\) and \(B\) with probability at most $\delta/(2m^2)$. 
\end{lemma}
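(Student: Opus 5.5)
The plan is to prove the contrapositive. Assume $\tilde H(A)=\tilde H(B)$ for non-congruent $A,B$. Then either the Karp--Rabin fingerprint of the fourth pass collided, or equal multisets $\Gamma$ already force congruence. We show the latter deterministically and bound the former by the fingerprint analysis of \cref{app:KR-hash}.

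\textbf{Step 1 (structural).} Equal signatures give reference pairs $(a,b)$ for $A$ and $(a',b')$ for $B$ with $\abs{a-b}^2=\abs{a'-b'}^2$ and $F_{a,b}(A)=F_{a',b'}(B)$. Suppose the multisets $\Gamma_{a,b}(A)$ and $\Gamma_{a',b'}(B)$ are actually equal. Let $(\rho,t)$ be the unique orientation-preserving rigid motion with $\rho a+t=a'$ and $\rho b+t=b'$; it exists because $a\neq b$ and the two segments have equal length. A point $z$ is uniquely determined by the pair of squared distances $(\abs{z-a}^2,\abs{z-b}^2)$ together with the side $\sigma_{a,b}(z)$ of the line $ab$. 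Indeed, the two distance circles meet in at most two points, and these are mirror images across the line $ab$. Since $\rho$ preserves both distances and orientation, $\Gamma_{a,b}(z)=\Gamma_{a',b'}(\rho z+t)$. Hence equality of the $\Gamma$ multisets gives $B=\rho A+t$ as multisets, contradicting non-congruence. There is one point of care: $D_{a,b}$ must be stored as an \emph{ordered} pair $(\abs{z-a}^2,\abs{z-b}^2)$, consistent with the ordered anchor, so that the roles of $a$ and $b$ are not swapped.

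\textbf{Step 2 (probabilistic).} In the non-congruent case, Step 1 shows that for every pair of reference pairs $\{a,b\}\subseteq$ buffer of $A$ and $\{a',b'\}\subseteq$ buffer of $B$ with equal squared lengths, the two $\Gamma$ multisets differ. Each $\Gamma$ lives in a domain of size $\poly(U)$, namely $\boundedQ{2^{11}U^{16}}^2\times\set{-1,0,1}$ encoded canonically. Multiplicities are at most $n$. So by the Schwartz--Zippel argument of \cref{app:KR-hash}, a single comparison collides with probability $O(U^{64}/q)$ over the random base. With $q\ge 2^{47}U^{64}m^2\delta^{-1}$, this is at most $\delta/(2m^2)$ divided by the number of compared pairs.

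By \cref{clm:anchor-size}, the buffers have constant size, so only $O(1)$ pairs of reference pairs arise. A union bound over them finishes the proof, after adjusting constants in $q$. For this we need the fingerprints of $A$ and $B$ to use a common base and prime, which is true since the base and prime are sampled once for all parallel runs.

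\textbf{Main obstacle.} The delicate point is that the signature takes the lexicographic minimum over pairs. The collision event is therefore ``some pair on the $A$ side fingerprints equal to some pair on the $B$ side'', and this is exactly why we union bound over all pairs rather than only the selected one. We must also check that the domain size entering Schwartz--Zippel, roughly $(2^{11}U^{16})^4\cdot 3\approx 2^{46}U^{64}$, matches the chosen prime range. The constant bound from \cref{clm:anchor-size} is taken as given.
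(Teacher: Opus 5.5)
Your proposal is correct and follows the paper's argument. A point is determined by its ordered squared distances to the two reference points together with its side of the line through them, so for non-congruent $A,B$ the multisets $\Gamma$ differ for every equal-length choice of reference pairs, and equal signatures can only arise from a Karp--Rabin collision in the fourth pass. Your two additions are refinements that the paper's proof leaves implicit: requiring $D_{a,b}$ to be an ordered pair (with an unordered set, reflection across the perpendicular bisector of $ab$ preserves both $D$ and $\sigma$), and union-bounding over the $O(1)$ pairs of reference pairs because the lexicographic minimum is selected after the fingerprints are seen, the latter costing a constant-factor enlargement of the range for $q$.
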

\begin{proof}
Let \(A\) and \(B\) be non-congruent point sets.
Notice that in the final hash values $\tilde{H}(A)$ and $\tilde{H}(B)$, the first component corresponds to the reference pair distance. Therefore, we can focus on the reference pairs $w_1^A,w_2^A\in A$ and $w_1^B,w_2^B\in B$ such that $\abs{w_1^A-w_2^A} = \abs{w_1^B - w_2^B}$.

For such a reference pair $(w_1^A,w_2^A)$, each point $z\in A$ is determined by its distances to the two references together with the side information relative to the line joining $w_1^A$ and $w_2^A$, and the same observation holds for $B$ and thus $\Gamma_{w_1^A,w_2^A}(A) \neq \Gamma_{w_1^B,w_2^B}(B)$. Since $p$ is a good prime, for $\tilde{H}(A)=\tilde{H}(B)$ to occur, the only possibility is that the two multisets are mapped to the same element $\dbF_q$ in the Karp-Rabin hash in the \hyperlink{GeomSign 4th pass}{fourth pass}. Notice that $\Gamma_{a,b}(S)$ is an $n$-element multiset from a finite domain of size at most $2^{46}U^{64}$. By choosing $q$ as a uniform random prime from the range $[2^{47}U^{64}m^2\delta^{-1}, 2^{48}U^{64}m^2\delta^{-1}]$, the collision probability for the Karp-Rabin hash on $\Gamma_{a,b}(S)$ is bounded by $\delta/(2m^2)$.
\end{proof}

We proceed to bound the space used by \(\CongSign\). Compared to the $\CongIden$ algorithm, the only additional space usage is from the reference pairs step, so it suffices to show that the reference point set is of constant size.

\begin{claim}\label{clm:anchor-size}
Suppose $p$ is a good prime (see \cref{lem:bad-primes-geomhash}). For every \(S\in\mathcal{F}\), the reference points set \(A\) maintained by \(\CongSign\) has size at most \(32\).
\end{claim}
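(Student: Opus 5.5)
The plan is to count the distinct points $z\in S$ that satisfy both conditions in \cref{eq:anchor-condition}, and to bound this count by a product of four independent factors. Multiplicities are stored as counters alongside each distinct point, so only distinct points matter. Throughout we assume $p$ is good in the sense of \cref{lem:bad-primes-geomhash}, which lets us lift every finite-field condition back to an exact condition over $\dbQ[\ii]$. Write $w_z\defeq z-c_S$ and $\Pi'\defeq\Pi'(S)$.

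\emph{Step 1: pass to exact values.} By Property~\ref{geomhash.prop.c}, the two field values $\widehat{r},\widehat{r}'$ are images of exactly two rational squared radii $R_1^2,R_2^2\in\Psi(S)$. So the first condition in \cref{eq:anchor-condition} says exactly that $\abs{w_z}\in\set{R_1,R_2}$. Likewise, by Property~\ref{geomhash.prop.e} together with faithfulness of $\widehat{\Pi}_p(S)$ as the image of $\Pi'$ (Property~\ref{geomhash.prop.a}), the values $\widehat{r}_k,\widehat{r}'_k$ correspond to exactly two rational values $D_1,D_2\in\Psi_k(S)$. So the second condition says $\abs{w_z^k-\Pi'}^2\in\set{D_1,D_2}$. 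This gives $2\times 2=4$ pairs $(R_i,D_j)$, and it suffices to show that each pair admits at most $8$ distinct points $z$.

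\emph{Step 2: the $k$-th powers take at most two values.} Fix a pair $(R,D)$. The value $u=w_z^k$ lies on the circle $\calB_{R^k}(0)$ and also on the circle $\calB_{\sqrt{D}}(\Pi')$. The product anchor $\Pi'$ is a product of nonzero complex numbers, hence nonzero. The two circles therefore have distinct centres, so they are distinct circles and meet in at most $2$ points. (If $D=0$ the second circle degenerates to the single point $\Pi'$, which only helps.) Hence $u$ takes at most $2$ values.

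\emph{Step 3: few $k$-th roots over $\dbQ[\ii]$.} Fix such a value $u$ and suppose $w_z^k=w_{z'}^k=u$. Since $R>0$, both $w_z$ and $w_{z'}$ are nonzero. Their ratio $w_z/w_{z'}\in\dbQ[\ii]$ is then a $k$-th root of unity, so by Niven's theorem (\cref{lem:niven's}) it lies in $\set{\pm1,\pm\ii}$. Thus at most $4$ distinct values of $w_z$, and hence of $z$, map to each $u$. Combining the three counts gives $\abs{A}\le 4\cdot 2\cdot 4=32$.

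The main obstacle is Step 1, namely checking that the streaming selection in $\dbF_p$, which picks the two smallest nonzero field values under the integer ordering, really corresponds to selecting exactly two exact radii and two exact $\Psi_k$-values. Here Properties~\ref{geomhash.prop.c} and \ref{geomhash.prop.e} are the key inputs, since they guarantee that distinct rationals give distinct field images. A secondary point is the buffer during the pass: it only ever retains points attaining the current two smallest $\Psi_k$-values, so once the final values are fixed the same bound of $32$ distinct points applies. Points evicted earlier do not count toward the final buffer.
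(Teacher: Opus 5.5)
Your proposal is correct and follows the paper's proof. You lift both selection conditions to exact values via Properties~\ref{geomhash.prop.c} and~\ref{geomhash.prop.e}, use $\Pi'(S)\neq 0$ to place $u=(z-c_S)^k$ on two circles with distinct centres (so at most two values), and use Niven's theorem for at most four $k$-th roots in $\dbQ[\ii]$, giving $4\cdot 2\cdot 4=32$. Your direct ratio argument makes the paper's detour through Property~\ref{geomhash.prop.f} unnecessary, and you also handle the degenerate case $D=0$.

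One small citation fix: the identity $\widehat{\Pi}_p(S)=\phi_p(\Pi'(S))$ does not come from Property~\ref{geomhash.prop.a}, since $\Pi'(S)$ lies far outside $\boundedQ{2^{22}U^{35}}[\ii]$. It comes from the homomorphism property of $\phi_p$ together with Property~\ref{geomhash.prop.b}, which ensures the streamed product runs over exactly the true non-centroid points.
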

\begin{proof}
Recall that the reference points set consists of points $a\in S$ that satisfy \cref{eq:anchor-condition}. Since $p$ satisfies Properties \ref{geomhash.prop.c} and \ref{geomhash.prop.e} of \cref{lem:bad-primes-geomhash}, this transforms to conditions in $\dbQ[\ii]$ injectively.
Fix a choice \(\rho\in\{r,r'\}\) and \(\rho_k\in\{r_k,r'_k\}\), and consider reference points satisfying
\[
|z-c_S|^2=\rho
\qquad\text{and}\qquad
\lvert (z-c_S)^k-\Pi'(S)\rvert^2=\rho_k.
\]
Let $z\in S$, and set \(u\defeq (z-c_S)^k\). Then \(|u|^2=\rho^k\) and \(|u-\Pi'(S)|^2=\rho_k\). Viewing these conditions in $\dbR^2$ for a moment, since \(\Pi'(S)\) is guaranteed to be nonzero, $u$ is defined by the intersection of two circles, hence there are at most two possible values of $u$ for fixed $\rho,\rho_k,\Pi'(S)$.

By property \ref{geomhash.prop.f}, \(\phi\) is injective on the relevant range, so for each of the two possible values of \(u\) the solutions to \(\phi(z^k)=\phi(u)\) are exactly the images of the solutions to \(z^k=u\); in particular the two solution sets have the same cardinality. By \cref{lem:niven's}, the latter equation has at most four solutions in \(\dbQ[\ii]\), hence so does the former.

Therefore, for a fixed \((\rho,\rho_k)\), there are at most \(2\cdot 4=8\) anchors, thus the reference points set has size at most \(2^2\cdot 8=32\).
\end{proof}

\begin{proof}[Proof of \cref{lem:geom-signature}]
By \cref{lem:bad-primes-geomhash,lem:GeomHash.Complete,lem:GeomHash.Soundness}, the $\CongSign$ algorithm outputs valid signatures for two sets $A,B\in \mathcal{F}$ with probability at least $1-\delta/m^2$.

For the space usage, the $\CongSign$ algorithm stores only constant elements in $\dbF_p[\ii]$ in the first two passes, thus the space usage is upper-bounded by $O(\log p)$. For the reference set computation in the third pass, as the reference set is of constant size as shown in \cref{clm:anchor-size}, $O(\log p)$ space is sufficient for this stage. For the signature generation in the forth pass, the Karp-Rabin hash $F_{a,b}(S)$ can be computed on-the-fly in $\dbF_q$ for each reference pair $(a,b)$, and accounting for the final signature, the space usage for this stage is $O\left(\log\abs{\boundedQ{2^{11}U^{16}}^2} + \log q\right) = O(\log q)$. In summary, the overall space usage of the algorithm is
\[
    \max\set{O(\log p), O(\log q)} = O\left(\log n+\log U +\log m + \log \frac{1}{\delta}\right).
\]
Finally, the length of the signature is $O(\log q) = O(\log U+\log m + \log \frac{1}{\delta})$.
\end{proof}

\section{Rational Congruence Identification in Three Dimensions}\label{sec:3D}
An immediate obstacle to extending our two-dimensional algorithms to $\dbR^3$ is the absence of a direct analogue of the correspondence between $\dbR^2$ and $\dbC$.
A dimension-reduction approach might suggest using a \emph{birthday-paradox} argument to sample a \emph{matched pair} $(a,b)\in A\times B$ with $\rho(a)+t=b$ and project the instance onto a plane, thus reducing to the two-dimensional case and achieving $O(\sqrt{n} \log n(\log n+\log U))$ space usage. However, the non-vanishing moment guarantee fails for a plane in $\dbQ^3$, as the three-dimensional integer lattice exhibits richer geometric symmetries. For example, consider the regular triangle:
\[
    S = \{(-3,1,2), (1,2,-3), (2,-3,1)\}.
\]
Under the canonical orthogonal projection of the plane $x + y + z = 0$ that the triangle lies on $\dbC$, the 3-fold rotational symmetry of the regular triangle ensures that the $m$-th moment $\sum_{z\in S} z^m$ vanishes for any $m$ not divisible by 3. Therefore, every $2^j$-th moment is identically zero, completely dismantling the non-vanishing moment guarantee under this projection. This also gives us an obstacle to directly extend the product anchor methods to the three-dimensional case, as the projected coordinates may violate the rational setting.

Employing an enhanced birthday-paradox method, we design a 6-pass randomized streaming algorithm for $\CongIden_{n,\boundedQ{U},3}$ with sublinear space complexity.
\begin{theorem}
    $\CongIden_{n, \boundedQ{U} , 3}$ can be solved by a 6-pass randomized streaming algorithm using $O(n^{5/6}\log^4 U\log n)$ space with probability at least $2/3$.
\end{theorem}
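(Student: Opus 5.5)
We plan to reduce the three-dimensional instance to a constant number of two-dimensional-style alignment problems by locating a matched non-collinear anchor triple via birthday-paradox sampling, and then to verify each candidate transformation with Karp--Rabin fingerprints as in the third pass of \cref{subsec:CongIden-PA}.

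\textbf{Passes~1--2: centroids and sampling.} In the first pass we compute $\phi_p(c_A)$ and $\phi_p(c_B)$ under a presampled prime $p$, using faithful embedding exactly as in \cref{lem:bad-primes_PA}. Every rotation about the centroids preserves the distance to the centroid.

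In the second pass we sample $\Theta(\sqrt n)$ random points from each of $A$ and $B$, tagging each with $\phi_p(\abs{z-c_S}^2)$. A birthday-paradox argument shows that, with constant probability, some sampled pair $(p_A,p_B)$ is matched by the congruence. We cannot tell which pair is matched, so every candidate pair with equal centroid distance is retained; there are $O(n^{1/3})$ of them up to an expectation argument. The exact sample sizes are to be tuned to make the total $O(n^{5/6})$.

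\textbf{Passes~3--5: symmetry breaking by slicing.} For a candidate $p_A$, we slice $A$ by the parallel planes perpendicular to the line $c_Ap_A$. The slice index of a point $z$ is $\langle z-c_A,\,p_A-c_A\rangle$, which is rational and hashable via $\phi_p$. The key combinatorial lemma, to be proved via the argument in \cref{app:3d-algo}, states that for all but a small fraction of $p_A\in A$ there is a plane through a point of $A$, perpendicular to $c_Ap_A$, containing $O(n^{2/3})$ points and not lying on the axis. To prove it we would fix the distance spectrum, count incidences between points and planes over all choices of axis, and average, showing that axes for which every plane is heavy are rare.

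In one pass we identify such a sparse slice by counting through hashed slice indices. For each of the two slices we track only a bounded number of light candidates, and we may use an extra pass if needed.

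In the next pass we sample $\Theta(n^{1/3})$ points from the sparse slice of $A$ and from the corresponding slice of $B$. A second birthday-paradox step then yields a matched pair $(q_A,q_B)$ with $q_A$ off the axis $c_Ap_A$. The triples $(c_A,p_A,q_A)$ and $(c_B,p_B,q_B)$ determine at most two candidate rigid motions, one for each orientation of the perpendicular vector $(p_A-c_A)\times(q_A-c_A)$. Since the coordinates are rational, Gram--Schmidt is avoided by solving the $3\times3$ linear system whose rows are the two difference vectors and the cross product. This keeps the motion rational, with a precision bound analogous to \cref{obs:rational_rotation}, so it can be hashed.

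\textbf{Pass~6: verification.} For each surviving candidate $(\rho,t)$ we run a Karp--Rabin equality test on $\rho A+t$ versus $B$. There are roughly $n^{5/6}$ candidates in parallel, each using $O(\log U\cdot\polylog)$ bits. Per-candidate precision, following the blowup pattern of \cref{obs:sum-prod-precision}, yields the $\log^4 U\log n$ factor. Union bounds over the bad primes and the fingerprints give success probability $2/3$.

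The main obstacle is the sparse-plane lemma: showing that for most $p_A$ some perpendicular slice has only $O(n^{2/3})$ points, and balancing the two birthday samples so that the product of candidate counts stays at $n^{5/6}$. We would first try a double-counting argument over pairs of points sharing both the centroid distance and the projection onto the axis.
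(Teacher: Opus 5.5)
Your two-stage birthday architecture matches the paper's: $\Theta(\sqrt n)$ samples for the second anchor, a sparse slice of $O(n^{2/3})$ points perpendicular to the axis, $\Theta(n^{1/3})$ samples from that slice, hence $\Theta(n^{5/6})$ anchor triples per set. The gap is in pass~6, where your verification does not fit the space bound.

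You fingerprint $\rho A+t$ against $B$ once per candidate motion. A candidate motion is indexed by a \emph{pair} of triples, one from each set, so their number is bounded only by $n^{5/6}\cdot n^{5/6}=n^{5/3}$.

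\begin{itemize}
\item Your claim that only $O(n^{1/3})$ pairs $(p_A,p_B)$ survive the equal-centroid-distance filter is false. If all points lie on one sphere about the centroid (e.g.\ a centrally symmetric set of lattice points on $x^2+y^2+z^2=N$), all $\Theta(n)$ sample pairs survive.
\item Even granting that bound, each surviving pair carries up to $n^{1/3}\cdot n^{1/3}$ choices of $(q_A,q_B)$. That gives $n$ candidates, not $n^{5/6}$.
\end{itemize}

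The missing idea is a transformation-free fingerprint computed within each set. For each candidate triple $\vec t=(v_1,v_2,v_3)$ of $S$ alone, fingerprint the multiset of $\theta_S(v)=(\norm{v-v_1}^2,\norm{v-v_2}^2,\norm{v-v_3}^2,\sigma_{\vec t}(v))$, where $\sigma_{\vec t}(v)$ records the side of the plane of the triple. By \cref{app:hashcorrectness}, for congruent non-collinear triples these multisets agree iff the induced motion maps $A$ onto $B$. So each set stores only $O(n^{5/6})$ fingerprints, and triples are paired offline.

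The paper also replaces $c_S$ by the canonical $U^{O(1)}$-rational point of \cref{obs:3dcen}, so these coordinates have bounded precision. Keeping $c_S$ as an anchor would need an additional bad-prime argument covering all hashed distances and slice offsets.

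There are three secondary issues.

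\begin{itemize}
\item \textbf{Multiplicities.} Counting points of the multiset $A$, your sparse-plane lemma is false. If $A$ consists of four non-coplanar points of multiplicity $n/4$ each, every nonempty slice, for every axis, has more than $n^{2/3}$ points. The paper counts and samples distinct points of $\tilde S$ via $\ell_0$-samplers. That sampling oracle, not precision growth, is the source of the $\log^4 U\log n$ factor.
\item \textbf{Proof of the lemma.} The paper places the second anchor and the slices on a sphere around $v_1$, which makes the lemma short (\cref{app:badpoint}). Two non-collinear bad axes would put all $m$ points of the sphere on at most $\abs{\mathcal H_v}\abs{\mathcal H_{v'}}\le m^{2/3}$ lines, each meeting the sphere in $O(1)$ points. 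Without the sphere, two heavy slice families only confine the points to few lines, which may be long, so that argument no longer closes.
\item \textbf{Finding the slice.} Locating the nearest sparse slice in one pass needs about $n^{1/3}$ counters per axis, since up to $n^{1/3}$ dense slices may precede it; a bounded number is not enough. This still fits the $n^{5/6}$ budget.
\end{itemize}
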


While the full algorithm and its correctness are deferred to \cref{app:3d-algo}, we provide an overview here. 
The main goal of our algorithm is to produce two families of non-collinear anchor 3-tuples, $\mathcal{P}_A \subseteq A^3$ and $\mathcal{P}_B \subseteq B^3$, each of size $O(n^{5/6})$, to test candidate transformations via the alignment of anchor points. The first anchor point is obtained by collecting points at a fixed distance from the centroids using $\dbF_p$-hashing, as in the $\CongIden_{n,\boundedQ{U},2}$ algorithm, and selecting either a precise centroid or points on a fixed-radius sphere according to \cref{obs:3dcen}. The second and third points are selected via a birthday-paradox argument.
Once $\mathcal{P}_A$ and $\mathcal{P}_B$ are obtained, the algorithm checks if there exists $(\alpha, \beta) \in \mathcal{P}_A \times \mathcal{P}_B$ such that $(\alpha,\beta)$ is a matched pair that aligns the sets $A$ and $B$. The existence can be verified efficiently due to the size guarantees of $\mathcal{P}_A$ and $\mathcal{P}_B$, and the alignment can be checked using Karp–Rabin hashing. By the birthday-paradox argument (\cref{cor:birthday}), such a matched pair exists with high probability whenever $A$ and $B$ are congruent.

\section{Hardness Results}\label{sec:hardness}
In this section, we establish lower bounds for exact and approximate geometric congruence. In particular, we show that our exact congruence algorithms achieve essentially optimal space bounds, while approximate congruence exhibits a linear pass-pace tradeoff.
Following the standard streaming literature, we prove the lower bounds via reduction from communication problems.

\subsection{Communication Complexity}
Communication complexity forms a core toolkit for establishing lower bounds in streaming models. In this section, we review the key aspects relevant to this work and refer readers to \cite{KN96,Rou16} for additional background.

For a two-player communication problem  $F:\mathcal{X}\times \mathcal{Y}\to \set{0,1}$, 
we denote $\Dcc(F)$ the deterministic communication complexity of $F$, and $\Rcc_{\delta}(F)$ the \emph{private-coin} randomized communication complexity of $F$ with error at most $\delta\in (0,1/2)$.
The following communication complexity bounds can be found in any communication complexity textbook, e.g. {\cite{KN96}}.
\begin{fact}\label{fact:eq-cost}
    Denote $\EQ_n:\set{0,1}^n\times \set{0,1}^n\to \set{0,1}$ the equality function, defined by $\EQ_n(x,y)=1$ iff $x=y$.
    Then $\Dcc(\EQ_n) = \Theta(n)$ and $\Rcc_{\delta}(\EQ_n)=\Omega(\min\set{n,\log n+\log \frac{1}{\delta}})$ for $\delta \in (0,1/2)$.
\end{fact}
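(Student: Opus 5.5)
The deterministic bound and the randomized bound are standard. I would prove them in three separate pieces and then combine them.

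\emph{Deterministic bound.} The upper bound $\Dcc(\EQ_n)\le n+1$ comes from the trivial protocol: Alice sends $x$ and Bob replies with the answer. For the lower bound I will use the fooling set $\set{(x,x):x\in\set{0,1}^n}$. Every pair in it has value $1$. For $x\neq y$, at least one of the cross pairs $(x,y)$, $(y,x)$ has value $0$. So no monochromatic rectangle contains two of these pairs, any protocol needs at least $2^n$ leaves, and hence $\Dcc(\EQ_n)\ge n$.

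\emph{Randomized bound, $\log n$ part.} The $\Omega(\log n)$ term is specific to private coins. I will use the classical simulation: a private-coin protocol of cost $c$ with error $\delta<1/2$ can be simulated deterministically with $2^{O(c)}$ communication, with constants depending on $1/2-\delta$. In the simulation, Alice sends, for every node of the protocol tree, her transition probabilities rounded to $O(c)$ bits. Bob then computes the acceptance probability himself. This gives $n\le \Dcc(\EQ_n)\le 2^{O(\Rcc_\delta(\EQ_n))}$, and therefore $\Rcc_\delta(\EQ_n)=\Omega(\log n)$.

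\emph{Randomized bound, $\log\frac1\delta$ part.} I will apply Yao's principle, which is valid because private-coin cost is at least public-coin cost. Take the distribution $\mu$ that with probability $1/2$ outputs $(x,x)$ for a uniform $x$, and otherwise outputs a uniform pair $(x,y)$. Fix a deterministic protocol of cost $c$ with $\mu$-error at most $\delta$, and let $R_i=S_i\times T_i$ be its $1$-rectangles, of which there are at most $2^c$. Set $d_i=|S_i\cap T_i|/2^n$.
\begin{itemize}
\item The diagonal mass of $R_i$ is $d_i/2$. Correctness on the diagonal forces $\sum_i d_i\ge 1-2\delta$.
\item The off-diagonal mass of $R_i$ is at least $\tfrac12(|S_i||T_i|-|S_i\cap T_i|)/4^n\ge \tfrac12(d_i^2-2^{-n}d_i)$. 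Each such point is an error, so $\sum_i d_i^2\le 4\delta+2^{-n}$.
\end{itemize}
By Cauchy--Schwarz, $(1-2\delta)^2\le 2^c(4\delta+2^{-n})$. Hence $c\ge \Omega(\min\set{n,\log\frac1\delta})$ whenever $\delta$ is bounded away from $1/2$. The regime of $\delta$ near $1/2$ only affects constants in the stated form.

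\emph{Combining and main obstacle.} Taking the maximum of the two randomized bounds gives $\Omega(\max\set{\log n,\min\set{n,\log\frac1\delta}})$. This is $\Omega(\min\set{n,\log n+\log\frac1\delta})$, since $\log n\le n$. I expect the main care point to be the Cauchy--Schwarz step: the $2^{-n}$ diagonal correction is exactly what produces the $\min$ with $n$. A secondary care point is the bit precision in the private-coin simulation, which must be $O(c)$ bits per node so that the rounding error stays below $(1/2-\delta)/2$.
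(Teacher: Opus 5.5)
Your proof is correct. The paper gives no argument of its own here: the statement is quoted as a standard fact from textbooks such as \cite{KN96}. What you wrote is a faithful, self-contained version of the usual textbook proofs:
\begin{itemize}
\item the diagonal fooling set for $\Dcc(\EQ_n)$;
\item the $2^{O(c)}$ deterministic simulation of a private-coin protocol, for the $\log n$ term;
\item Yao's principle plus the rectangle/Cauchy--Schwarz count, for the $\log\frac{1}{\delta}$ term.
\end{itemize}
I checked each step, including $\sum_i d_i\ge 1-2\delta$, $\sum_i d_i^2\le 4\delta+2^{-n}$, and the final $\max$/$\min$ combination, and they go through.

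One point is worth making explicit, because it is a property of the statement rather than a weakness of your argument. The dependence of the $\log n$ constant on $\frac12-\delta$ in your simulation step cannot be removed. The $\pm1$ equality matrix has sign-rank at most $3$. To see this, identify $x$ with an integer in $[0,2^n)$, set $\theta_x=2\pi x/2^n$, and fix $\cos(2\pi/2^n)<c<1$. Take $u_x=(\cos\theta_x,\sin\theta_x,-c)$ and $v_y=(\cos\theta_y,\sin\theta_y,1)$. Then $\langle u_x,v_y\rangle=\cos(\theta_x-\theta_y)-c$, which is positive iff $x=y$. By Paturi--Simon, $\EQ_n$ therefore has an $O(1)$-bit private-coin protocol with error $\frac12-2^{-\Theta(n)}$.

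Consequently, no bound of the form $\Omega(\log n)$ can hold uniformly over all $\delta\in(0,\frac12)$. The Fact is true only with $\delta$ bounded away from $\frac12$, for instance $\delta\le\frac13$, with constants depending on that gap. That is exactly what your proof establishes. As you observed, the $\Omega(\min\set{n,\log\frac{1}{\delta}})$ part does hold with an absolute constant.
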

\begin{fact}\label{fact:disj-cost}
    Denote $\DISJ_n:2^{[n]}\times 2^{[n]}\to \set{0,1}$ the (set) disjointness function, defined by $\DISJ_n(X,Y)=1$ iff $X\cap Y=\emptyset$.
    Then $\Dcc(\DISJ_n)=\Theta(n)$ and $\Rcc_{1/2 - \epsilon}(\DISJ_n)=\Omega(\epsilon^2 n)$ for $\epsilon \in (0,1/2)$.
\end{fact}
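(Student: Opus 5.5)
This is a standard fact, so the plan is to obtain each bound from classical results and add only one small reduction. The only genuinely new step is getting the correct dependence on the advantage $\epsilon$.

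\emph{Deterministic bound.} The upper bound $\Dcc(\DISJ_n)=O(n)$ is the trivial protocol: Alice sends her characteristic vector, and Bob answers with one bit. For the lower bound I will use the standard fooling set $\set{(X,[n]\setminus X):X\subseteq[n]}$.
\begin{itemize}
\item Each pair $(X,[n]\setminus X)$ is a $1$-input, since $X\cap([n]\setminus X)=\emptyset$.
\item For $X\neq X'$, at least one of the crossed pairs $(X,[n]\setminus X')$ or $(X',[n]\setminus X)$ is an intersecting pair, hence a $0$-input.
\end{itemize}
So no monochromatic rectangle contains two elements of this set. Any protocol therefore needs at least $2^n$ leaves, which gives $\Dcc(\DISJ_n)\ge n$.

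\emph{Randomized bound at constant error.} I would invoke the theorem of Kalyanasundaram--Schnitger and Razborov: $\Rcc_{1/3}(\DISJ_n)=\Omega(n)$. I would not reprove it, but it can be recalled through Razborov's corruption argument. One takes the distribution supported on pairs of sets of size about $n/4$ that intersect in at most one element. The key claim is that every rectangle which is large under the disjoint part of the distribution carries a constant fraction of its mass on intersecting pairs. Proving that corruption bound is the main obstacle in the whole statement, and I would treat it as citable from \cite{KN96}.

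\emph{Dependence on $\epsilon$.} Suppose a private-coin protocol $\Pi$ has error $1/2-\epsilon$ and cost $c$. Run $t=\Theta(1/\epsilon^2)$ independent copies of $\Pi$ and output the majority answer. Each copy is correct with probability at least $1/2+\epsilon$, so by a Chernoff--Hoeffding bound the majority errs with probability at most $\exp(-2t\epsilon^2)\le 1/3$ for a suitable constant in $t$. The cost of the amplified protocol is $O(c/\epsilon^2)$. It follows that
\[
    \Omega(n)=\Rcc_{1/3}(\DISJ_n)\le O\!\left(\frac{\Rcc_{1/2-\epsilon}(\DISJ_n)}{\epsilon^2}\right),
\]
which rearranges to $\Rcc_{1/2-\epsilon}(\DISJ_n)=\Omega(\epsilon^2 n)$.
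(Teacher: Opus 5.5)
Your proposal is correct and matches the paper's treatment. The paper gives no proof and cites the fact as textbook material (e.g.\ \cite{KN96}); your fooling set for the deterministic bound, the Kalyanasundaram--Schnitger/Razborov $\Omega(n)$ bound at constant error, and majority-vote amplification via Hoeffding to get the $\epsilon^2$ dependence are the standard derivation behind that citation. The amplification step loses a factor of $\epsilon$ compared to the sharp dependence, but $\Omega(\epsilon^2 n)$ is all the statement claims and all the paper uses.
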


In most communication-complexity literature and streaming applications, public-coin and private-coin randomized models are treated as essentially equivalent. By Newman’s theorem \cite{New91}, the communication complexities of an $n$-bit function in these two randomness models differ by at most $O(\log n)$. Even though this logarithmic overhead is usually inconsequential, this distinction becomes significant for a reduction from $\EQ_n$.

Randomized streaming algorithms can be defined under several randomness models, a distinction that has received growing attention. Following the formalized randomness model framework in \cite{CS24}, the random-seed and random-tape models are the most practically relevant. In particular, the random-tape model best captures space-bounded computation: fresh randomness is freely available, but any random information reused later must be stored and counted toward the space usage.
Under the random-tape model (or the weaker random-seed model), the canonical streaming-to-communication reduction yields a private-coin communication protocol.
\begin{lemma}\label{lem:comm-to-streaming}
    Suppose a streaming problem $\mathcal{P}$ is reduced from a communication problem $F$. If $\Rcc(F)\geq c$, then any $p$-pass randomized algorithm solving $\mathcal{P}$ requires $\Omega(c/p)$ space.
\end{lemma}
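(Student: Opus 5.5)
The plan is the standard simulation argument: a multi-pass streaming algorithm yields a communication protocol in which the players pass the memory state back and forth. Fix a reduction from $F$ to $\mathcal{P}$, meaning maps that turn Alice's input $x$ into a stream prefix $\sigma_A(x)$ and Bob's input $y$ into a suffix $\sigma_B(y)$. These maps should be computable locally, without communication, and $F(x,y)$ should be determined by the answer of $\mathcal{P}$ on the concatenated stream $\sigma_A(x)\circ\sigma_B(y)$; in the setting of this section that answer is read off the identification output. Let $\mathcal{A}$ be a $p$-pass randomized streaming algorithm for $\mathcal{P}$ that uses $s$ bits of space and has error at most $\delta$.

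The protocol runs as follows. Alice runs $\mathcal{A}$ on $\sigma_A(x)$ and sends Bob the complete memory state, which is $s$ bits. Bob continues the run on $\sigma_B(y)$. If another pass remains, he sends the state back to Alice, who starts the next pass over her prefix, and so on. Each pass costs at most two messages of $s$ bits. After the final pass, Bob sends the output (or a bit computed from it), so the total communication is at most $2ps+O(1)$. Correctness is inherited from $\mathcal{A}$ for every fixed input pair, so the protocol errs with probability at most $\delta$.

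The step that needs care is randomness, and this is where the random-tape model (or the random-seed model) of \cite{CS24} matters. In the random-tape model, fresh random bits are drawn as the algorithm proceeds, and any random information that is reused later has to be stored in memory. The player currently holding the state can therefore draw these fresh bits privately, and everything that must persist is already contained in the transmitted $s$-bit state. The resulting protocol is genuinely private-coin, with no Newman-type $O(\log n)$ overhead. This is the point that allows the $\log\frac{1}{\delta}$ and $\log n$ terms of $\Rcc(\EQ_n)$ to transfer. In the random-seed model the seed counts toward the space, so whoever holds the state can carry it along, and the same conclusion holds.

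Finally, $2ps+O(1)\ge \Rcc_\delta(F)\ge c$ gives $s=\Omega(c/p)$. For small $c$, the additive constant is absorbed by noting that any nontrivial algorithm uses $s\ge1$. I expect the main subtlety to be making the randomness bookkeeping precise. The simulation itself is routine.
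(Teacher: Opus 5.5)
Your proposal is correct and uses essentially the same argument as the paper: the players simulate the streaming algorithm and pass the $s$-bit memory state back and forth. You also make the same key point, that under the random-tape (or random-seed) model each player draws randomness privately, so the result is a genuine private-coin protocol with no Newman-type overhead. The only difference is cosmetic: the paper lets Bob, who finishes the last pass, output the answer himself, for a cost of $(2p-1)s$ instead of your $2ps+O(1)$, which makes your $s\ge 1$ remark unnecessary.
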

\begin{proof}
    For an input $(x,y)$ to $F$, generate the data stream for $\mathcal{P}$ based on the reduction by placing all $x$-dependent items first, then all $y$-dependent items, and lastly any input-independent items. Notice that no items in the data stream can depend on both $x$ and $y$. Now Alice and Bob can create a communication protocol $\Pi$ for $F$ by simulating a streaming algorithm $\mathcal{A}$ for $\mathcal{P}$ as follows. In a pass, Alice first runs $\mathcal{A}$ until all $x$-dependent items are exhausted, and she sends the current memory state $\sigma$ to Bob. Upon reception of $\sigma$, Bob resumes the streaming algorithm $\mathcal{A}$ on the rest of the data stream, and he sends the current memory state $\sigma'$ to Alice at the end of the pass. 
    
    If $\mathcal{A}$ is a $p$-pass $s$-space algorithm, the above communication protocol involves $(2p-1)$ rounds of communication, and thus the total communication cost is at most $(2p-1)s$. 
    Moreover, when simulating the streaming algorithm, each random bit is revealed only to the currently active player. Thus, every use of randomness in $\mathcal{A}$ can be treated as private randomness in $\Pi$. In fact, $\mathcal{A}$ may need to store information from the random strings for later use by the same active player during the simulation, which can only further reduce the communication cost of $\Pi$.    
    Now, the randomized communication complexity lower bound of $c$ implies that $(2p-1)s\geq c$, hence $s=\Omega(c/p)$. 
\end{proof}

\subsection{Lower bounds for exact congruence}
In this section, we prove several lower bounds for the congruence identification problem for general dimension $d$.
We will frequently use the following crude bound on $\abs{\boundedQ{U}}$ in the reductions.
\begin{fact}
    $\abs{\boundedQ{U}} = \Theta(U^2)$.
\end{fact}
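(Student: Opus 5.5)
We must show $\abs{\boundedQ{U}} = \Theta(U^2)$ by giving matching upper and lower bounds.

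\textbf{Upper bound.} Every nonzero element has the form $\pm p/q$ with $p,q\in[U]$, so there are at most $2U^2$ such representations. Hence $\abs{\boundedQ{U}}\le 2U^2+1 = O(U^2)$.

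\textbf{Lower bound.} Distinct reduced fractions give distinct rationals, so it suffices to count coprime pairs. Let
\[
N(U)\defeq \abs{\set{(p,q)\in[U]^2:\gcd(p,q)=1}} .
\]
Then $\abs{\boundedQ{U}}\ge 2N(U)$. We use the elementary bound that avoids the $6/\pi^2$ asymptotics. A pair $(p,q)$ with $\gcd(p,q)>1$ has a common prime divisor $\ell$. For each prime $\ell\le U$ there are at most $(U/\ell)^2$ pairs with $\ell$ dividing both coordinates, so
\[
U^2 - N(U) \le \sum_{\ell \text{ prime}} \frac{U^2}{\ell^2} .
\]
Now bound the prime sum by splitting off $\ell=2$:
\[
\sum_{\ell \text{ prime}} \frac{1}{\ell^2} \le \frac{1}{4} + \sum_{j\ge 3} \frac{1}{j^2} = \frac{1}{4} + \left(\frac{\pi^2}{6} - \frac{5}{4}\right) < 0.65 .
\]
Therefore $N(U) \ge 0.35\,U^2$, which gives $\abs{\boundedQ{U}} = \Omega(U^2)$.

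\textbf{Tidier alternative.} One can avoid $\pi^2/6$ altogether by counting only the pairs with $p$ prime and $q<p$. For these, $\gcd(p,q)=1$ automatically, and there are at least
\[
\sum_{p\le U \text{ prime}} (p-1) = \Omega\!\left(\frac{U^2}{\log U}\right)
\]
of them. This loses a log factor, however, so the inclusion–exclusion bound above is the one to use.

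The argument has no real obstacle. The only care needed is that distinct reduced fractions correspond to distinct rationals, so counting coprime pairs gives a valid lower bound on $\abs{\boundedQ{U}}$.
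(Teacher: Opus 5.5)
Your proof is correct and complete. The paper states this only as a crude fact and gives no proof, so there is no argument to compare against. Your two-sided count is the standard justification: at most $2U^2+1$ representations for the upper bound, and the union bound over common prime divisors giving at least $0.35\,U^2$ coprime pairs for the lower bound. Your count is in fact exact, because every positive element $p/q$ reduces to a coprime pair that still lies in $[U]^2$. Hence $\abs{\boundedQ{U}} = 2N(U)+1$.

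One small correction concerns your remarks about avoiding $\pi^2/6$. You need neither the Basel value nor the log-losing alternative for this. The telescoping bound $\sum_{j\ge 3} 1/j^2 < \sum_{j\ge 3}\bigl(\frac{1}{j-1}-\frac{1}{j}\bigr) = \frac{1}{2}$ already gives $U^2 - N(U) \le \frac{3}{4}U^2$. That yields $N(U)\ge U^2/4$ by a fully elementary argument.
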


It is easy to see that the output size of $\CongIden_{n,\boundedQ{U},d}$ is already $\Omega\left(\frac{d\log U}{p}\right)$. We show that even when accounting only for the workspace, namely assuming that the output of the streaming algorithm can be written to a separate unbounded write-only space, the intermediate computation still requires the same magnitude of workspace.
\begin{proposition}\label{prop:congiden-LB}
    For $\delta\in (0,1/2)$, any $p$-pass randomized streaming algorithm for $\CongIden_{1,\boundedQ{U},d}$ with error at most $\delta$ requires a \emph{workspace} usage of $\Omega\left(\frac{d\log U}{p}\right)$.
\end{proposition}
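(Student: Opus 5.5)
We reduce from $\EQ_N$ with $N=\Theta(d\log U)$ and invoke \cref{lem:comm-to-streaming} together with \cref{fact:eq-cost}. Since $\Rcc_\delta(\EQ_N)=\Omega(\min\set{N,\log N+\log\frac1\delta})$ is only logarithmic in $N$, equality alone does not give the linear bound. Instead, we exploit the fact that the workspace must carry enough information for the output, even though the output itself is written to a separate tape.

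The instance has $n=1$, so $A=\set{a}$ and $B=\set{b}$ are always congruent. Every valid transformation with $\rho=I$ has translation $t=b-a$. Fix an injective encoding $x\mapsto v(x)\in\boundedQ{U}^d$ of strings $x\in\set{0,1}^N$, which is possible with $N=\Theta(d\log U)$ by the fact $\abs{\boundedQ{U}}=\Theta(U^2)$.

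The natural communication problem is a one-way \emph{index/transmission} task. Alice holds $x$ and puts $a=v(x)$ first in the stream; Bob puts a fixed $b=0$ (or a known point). At the end the algorithm must write a transformation $(\rho,t)$ with $\rho a+t=b$, so the pair together with $b$ determines $a$ only through $\rho a=b-t$. To remove the freedom in $\rho$, I would take the rotation-invariant quantity: with $b$ fixed, $\abs{a}=\abs{b-t}$ is determined. Hence it is cleaner to let Alice's string be encoded by the norm, or to use all directions at once.

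The cleaner route is as follows. Since everything after Alice's item is written by Bob's side of the simulation, Bob's output depends only on the transcript. Therefore the transcript must determine a pair $(\rho,t)$ with $\rho^{-1}(b-t)=a$, which recovers $a$, and hence $x$, exactly. This holds whenever the output is correct, i.e.\ with probability $1-\delta$.

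The counting argument then goes as follows. For a private-coin protocol with error $\delta<1/2$ in which Bob learns $x$ from the messages, the total communication is at least $\Omega(N)$. By Fano's inequality, or by fixing Bob's coins and averaging over uniform $x$, the mutual information between the transcript and $x$ is at least $(1-\delta)N-1=\Omega(N)$, which bounds the total communication from below. The simulation in \cref{lem:comm-to-streaming} makes the total communication at most $(2p-1)s$, where $s$ is the workspace, since the output tape is never transmitted. This yields $s=\Omega(d\log U/p)$.

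The main subtlety I expect is multi-pass: output could be produced during Alice's portion of a later pass. To address this, I would order the stream so that Alice's item comes first and Bob's item last. Output written while Alice is active then depends on $x$ but not on $y$. To prevent trivial solutions, I would make the answer also depend on Bob's input: let Bob hold a random rotation index or point $b=v(y)$ and require recovery of $t=b-\rho a$. Outputs written during Alice's segments then cannot fix $t$ without knowing $y$, so the $\Omega(N)$ information about $x$ must reach the player who finishes the output through the memory messages. Making this counting argument rigorous is the delicate step.
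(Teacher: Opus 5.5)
Your main line is essentially the paper's proof: the paper also reduces from the $\Theta(d\log U)$-bit transmission problem with $n=1$, fixing one point and letting Alice's input select the other ($A=\set{q_1}$, $B=\set{q_{1+x}}$, the mirror image of your $a=v(x)$ with $b$ fixed), recovers $x$ from a correct transformation, bounds the private-coin cost by a one-line information-theoretic argument (which your Fano estimate $(1-\delta)N-1$ makes explicit), and finishes with the $(2p-1)s$ simulation of \cref{lem:comm-to-streaming}. Like your main line, the paper simply assumes that the player simulating the end of the last pass sees the whole output and does not treat output written mid-stream, so the $b=v(y)$ modification in your final paragraph, which you leave unfinished, is not part of its argument and is not needed to match it.
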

\begin{proof}
    Denote $V=\abs{\boundedQ{U}^d}=\Theta(U^{2d})$.
    Consider the communication problem $F$ in which Alice holds an integer $x \in [V-1]$, and Bob’s goal is to determine the value of $x$. The reduction from $F$ to an instance of $\CongIden_{1,\boundedQ{U},d}$ is straightforward: order the $V$ elements of $\boundedQ{U}^d$ lexicographically as $q_1,\ldots,q_V$, and on the input $x$, Alice creates the sets $A=\set{q_1}$ and $B=\set{q_{1+x}}$.
    A simple information-theoretic argument shows that $\Rcc_\delta(F)=\Omega(\log V)=\Omega(d\log U)$; conversely, the value of $x$ can be retrieved from a correct transformation outputted by the streaming algorithm.
    Hence, \cref{lem:comm-to-streaming} implies the claimed space lower bound.
\end{proof}

\begin{proposition}\label{prop:congtest-LB}
    For $\delta\in (0,1/2)$, any $p$-pass randomized streaming algorithm for $\CongTest_{n,\boundedQ{U},d}$ with error $\delta$ requires a space usage of 
    \[
        \Omega\left(\frac{1}{p} \left(\log (n-d)+ \log\log \left(1+\frac{U^{2d}}{n-d}\right) +\log \frac{1}{\delta}\right)\right).
    \]
\end{proposition}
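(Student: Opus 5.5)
We reduce from equality via \cref{lem:comm-to-streaming}, with \cref{fact:eq-cost} supplying the randomized lower bound $\Rcc_\delta(\EQ_N)=\Omega(\min\set{N,\log N+\log\frac1\delta})$. The target bound has three summands. Since $\Omega(a+b+c)=\Omega(\max\set{a,b,c})$, it suffices to prove each summand separately as a lower bound, possibly using a different instance family for each, and then add them.

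For the $\log\frac1\delta$ term, we reduce from $\EQ_1$, which has $\Rcc_\delta=\Omega(\log\frac1\delta)$ when $\delta$ is small. Note that with one bit the minimum in \cref{fact:eq-cost} degenerates, so we instead use $\EQ_N$ with $N=\Theta(\log\frac1\delta)$, whenever the instance size allows, to realize this term.

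The main construction is an embedding of $\EQ_N$ into congruence testing. We fix $d$ affinely independent "frame" points, input-independent, that rigidly pin down the coordinate system. The simplest choice is $0$ and $\pm$ scaled unit vectors, made asymmetric so that the only isometry preserving the frame is the identity. We then let the remaining $n-d$ points encode a subset of a large pool $P\subseteq\boundedQ{U}^d$ of "free" positions, chosen far from the frame so that every congruence between the two sets must map frame to frame.

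Concretely, Alice maps $x$ to a $(n-d)$-element multiset $X\subseteq P$ and Bob maps $y$ to $Y\subseteq P$. The streamed sets are $A=\text{frame}\cup X$ and $B=\text{frame}\cup Y$, with Alice's items first, so the stream has no item depending on both inputs. Then $A\cong B$ iff $X=Y$ iff $x=y$, provided the frame rigidity argument goes through.

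The number of available multisets is $\binom{|P|+n-d-1}{n-d}$ with $|P|=\Theta(U^{2d})$. So $N=\log\binom{|P|+n-d-1}{n-d}\approx (n-d)\log(1+U^{2d}/(n-d))$, and hence $\log N\gtrsim \log(n-d)+\log\log(1+U^{2d}/(n-d))$. Plugging into $\Omega(\min\set{N,\log N+\log\frac1\delta})$, where $\log N$ is far below $N$, gives the first two summands. Dividing by $p$ through \cref{lem:comm-to-streaming} gives the claimed bound.

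The main obstacle is the rigidity step: guaranteeing that any congruence between $A$ and $B$ maps frame points to frame points, despite multiplicities and the possibility that points of $X$ mimic frame geometry. I would first try placing the frame in a tiny cluster, say points with very small mutual distances at distinct asymmetric lengths, with all pool points separated from it and from each other by much larger distances.

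Distance multisets then force the frame cluster onto itself, and the asymmetric lengths force the identity on it. Any isometry fixing $d$ affinely independent points in $\dbR^d$ (plus one more generic point to kill the reflection) is the identity. If only $d$ frame points are available, the extra point can be realized by giving one frame point multiplicity two, so that it is distinguished.

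A final consideration is confirming that this restricted pool still has size $\Theta(U^{2d})$ up to constant factors in the exponent's base. Any such loss is absorbed inside $\log\log$.
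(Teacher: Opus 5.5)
Your proposal is essentially the paper's proof. Both reduce from $\EQ$: each input indexes an $(n-O(d))$-point multiset drawn from a pool of $\Theta(U^{2d})$ rational points, padded by an input-independent tiny anchor frame with one doubled point that pins down the coordinate system. The paper uses $p_0,\dots,p_d$ at mutual distances $\sqrt{j-i}\,\eta$ with $\eta=1/(U(U-1))$, doubles $p_0$, and takes the pool $\boundedQ{U-2}^d$; you instead use distinct frame lengths and a well-separated pool. Two small corrections follow.

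First, doubling a frame point only breaks the labelling symmetry. It cannot supply the extra affinely independent point needed to rule out the reflection across the hyperplane spanned by $d$ frame points. So either use $d+1$ frame points as the paper does, or note that $\CongTest$ admits only proper rotations, which excludes that reflection anyway.

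Second, your caveat ``whenever the instance size allows'' for the $\log\frac1\delta$ term is genuinely necessary. The reduction only yields $\Omega\bigl(\tfrac1p(\log N+\min\set{N,\log\tfrac1\delta})\bigr)$. The paper's proof passes over this by dropping the $\min$ from \cref{fact:eq-cost}.
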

\begin{proof}
    Denote $W=\abs{\boundedQ{U-2}^d}=\Theta(U^{2d})$ and $k=n-d-2$.
    We present a reduction from the communication problem of equality $\texttt{EQ}:[\binom{W+k-1}{k}] \to \set{0,1}$ to $\CongTest_{n,\boundedQ{U},d}$. Order the $W$ elements of $\boundedQ{U-2}^d$ lexicographically as $q_1,\ldots,q_W$. 
    Each non-negative integral solution $\vec{v}$ to $v_1+\ldots+v_W=k$ corresponds to a size-$k$ multi-subset of $\set{q_1,\ldots,q_W}$, and by star-and-bar principle the number of such configurations is $\binom{W+k-1}{k}$.
    
    Now each integer $z\in [\binom{W+k-1}{k}]$ can be uniquely identified with a valid configuration $S(z)$. On the $\EQ$ input $(x,y)$, Alice and Bob construct $S(x)$ and $S(y)$ respectively.
    After that, the players construct the anchor multiset $F$. Let $\alpha=1/U$ and $\beta=1/(U-1)$, define
    \[
        p_j \defeq (\underbrace{\alpha,\ldots,\alpha}_{j},\underbrace{\beta,\ldots,\beta}_{d-j}) \in \boundedQ{U}^d
    \]
    for $j\in [0,d]$, and define the anchor multiset $F=\set{p_0,p_0,p_1,\ldots,p_d}$. The $\CongIden_{n,\boundedQ{U},d}$ input is $(A,B)$, where $A=S(x)\cup F$ and $B=S(y)\cup F$.

    Note that for $0\leq i<j\leq d$, $\norm{p_i-p_j}=\sqrt{j-i} \eta$, where $\eta \defeq 1/(U(U-1))$.
    It can be also verified that if $x\in \boundedQ{U-2}^d$, then for any $y\in \boundedQ{U}^d$, the distance $\norm{x-y}$ is distinct from $\norm{p_i-p_j}$ for any $i,j\in [0,d]$.
    Therefore, \(p_0\to p_1\to\cdots \to p_d\) is the unique path whose consecutive vertices are at distance $\eta$ and the starting point has multiplicity two.
    As $F$ is affine independent, this forces that $A$ is congruent to $B$ iff $A$ is identical to $B$. Therefore, the output of $\CongTest_{n,\boundedQ{U},d}(A\cup F,B\cup F)$ is exactly $\texttt{EQ}(x,y)$. By \cref{fact:eq-cost,lem:comm-to-streaming}, and the inequality $\left(\frac{N}{r}\right)^r\leq \binom{N}{r}\leq \left(\frac{eN}{r}\right)^r$, we obtain the space lower bound
    \begin{align*}        
        \Omega\left(\frac{1}{p}\left(\log\log \binom{W+k-1}{k}+\log \frac{1}{\delta}\right)\right) &= \Omega\left(\frac{1}{p}\left(\log \left(k\log \frac{W+k-1}{k}\right)+\log \frac{1}{\delta}\right)\right) \\
        &= \Omega\left(\frac{1}{p} \left(\log (n-d)+ \log\log \left(1+\frac{U^{2d}}{n-d}\right) + \log\frac{1}{\delta}\right)\right). \qedhere
    \end{align*}
\end{proof}
Combining \cref{prop:congiden-LB,prop:congtest-LB} for $d=2$ yields the streaming lower bound stated in \cref{cor:2D-congiden-LB}, hence justifying that the space usages of our algorithms are essentially optimal. 

\subsection{Hardness results for approximate congruence} \label{subsec:apx_cong_lower_bound}
We first formally define the \emph{approximate geometric congruence} problem. For $D\subseteq \dbR$, given two point multisets $A, B \subseteq D^d$ of size $n$ each, and a tolerance parameter $\epsilon > 0$, the approximate congruence testing problem (denoted as $\ApxCong_{n,D,d}^\epsilon$)is to determine whether there exists a rotation $\rho$, a translation $t$, and a bijective index map $\ell: A \to B$\footnote{The index map is required to be consistent: if $a_i,a_j\in A$ with $a_i=a_j$, then $\ell(a_i)=\ell(a_j)$.}, such that for all $a \in A$, $\norm{\ell(a) - (\rho(a) + t)} \le \epsilon$. 

We restate the desired streaming lower bound of $\ApxCong_{n,\boundedQ{U},d}^\epsilon$ for convenience:
\apxHardness*
One may use a more elaborate anchoring construction in the reduction to improve the $\frac{U}{\max\set{\epsilon,1}}$ term to $\left(\frac{U}{\max\set{\epsilon,1}}\right)^d$. We omit this refinement as it does not affect the qualitative conclusion, nevertheless we present the proof in a form that readily generalizes to higher dimensions.

\begin{proof}
    To simplify the proof, we can assume that $U\in \dbN$ and $\epsilon\in [U]\cup\set{1/r:r\in [U]}$. The other cases are handled by a suitable rounding, which will supply a lower bound of the same asymptotic order. We will prove that $\ApxCong^\epsilon_{n,\boundedQ{U},d}$ can be reduced to $\DISJ_k$ for some $k$.

    We will define a collection of sufficiently separated $\epsilon$-balls, so that for the input point sets $A$ and $B$ of $\ApxCong$ in the reduction, each ball contains exactly one point in $A$ and one point $B$. 
    Let $\bfe_1$ be the first standard basis vector,  $C=\set{c_1,\ldots,c_{k'}}$ be the set of ball centres used to encode the $\DISJ$ instance, where $c_j \defeq 5\epsilon j \bfe_1$ and
    \[
        k' = \begin{cases}
            \floor{\frac{U}{5}}-1 &\text{if }\epsilon\in \set{1/t:t\in [2,U]} \\
            \floor{\frac{U}{5\epsilon}}-1 &\text{if }\epsilon\in [U/10] 
        \end{cases}.
    \]
    Thus the size of $C$ is $k'=\Theta\left(\frac{U}{\max\set{\epsilon,1}}\right)$. We set $k = \min\set{n-2,k'}$, which is the size of the $\DISJ$ instance. We also define the anchor sets to be $F_A=\set{-f_A\bfe_1,f_A\bfe_1}$ and $F_B=\set{-f_B\bfe_1,f_B\bfe_1}$, where
    \[
        f_A = U\min\set{\epsilon,1}
        \quad\text{and}\quad
        f_B = f_A - \epsilon.
    \]

    For $b\in \set{-1,0,1}$ and $j\in [k]$, define $v_j(b)\defeq c_j +b\epsilon\bfe_1$.
    Given the inputs $x,y\in \set{0,1}^k$ to the $\DISJ_k$ instance, we construct the point sets $A$ and $B$ 
    \[
        A = \set{v_j(x_j):j\in [k]}\cup F_A
        \quad\text{and}\quad
        B = \set{v_j(-y_j):j\in [k]}\cup F_B.
    \]
    It is straightforward to verify that $v_j(z_j)\in \BB_\epsilon(c_j)$ for every $z\in \set{-1,0,1}^k$ and $j\in [k]$, and $A,B\subseteq \boundedQ{U}^d$.
    
    Suppose $A$ and $B$ are $\epsilon$-approximately congruent via a rotation $\rho$, a translation $t\in \dbR^d$, and a bijective labelling map $\ell:A\to B$.\footnote{In the one-dimensional case, the argument can be much simplified due to the fact that there are only two rotation maps over $\dbR^1$, namely the identity and the $(-1)$-multiplication.} The construction enforces that $\ell$ must map $f_A\bfe_1$ to $f_B\bfe_1$ and $-f_A\bfe_1$ to $-f_B\bfe_1$. By the approximation guarantee, we have
    \[
        \norm{f_B\bfe_1 - (\rho f_A\bfe_1 +t)} \leq \epsilon
        \quad\text{and}\quad
        \norm{-f_B\bfe_1 - (-\rho f_A\bfe_1 +t)} \leq \epsilon
    \]
    Writing $y=f_B\bfe_1 - \rho f_A\bfe_1$, the parallelogram law yields
    \begin{equation}\label{eq:pgram-law}
        \norm{y}^2+\norm{t}^2 = \frac{\norm{y-t}^2 + \norm{y+t}^2}{2} \leq \epsilon^2.
    \end{equation}
    On the other hand, by the reversed triangle inequality
    \begin{equation}\label{eq:rev-tri}
        \norm{y} \geq \absB{f_B\norm{\bfe_1} - f_A\norm{\rho\bfe_1}} = \abs{(f_A-\epsilon) - f_A} = \epsilon.
    \end{equation}
    The inequalities \cref{eq:pgram-law,eq:rev-tri} together force that $t=0$.
    
    Since for distinct $j,j'$, the balls $\BB_\epsilon(c_j)$ and $\BB_\epsilon(c_{j'})$ have distance at least $3\epsilon$, a valid bijection $\ell$ must map $v_j(x_j)\in A$ to $v_j(-y_j)\in B$ for every $j$. Since 
    \[
        \max_j \norm{v_j(x_j) - v_j(-y_j)}
        \begin{cases}
            \leq \epsilon &\text{if }\DISJ(x,y) = 1 \\
            = 2\epsilon &\text{if }\DISJ(x,y) = 0 
        \end{cases},
    \]
    it follows that $A$ and $B$ are $\epsilon$-approximately congruent iff $\DISJ_k(x,y)=1$. The desired lower bound then follows from \cref{fact:disj-cost}.
\end{proof}

\section*{Acknowledgement}
We wish to thank the anonymous reviewers for their valuable comments. In particular, we are grateful to one reviewer for pointing out the use of multiple $\ell_0$-samplers in \cref{app:3d_sampling_oracle} to extend the complex moment algorithm to the turnstile model.

Claude Opus 4.8 and GPT 5.5 were used in this work during the development stage for literature review and verification assistance. The AI tools were not used for exposition generation purposes.

\bibliographystyle{alpha}
\bibliography{ref}

\crefalias{section}{appendix}
\appendix \label{appendix}

\section{Deferred Proofs} \label{app:defered_proof}

\subsection{Proof that centroids determine the translation for congruent sets} \label{app:recentring}
More generally, for two congruent sets $A$ and $B$, any corresponding pair of anchor points $z_A$ and $z_B$ yields translated sets $A-z_A$ and $B-z_B$ that are congruent via a pure rotation. The formal statement follows immediately from the definitions of $z_A$ and $z_B$.
\begin{claim}\label{lem:anchor}
    Suppose $A,B\subseteq \dbR^d$ satisfy $B=\rho(A)+t$ for some rotation $\rho:\dbR^d\to \dbR^d$ and a translation $t\in\dbR^d$. If $z_A,z_B\in \dbR^d$ satisfy $z_B = \rho(z_A)+t$, then $A-z_A$ is congruent to $B-z_B$ through a pure rotation $\rho$.
\end{claim}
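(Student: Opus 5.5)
The plan is a direct verification: show that $\rho$ maps the multiset $A-z_A$ onto $B-z_B$, preserving multiplicities. Since $\rho$ is a rotation of $\dbR^d$, it is a linear map. So for every $a\in\dbR^d$,
\[
    \rho(a - z_A) = \rho(a) - \rho(z_A) = \bigl(\rho(a)+t\bigr) - \bigl(\rho(z_A)+t\bigr) = \rho(a)+t - z_B ,
\]
where the last step uses the hypothesis $z_B=\rho(z_A)+t$.

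Next, apply this identity to each element of the multiset $A$. Using the shorthand $f(S)$ for images of multisets from \cref{sec:prel}, this gives
\[
    \rho(A - z_A) = \rho(A)+t-z_B = B - z_B ,
\]
since $B=\rho(A)+t$ by assumption.

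Multiplicities are preserved automatically. The map $a\mapsto a-z_A$ is a bijection of $\dbR^d$, and so are $\rho$ and $b\mapsto b+t-z_B$. Hence the correspondence $a-z_A\mapsto \rho(a)+t-z_B$ is a multiplicity-preserving bijection from $A-z_A$ to $B-z_B$. Therefore $A-z_A$ and $B-z_B$ are congruent through the pure rotation $\rho$, with zero translation.

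There is no real obstacle here. The only point needing care is that $\rho$ is linear, not merely affine, so that $\rho(a-z_A)=\rho(a)-\rho(z_A)$. This holds because rotations in the congruence definition are about the origin, with all translation absorbed into $t$.

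For the intended application, I would add one remark. When $z_A=c_A$ and $z_B=c_B$ are the centroids, linearity gives $\rho(c_A)+t=\frac1n\sum_{a\in A}(\rho(a)+t)=c_B$. So the hypothesis of the claim is satisfied, which justifies the recentring step.
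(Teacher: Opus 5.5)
Your proof is correct and follows the paper's approach. The paper simply says the claim follows immediately from the definitions, which amounts to your identity $\rho(a-z_A)=\rho(a)+t-z_B$. Your closing remark on centroids is the same computation $c_B=\rho c_A+t$ that the paper gives right after the claim.
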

The centroids of two $n$-point congruent sets $A$ and $B$ form a pair of anchor points. Indeed, if $B=\rho A+t$, then
\[
    c_B = \frac{1}{n}\sum_{z\in B} z = \frac{1}{n}\sum_{z'\in A} (\rho z'+t) = \rho\left(\frac{1}{n}\sum_{z'\in A} z'\right) + t = \rho c_A +t.
\]

\subsection{Deferred proofs in \cref{subsec:QU}} \label{app:omitted-QU}
\begin{proof}[Proof of \cref{obs:lin-sys-precision}]
    Consider the linear system
    \[
    \begin{bmatrix}
        a_1 & a_2\\ a_3 & a_4
    \end{bmatrix}
    \begin{bmatrix}
        x \\ y
    \end{bmatrix} = \begin{bmatrix}
        a_5 \\ a_6
    \end{bmatrix}, 
    \]
    where each $a_i$ is $U$-rational. The solution is given
    \[
        x= \frac{a_4a_5-a_2a_6}{a_1a_4-a_2a_3}
        \quad 
        y=\frac{a_1a_6-a_3a_5}{a_1a_4-a_2a_3}.
    \]
    By \cref{obs:sum-prod-precision}, $x$ and $y$ are $4U^8$-rational.
\end{proof}
\begin{proof}[Proof of \cref{obs:rational_rotation}]
    Note that 
    \(
        b_2 - b_1 = (\rho a_2 +t) - (\rho a_1 +t) = \rho(a_2-a_1).
    \)
    Let $\theta,c_1,c_2,d_1,d_2\in \dbR$. Write $\rho = \cos\theta +\ii \sin\theta$, $b_2-b_1 = c_1+c_2\ii$ and $a_2-a_1 = d_1+d_2\ii$. 
    Note that by \cref{obs:sum-prod-precision}, $c_1,c_2,d_1$ and $d_2$ are $2U^2$-rational.
    We form the $2\times 2$ linear system
    \[
        \left\{\begin{aligned}
            c_1 &= d_1\cos\theta - d_2\sin\theta \\
            c_2 &= d_1\sin\theta + d_2\cos\theta
        \end{aligned}
        \right.
        \implies 
        \begin{bmatrix}
            -d_2 & d_1 \\ d_1 & d_2 
        \end{bmatrix} \begin{bmatrix}
            \sin\theta \\ \cos\theta
        \end{bmatrix} = \begin{bmatrix}
            c_1 \\ c_2
        \end{bmatrix}.
    \]
    \cref{obs:lin-sys-precision} implies that $\rho$ is $2^{10} U^{16}$-rational complex. Applying \cref{obs:sum-prod-precision} to the real and imaginary parts, we can verify that $t = b_1 - \rho a_1$ is $2^{22}U^{35}$-rational complex.
\end{proof}

\subsection{Proof of \cref{lem:norm1-modular-emb}}\label{app:norm1-modular-emb}
It suffices to show that complex conjugation is equivalent to the $p$-th power Frobenius map under the embedding into $\dbF_{p^2}$. This is a classical fact for quadratic extensions of finite fields, see, e.g., \cite[Section 2]{LN97}. Here, we give a direct argument with light number theory.

\begin{lemma}
For $p\equiv3 \bmod 8$, if $z\in \dbQ[\ii]$ is faithfully embedded in $\dbF_{p^2}$, then $\phi_p(\overline z)=\phi_p(z)^p$.
\end{lemma}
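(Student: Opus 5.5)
Write $z = x + y\ii$ with $x,y\in\dbQ$. Faithfulness of the embedding means $\phi_p(x)$ and $\phi_p(y)$ are well-defined elements of $\dbF_p$, so $\phi_p(z)=\widehat{x}+\widehat{y}\ii$ with $\widehat{x}\defeq\phi_p(x)$ and $\widehat{y}\defeq\phi_p(y)$ in $\dbF_p$. Likewise $\phi_p(\overline z)=\widehat{x}-\widehat{y}\ii$, since conjugation only negates the imaginary coordinate and $\phi_p$ respects negation by \cref{eq:homomorphism}. The claim therefore reduces to the identity
\[
    (\widehat{x}+\widehat{y}\ii)^p = \widehat{x}-\widehat{y}\ii \quad\text{in } \dbF_p[\ii].
\]

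We prove this identity in three steps.

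\textbf{Step 1 (freshman's dream).} $\dbF_p[\ii]$ is a commutative ring of characteristic $p$. In the binomial expansion of $(\widehat{x}+\widehat{y}\ii)^p$, the coefficients $\binom{p}{j}$ with $0<j<p$ vanish modulo $p$. Hence $(\widehat{x}+\widehat{y}\ii)^p=\widehat{x}^p+\widehat{y}^p\ii^p$.

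\textbf{Step 2 (Fermat).} For every $a\in\dbF_p$ we have $a^p=a$. This holds for $a=0$ trivially and for $a\neq0$ by Fermat's little theorem. So the expression becomes $\widehat{x}+\widehat{y}\ii^p$.

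\textbf{Step 3 (the power of $\ii$).} Since $\ii^2=-1$, we have $\ii^4=1$, so $\ii^p$ depends only on $p\bmod 4$. The hypothesis $p\equiv 3\bmod 8$ gives $p\equiv 3\bmod 4$, so $\ii^p=\ii^3=-\ii$. This yields $(\widehat{x}+\widehat{y}\ii)^p=\widehat{x}-\widehat{y}\ii=\phi_p(\overline z)$, as required.

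No step is a real obstacle; the only care needed is in the reduction. We must confirm that "faithfully embedded" ensures the denominators of $x$ and $y$ are invertible modulo $p$, so that $\phi_p(z)$ really has the coordinate form $\widehat{x}+\widehat{y}\ii$, and that conjugation commutes with $\phi_p$ coordinatewise. Both follow from the definition of the extended map $\phi_p$ on $\dbQ[\ii]$ in \cref{subsec:finite-field}. Note also that only $p\equiv 3\bmod 4$ is used, not the full mod-8 condition.

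Applying the lemma to \cref{lem:norm1-modular-emb}: if $\abs{z}=1$, then $z\overline z=1$, so $\phi_p(z)^{p+1}=\phi_p(z)\phi_p(\overline z)=\phi_p(z\overline z)=1$. Thus $\phi_p(z)\in G_{p+1}$.
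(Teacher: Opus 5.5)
Your proof is correct and matches the paper's argument: expand $(\phi_p(x)+\phi_p(y)\ii)^p$ in characteristic $p$ so the middle binomial coefficients vanish, use $a^p=a$ on $\dbF_p$, and use $\ii^p=-\ii$. Your remark that only $p\equiv 3 \bmod 4$ is needed is also accurate; the paper uses the mod-$8$ condition only to write $p=8p'+3$ when computing $\ii^p$.
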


\begin{proof}
Write $z=a+b\ii$ with $a,b\in \dbQ$. Since $p=8p'+3$ for some $p'\in \dbZ$, therefore $\ii^p=\ii^{4(2p
')+3}=-\ii$ in $\dbF_p[\ii]$.
Note that $\dbF_{p^2}$ has characteristic $p$, therefore $\binom{p}{r}\pequiv 0$ for every $r\in [p-1]$. Therefore, by homomorphism property and binomial theorem,
\[
    \phi_p(z)^p
    = (\phi_p(a) + \phi_p(b) \ii)^p
    = \phi_p(a)^p+\phi_p(b)^p \ii^p = \phi_p(a)-\phi_p(b)\ii = \phi_p(a-b\ii)
\]
as claimed.
\end{proof}

\subsection{Proof of \cref{lem:M_1(T_s)}} \label{app:M(T_j)}
\cref{lem:M_1(T_s)} is a special case of the following claim with $k=0$ and $j=\Delta=s$.
    \begin{claim}\label{clm:M(T_j)}
        For each $j\in [s]$, $\Delta\in[0,j]$ and $k\in \dbN$, $M_{2^k}(T_j)$ is a polynomial in $\set{M_{2^{k+\ell}}(T_{j-\Delta})}_{\ell=0}^\Delta$ with zero constant term.
    \end{claim}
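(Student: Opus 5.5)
The plan is to prove \cref{clm:M(T_j)} by induction on $\Delta$, for all $j$ and $k$ simultaneously. The only tool needed is \cref{eq:M-cross}, applied with $T=T_{j-1}$ and $p=2^k$:
\[
    M_{2^k}(T_j) = M_{2^k}\bigl((T_{j-1})_\times\bigr) = \tfrac{1}{2}\bigl(M_{2^k}(T_{j-1})^2 - M_{2^{k+1}}(T_{j-1})\bigr).
\]
This identity expresses a power-of-two moment of $T_j$ as a polynomial, with rational coefficients and zero constant term, in the two power-of-two moments $M_{2^k}(T_{j-1})$ and $M_{2^{k+1}}(T_{j-1})$ of the previous level. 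The index shift by one is exactly what produces the range $\ell\in[0,\Delta]$ in the claim.

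\textbf{Base case} $\Delta=0$. Here the statement is trivial: $M_{2^k}(T_j)$ is the monomial $M_{2^k}(T_j)$ itself, a polynomial with zero constant term in the singleton family $\{M_{2^{k}}(T_{j})\}$.

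\textbf{Inductive step.} Fix $j\in[s]$ and $1\le\Delta\le j$, and assume the claim for $\Delta-1$, for all $k$ and for all levels $j'\in[s]$ with $j'\ge\Delta-1$. Apply the displayed identity to write $M_{2^k}(T_j)$ in terms of $M_{2^k}(T_{j-1})$ and $M_{2^{k+1}}(T_{j-1})$.
\begin{itemize}
\item If $j-1\ge 1$, apply the induction hypothesis at level $j-1$ with depth $\Delta-1$. This expresses $M_{2^k}(T_{j-1})$ as a polynomial in $\{M_{2^{k+\ell}}(T_{j-\Delta})\}_{\ell=0}^{\Delta-1}$. Likewise, $M_{2^{k+1}}(T_{j-1})$ becomes a polynomial in $\{M_{2^{k+1+\ell}}(T_{j-\Delta})\}_{\ell=0}^{\Delta-1}$, that is, in $\{M_{2^{k+\ell}}(T_{j-\Delta})\}_{\ell=1}^{\Delta}$.
\item If $j-1=0$, then $\Delta=1$ and no substitution is needed: the identity already involves $M_{2^k}(T_0)$ and $M_{2^{k+1}}(T_0)$, which are the required variables.
\end{itemize}
In either case, both inner polynomials have zero constant term. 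Composing them with the outer polynomial $\tfrac12(X^2-Y)$, which also has zero constant term, gives a polynomial with zero constant term. Its variables lie in $\{M_{2^{k+\ell}}(T_{j-\Delta})\}_{\ell=0}^{\Delta}$, as the claim requires.

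\textbf{Technical points.}
\begin{itemize}
\item The claim is stated for $k\in\dbN$, but \cref{lem:M_1(T_s)} needs $k=0$, and the recursion uses $k+1$. I would therefore prove the claim for all $k\ge 0$. This is harmless, because \cref{eq:M-cross} holds for every $p\in\dbN$, including $p=1=2^0$.
\item The induction hypothesis must be strong enough to cover level $j-1$ for every $k$. Inducting on $\Delta$ with $j$ and $k$ universally quantified handles this.
\end{itemize}
No real obstacle remains beyond this bookkeeping. Specializing to $k=0$ and $j=\Delta=s$ gives \cref{lem:M_1(T_s)}: $M_1(T_s)$ is a polynomial with zero constant term in $\{M_{2^\ell}(T_0)\}_{\ell=0}^s=\{M_{2^\ell}(S)\}_{\ell=0}^s$.
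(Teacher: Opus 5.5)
Your proof is correct and follows essentially the paper's argument: peel off one level with \cref{eq:M-cross}, apply the hypothesis at level $j-1$ with depth $\Delta-1$ for both $2^k$ and $2^{k+1}$, and compose polynomials with zero constant term. The only cosmetic differences are that the paper inducts on the level $j$ (with $j=0$ and $\Delta=0$ as trivial base cases) while you induct on $\Delta$ and treat $j=1$ separately, and your remark that $k=0$ must be covered matches how the paper applies the claim.
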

    \begin{proof}
        Let $\mathcal{P}_{j,\Delta,k}$ denote the statement to be proven.
        We prove by induction on $j$ and $\Delta$ that $\mathcal{P}_{j,\Delta,k}$ holds for every $k$. The base case $j=0$ is trivial. Suppose for some $j$, $\mathcal{P}_{j,\Delta,k}$ holds for every $\Delta \in [0,s]$ and $k\in \dbN$. Consider the statement $\mathcal{P}_{j+1,\Delta,k}$ for some $\Delta\in [0,j+1]$ and $k\in \dbN$. The sub-case $\Delta=0$ is trivial. For $\Delta\geq 1$, write $\Delta'=\Delta-1$. By \cref{eq:M-cross},
        \[
            M_{2^k}(T_{j+1}) = \frac{1}{2}(M_{2^k}(T_j)^2 - M_{2^{k+1}}(T_j)).
        \]
        Applying the induction hypotheses $\mathcal{P}_{j,\Delta',k}$ and $\mathcal{P}_{j,\Delta',k+1}$, $M_{2^k}(T_j)$ and $M_{2^{k+1}}(T_j)$ are zero-constant polynomials in $\set{M_{2^{k+\ell}}(T_{j-\Delta'})}_{\ell=0}^{\Delta'}$ and $\set{M_{2^{k+1+\ell}}(T_{j-\Delta'})}_{\ell=0}^{\Delta'}$ respectively. Therefore, $M_{2^k}(T_{j+1})$ is a polynomial in 
        \[
            \set{M_{2^{k+\ell}}(T_{j-(\Delta-1)})}_{\ell=0}^{\Delta-1} \cup \set{M_{2^{k+1+\ell}}(T_{j-(\Delta-1)})}_{\ell=0}^{\Delta-1} = \set{M_{2^{k+\ell}}(T_{j+1-\Delta})}_{\ell=0}^{\Delta}
        \]
        with zero costant term.
    \end{proof}

\subsection{Proof of \cref{clm:bad-primes-count}} \label{app:2d-bad-primes}
For Property \ref{prop.b}, $\abs{z - c_{S}}^2$ is $2^5 n^4 U^{4n+4}$-rational for each $z\in S$. Consider the set
    \[
        \Psi^{(-)} \defeq \set{d_1-d_2\in \dbR:d_1,d_2\in \Psi, d_1>d_2}.
    \]
    Each element in $\Psi^{(-)}$ is $2^{11}n^8U^{8n+8}$-rational, and $\absB{\Psi^{(-)}}\leq \binom{2n}{2} = n(2n-1)$.    
    Therefore $N_{(c)}\leq n(2n-1)$ and $U_{(c)}\leq 2^{11}n^8U^{8n+8}$.

    For Property \ref{prop.c}, we give a (very crude) bound for the precision of the $k$-th recentred complex moment $M'_{k}(L)$ for each $k\in \dbN$. For $z=x+y\ii$ where $x,y\in \boundedQ{U}$, by binomial theorem and \cref{obs:sum-prod-precision}, both the real and imaginary parts of $(x+y\ii)^k$ are $U'$-rational, where $U'\leq k\left(\binom{k}{k/2} U\right)^k\leq k 2^{k^2} U^k$, therefore the complex number $M'_{k}(L)$ is $k(k2^{k^2}U^k)^k = k^{k+1} 2^{k^3} U^{k^2}$-rational.
    Therefore $N_{(d)}\leq 4\log n+1$, and $U_{(d)}\leq \max_{k\in [n]}\set{ k^{k+1} 2^{k^3} U^{k^2} } = n^{n+1}2^{n^3}U^{n^2}$.

    For Property \ref{prop.d}, adopting a similar bound as for Property \ref{prop.c}, for each $k\in \dbN$ and $\rho\in \boundedQ{U''}[\ii]$ where $U''=2^{10}U^{16}$, $\rho^{k}$ is $U_k$-rational where $U_k=k 2^{k^2}(U'')^k\leq n2^{11n^2} U^{16n}$. Consider the set 
    \[
        \Gamma^{(-)} \defeq \set{\rho^{2^j} - (\rho')^{2^j}: \rho,\rho'\in \boundedQ{U''}[\ii]} \setminus\set{0}.
    \]
    By a similar argument for $N_{(c)}$ and $U_{(c)}$, we obtain that $N_{(e)}\leq \absB{\Gamma^{(-)}} \leq \binom{(U'')^4}{2} \leq 2^{80} U^{128}$ and $U_{(e)}\leq 2U_k^2 \leq 2n^2 2^{22n^2} U^{32 n}$.

\subsection{Proof of \cref{clm:geomhash-bad-primes-count}}\label{app:geomhash-bad-primes}
For Property \ref{geomhash.prop.d}, fix \(k\in[n]\). As in \cref{app:2d-bad-primes}, we bound the precision of \((z-c_S)^k\) and \(\Pi'(S)\). Since \(c_S\) is \(nU^n\)-rational complex, for each \(z\in S\) we can write \(z-c_S=x+y\ii\) where \(x,y\in \boundedQ{2nU^{n+1}}\). By the binomial theorem and \cref{obs:sum-prod-precision}, both the real and imaginary parts of \((z-c_S)^k = (x+y\ii)^k\) are $U'$-rational complex numbers, where 
\[
    U'\leq k\left(\binom{k}{k/2}\cdot 2nU^{n+1}\right)^k \leq k2^{k^2} (2n)^k U^{k(n+1)} \leq 2^{n^2+n} n^{n+1} U^{n(n+1)},
\]
where we use $k\leq n$.

Note that \(\Pi'(S)=\Pi_{w\in S}(w-c_S)^k\) is a product of \(n\) $U'$-rational complex numbers. Applying \cref{obs:sum-prod-precision}, for each $z \in S$, the quantity $w_{z}\defeq (z-c_S)^k - \Pi'(S)$ is a $U_{(d')}$-rational complex number, where
\begin{equation*}
    U_{(c')} \leq 2(U')^{n+1} \leq 2^{n^2+n+1}n^{n^2+2n+2} U^{n(n+1)^2}. 
\end{equation*}
Writing \(w_{j,z}=\frac{\alpha_1(z)}{\beta_1(z)}+\frac{\alpha_2(z)}{\beta_2(z)}\ii\) in lowest terms, we have
\[
    S_{(d')} \defeq \set{ \alpha_1(z),\alpha_2(z) : z\in S }\setminus\set{0},
\]
thus $N_{(d')}\leq 2n$.

For Property \ref{geomhash.prop.e}, recall that \(\Psi_{k}(S)=\set{|(z-c_S)^k-\Pi'(S)|^2 : z\in S\,}=\set{\,|w_{j}|^2 : z\in S}\). Since \(w_{z}\in \boundedQ{U_{(d')}}[\ii]\), both \((\Re (w_{z}))^2\) and \((\Im (w_{z}))^2\) are \(U_{(d')}^2\)-rational, and hence \(|w_{z}|^2=(\Re (w_{z}))^2+(\Im (w_{z}))^2\) is \(2U_{(d')}^4\)-rational. Consider the set
\[
\Psi_{k}^{(-)} \defeq \set{d_1-d_2\in \dbR : d_1,d_2\in \Psi_{k}(S),\ d_1>d_2 }.
\]
Each element in \(\Psi_{k}^{(-)}\) is \(2\cdot (2U_{(d')}^4)^2=8U_{(d')}^8\)-rational, and \(\abs{\Psi_{k}^{(-)}}\le \binom{n}{2}\). We have \(N_{(d)}\le \binom{n}{2}\leq n^2\), and 
\[
    U_{(e')}\le 8U_{(d')}^8 \leq 8\left( 2^{3n^2}n^{4n^2}U^{4n^3}\right)^8 \leq 2^{27n^2} n^{32n^2} U^{32n^3}.
\]

\section{Root Solver for Special Equations}
\subsection{Solving $\widehat{x}^k=\widehat{\lambda}$ in $G_{p+1}$}\label{app:root-recovery}
We recap the problem setting. Let $p\equiv 3\bmod 8$, then $q=(p+1)/4$ is an odd number. By Chinese Remainder Theorem, $G_{p+1}\cong G_4\times G_q$. Also, there exist $e_4,e_q\in \dbZ$ such that 
\[
    e_4+e_q=1,\quad
    \begin{cases}
        e_4 \equiv 1 \bmod 4\\ 
        e_4 \equiv 0 \bmod q
    \end{cases},\quad 
    \begin{cases}
        e_q \equiv 0 \bmod 4\\ 
        e_q \equiv 1 \bmod q
    \end{cases},
\]
which can be computed efficiently using the extended Euclidean algorithm. In particular, for any $\widehat{y}\in \dbF_{p^2}^\times$, $\widehat{y}^{e_4}\in G_4$ and $\widehat{y}^{e_q}\in G_q$.

The equation $\widehat{x}^k=\widehat{\lambda}$ in $G_{p+1}$ is equivalent to 
\[
    (\widehat{x}^{e_4+e_q})^k =\widehat{\lambda}^{e_4+e_q} \implies \widehat{x}^{ke_4} \widehat{\lambda}^{-e_4} = \widehat{x}^{-ke_q} \widehat{\lambda}^{e_q}.
\]
Now LHS of the equation represents an element in $G_4$ and RHS represents an element in $G_q$. Since $\gcd(4,q)=1$, $G_4\cap G_q=\set{1}$, which implies the $G_4$-equation $\widehat{x}^{ke_4} =\widehat{\lambda}^{e_4}$ and the $G_q$-equation $\widehat{x}^{ke_q} = \widehat{\lambda}^{e_q}$. The $G_4$-equation has at most 4 roots lying in $G_4=\set{\pm 1,\pm \ii}$. For our purposes, we simply mark all four values as the possible roots.

For the $G_q$-equation, by assumption $\gcd(k,q)=1$, therefore the equation $\widehat{y}^k = \widehat{\lambda}^{e_q}$ has a unique solution. For $\ell=k^{-1}\bmod q$, we have $\widehat{y}=\widehat{y}^{k\ell} = \widehat{\lambda}^{\ell e_q} \eqdef \widehat{w}$. With $\widehat{x}^{e_4}\in \set{\pm 1,\pm \ii}$ the substitution $\widehat{y}=\widehat{x}^{\ell_q}$, this gives the four candidate solutions $\set{\pm \widehat{w}, \pm \ii\widehat{w} }$ to the original equation. 

\subsection{Solving $\widehat{x}^{2^j}=\widehat{\lambda}$ in $\dbF^\times_{p^2}$ for $p\equiv 3 \bmod 16$}\label{app:power-of-two-root}
By strengthening the condition for $p$ to $p\equiv 3\bmod 16$, we can ensure that the number of roots of the equation $\widehat{x}^{2^j}=\widehat{\lambda}$ in $\dbF^\times_{p^2}$ is at most $\gcd(2^j, p^2-1) \leq 8$.
Now such an equation can be solved by a similar Chinese Remainder Theorem decomposition approach detailed in \cref{app:root-recovery}.

By Chinese Remainder Theorem, $\dbF^\times_{p^2} \cong G_8\times G_q$, where $q\defeq (p^2-1)/8$ is odd. It suffices to solve the corresponding equations in the $G_8$ and $G_q$ parts individually. The $G_q$-equation yields a unique root $\widehat{w}$, which can be found by inverting $2^j$ in $\dbZ_q$ and raising $\widehat{\lambda}$ to that exponent. 

For the $G_8$ component, although a primitive 8th root of unity $\widehat{\zeta}$ is not immediately available as in $\cref{app:root-recovery}$, it can be efficiently constructed using a simple sampling procedure. The sampler chooses $a$ uniformly at random from $\dbF_{p^2}^{\times}$, and set $\widehat{\zeta}\defeq a^{(p^2-1)/8}$. It returns $\widehat{\zeta}$ if $\widehat{\zeta}^4\neq 1$, and otherwise resamples. As the element $\widehat{\zeta}$ is uniformly distributed over $G_8$, each sample yields a primitive 8th root of unity with probability $1/2$, and hence $\widehat{\zeta}$ can be constructed efficiently in both time and space. Finally, the solution set of the original equation is $\set{\widehat{\zeta}^j \widehat{w}:j\in [8]}$.

\section{Adaptions in Turnstile Model}
\subsection{Hardness of Finding the Minimum Element in Turnstile Model}
\label{app:min_element}
We show that in the turnstile model, finding the minimum element is hard even when multiple passes are allowed. We show this by a reduction from the \emph{minimum missing element} problem: given a stream of elements forming a set \(A\), with the promise that every element smaller than \(\textsf{MME}(A) = \min\{i \geq 0 : i \notin A\}\) occurs exactly once, compute \(\textsf{MME}(A)\).
Guha and McGregor~\cite[Theorem 5]{GuhMcG08}, any \(p\)-pass insertion-only randomized algorithm for this problem requires \(\widetilde{\Omega}(n^{1/p})\).
Given a streaming algorithm for the minimum element, we simulate each pass by first inserting every element of \(\{0, \dots, n\}\), generated on-the-fly by a counter, and then deleting each element of \(A\) as it arrives in the stream.
The resulting stream leaves exactly the complement \(\{0, \dots, n\} \setminus A\), whose minimum is \(\textsf{MME}(A)\).
This simulation preserves the number of passes and introduces an overhead of only \(O(\log n)\) bits, so finding the minimum element in \(p\) passes also requires \(\widetilde{\Omega}(n^{1/p})\) space.
In particular, no constant-pass algorithms could achieve polylogarithmic space.

\subsection{Complex Moment Method Adaptation for the Turnstile Model} \label{app:turnstile}

One can verify that all quantities involved remain computable in the \emph{turnstile model}; the only adjustment necessary is for selecting a surviving non-centroid datum during the second pass.

In the turnstile model, the first observed datum may be deleted later. Instead, we use $O(\log 1/\delta)$ independent $\ell_0$-samplers~\cite{10.1145/1989284.1989289} in parallel, updating their sketches with every insertion and deletion. At the end of the first pass, we query these samplers to retrieve a uniform random sample from data stream. Each sampler succeeds with probability $1 - \eta$ and returns a distinct element from the stream with near-uniform probability $1/N \pm \eta$, where $N$ is the number of distinct elements of the underlying set and $\eta \le 1/N^2$. This ensures we capture a valid non-centroid point with high probability:

\begin{claim}
    Let $A \subseteq D$ be a multiset with $N \ge 2$ distinct elements. If there exist $z \in A$ such that $z \neq p$ for a fixed $p \in D$, then $O(\log \frac{1}{\delta})$ parallel $\ell_0$-samplers will return an element $z \in A$ such that $z \neq p$ with probability at least $1 - \delta$.
    \begin{proof}
        For a single $\ell_0$-sampler, a failure occurs if it returns nothing, or if it returns $p$. The probability of this failure is bounded by $\eta + (1 - \eta)(1/N + \eta) \le c$ for some constant $c \in (0,1)$. For the failure probability of all $T$ independent $\ell_0$-samplers to be bounded by $\delta$, it suffices to take $c^T \leq \delta$, i.e. $T=\Theta(\log 1/\delta)$. 
    \end{proof}
\end{claim}

Each $\ell_0$-sampler requires space usage of \(O(\log^2 |D| \log 1/\eta)\). By letting $D = \boundedQ{U}^2$ and $\eta = 1/n^2$ (recall that $n$ is the size of the input multisets), the overall space complexity of the turnstile-model complex-moment algorithm for $\CongIden_{n,\boundedQ{Q},2}$ is summarized as follows.
\begin{corollary}\label{cor:CM-turnstile}
    For $\delta\in (0,1)$,
    $\CongIden_{n,\boundedQ{U},2}$ can be solved by a 3-pass $O(\log n(\log n+\log^2 U\cdot \log \frac{1}{\delta}))$-bit randomized turnstile streaming algorithm with probability at least $1-\delta$.
\end{corollary}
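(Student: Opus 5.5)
The plan is to take the insertion-only algorithm of \cref{subsec:CM-algo} unchanged, except for the one step that uses insertion order: choosing the first datum $z_*$ whose recentred squared distance $\widehat{d}_{z_*}$ is nonzero. First I will check that every other quantity is a linear sketch and therefore survives deletions. The centroid embeddings $\phi_p(c_S)$ in the first pass are sums. The finite-field moments $\widehat{M}_{2^j}(S)$ in \cref{eq:M-hat} are also sums once the radius $\widehat{r}^*$ is fixed. The third-pass Karp--Rabin fingerprints are linear as well, as noted in \cref{app:KR-hash}; I will build them over the enlarged domain $\boundedQ{2^{33}U^{52}}[\ii]$ so that inserting and later deleting a point causes no precision trouble. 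The correctness arguments (\cref{thm:nonzero-moment}, \cref{lem:bad-primes}) depend only on the final multisets, so they carry over unchanged once a valid radius has been chosen.

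To choose the radius without relying on insertion order, I will run $T=\Theta(\log\frac{1}{\delta})$ independent $\ell_0$-samplers over the domain $D=\boundedQ{U}^2$ in parallel during the first pass. At the end of the pass, $\widehat{c}_A$ is known, and I query the samplers. I keep any returned element $z$ with $\phi_p(z)\neq\widehat{c}_A$. By Property~\ref{prop.a} of \cref{lem:bad-primes}, this condition is equivalent to $z\neq c_A$, and hence to $\widehat{d}_z\neq 0$. I then set $\widehat{r}^*\defeq\widehat{d}_z$. The radius must be the same for $A$ and $B$. The analysis of the insertion-only algorithm works for any common radius value, so I will pick $z_*$ from $A$ only and use the same $\widehat{r}^*$ for $B$. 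If $B$ has no points at that radius, then its moments vanish while $A$'s do not, and the algorithm rejects correctly.

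The step I expect to be the main obstacle is the probability bound for the samplers. Each sampler fails if it returns nothing or returns $c_A$. When $A$ has $N\ge 2$ distinct elements, this happens with probability at most $\eta+(1-\eta)(1/N+\eta)$, which is a constant $c<1$ once $\eta\le 1/n^2$. So all $T$ samplers fail with probability at most $c^T\le\delta/4$. If every point equals $c_A$, I am in the trivial case. I will split the total error $\delta$ among three events: a bad prime, sampler failure, and fingerprint collisions.

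Finally, I will account for space. Each sampler uses $O(\log^2|D|\log\frac1\eta)=O(\log^2U\log n)$ bits, so the $T$ samplers together use $O(\log n\log^2U\log\frac1\delta)$ bits. The moments add $O(\log n(\log n+\log U+\log\frac1\delta))$ bits, as in the insertion-only analysis. The sum of these is absorbed by the claimed bound $O(\log n(\log n+\log^2U\log\frac1\delta))$, and the algorithm still uses 3 passes.
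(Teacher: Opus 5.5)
Your proposal is correct and follows essentially the same argument as the paper (\cref{app:turnstile}). Every sketch except the choice of $z_*$ is linear, and $z_*$ is obtained from $\Theta(\log\frac1\delta)$ parallel $\ell_0$-samplers that are queried at the end of the first pass, once $\widehat c_A$ is known; each sampler costs $O(\log^2|D|\log\frac1\eta)=O(\log^2U\log n)$ bits with $\eta=1/n^2$.

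One issue comes from the base algorithm rather than from your construction, and it affects your claim that $A$'s moments at the chosen radius do not all vanish. \cref{thm:nonzero-moment} is applied to the radius class, whose size $n_r\le n$ can have a larger $2$-adic valuation than $n$. For example, $A=B=\set{1,-1,0}$ has $s=0$, yet its first moment at radius $1$ vanishes. So the moments should be kept for all $j\le\floor{\log_2 n}$ rather than $j\le s$, which leaves the space bound unchanged.
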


\section{$\CongIden_{n,\boundedQ{U},3}$ Algorithm} \label{app:3d-algo}
For the three-dimensional setting, let \(\tilde{S}\) denote the set of distinct elements underlying a multiset \(S\). We use the same notation $\mathcal S_r(c)$ to denote a radius-$r$ sphere from the centre $c\in\mathbb R^3$, i.e.
\(
\mathcal S_r(c) = \{ x \in \mathbb R^3 : \|x - c\|_2 = r \}.
\)

Furthermore, we employ the oracle \(\mathcal{R}(k, \epsilon)\) to sample \(k\) distinct elements from the data stream. This sampling is nearly uniform and has a success probability at least \(1 - \epsilon\). The oracle uses \(O(k\log^4U\log\frac{1}{\epsilon})\) bits of space, and we further elaborate on this part in \cref{app:3d_sampling_oracle}.

Our $\CongIden_{n,\boundedQ{U},3}$ algorithm proceeds in two stages: \emph{canonical form computations} and \emph{equality testing}. The main technical work lies in the canonical form computations, where we process each of the sets $A$ and $B$ in parallel during the streaming phase. For the remainder of this section, we focus on a single set $S$ (either $A$ or $B$) and describe its canonical form computations.

For a given $S$, we iteratively build a candidate set of triples $\mathcal{P}_S = \{v_1^S,v_2^S,v_3^S\}$. 
Every triple produced in this process is non-coplanar. Among all constructed triples, at least one satisfies that each $(v_i^{A},v_i^{B})$ is a matched pair under the hidden congruence.
For each candidate triple, we build a canonical hash from these anchors and use it to test congruence.

The actual canonical form computations for the set $S$ in our algorithm are as follows:
\begin{enumerate}
    \item \emph{First reference point $v_1^S$}
    
    While $(c_A,c_B)$ is a matched pair, just as in the two-dimensional case, the centroids may be $nU^n$-rational and cannot be stored with full precision. We need the following observation.

    \begin{observation}\label{obs:3dcen}
    For a rational point set $S\subseteq \boundedQ{U}^3$ and any fixed radius $r$, exactly one of the following holds:
    \begin{enumerate}[label=\textup{(\alph*)}]
        \item $\abs{\calB_r(c_S) \cap \tilde{S}} < 4$;
        \item $\abs{\calB_r(c_S) \cap \tilde{S}} \ge 4$ and the points in $\calB_r(c_S) \cap \tilde{S}$ are coplanar;
        \item $\abs{\calB_r(c_S) \cap \tilde{S}} \ge 4$ and the points in $\calB_r(c_S) \cap \tilde{S}$ are not coplanar.
    \end{enumerate}
    \end{observation}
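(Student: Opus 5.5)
The statement is a trichotomy, so the plan is to show that the three cases are mutually exclusive and that together they cover every possibility. Fix $S\subseteq\boundedQ{U}^3$ and a radius $r>0$, and let $P\defeq \calB_r(c_S)\cap\tilde S$. Since $\tilde S$ is a finite set of distinct points, $\abs{P}$ is a well-defined nonnegative integer.

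First I split on $\abs{P}$. Either $\abs{P}<4$, which is case (a), or $\abs{P}\ge 4$. The second alternative contradicts (a) by definition, so (a) is exclusive of both (b) and (c).

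Next, assume $\abs{P}\ge4$. The property ``the points of $P$ lie in a common affine plane of $\dbR^3$'' either holds or fails. If it holds, we are in case (b); if it fails, we are in case (c). Cases (b) and (c) are mutually exclusive because they are a proposition and its negation under the same hypothesis $\abs{P}\ge 4$. Combining the two steps, exactly one of (a), (b), (c) holds.

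There is no genuine obstacle here; the argument is only the law of excluded middle applied twice. The one point needing care is that ``coplanar'' should be read as ``lying in some common affine plane'' and be well defined. For a finite set with $\abs{P}\ge 4$, I would make this precise by fixing any three affinely independent points of $P$ if they exist. If they exist, coplanarity is equivalent to every other point lying in the plane they span. If no three points of $P$ are affinely independent, then all of $P$ is collinear, and hence trivially coplanar. On the sphere $\calB_r(c_S)$, collinearity of more than two points is in fact impossible, since a line meets a sphere in at most two points; so when $\abs{P}\ge 4$, affinely independent triples always exist. This makes (b) and (c) a clean dichotomy, and in case (c) $P$ contains four non-coplanar points, which is what the algorithm uses later to fix a frame.
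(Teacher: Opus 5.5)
Your proof is correct for the statement as written, but it takes a different route from the paper. You verify the trichotomy itself: excluded middle, applied twice, gives both exhaustiveness and mutual exclusivity.

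The paper does not check the case split at all, since it is tautological. Instead it proves the fact the algorithm actually relies on in case (c). Take four non-coplanar points $v_1,\dots,v_4$ on $\calB_r(c_S)$. The equalities $\norm{v_i-c_S}^2=\norm{v_1-c_S}^2$ for $i\in\set{2,3,4}$ become the linear system $2c_S\cdot(v_i-v_1)=\norm{v_i}^2-\norm{v_1}^2$. Its coefficient matrix has rows $v_i-v_1$ and is invertible exactly because the points are non-coplanar. Cramer's rule then shows that $c_S$ is uniquely determined and is $U^{O(1)}$-rational; the paper's remark gives $2^{150}U^{378}$. This holds even though a centroid can in general need precision of order $nU^n$, and it is what lets the algorithm store $c_S$ exactly as a candidate for $v_1^S$.

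Your side remarks complement this. You show that in case (c) four non-coplanar points actually exist, which is precisely the hypothesis the paper's linear-algebra step needs. You also note that no three points of a positive-radius sphere are collinear, which justifies the case (b) step that any three points of the slice determine a unique circle.

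Your closing sentence slightly misstates the role of case (c): the four points are used to recover the sphere's centre $c_S$ with bounded precision, not to fix a frame. If your write-up is meant to support the algorithm, you would need to add the invertibility and Cramer's-rule argument.
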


    We handle the three cases as follows:
    In Case~(a), we include each of the at most four distinct points in $\calB_r(c_S) \cap \tilde{S}$ as a candidate for $v_1^S$ to $\mathcal{P}_S$.
    In Case~(b), any three coplanar points in $\calB_r(c_S) \cap \tilde{S}$ determine the centre of their circle, which is a $U^{O(1)}$-rational point; this centre is added as a candidate for $v_1^S$ to $\mathcal{P}_S$.
    In Case~(c), any four non-coplanar points in $\calB_r(c_S) \cap \tilde{S}$ determine the centre of their sphere, again a $U^{O(1)}$-rational point; this centre is added as a candidate for $v_1^S$ to $\mathcal{P}_S$. 
    
    Choosing $r$ and constructing $\calB_r(c_S) \cap \tilde{S}$ also requires careful field operations as in the two-dimensional case; we defer these details to \cref{app:3d_correctness}.
    
    \item \emph{Second reference point $v_2^S$}
    
    For each candidate $v_1^S \in \mathcal{P}_S$, let $r=\|q^S - v_1^S\|$, where $q^S$ is the first datum in $S$ distinct from $v_1^S$. If no such $q^S$ exists, then \(\abs{\tilde{S}} = 1\) and the problem is solved trivially. 
    We then use \(\mathcal{R}(O(\sqrt{N}), n^{-1})\) to sample \(O(\sqrt{n})\) distinct points from $\calB_r(v_1^S)\cap S$ and denote this sample by $S$.
    For every $v_2^S \in S$, we replace the candidate $\{v_1^S\}$ with the extended pair $\{v_1^S,v_2^S\}$.
        
    \item \emph{Third reference point $v_3^S$}\label{step.3}
    
    Now for each candidate pair $\{v_1^S, v_2^S\}\in \mathcal{P}_S$, let $r = \|q^S - v_1^S\|$, where $q^S$ is the first point in $L$ that is not collinear with $v_1^S, v_2^S$.
    If no such $q^S$ exists, then all points of $L$ lie on the line through $v_1^S$ and $v_2^S$, we extend each candidate by letting \(v_3^S = v^S_{\perp} + v_1^S\), where \(v^S_{\perp}\) is the vector with fixed length that is orthogonal to \(v^S_2 - v^S_1\), and proceed directly to the next pass.
    
    Consider the sphere $\calB_r(v_1^S) \cap S$ and denote its size \(m = \abs{\calB_r(v_1^S) \cap S}\).
    To select $v_3^S$, we partition $\calB_r(v_1^S)\cap S$ according the slices. For $k\in \dbR$, we define
    \[
        \Pi_k \defeq \set{v\in\dbR^3: (v_2^S - v_1^S)\cdot(v - v_1^S) = k }
    \]
    and $C_k \defeq ((\calB_r(v_1^S)\cap S) \setminus \{v_2^S, 2v_1^S-v_2^S\}) \cap \Pi_k$.
    We call a slice $C_k$ \emph{sparse} if $\abs{C_k} \le n^{2/3}$. Among all sparse slices, we choose the one with the smallest $\abs{k}$, i.e., the slice closest to $v_1^S$.
    If two $C_k$ have the same $\abs{k}$, we choose the slice that lies on the same side of $v_1^S$ as $v_2^S$.
    
    We then sample $O(n^{1/3})$ distinct points \(v^S_3\) from this slice $C_k$ using the oracle \(\mathcal{R}(O(n^{1/3}), n^{-1})\), and for each sampled point $v_3^S$ we replace the candidate $\{v_1^S, v_2^S\}$ with the extended tuple $\{v_1^S, v_2^S, v_3^S\}$ and insert it back into $\mathcal{P}_S$.
    
    If no sparse slice exists (that is, every slice has size greater than \(n^{2/3}\)), we discard the candidate \(\{v_1^S, v_2^S\}\) and do not generate any third reference point for it.
    
    \item \emph{Canonical Hash computation}
        
    For each candidate triple $\vec{t}=\{v_1^S, v_2^S, v_3^S\} \in \mathcal{P}_S$, we compute a canonical hash for every point $v \in S$ based on its distances and orientation relative to the reference triple:
    \[
    \theta_S(v) \;=\; \left(\,\|v - v_1^S\|^{2},\; \|v - v_2^S\|^2,\; \|v - v_3^S\|^{2},\; \sigma_{\vec{t}}(v)\,\right),
    \]
    where
    \[
    \sigma_{\vec{t}}(v) \;=\; \mathrm{sgn}\!\left((v - v_1^S)\cdot\left((v_2^S - v_1^S)\times(v_3^S - v_1^S)\right)\right) \in\set{-1,0,+1}
    \]
    indicates which side of the reference plane the point \(v\) lies on. Note that \(\theta_S(v)\) is invariant under rotation, which we will prove in \cref{app:3d_correctness}.
\end{enumerate}

For the equality testing phase, we iterate over all combinations of candidate triples from $A$ and $B$.
For each pair $\{v_1^A,v_2^A,v_3^A\}\in \mathcal{P}_A$ and $\{v_1^B,v_2^B,v_3^B\}\in \mathcal{P}_B$, we first check whether the two triples are congruent.
If the pairs are not congruent, we skip this pair.
If they are congruent, then the canonical hash $\theta_S(v)$ together with its reference triple uniquely determines each point of $L$, and the triple induces the hash set
    \[
    H_S = \{\theta_S(v) : v \in S \}.
    \]
    
    We then test whether $H_A = H_B$.
    This is done by computing a Karp-Rabin fingerprint for each candidate and comparing the two fingerprints.
    If the fingerprints match, we have found a valid correspondence between $A$ and $B$.
    In this case, the matching anchor triples determine the rotation $\rho$ and translation $t$, and we output the transformation $(\rho,t)$.

\subsection{Space and Correctness of the Algorithm} \label{app:3d_correctness}
We first give a proof of \cref{obs:3dcen}.
\begin{proof}[Proof of \cref{obs:3dcen}]
    Notice that any four non-coplanar points determine a unique sphere. Suppose there exists a sphere \(\calB\) centred at the centroid \(C\) and the points on the sphere are non-coplanar. Let \(v_i = (x_i, y_i, z_i)\in \dbR^3\) for $i\in [4]$ be four non-coplanar points on \(\calB\), then for $i\in\set{2,3,4}$    
\[
\norm{v_i - C}^2 = \norm{v_1 - C}^2 \Rightarrow 2C \cdot (v_i-v_1) = v_i \cdot v_i - v_1 \cdot v_1 = \norm{v_i}^2-\norm{v_1}^2.
\]
This yields a linear system $MC = b$, where    
\[ 
M = \begin{bmatrix}
    x_2 - x_1 & y_2 - y_1 & z_2 - z_1 \\
    x_3 - x_1 & y_3 - y_1 & z_3 - z_1 \\
    x_4 - x_1 & y_4 - y_1 & z_4 - z_1 \\
\end{bmatrix}, \text{ and }
b = \frac{1}{2}
\begin{bmatrix}
   x_2^2 + y_2^2 + z_2^2 - (x_1^2 - y_1^2 - z_1^2) \\
   x_3^2 + y_3^2 + z_3^2 - (x_1^2 - y_1^2 - z_1^2) \\
   x_4^2 + y_4^2 + z_4^2 - (x_1^2 - y_1^2 - z_1^2)
\end{bmatrix}
\]

As \(M\) is invertible, \(C\) is uniquely determined. By Cramer's rule, one has
\[
C = \left(\frac{\det(M_1)}{\det(M)}, \frac{\det(M_2)}{\det(M)}, \frac{\det(M_3)}{\det(M)}\right),
\]
where \(M_i\) is the matrix formed by replacing the \(i\)-th column of \(M\) by the column vector \(b\). 
\end{proof}

We give a remark on the precision of $C$ in this case.
Notice that each entry of \(M\) is \(2U^2\)-rational, and each entry of \(b\) is \(18U^{12}\) rational. For any matrix \(X \in \mathbb{Q}^{3\times3}\) with each entry being \(V\)-rational, its determinant is \(12V^{27}\)-rational. By simple calculations, each coordinate of \(C\) is $2^{150}U^{378}$-rational.

Now, we formally state and prove the \emph{birthday paradox} argument used in our algorithm:
\begin{lemma}[{Birthday Paradox}]
\label{cor:birthday}
Let $A = \{a_1, \dots, a_N\}, B = \{b_1, \dots, b_N\} \subseteq \boundedQ{U}^3$ be congruent set, with \(a_i\) matches \(b_i\) for each $i$. For  $m = O(\sqrt{kN})$, suppose $S_A \subseteq A$ and $S_B \subseteq B$ are size-$m$ subsets sampled using independent sampling oracles \(\mathcal{R}(O(\sqrt{k\abs{A}}), \epsilon)\). Then, with probability at least \(5/6\), there exist $k$ matched pairs $(a_i,b_i)\in S_A\times S_B$ with all $a_i$'s distinct and all $b_i$'s distinct.
\end{lemma}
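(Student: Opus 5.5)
We treat the two samples as (nearly) uniform random $m$-subsets of $[N]$, identifying $a_i$ and $b_i$ through their common index $i$. Let $I_A,I_B\subseteq[N]$ be the index sets of $S_A$ and $S_B$. Then the number of matched pairs with distinct $a_i$'s and distinct $b_i$'s is exactly $X\defeq\abs{I_A\cap I_B}$. Since the $a_i$ are distinct points and we collect \emph{distinct} elements, each index appears at most once in each sample, so distinctness is automatic. It therefore suffices to show $\Pr(X\ge k)\ge 5/6$ for $m=C\sqrt{kN}$ with a suitable constant $C$.

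\textbf{Idealized uniform case.} First assume the oracle outputs exact uniform $m$-subsets and the two samples are independent. Writing $X=\sum_{i\in[N]}\mathbf{1}[i\in I_A]\mathbf{1}[i\in I_B]$ gives
\[
\E[X]=N(m/N)^2=m^2/N=C^2k .
\]
For the variance, membership indicators of a uniform $m$-subset are negatively correlated, and the two samples are independent. Hence the products $Y_i=\mathbf{1}[i\in I_A]\mathbf{1}[i\in I_B]$ are also pairwise nonpositively correlated: $\E[Y_iY_j]=\Pr(i,j\in I_A)\Pr(i,j\in I_B)\le (m/N)^4$. Therefore $\Var(X)\le \E[X]$, and Chebyshev gives
\[
\Pr(X<k)\le \Pr\bigl(\abs{X-\E X}\ge (C^2-1)k\bigr)\le \frac{C^2k}{(C^2-1)^2k^2}.
\]
Choosing $C$ a large enough absolute constant (e.g.\ $C^2=10$) makes this at most $1/12$. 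An alternative is to view $X$ as hypergeometric with parameters $(N,m,m)$ conditional on $I_A$ and invoke its standard concentration.

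\textbf{Oracle imperfection.} The oracle $\mathcal{R}(\cdot,\epsilon)$ fails with probability at most $\epsilon$ and is only near-uniform. We handle this by a union bound over the two failure events, each at most $\epsilon\le 1/24$ (in the algorithm $\epsilon=n^{-1}$), together with a bound on the total variation distance between the oracle's output distribution and the uniform $m$-subset distribution. This leaves total failure probability at most $1/12+2\epsilon+\mathrm{TV}\le 1/6$.

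\textbf{Main obstacle.} We expect this last step to be the hardest, because ``nearly uniform'' is only quantified later in \cref{app:3d_sampling_oracle}. The plan is to use the $\ell_0$-sampler guarantee that each returned element has probability $1/N\pm\eta$ with $\eta\le 1/N^2$. This shows that the first and second moments of $X$ change by at most a $(1\pm O(m/N))$ factor, and the Chebyshev computation absorbs such a factor by enlarging $C$.
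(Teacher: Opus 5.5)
\paragraph{Where the proposal breaks.} Your skeleton matches the paper's. You write the number of matched pairs as $\sum_i \mathbf{1}[a_i\in S_A]\,\mathbf{1}[b_i\in S_B]$, get $\Var\le\E$ from negative correlation, and finish with Chebyshev and a large constant. Your idealized uniform-subset computation is correct. The gap is in the transfer to the actual oracle, which you flag as the main obstacle and propose to settle by saying the moment perturbation ``is absorbed by enlarging $C$''. It is not.

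Suppose you only know the first two moments of $X$ to within a factor $1\pm\gamma$ of the ideal ones. Then positive covariances of total size about $\gamma m^4/N^2=\gamma\mu_0^2$ (with $\mu_0=C^2k$) are not excluded. The best you can conclude is $\Var(X)\le(1+\gamma)\mu_0+3\gamma\mu_0^2$, and Chebyshev then gives only
\[
\Pr(X<k)\lesssim \frac{1}{C^2k}+3\gamma .
\]
From $\eta\le 1/N^2$ you can only guarantee $\gamma=O(m/N)=O(C\sqrt{k/N})$. For $N$ of order $C^2k$ this is $\Theta(1)$ however $C$ is chosen, and enlarging $C$ only makes it worse. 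So the argument covers $N\ge KC^2k$ for a large constant $K$, not the general statement. The total-variation variant, if bounded through the same per-step likelihood ratios, gives the same $O(m/N)$ error.

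There are also two numerical slips. $C^2=10$ gives $10/81>1/12$ at $k=1$. And $2\epsilon\le 1/12$ already uses up the remaining $1/12$ of the budget, leaving nothing for the TV term.

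\paragraph{How the paper avoids this, and how to repair your proof.} The paper never idealizes. It lower-bounds $\Pr(a_i\in S_A)\ge 1-(1-\tfrac{1}{2N})^m\ge \tfrac{m}{4N}$ directly from the per-draw guarantee $\tfrac1N\pm\eta$. Non-uniformity therefore costs only a constant factor in the mean ($\E[Z]\ge c^2k/16$, then $c=80$). It then applies $\Var(Z)\le\E[Z]$ to the actual sampler, asserting negative correlation without proof.

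A monotone coupling makes this rigorous and repairs your proof.
\begin{itemize}
\item The oracle collects distinct values sequentially, reporting the first $m$ distinct values it sees. So $S_A$ contains the set $D_A$ of values returned by its first $m$ independent $\ell_0$-samplers, and likewise $S_B\supseteq D_B$.
\item Hence $X\ge X'\defeq\abs{\set{i: a_i\in D_A,\ b_i\in D_B}}$.
\item For independent draws with \emph{arbitrary} per-draw probabilities $p_{i,t}$, occupancy indicators are negatively correlated. Indeed, $\Pr(i,j\in D)\le\Pr(i\in D)\Pr(j\in D)$ is equivalent to $\Pr(i,j\notin D)\le\Pr(i\notin D)\Pr(j\notin D)$. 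This reads $\prod_t(1-p_{i,t}-p_{j,t})\le\prod_t(1-p_{i,t})(1-p_{j,t})$, which holds factor by factor.
\item Since the $A$-side and $B$-side are independent, this gives $\Var(X')\le\E[X']$ exactly.
\item Meanwhile $p_{i,t}\ge(1-\eta)(\tfrac1N-\eta)\ge\tfrac{1}{2N}$ gives $\E[X']\ge m^2/(16N)=C^2k/16$ for $m\le 2N$.
\end{itemize}
Chebyshev then finishes with $C$ a large absolute constant, plus $2\epsilon$ for oracle failure. No perturbation step is needed.
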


\begin{proof}
Let \(m = c\sqrt{kN}\), where \(c\) is a constant determined later. We define the random variable $Z$ as the number of selected matched pairs, i.e. the number of $i$'s in which $(a_i,b_i)\in S_A\times S_B$. Then \(Z = \sum_{i = 1}^N I^A_iI^B_i\), where \(I^A_i\) is the indicator variable for the event \(a_i \in S_A\), and \(I^B_i\) is the indicator for \(b_i \in S_B\).

We now estimate \(\mu = \E[Z]\). First, for each $i\in [N]$ and $S\in\set{A,B}$, the expected value of $I^S_i$ is bounded below by 
\[
\E[I^S_i]  = \Pr(I^S_i = 1) \ge 1 - \left(1 - \left(\frac{1}{N} - \frac{1}{64U^{12}} \right) \right)^m \ge 1 - \left(1 - \frac{1}{2N} \right)^m \ge 1 - e^{-m/2N} \ge \frac{m}{4N}.
\]

Using the inequality \(1 - e^{-x} \ge x/2\) for all \(0 \le x \le 1\), we have \(\E[I^S_i] \ge {m}/(4N)\). Similarly,
\[
\E[I^S_i] \le 1 - \left(1 - \left(\frac{1}{N} + \frac{1}{64U^{12}}\right)\right)^m \le 1 - \left(1 - \frac{2}{N}\right)^m \le \frac{2m}{N}.
\]

Since \(S_A\) and \(S_B\) are independently generated, \(\E[Z] = \sum_{i=1}^N\E[I^A_i]\E[I^B_i]\) is bounded by
\[
\frac{c^2k}{16} = N \cdot \left(\frac{m}{4N}\right)^2 \le \E[Z] \le N \cdot \left(\frac{2m}{N}\right)^2 = 4c^2k.
\]

Next, we claim that \(\Var(Z) \le \mu\). The variance is given by
\[
\Var(Z) = \sum_{i = 1}^N \Var(I^A_iI^B_i) + \sum_{i \neq j} \text{Cov}(I^A_iI^B_i, I^A_jI^B_j).
\]
Note that for an indicator variable \(X\), \(\Var(X) = \E[X](1 - \E[X]) \le \E[X]\). Thus, \(\sum_i \Var(I^A_i I^B_i) \le \sum_i \E[I^A_i I^B_i] = \mu\).
Moreover, as the events of picking distinct matched pairs \((a_i, b_i)\) and \((a_j, b_j)\) are negatively correlated. Thus, the covariance terms are non-positive. Consequently, \(\Var(Z) \le \mu\).

Finally, we use Chebyshev's inequality to bound the probability that we find fewer than \(k\) matched pairs:
\begin{align*}
        \Pr(Z < k) &\le \Pr(\abs{Z - \mu} \ge \mu - k) \le \frac{\Var(Z)}{(\mu - k)^2}
    \le \frac{\mu}{(\mu - k)^2}
    \le \frac{4c^2k}{\left(\frac{c^2k}{16} - k\right)^2} \leq \frac{1}{6}.
\end{align*} 
where the last inequality holds for any $k\in \dbN$ when $c=80$. Thus, we have bounded the success probability as claimed.
\end{proof}

Now we justify the pass usage and correctness of our $\CongIden_{n,\boundedQ{U},3}$ algorithm. 
\begin{enumerate}
\item \emph{First reference point \(v_1^S\) (1st and 2nd pass)}
\begin{itemize}
    \item 1st Pass:

    We select a prime \(p\) from the first \(K\) primes uniformly at random, where \(K\) will be determined later, and compute the reduced radius \(\phi_p(r^2)\). Here, the reduction map \(\phi_p: \mathbb{Q} \to \mathbb{F}_p\) is defined by \(a/b \mapsto a \cdot b^{-1} \pmod{p}\).

    \item 2nd Pass:

    After obtaining the radius \(\phi_p(r^2)\), we calculate the distance of each point \(v \in S\) from \(c_S\) and form the set 
    \[
    \tilde{\calB}_{r, \phi_p}(c_S) = \{v \in \tilde{S} \lvert \phi_p((v - c_S)^2) = \phi_p(r^2)\}
    \]
    Next, first reference points \(v_1^S\) will be determined according to the set \(\tilde{\calB}_{r, \phi_p}(c_S)\).
    
\end{itemize}

To ensure the probability that \(\tilde{\calB}_{r, \phi_p}(c_S) = \calB_r(c_S) \cap \tilde{S}\), we first upper-bound the number of \emph{bad primes} \(p\) --prime numbers for which the reduction modulo \(p\) fails to distinguish the distance \(r\) from \(c_S\) between others.

\begin{lemma}\label{app:3dbadprime}
    The number of primes \(p\) for which the selection process fails to correctly identify the points on the sphere \(\calB_r(c_S)\) is \(O(n^2 \log U)\).
\end{lemma}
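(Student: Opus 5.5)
We reduce the statement to counting primes that divide a bounded collection of nonzero integers. This is the same ``bad prime'' mechanism used in \cref{lem:bad-primes_PA,lem:bad-primes}.

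The selection process fails only if some point $v\in\tilde S$ has $\norm{v-c_S}^2\neq r^2$ while $\phi_p(\norm{v-c_S}^2)=\phi_p(r^2)$, or if some relevant quantity is undefined under $\phi_p$, that is, $p$ divides a denominator. Both $r^2$ and every $\norm{v-c_S}^2$ are squared distances to the centroid. The radius $r$ is itself the distance from $c_S$ to some data point. Since $c_S$ is $nU^n$-rational in each coordinate by \cref{obs:sum-prod-precision}, each coordinate of $v-c_S$ is $2nU^{n+1}$-rational. Applying \cref{obs:sum-prod-precision} to the sum of three squares, $\norm{v-c_S}^2$ is $U_0$-rational with $U_0=3(2nU^{n+1})^{6}=n^{O(1)}U^{O(n)}$. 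Hence the difference $d_v\defeq \norm{v-c_S}^2-r^2$ is $2U_0^2$-rational, which is again $n^{O(1)}U^{O(n)}$-rational.

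The failure event is therefore contained in the union of two cases. In the first, $p$ divides the numerator of some nonzero $d_v$, with at most $n$ such $v\in\tilde S$. In the second, $p$ divides one of the denominators, namely those of $n^{-1}$, of the input coordinates, and of the $d_v$'s, which gives at most $O(n)$ further integers of the same magnitude. Write $N$ for the product of all these nonzero integers. Each factor has absolute value at most $n^{O(1)}U^{O(n)}$, so $\log N=O(n\cdot n\log U)=O(n^2\log U)$, absorbing $\log n\le n\log U$ when $U\ge 2$. A positive integer $N$ has at most $\log_2 N$ distinct prime divisors. So the number of bad primes is $O(n^2\log U)$, as claimed.

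The only step needing care is the precision accounting for the centroid, since $c_S$ has denominator as large as $nU^n$ and this is what produces the factor $n$ in the exponent. We handle it exactly as in the proof of \cref{lem:bad-primes_PA}, writing numerators over a common denominator. The same count works if one wants to ensure that the distinct values in $\set{\norm{v-c_S}^2}$ stay distinct after reduction, which corresponds to Property~\ref{prop.b}. Pairwise differences give $O(n^2)$ integers, each with $O(n\log U)$ bits, for $O(n^3\log U)$ bad primes. For the lemma as stated, however, only the comparison against the single target radius $r$ is needed, which gives the sharper $O(n^2\log U)$ bound.
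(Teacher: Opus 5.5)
Your proof is correct and uses the same bad-prime count as the paper. The bad primes are those dividing input denominators (and $n$), together with those dividing the numerators of the at most $n$ nonzero differences $\norm{v-c_S}^2-r^2$; each of these integers has bit-length $O(n\log U)$, giving $O(n^2\log U)$ in total. Your version also states the false-positive condition more accurately than the paper's text, which writes it as $\phi_p(\norm{x-c_S}^2)\equiv 0$ rather than as vanishing of the difference with $r^2$.
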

\begin{proof}
    The process fails if the map is undefined or if it yields a false positive (collisions).
    
    To ensure that \(\phi_p\) is well-defined, \(p\) must not divide the denominator of any coordinate involved in the point set \(L\). As the coordinates of the points in \(L\) are bounded by \(U\), each denominator has at most \(\log U\) prime divisors. Thus, there are at most \(n\log U\) primes that divide any denominator, making the map undefined. 

    A false positive occurs when there exist \(x \in S\) such that \(\phi_p(||x-c_S||^2) = \phi_p(r^2)\) but \(x \notin \calB_r(c_S)\). This is equivalent to $\phi_p(\norm{x - c_S}^2) \pequiv 0$.
    Since the numerator of \(\norm{x - c}^2\) is at most \(O\left(U^{8(n + 1)}\right)\), the total number of primes that could cause a false positive is \(O(n^2 \log U)\). 

    Combining both cases, the total count of \emph{bad primes} is \(O(n^2 \log U)\).
\end{proof}

Let \(K = O(n^3\log U)\), by the Prime Number Theorem, the magnitude of the \(K\)-th prime is approximately \(K\log K\). If we select a prime \(p\) from the first \(K\) primes uniformly at random, the probability of choosing a \emph{bad prime} is bounded by \(O(1/n)\). Thus, the process succeeds with high probability \(1 - O(1/n)\) while ensuring space efficiency.

\item \emph{Second reference point \(v_2^S\) (3rd pass)}

Since the first reference point \(v_1^S\) can be stored in \(O(\log U)\) bits, we can directly calculate the distance between each point \(p\) and \(v_1^S\), then sample points \(v_2^S\) from \(\calB_r(v_1^S) \cap \tilde{S}\). 

The space usage of this pass is dominated by the sampling oracle \(R(O(n^{1/2}, n^{-1}))\), which is \(O(\sqrt{n} \log^4U\log n)\) bits.

\item \emph{Third reference point \(v_3^S\) (4th and 5th pass)}
\begin{itemize}
    \item 4th pass:
    
In this pass, we aim to find the slice \(C_k\) for sampling the third reference points \(v_3^S\). Recall that for a fixed \(\{v_1^S, v_2^S\}\), we seek the slice with the smallest absolute offset \(k\) satisfying the sparsity condition \(\lvert C_k \rvert \le n^{2/3}\). Here, we call \(v_2^S\) a \emph{bad} point with respect to \(v_1^S\) if there is no such sparse slice. Notice that:

\begin{observation}
The total number of dense slices (i.e., \(\lvert C_k \rvert > n^{2/3}\)) is at most \(m / n^{2/3} \le n^{1/3}\). 
\end{observation}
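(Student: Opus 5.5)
The statement to prove is the observation that the number of dense slices ($\abs{C_k}>n^{2/3}$) is at most $m/n^{2/3}\le n^{1/3}$. I would argue by counting: bound the total mass of all slices, then apply a pigeonhole/Markov-type step.

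First, the slices $\set{C_k}_{k\in\dbR}$ are pairwise disjoint. Each $C_k$ is contained in the plane $\Pi_k$. For fixed $v_1^S,v_2^S$ with $v_2^S\neq v_1^S$, the linear functional $v\mapsto (v_2^S-v_1^S)\cdot(v-v_1^S)$ takes a single value at each point. Hence each point of $\calB_r(v_1^S)\cap S$ lies in exactly one plane $\Pi_k$, namely the one with $k=(v_2^S-v_1^S)\cdot(v-v_1^S)$.

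Consequently,
\[
    \sum_{k} \abs{C_k} \le \abs{\calB_r(v_1^S)\cap S} = m,
\]
where the sum runs over the finitely many $k$ with $C_k\neq\emptyset$. The inequality can be strict, because $C_k$ excludes the two points $v_2^S$ and $2v_1^S-v_2^S$. The count is taken consistently as either a multiset count or a count over $\tilde S$; the bound $m$ holds in both cases.

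Next, let $D$ be the set of dense indices $k$, those with $\abs{C_k}>n^{2/3}$. Then
\[
    \abs{D}\cdot n^{2/3} < \sum_{k\in D}\abs{C_k} \le m,
\]
so $\abs{D}< m/n^{2/3}$. Since $\calB_r(v_1^S)\cap S\subseteq S$, we have $m\le n$, which gives $\abs{D}\le n^{1/3}$.

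There is no real obstacle. The only points needing care are the disjointness of the slices and the bound $m\le n$, both of which are immediate from the definitions.
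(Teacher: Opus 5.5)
Your proposal is correct and follows the same counting argument the paper relies on; the paper states the observation without proof. The slices partition the at most $m\le n$ points on the sphere by the value of the linear functional $v\mapsto (v_2^S-v_1^S)\cdot(v-v_1^S)$, so fewer than $m/n^{2/3}\le n^{1/3}$ of them can contain more than $n^{2/3}$ points. One cosmetic fix: the strict inequality $|D|\cdot n^{2/3}<\sum_{k\in D}|C_k|$ fails when $D=\emptyset$, so write $\le$ there; the conclusion is unaffected.
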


Therefore, if we examine the sequence of slices ordered by their distance from \(v_1^S\), the target sparse slice must be found within the first \(n^{1/3}\) distinct slices. If all such slices are dense, then \(v_2^S\) is classified as a \emph{bad} point.

Consequently, the algorithm does not need to track all potential slices. For each candidate pair \(\{v_1^S, v_2^S\}\), we only maintain counters for the first \(n^{1/3}\) slices closest to \(v_1^S\). In this pass, we stream the points and update these counters to determine the correct index \(k\) (or discard \(v_2^S\) as bad).

\item 5th pass:

We sample \(O(n^{1/3})\) points on the specific slice \(C_k\).

\end{itemize}

For each first and second candidate points \(\{v_1^S, v_2^S\}\), it requires \(O\left(n^{1/3}\log^3 U\log n\right)\) bits to determine the slice \(C_k\) and sample \(v_3^S\). Hence, the total space usage is 
\[O\left(\sqrt{n}\right) \cdot O\left(n^{1/3}\log^3 U\log n\right) = O\left(n^{5/6}\log^4 U\log n\right).\]

Next, we establish that the algorithm generates matched reference triples with constant probability. Before that, we first argue that there are only a few second candidates \(v_2^S\) that will be discarded. The lemma follows:

\begin{lemma}\label{app:badpoint}
    For any fixed \(v_1^S\), there exist at most two \emph{bad points} with respect to \(v_1^S\) on every sphere centred at \(v_1^S\).
\end{lemma}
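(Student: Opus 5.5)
The plan is to show that any two bad points $v_2,v_2'$ on a common sphere $\calB_r(v_1^S)$ must satisfy $v_2'\in\set{v_2,\,2v_1^S-v_2}$, which gives at most two bad points. Write $u=v_2-v_1^S$ and $u'=v_2'-v_1^S$. Every slice $\Pi_k$ for direction $u$ is a plane with normal $u$, so $C_k$ lies on the circle $\Pi_k\cap\calB_r(v_1^S)$. The same holds for $u'$. The key geometric fact is that two distinct planes meet the sphere in two distinct circles, and these circles share at most $2$ points.

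First I would record what badness gives. If $v_2'$ is bad, every nonempty slice $C'_{k'}$ for direction $u'$ has more than $n^{2/3}$ points. Since the slices partition the sphere points other than $v_2'$ and $2v_1^S-v_2'$, and there are $m\le n$ sphere points, there are at most $m/n^{2/3}\le n^{1/3}$ nonempty slices. Consequently every point of $\calB_r(v_1^S)\cap S$, except $v_2'$ and its antipode, lies in one of at most $n^{1/3}$ circles of the $u'$-family.

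Next, suppose both $v_2$ and $v_2'$ are bad, and $u$, $u'$ are not parallel. Because $v_2$ is bad, there is some slice $C_k$ for direction $u$ with $\abs{C_k}>n^{2/3}$; it exists because the sphere contains $v_2'$, which is neither $v_2$ nor $2v_1^S-v_2$. Every point of $C_k$, apart from at most two exceptional points ($v_2'$ and its antipode), lies on one of the at most $n^{1/3}$ circles of the $u'$-family. Since the normals are non-parallel, the planes differ, so each such circle meets the circle of $C_k$ in at most $2$ points. Hence
\[
n^{2/3} < \abs{C_k} \le 2n^{1/3}+2,
\]
which is false for $n$ larger than an absolute constant. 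Small $n$ can be absorbed, since the algorithm is then trivially linear-space. So $u\parallel u'$, and because $\abs{u}=\abs{u'}=r$ this means $v_2'\in\set{v_2,2v_1^S-v_2}$.

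The main care point is treating multiset multiplicities consistently. Slices should be counted on $\tilde S$, or alternatively the "at most $2$ intersection points" bound should be weighted by multiplicity. I would phrase the whole argument on distinct points so that the circle-intersection bound applies verbatim. I also need to check that the excluded points $v_2$ and $2v_1^S-v_2$ do not break the existence of a nonempty dense $u$-slice; this is exactly why I pick the point $v_2'$ itself as a witness.
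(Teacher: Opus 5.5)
Your proof is correct. It uses the same mechanism as the paper: two non-parallel slicing families, planes from different families meeting in a line that hits the sphere at most twice, and badness bounding the number of slices. You organize the count differently, though.

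The paper takes two non-collinear bad points $v,v'$ and normalizes $m_v\le m_{v'}$. It forms all lines $H\cap H'$ with $H\in\mathcal{H}_v$ and $H'\in\mathcal{H}_{v'}$, of which there are at most $(m_vm_{v'})^{1/3}\le m_{v'}^{2/3}$. It then asserts that every non-excluded point lies on one of these lines, and counts at most four points per line to reach $m_{v'}=O(m_{v'}^{2/3})$.

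You use the badness of $v_2$ only to obtain a single dense slice $C_k$, with $v_2'$ itself certifying that this slice is nonempty. You then intersect that one circle with the at most $n^{1/3}$ circles of the $u'$-family, giving $n^{2/3}<\abs{C_k}\le 2n^{1/3}+2$. This avoids the normalization and the global grid count, and it yields an explicit threshold on $n$.

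Two caveats apply equally to both versions.

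First, counting must be over distinct points, as you insist. With multiplicities, the bound of two points per circle intersection breaks, and the statement itself fails. For example, take the points $(1,1,1),(1,-1,-1),(-1,1,-1),(-1,-1,1)$, each with multiplicity $n/4$, around $v_1^S=0$: for each one, the other three form a single slice of weight $3n/4>n^{2/3}$, so all four are bad.

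Second, the slicing sphere for $v_2$ and for $v_2'$ must be the sphere that contains both. The paper's setup nominally allows different spheres $\tilde{\calB}_v\neq\tilde{\calB}_{v'}$. However, its claim that each point of $\tilde{\calB}_{v'}$ lies on a line of $\mathcal{L}$ requires that point to lie on a plane of $\mathcal{H}_v$. That is only guaranteed when the two spheres coincide. So your explicit common-sphere reading covers exactly what the paper's argument actually establishes.
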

    
\begin{proof}
    For \(v \in \dbR^3 \setminus \set{v_1^S}\), let \(\tilde{\calB}_v = \calB_r(v_1^S) \cap \tilde{S}\) generated by the candidate pair \(\{v_1^S, v\}\), \(m_v = \lvert \tilde{\calB}_v \rvert\), and \(\mathcal{H}_v\) be the set of planes which are orthogonal to \(v - v_1^S\) and intersect with some \(C_k\). Notice that if \(p\) is a bad point with respect to \(v_1^S\), then
    \[
        \lvert \mathcal{H}_v \rvert \leq m_v / n^{2/3} \le  m_v^{1/3}.
    \] 
    Suppose for the sake of contradiction that there exist two bad points \(v, v'\) equidistant to \(v_1^S\), and \(v, v', v_1^S\) are non-collinear. Without loss of generality, assume that \(m_v \le m_{v'}\). Since the normal vectors \(n_v = v - v_1^S\) and \(n_{v'} = v' - v_1^S\) are not parallel, any two plane \(H \in \mathcal{H}_v, H' \in \mathcal{H}_{v'}\) are not parallel, and their intersection \(H \cap H'\) is a line. The set of all such intersection lines is \(\mathcal{L} = \{H\cap H' \vert H \in \mathcal{H}_v, H' \in \mathcal{H}_{v'}\}\) and its size is bounded by:
    \[
    \lvert \mathcal{L} \rvert = \lvert \mathcal{H}_v \rvert \cdot \lvert \mathcal{H}_{v'} \rvert \le (m_{v'}m_v)^\frac{1}{3} \le m_{v'}^{2/3}
    \]
    All the points in \(\tilde{\calB}_v \cup \tilde{\calB}_{v'}\) (except for the ignored points \(\{v, v', 2v_1^S - v, 2v_1^S - v'\}\)) lie on one of the line in \(\mathcal{L}\). Furthermore, every line in \(\mathcal{L}\) intersect  \(\tilde{\calB}_v \cup \tilde{\calB}_{v'}\) in at most four points, since \(\tilde{\calB}_v \cup \tilde{\calB}_{v'}\) is the union of two spheres and each \(\tilde{\calB}_v, \tilde{\calB}_{v'}\) contain no duplicates. We can thus bound \(m_{v'}\) by counting the points on these lines:
    \[
    m_{v'} \le \lvert \tilde{\calB}_v \cup \tilde{\calB}_{v'} \rvert \le 4\lvert \mathcal{H}_v \rvert \cdot \lvert \mathcal{H}_{v'} \rvert + 4 = O(m_{v'}^{2/3}).
    \]
    The inequality is a contradiction for sufficiently large \(m_{v'}\). Therefore, the assumption that non-collinear bad points exist is false. The only possible bad points on the sphere must be collinear with \(v_1^S\), so the only candidates are \(p\) and its antipode \(2v_1^S - v\).
\end{proof}

\begin{lemma}\label{app:3dprob}
Suppose $A$ and $B$ are congruent. Then, with probability at least \({25}/{36} - O(n^{-1})\), \(\mathcal{P}_A\) contains at least a triple \(\vec{t}\) that matches a triple pair \(\vec{t}_* \in \mathcal{P}_B\).
\end{lemma}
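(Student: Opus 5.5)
We would prove \cref{app:3dprob} by following the three reference-point stages and conditioning on success at each one. The target is a matched triple $(v_1^A,v_2^A,v_3^A)\in\mathcal{P}_A$, $(v_1^B,v_2^B,v_3^B)\in\mathcal{P}_B$ with $v_i^B=\rho(v_i^A)+t$ for every $i$. The probability $25/36$ is $(5/6)^2$, which points to two independent applications of the birthday paradox (\cref{cor:birthday}), one for $v_2$ and one for $v_3$. The $O(n^{-1})$ term collects the bad-prime failure (\cref{app:3dbadprime} with $K=O(n^3\log U)$) and the failures of the sampling oracles $\mathcal{R}(\cdot,n^{-1})$, all combined by a union bound.

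\emph{Step 1 (first point).} We condition on a good prime for both $A$ and $B$, which fails with probability $O(1/n)$. Then $\tilde{\calB}_{r,\phi_p}(c_S)=\calB_r(c_S)\cap\tilde S$ exactly. Since $c_B=\rho c_A+t$ and congruence preserves distances, we get $\calB_r(c_B)\cap\tilde B=\rho(\calB_r(c_A)\cap\tilde A)+t$. The case of \cref{obs:3dcen} is therefore the same for both sets. In cases (b) and (c), the circle or sphere centre is a deterministic geometric function of that set, so the two centres are matched. In case (a), all points are included as candidates, so some candidate pair is matched.

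\emph{Step 2 (second point).} Fix a matched pair $(v_1^A,v_1^B)$. We have to make sure both sets use corresponding radii $r$. The plan is to make the choice canonical (for instance, the first datum at the canonical $\dbF_p$-distance) so that the sphere $\calB_r(v_1^B)\cap\tilde B$ is the image of $\calB_r(v_1^A)\cap\tilde A$. The two spheres then form congruent sets of size $N$. Applying \cref{cor:birthday} with $k=3$ and $m=O(\sqrt N)$ samples gives, with probability $5/6$, at least $3$ matched pairs $(v_2^A,v_2^B)$ with distinct coordinates. By \cref{app:badpoint}, at most two points on this sphere are bad, so at least one matched $v_2$ pair is not bad. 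This is why we ask for $k=3$.

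\emph{Step 3 (third point).} Fix a matched, non-bad pair $(v_2^A,v_2^B)$. The slices $\Pi_k$ and the sets $C_k$ are defined purely through inner products with $v_2-v_1$ and through membership on the sphere. The congruence therefore maps $C_k^A$ bijectively onto $C_k^B$, so density is preserved slice by slice. The tie-break rule (smallest $\abs{k}$, then the side containing $v_2$) is also rotation-invariant, so both sets select corresponding sparse slices of size at most $n^{2/3}$. A second, independent application of \cref{cor:birthday}, with $k=1$ and $O(n^{1/3})=O(\sqrt{n^{2/3}})$ samples, gives a matched $v_3$ pair with probability $5/6$. We must also check that the resulting triple is non-collinear. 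The removal of $v_2$ and $2v_1-v_2$ from $C_k$ prevents collinearity; the degenerate all-collinear case gets the fixed orthogonal completion, which is matched up to a finite ambiguity that is handled by the candidate enumeration.

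Multiplying the two conditional $5/6$ bounds and subtracting the $O(n^{-1})$ failures yields $25/36-O(n^{-1})$. We expect the main obstacle to be the canonicity claims in Steps 2 and 3. The radius choice, and the "first datum" selections that depend on stream order, must be shown to correspond across $A$ and $B$. Where they do not, the fix we would try first is to replace them with $\dbF_p$-minimum-type canonical choices, as in \cref{subsec:GeomSign}.
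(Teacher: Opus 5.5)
Your proposal follows the paper's proof essentially step for step: matched deterministic $v_1$ candidates, the birthday paradox with $k=3$ so that a non-bad matched $v_2$ pair survives the at most two bad points, the congruence-invariant choice of sparse slice with a second birthday-paradox application, and then $(5/6)^2-O(n^{-1})$.

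The canonicity issue you flag is genuine, and the paper's proof does not address it. The paper simply asserts that $\calB_r(v_1^A)\cap\tilde A$ matches $\calB_r(v_1^B)\cap\tilde B$. But when $r$ is taken from the first datum of each stream separately, an adversarial order can make the two radii differ and leave no matched pairs at all. Your repair is therefore needed, not optional: either fix the radius once and reuse it for the other set, as the 2D algorithm does with its single $\widehat{r}^*$, or use an $\dbF_p$-minimum rule. The same fix is needed for the third-point sphere, on which your bad-point counting in Step 2 also relies.
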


\begin{proof}
By our construction, the candidates for the first reference point \(v_1^S\) are determined deterministically. Thus, the set of candidates for \(v_1^A\) is guaranteed to map to the set of candidates for \(v_1^B\).

Let \((v_1^A, v_1^B)\) be a pair of fixed matched points, then the sphere \(\calB_r(v_1^A) \cap \tilde{A}\) also matches \(\calB_r(v_1^B) \cap \tilde{B}\). If we sample \(80\sqrt{3n}\) points as the second reference points, according to the Birthday Paradox (\cref{cor:birthday}), the probability that there exists at least three matched pairs of the second candidate points \((v_2^A, v_2^B)\) is bounded below by \(5/6\). Even though some of \(v_2^S\)'s will be discarded, \cref{app:badpoint} shows that we will only remove at most two points. Therefore, the probability of finding a non-bad matched pair \((v_2^A, v_2^B)\) remains at least \(5/6 - O(1/n)\), where the term \(O(1/n)\) comes from the failure probability of the oracle \(\mathcal{R}(O(\sqrt{n}), n^{-1})\).

Conditioned on having a matched pair \(\{v_1^A, v_2^A\}\) and \(\{v_1^B, v_2^B\}\), if no point in $A$ is not collinear with $v_1^A, v_2^A$, then the generated point \(v^A_3 = v^A_{\perp} + v_1^A\) must match \(v^B_3 = v^B_{\perp} + v_1^B\). Otherwise, the slice \(C_k\) corresponding to \(\{v_1^S, v_2^S\}\) determined by the algorithm must also match. Again, by the Birthday Paradox (\cref{cor:birthday}), there is probability at least \(5/6 - O(n^{-1})\) that at least one of the matched selected points \((v_3^A, v_3^B)\) can be found for a matched pair of candidates \(\{v^A_1, v_2^A\}, \{v_1^B, v_2^B\}\).  

Since the selection steps are independent conditioned on the previous outcomes, the total probability of finding a matching triple \(\vec{t} = (v_1^A, v_2^A, v_3^A)\) and \(\vec{t}_* = (v_1^B, v_2^B, v^B_3)\) is at least \((5/6 - O(n^{-1}))^2 = 25/36 - O(n^{-1})\).
\end{proof}

Next, we show that the canonical hashing sets \(H_A, H_B\) generated by the matched triples \(\{v_1^A, v_2^A, v_3^A\}\) and \(\{v_1^B, v_2^B, v_3^B\}\) suffice to identify the congruence between \(A, B\).

\begin{lemma}\label{app:hashcorrectness}
    Let \(\vec{t} = \{v_1, v_2, v_3\}\) and \(\vec{t}_* = \{v^*_1, v^*_2, v^*_3\}\) be non-collinear triples such that \(\rho(v_i) + b = v^*_i,\ i = 1, 2, 3\). Then \(A\) is congruent to \(B\) via the rotation \(\rho\) and translation \(b\) if and only if \(H_A = H_B\).
    \end{lemma}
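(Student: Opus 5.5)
The plan is to prove the two directions separately. Both reduce to one fact: a point of $\dbR^3$ is determined by its squared distances to three non-collinear points together with the side of their plane on which it lies.

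\emph{Forward direction.} Suppose $B=\rho(A)+b$. Rigid motions preserve distances, so for every $v\in A$ and $i\in[3]$ we have $\norm{(\rho(v)+b)-v_i^*}^2=\norm{\rho(v-v_i)}^2=\norm{v-v_i}^2$. For the orientation sign, write $\rho(v)+b-v_1^*=\rho(v-v_1)$ and similarly for $v_2^*-v_1^*$ and $v_3^*-v_1^*$. The scalar triple product $x\cdot(y\times z)=\det(x,y,z)$ is invariant under a proper rotation because $\det\rho=1$, so $\sigma_{\vec t_*}(\rho(v)+b)=\sigma_{\vec t}(v)$. Hence $\theta_B(\rho(v)+b)=\theta_A(v)$ for every $v\in A$. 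Since $v\mapsto\rho(v)+b$ is a multiplicity-preserving bijection $A\to B$, the multisets satisfy $H_A=H_B$. This direction is routine.

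\emph{Reverse direction.} Assume $H_A=H_B$ and set $g(v)\defeq\rho(v)+b$. The key step is a uniqueness lemma:
\begin{quote}
If $v_1,v_2,v_3$ are non-collinear, then the map $v\mapsto(\norm{v-v_1}^2,\norm{v-v_2}^2,\norm{v-v_3}^2,\sigma_{\vec t}(v))$ is injective on $\dbR^3$.
\end{quote}
To prove it, subtract the first squared-distance equation from the other two. This gives the two affine equations $2v\cdot(v_i-v_1)=\norm{v_i}^2-\norm{v_1}^2+\norm{v-v_1}^2-\norm{v-v_i}^2$ for $i=2,3$. Their normals $v_2-v_1$ and $v_3-v_1$ are linearly independent, so the solutions form a line $\ell$ perpendicular to the plane $P$ through the triple. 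The line $\ell$ meets the sphere $\norm{v-v_1}^2=d_1$ in at most two points. These two points are reflections of each other across $P$, so either they lie on opposite sides of $P$ (they have opposite signs $\sigma=\pm1$), or they coincide on $P$ (with $\sigma=0$). In either case the sign selects at most one point.

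Now apply the lemma. Since $g$ maps $\vec t$ to $\vec t_*$, the forward computation gives $\theta_B(g(v))=\theta_A(v)$ for every $v\in\dbR^3$. Take any $w\in B$. Because $H_A=H_B$, there is some $v\in A$ with $\theta_A(v)=\theta_B(w)$, so $\theta_B(g(v))=\theta_B(w)$. By injectivity with respect to $\vec t_*$, this forces $w=g(v)$.

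Multiplicities must be tracked, with $H_S$ read as a multiset. The point $g(v)$ appears in $B$ exactly as many times as its hash value appears in $H_B$. By injectivity and the equality $\theta_B\circ g=\theta_A$, that hash value appears in $H_A$ exactly as many times as $v$ appears in $A$. Hence $g$ maps $A$ to $B$ bijectively with multiplicities preserved, so $B=\rho(A)+b$.

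The main obstacle is making the injectivity lemma airtight in the degenerate cases. One such case is $\sigma=0$, where $v$ lies in the plane $P$ and $\ell$ is tangent to the sphere. The other is the collinear fallback, where $v_3$ is an artificial point $v_\perp+v_1$ rather than a data point. The lemma only uses non-collinearity of the triple, so the artificial $v_3$ is harmless provided the algorithm builds $v_\perp^B$ consistently with $\rho$; I would note that this is exactly the hypothesis $\rho(v_3)+b=v_3^*$ in the statement. I would also remark that the statement should be read with $H_S$ as a multiset, since with plain sets the reverse direction would lose multiplicity information.
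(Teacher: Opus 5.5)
Your proposal is correct and follows the same route as the paper. The forward direction uses invariance of the squared distances and of the orientation sign under $v\mapsto\rho(v)+b$. For the converse, three spheres centred at a non-collinear triple meet in at most two points, these are mirror images across the triple's plane, and $\sigma$ separates them. Your explicit derivation of that intersection fact and your multiplicity bookkeeping (reading $H_S$ as a multiset) fill in details that the paper's proof only asserts or leaves implicit.
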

    
    \begin{proof}
    If \(A\) is congruent to \(B\) via the rotation \(\rho\) and translation \(b\) , since the transformation \(v \mapsto \rho(v) + b\) is isometric, for any points \(v \in A\) and \(i \in [3]\), \(||v - v_i|| = ||(\rho(v) + b) - (\rho(v_i) + b)|| = ||(\rho(v) + b) - v^*_i||\). For the orientation, notice that 
    \[(v - v_1)\cdot\left((v_2 - v_1)\times(v_3 - v_1)\right) = \det(M),\quad\text{where } M = \begin{bmatrix}
        v \\
        v_3 - v_1\\
        v_2 - v_2
    \end{bmatrix} \in \dbR^{3 \times 3}.
    \]
    Hence 
    \begin{equation}\label{ev^*_ori}
    \sigma_r(\rho(v) + b) = \mathrm{sgn}(\det(MR_\rho^\top))= \mathrm{sgn}(\det(M)\det(R_\rho^\top))= \mathrm{sgn}(\det(M))= \sigma_t(v),
    \end{equation} 
    where \(R_\rho\) is the matrix form of the rotation \(\rho\). Thus, \(H_A = H_B\).

    On the other hand, suppose that \(H_A = H_B\). Then for any \(v \in A\), there exist \(v^* \in B\) such that \(\theta_A(v) = \theta_B(v^*)\). Let \(x = \rho (v) + b\), we aim to prove that \(x = v^*\).

    Since \(\theta_A(v) = \theta_B(v^*)\), the distances to the anchor points are preserved:
    \[
    \norm{v - v_i} = \norm{v^* - v^*_i} 
    \quad \text{ for all } i \in [3].
    \]
    Since the transformation \(v \mapsto \rho (v) + b\) is an isometry, we also have:
    \[
    \norm{T(v) - T(v_i)} = \norm{x - v^*_i} = \norm{p - v_i}
    \quad \text{ for all }i \in [3].
    \]
    This implies both \(x\) and \(v^*\) lie on the intersection of the three spheres 
    \[S_i \defeq \{y \in \dbR^3 : \norm{y - v^*_i} = \norm{v^* - v^*_i}\} \quad\text{ for }i\in [3]\]. 

    Since \(v^*_1, v^*_2, v^*_3\) are non-collinear, they define a unique plane \(H\). The intersection of three spheres centred on \(H\) contains at most two points. Furthermore, if the intersection contains two distinct points, they are reflections of each other across the plane \(H\). Thus, either \(v^* = x\) or \(v^*\) is the reflection of \(x\) across \(H\).

    If the intersection of \(S_1, S_2, S_3\) contains only one point, then \(x\) is exactly \(v^*\). Otherwise, if the three spheres intersect at two points, similar to \cref{ev^*_ori}, one has \(\sigma_{\vec{t}_*}(x) = \sigma_{\vec{t}}(v) = \sigma_{\vec{t}_*}(v^*)\). This concludes that \(v^* = x\).
\end{proof}

\item \emph{Canonical Hashing (6th pass)}

In the final pass, we verify the set equality of \(H_A, H_B\) using Karp-Rabin hash. As each coordinate of \(\theta_S(v)\) is \(24U^6\)-rational for any \(v \in \boundedQ{U}^3\), we can map rational coordinates to integers via a canonical injection \(\textsf{idx}: \boundedQ{24U^6}^4 \rightarrow [(24U^6)^8]\). Now, we pick a prime \(q\) and choose \(x \in \mathbb{F}_q\) uniformly at random to compute the finger print \(h_A, h_B\) of the set \(H_A, H_B\) as
\[
h_A = \sum_{v \in H_A} x^{\textsf{idx}(v)} \mod q,\quad h_B = \sum_{v \in H_B} x^{\textsf{idx}(v)} \mod q.
\]

As the degree of \(h_A - h_B\) is at most \((24U^6)^8\), it suffice to pick \(q = \Theta(nU^{48})\) to achieve \(O\left(\frac{1}{n}\right)\) error probability. The space requirement of each fingerprint is \(O(\log U + \log n)\) bits. Hence, the space usage to store all the fingerprints is \(O\left(n^{5/6}(\log U + \log n)\right)\) bits.

Conditioned on the triples \(\{v_1^A, v_2^A, v_3^A\}, \{v_1^B, v_2^B, v_3^B\}\) are congruent and \(H_A = H_B\), by \cref{app:hashcorrectness}, the corresponding transformation \(\rho, t\) between \(A\) and \(B\) is exactly the transformation between two triples.

To summarize, our algorithm uses \(O\left(n^{5/6}\log^4 U\log n\right)\) bits of space to identify congruence:
\begin{itemize}
    \item If \(A\) is congruent to \(B\), the algorithm returns a valid rotation \(\rho\) and translation \(t\) with probability at least \((\frac{25}{36} - O(1/n)) \cdot (1 - O(1/n))^2 > 2/3\), where the factor \(1 - O(1/n)\) comes from the field operations and fingerprint hashing collision.
    \item If \(A\) is not congruent to \(B\), the algorithm returns \(\perp\) with probability \(1 - O(1/n)\).
\end{itemize}

\end{enumerate}

\subsection{Sampling Oracle for $\CongIden_{n,\boundedQ{U},3}$ Algorithm} \label{app:3d_sampling_oracle}

The oracle is constructed using independent \(\ell_0\)-samplers. A single \(\ell_0\)-sampler attempts to return a random element from a data stream of points from the domain \(\boundedQ{U}^3\). It succeeds with probability at least \(1 - \delta\), where \(\delta = \frac{1}{64U^{12}} \le \abs{\boundedQ{U}^3}^{-2}\). Conditioned on successful sampling, it outputs an element in the data stream with probability \(\frac{1}{N} \pm \delta\), where \(N\) is the number of distinct elements in the data stream.
Now, we claim that to sample \(k\) distinct elements with probability at least \(1 - \epsilon\), it suffices to use \(O(k\log U\log\frac{1}{\epsilon})\) \(\ell_0\)-samplers:

\begin{lemma}
    \(\mathcal{R}(k, \epsilon)\) can be implemented using \(O(k\log U\log\frac{1}{\epsilon})\) independent instances of \(\ell_0\)-sampler.

    \begin{proof}
        Let \(\mathcal{W}\) be a sampler that fails with probability \(1/2\) and, conditioned on success, returns a random element uniformly from the stream (i.e., probability \(1/(2N)\) for any element).
   
        Recall that an \(\ell_0\)-sampler picks any specific distinct element \(x\) with probability at least \((1 - \delta) \cdot (\frac{1}{N} - \delta)\), since \(N \le \abs{\boundedQ{U}^3}\), we have:

        \[
            (1 - \delta) \cdot \left(\frac{1}{N} - \delta\right) \ge \left(1 - \frac{1}{N^2}\right) \cdot \left(\frac{1}{N} - \frac{1}{N^2}\right) \ge \frac{1}{2N}.
        \]

        Since the real sampler is more likely to return a new distinct element than \(\mathcal{W}\) is, the number of trials required by the real oracle is stochastically dominated by the number of trials \(T\) required by an oracle built from \(\mathcal{W}\). Thus, it suffices to upper bound the number of samplers \(\mathcal{W}\) needed.

        Denote \(T\) the number of \(\mathcal{W}\) needed to collect \(k\) distinct elements. Let \(T_i\) be the number of samples needed to find the \((i+1)\)-th distinct element after finding \(i\) elements. The probability of success in a single trial is \(v_i = \frac{N-i}{2N}\). Thus, \(T_i \sim \text{Geom}(v_i)\) and the expected value of \(T\) satisfies
        \[
            \E[T] = \sum_{i=0}^{k-1} \frac{1}{v_i} = \sum_{i=0}^{k-1}\frac{2N}{N - i}.
        \]
        It is obvious that $\E[T] = O(k)$ when $k\leq N/2$. For $k> N/2$, 
        \[
            \E[T] \leq 2N\sum_{j=1}^N \frac{1}{j} \leq 2N(\log N+1).
        \]
        As $\log N = O(\log U)$, we have $\E[T] = O(k\log U)$ in any case.

        By a tail bound for the sum of independent geometric variables (see \cite[Corollary 2.4]{Jan18}), $\Pr(T \ge \lambda\E[T]) \le e^{1 - \lambda}$ for any $\lambda\geq 1$.
        Setting \(\lambda \geq \log \frac{1}{\epsilon} + 1\), we ensure that the failure probability is at most \(\epsilon\).
        Thus, the total number of samplers required is at most \(\lambda \E[T] = O\left(k \log U \log \frac{1}{\epsilon}\right).\)
    \end{proof}
\end{lemma}

Since each \(\ell_0\)-sampler uses \(O(\log^3U)\) bits of space, so the total space usage of \(\mathcal{R}(k, \epsilon)\) is \(O(\log^3U) \cdot O(k\log U\log \frac{1}{\epsilon}) = O(k\log^4U\log\frac{1}{\epsilon})\) bits.

\end{document}